\newcommand{\A}{\mathcal{A}}
\newcommand{\M}{\mathcal{M}}
\newcommand{\mN}{\mathcal{N}}
\newcommand{\D}{\mathcal{D}}
\newcommand{\CC}{\mathcal{C}}
\newcommand{\ZZ}{\mathcal{Z}}
\newcommand{\Rep}{{\rm Rep}}
\newcommand{\SUM}{\text{SUM}}
\DeclareMathOperator{\FPdim}{FPdim}
 \DeclareMathOperator{\Dim}{Dim} 
\DeclareMathOperator{\End}{End}  
\DeclareMathOperator{\Hom}{Hom} 
\DeclareMathOperator{\Span}{Span} 
\DeclareMathOperator{\SU}{SU}
\DeclareMathOperator{\Fun}{Fun}
\DeclareMathOperator{\Obj}{Obj}
\DeclareMathOperator{\Tr}{Tr}
\newcommand{\one}{\mathbf{1}}
\newcommand{\C}{\mathbb C}
\newcommand{\mC}{\mathcal{C}}
\newcommand{\mE}{\mathcal{E}}
\newcommand{\mY}{\mathcal{Y}}
\newcommand{\mZ}{\mathcal{Z}}
\newcommand{\mQ}{\mathcal{Q}}
\newcommand{\mD}{\mathcal{D}}
\newcommand{\mW}{\mathcal{W}}
\newcommand{\msC}{\mathscr{C}}
\newcommand{\msD}{\mathscr{D}}
\newcommand{\mfL}{\mathfrak{L}}
\newcommand{\mfC}{\mathfrak{C}}
\newcommand{\mfB}{\mathfrak{B}}
\newcommand{\mfr}{\mathfrak{r}}
\newcommand{\mfh}{\mathfrak{h}}
\newcommand{\mfD}{\mathfrak{D}}
\newcommand{\Z}{\mathbb Z}
\newcommand{\B}{\mathcal{B}}
\newcommand{\comments}[1]{}
\renewcommand{\vec}[1]{{\mathbf #1}}
\newcommand{\ket}[1]{|#1\rangle}
\renewcommand{\one}{\mathbf{1}}
\renewcommand{\CC}{\mathcal{C}}
\renewcommand{\D}{\mathcal{D}}
\renewcommand{\Z}{\mathbb{Z}}
\newcommand{\overbar}[1]{\mkern 2.3mu\overline{\mkern-2.3mu#1\mkern-2.3mu}\mkern 2.3mu}
\numberwithin{equation}{section}
\newtheorem{theorem}{Theorem}[section]
\newtheorem{corollary}[theorem]{Corollary}
\newtheorem{lemma}[theorem]{Lemma}
\newtheorem{prop}[theorem]{Proposition}
\theoremstyle{definition}
\newtheorem{remark}[theorem]{Remark}
\newtheorem{definition}[theorem]{Definition}
\begin{document}

\title{Topological Quantum Computation with Gapped Boundaries}
\author{Iris Cong$^{1,4}$}
\email{$^1$irisycong@engineering.ucla.edu}
\address{$^1$Dept of Computer Science, University of California\\
Los Angeles, CA 90095\\
U.S.A.}

\author{Meng Cheng$^{2,4}$}
\email{$^2$m.cheng@yale.edu}
\address{$^2$Dept of Physics\\
	Yale University\\
	New Haven, CT 06520-8120\\
    U.S.A.}

\author{Zhenghan Wang$^{3,4}$}
\email{$^4$zhenghwa@microsoft.com}
\address{$^3$Dept of Mathematics\\
    University of California\\
    Santa Barbara, CA 93106-6105\\
    U.S.A.}
\address{$^4$Microsoft Station Q\\
    University of California\\
    Santa Barbara, CA 93106-6105\\
    U.S.A.}

\begin{abstract}
This paper studies fault-tolerant quantum computation with gapped boundaries. We first introduce gapped boundaries of Kitaev's quantum double models for Dijkgraaf-Witten theories using their Hamiltonian realizations. We classify the elementary excitations on the boundary, and systematically describe the bulk-to-boundary condensation procedure. We also provide a commuting Hamiltonian to realize defects between boundaries in any quantum double model. Next, we present the algebraic/categorical structure of gapped boundaries and boundary defects, which will be used to describe topologically protected operations and obtain quantum gates. To demonstrate a potential physical realization, we provide quantum circuits for surface codes that can perform all basic operations on gapped boundaries. Finally, we show how gapped boundaries of the abelian theory $\mfD(\Z_3)$ can be used to perform universal quantum computation.
\end{abstract}

\maketitle
{\hypersetup{linkcolor=black}
\tableofcontents}

\clearpage

\section{Introduction}
\label{sec:intro}

\subsection{Motivations}
\label{sec:motivations}

The quantum model of computation strikes a delicate balance between classical digital and analog computing models, as its stability lies closer to digital models, while its computational power is closer to analog ones.  Still, a major obstacle to developing quantum computers lies in the susceptibility of qubits to decoherence. One elegant theoretical solution to this problem is to perform quantum computation topologically \cite{Free98, Kitaev97, FKLW}.

Topological quantum computation (TQC) is a paradigm that information is encoded in topological degrees of freedom of certain quantum systems, thereby protected from local decoherence.  The standard implementation is through anyons in topological phases of matter, where qubits or qudits are built out of degenerate ground states of many anyon systems, and braiding matrices of anyons are used as quantum gates \cite{Nayak08, W10}.  Recent studies in topological phases of matter revealed that certain topological phases of matter also support gapped boundaries.  Therefore, it is natural to ask if these cousins of anyons can be employed for quantum information processing.  This is indeed the case.  In this paper, we develop an exactly solvable lattice Hamiltonian theory for gapped boundaries in the Dijkgraaf-Witten topological quantum field theories (TQFTs) by modifying Kitaev's quantum double model. We systematically investigate the extra computational power provided by gapped boundaries, and the resulting enhancements to anyonic computation.  Our study of gapped boundaries is through an interplay between their Hamiltonian realization and an algebraic model using category theory. Using a very simple picture of triangles and a ribbon ring around a hole as (Fig. \ref{fig:main-results}), we provide an insightful interpretation of this interplay in the context of many physical processes such as bulk-to-boundary condensation of anyons.

A topological phase of matter $\mathcal{H}=\{H\}$ is an equivalence class of gapped Hamiltonians $H$ which realizes a TQFT at low energy.  Elementary excitations in a topological phase of matter $\mathcal{H}$ are point-like anyons.  Anyons can be modeled algebraically as simple objects in a unitary modular tensor category (UMTC) $\mathcal{B}$, which will be referred to as the topological order of the topological phase $\mathcal{H}$.  A salient feature of anyons for application to TQC is the topological ground state degeneracy, which can arise either from non-trivial topology of the space, or from non-abelian anyons even without topology (anyons with quantum dimension $>1$). For abelian anyons in the plane, the ground state manifold is non-degenerate.  However, when the topological phase of matter $\mathcal{H}$ supports gapped boundaries, new topological degeneracies can arise, even for abelian anyons in the plane. This is because a gapped boundary is essentially a coherent superposition of anyons, and hence behaves like a non-abelian anyon.

We consider only topological phases of matter $\mathcal{H}$ that can be represented by fixed-point gapped Hamiltonians $H$ of the form $H=-\sum_{i}H_i$ such that all local terms $H_i$ are commuting Hermitian projectors.  Two general classes of such Hamiltonians are the Kitaev quantum double model for Dijkgraaf-Witten TQFTs and the Levin-Wen model for Turaev-Viro-Barrett-Westbury TQFTs.  Their input data, finite groups $G$ and unitary fusion categories $\mathcal{C}$ respectively,  are dual to each other. When the Kitaev model is extended from finite groups to connected $*$-quantum groupoids \cite{Chang14}, the two models are equivalent because they both realize the same topological orders---the representation categories $\mathfrak{D}(G)$ of quantum doubles $D(G)$ or Drinfeld centers $\mathcal{Z}(\mathcal{C})$ of the input categories $\mathcal{C}$.  For such topological phases of matter, a gapped boundary is an equivalence class of extensions of the ideal gapped Hamiltonian from a closed surface to a local commuting Hamiltonian on the surface with a boundary.  We classify gapped boundaries by the maximal collection of bulk anyons that can be condensed to the boundary. While our theory works for any surface with boundaries, we will mainly focus on a planar region $\Lambda$ with many holes $\mathfrak{h_i}$, which are small rectangles removed from $\Lambda$ (see Fig. \ref{fig:boundary} for an example).  We generally imagine the holes $\mathfrak{h_i}$ as small disks or rectangles, but in order to achieve topological protection of the ground state degeneracy, their sizes cannot be too small. When gapped boundaries become too small, they decohere into single anyons.

In the UMTC model of a 2D doubled topological order $\mathcal{B}=\mZ(\mathcal{C})$, a stable gapped boundary or gapped hole is modeled by a Lagrangian algebra $\mathcal{A}$ in $\mathcal{B}$.\footnote{We will use the terms gapped boundary, gapped hole and hole interchangeably.}  The Lagrangian algebra $\mathcal{A}$ consists of a collection of bulk bosonic anyons that can be condensed to vacuum at the boundary, and the corresponding gapped boundary is a condensate of those anyons which behaves as a non-abelian anyon of quantum dimension $d_{\mathcal{A}}$.  Lagrangian algebras in $\mathcal{B}=\mZ(\mathcal{C})$ are in one-to-one correspondence with indecomposable module categories $\mathcal{M}$  over $\mathcal{C}$, which can also be used to label gapped boundaries.

A route to creating, manipulating, and measuring topological degeneracy for gapped boundaries in $\mfD(\Z_3)$ in bilayer fractional quantum Hall states coupled to superconductors  has been presented \cite{Bark16}.   Other experimentally reasonable designs proposed for realizing topological degeneracy from gapped boundaries in abelian fractional quantum Hall states \cite{clarke2013, cheng2012, lindner2012, Bark16, GGGG}. Moreover, a linear array of $9$ qubits \cite{Kelly15} and a square of $4$ qubits \cite{Corcoles15} on the $\Z_2$ surface code code gapped boundaries are also being developed to experimentally realize gapped boundaries of the toric code with superconducting integrated circuits.

\subsection{Main Results}

Due to the length of this paper, we will provide a summary of our main results in this section of the Introduction.

The body of our paper is divided into five major chapters: In Chapter \ref{sec:hamiltonian}, we present the Hamiltonian realization of gapped boundaries, boundary defects, and bulk-to-boundary condensation. Chapter \ref{sec:algebraic} presents algebraic models for these same physical processes using category theory. Chapter \ref{sec:circuits} demonstrates a potential implementation using surface codes. Finally, in Chapter \ref{sec:operations}, we present the topologically protected operations on gapped boundaries, and in Chapter \ref{sec:uqc}, we show how to use these operations to perform universal quantum computation.

\subsubsection{Hamiltonian realizations}

Suppose we are given a finite group $G$.  Consider a large rectangle $\Lambda$ of the square lattice $\mathbb{Z}^2$  in the plane $\mathcal{R}$.  Let $V(\Lambda), E(\Lambda), F(\Lambda)$ be the set of vertices (sites), edges (bonds or links), and faces (plaquettes) of $\Lambda$, respectively.  We attach a qudit or spin in $\mathbb{C}[G]$ to each edge $e\in E(\Lambda)$, so the local Hilbert space for the quantum system is $\mathcal{L}=\otimes_{e\in E(\Lambda)} \mathbb{C}[G]$.  The Kitaev Hamiltonian $H=-\sum_{v}A(v)-\sum_{p}B(p)$  for the Dijkgraaf-Witten theory based on the finite group $G$ consists of two kinds of terms: the vertex term $A(v)$ at each vertex $v$, which enforces a Gaussian law, and the plaquette term $B(p)$ at each plaquette $p$, which enforces the zero-flux condition.  This Kitaev quantum double Hamiltonian is a discrete gauge theory based on a finite group $G$ with topological charges labeled by pairs $(C_g,\pi)$, where $C_g$ is a conjugacy class and $\pi$ is an irreducible representation of the centralizer $E(C_g)$ of a representative $g\in C_g$.  

To generate a hole (with a gapped boundary), we modify Kitaev's Hamiltonian using the two-parameter Hamiltonians presented by Bombin and Martin-Delgado in \cite{Bombin08}.  Our resulting Hamiltonian is different from the one presented by Beigi et al. in Ref. \cite{Beigi11}, because raw data qudits still exist beyond the boundary in our model, and it is different from the situation discussed by Bombin and Martin-Delgado in Ref. \cite{Bombin11} because we have given an explicit construction of the boundary region and all Hamiltonian terms that act on it.  In general, irreducible hole types can be labeled by Lagrangian algebras of the Drinfeld center $\mfD(G) = \mZ(\textrm{Vec}_G)=\mZ(\textrm{Rep}(G)) = \Rep(D(G))$ (or equivalently, indecomposable module categories of $\textrm{Vec}_G$). In the Dijkgraaf-Witten theory for a finite group $G$, the different irreducible hole types are parameterized by subgroups $K\subseteq G$ up to conjugation, which can be directly used to construct indecomposable module categories over $\textrm{Vec}_G$.  

A hole $\mathfrak{h}$ is a small rectangle inside the rectangle $\Lambda$, which separates $\Lambda$ into two parts (see Fig.  \ref{fig:main-results}): the small rectangle $\mathfrak{h}$ and its outside $\mathfrak{B}$, which is considered to be the bulk of the topological phase of matter.  The vertices, edges, and faces of $\Lambda$ are also divided into two disjoint subsets: $V(\Lambda)=V(\mathfrak{B})\sqcup V(\mathfrak{h}), E(\Lambda)=E(\mathfrak{B})\sqcup E(\mathfrak{h}),  F(\Lambda)=F(\mathfrak{B})\sqcup F(\mathfrak{h})$. The subset $V(\mathfrak{h})$ consists of all vertices in the hole $\mathfrak{h}$ and those on its boundary; the subset $E(\mathfrak{h})$ consists of all edges in the hole, but not those on its boundary.  There is no confusion as to whether a face of $\Lambda$ is in  $\mathfrak{B}$ or $\mathfrak{h}$.

\begin{figure}
\centering
\includegraphics[width = 0.9\textwidth]{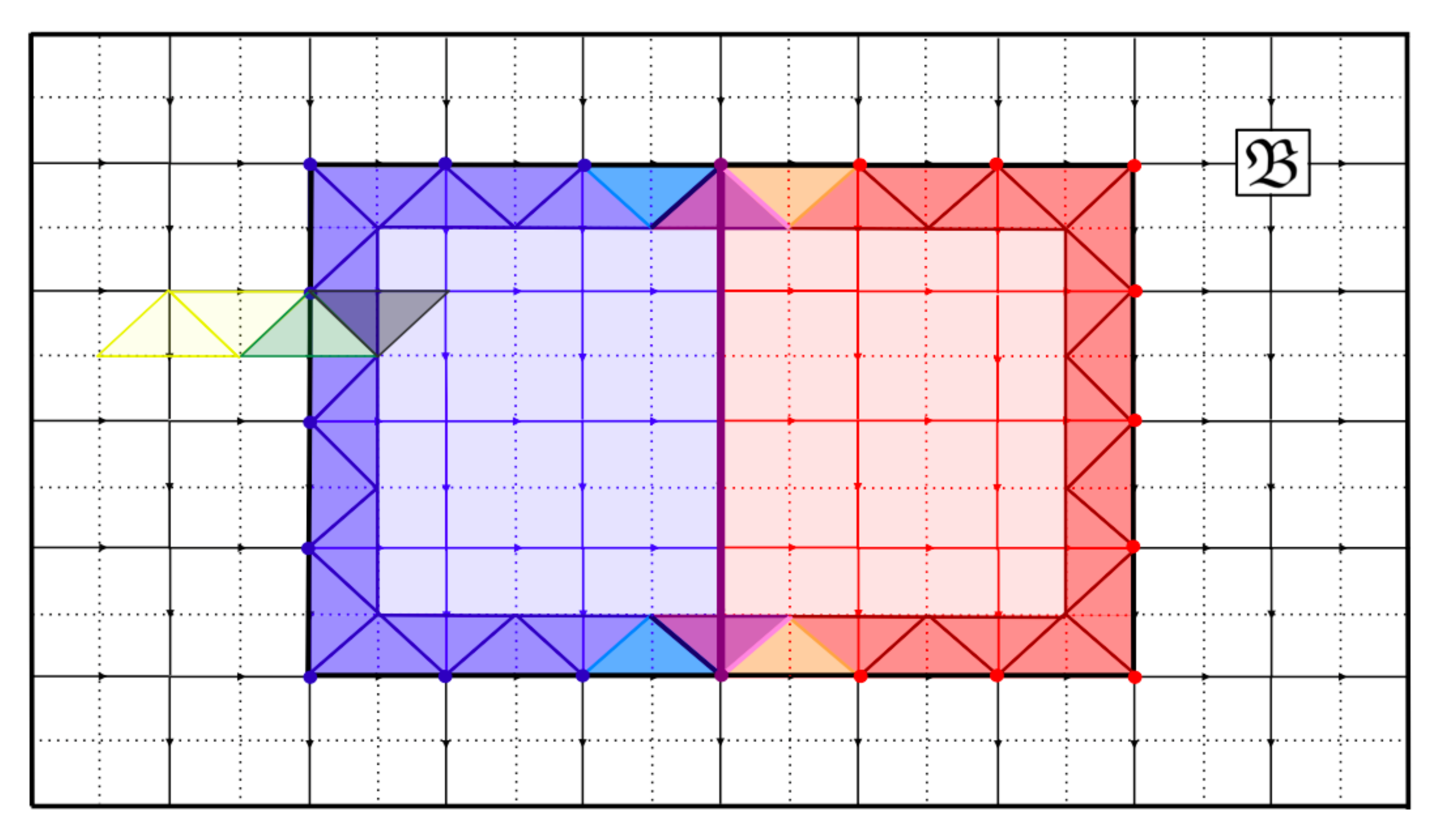}
\caption{Pictorial summary of the Hamiltonian realization and algebraic model of gapped boundaries. In this picture, the hole $\mathfrak{h}$ is the inner boldfaced and multi-colored rectangle.} 
\label{fig:main-results}
\end{figure}

Our Hamiltonian consists of two parts, $H_{(G,1)}$ and $H_{(G,1)}^{(K,1)}$, which act on the bulk and hole parts of the local Hilbert space $\mathcal{L}$ as follows:

\begin{equation}
H_{\text{G.B.}} = H_{(G,1)} (\mfB) + H_{(G,1)}^{(K,1)} (\mfh)
\end{equation}

\noindent
The bulk Hamiltonian is the same as the Kitaev Hamiltonian and acts on $\mathfrak{B}$; the hole Hamiltonian is given by $H_{(G,1)}^{(K,1)}=-\sum_{v\in V(\mathfrak{h})}A_v^K-\sum_{p\in F(\mathfrak{h})}B_{p}^K-\sum_{e\in E(\mathfrak{h})} (L^{K}(e)+T^{K}(e))$ and acts on $\mathfrak{h}$. $H_{(G,1)}^{(K,1)}$ has $4$ kinds of terms: the vertex term $A^{K}(v)$ is similar to Kitaev's vertex term, but is generalized to project the vertex to a trivial representation of $G$ when restricted to $K$; the plaquette term $B^K(p)$ is extended to include all fluxes $k\in K$; the two kinds of single qudit edge terms $L^K(e)$ and $T^K(e)$ are Zeeman-like, and explicitly break the gauge group from $G$ to $K$.  All local terms commute with each other.  The new edge terms $L^K(e)$ and $T^K(e)$ confine all anyons in the discrete gauge theory with gauge group $K$ in the hole.

The input finite group $G$ for the Kitaev quantum double model should be regarded as the fusion category $\textrm{Vec}_G$ of $G$-graded vector spaces.  Then, the topological order of the topological phase of matter represented by this Hamiltonian is the Drinfeld center $\mfD(G)$, which corresponds to the Dijkgraaf-Witten TQFT.  The basic property of a gapped boundary is that a collection of bulk anyons can condense to (or be created out of) the gapped boundary at zero energy cost.  This collection of anyons forms a Lagrangian algebra in the topological order $\mathcal{Z}(\textrm{Vec}_G)$.  New topological degeneracy arises in the presence of gapped boundaries, because certain local degrees of freedom are now protected by boundary Hamiltonians.  The energy splitting of the topological degeneracy scales as $e^{-\frac{L}{\xi}}$ when the holes are separated far apart, where $L$ is the size of the hole and $\xi$ is the correlation length.

The main technical tool to analyze the Hamiltonian is the ribbon operators.  To define ribbon operators both in the bulk and on the boundary precisely, we introduce the following terminologies.  The lattice $\Lambda$ (solid lines in Fig. \ref{fig:main-results}, a.k.a. the {\it direct} lattice) is subdivided by taking its intersection with its {\it dual} lattice $\hat{\Lambda}$ (dotted lines in Fig. \ref{fig:main-results}) in the plane.  A {\it cilium} is a pair $(v,p)$, where $p$ is a plaquette of $\Lambda$ and $v$ is a vertex on $p$. A cilium is illustrated as a line segment from $v$ to the center of $p$ (the colored diagonal line segments in Fig. \ref{fig:main-results}). A {\it triangle} is formed by two cilia from the same (direct or dual) vertex joined with an edge from either $\Lambda$ or $\hat{\Lambda}$ (see Fig. \ref{fig:main-results}).  There are two kinds of triangles in the hybrid lattice $\Lambda\cup \hat{\Lambda}$: a triangle is direct (dual) if its horizontal or vertical edge is anchored on the direct (dual) lattice. Each triangle supports a qudit---the qudit on the horizontal or vertical edge of the triangle. These definitions are presented in more detail in Section \ref{sec:ribbon-operators}.

A ribbon is a chain of triangles, alternating direct/dual, from one cilium to another. A ribbon operator is an operator supported on a ribbon---acting trivially on the qudits not on the ribbon---that commutes with the Hamiltonian except at the two end cilia.  Ribbon operators become string operators if they act trivially on qudits supported by all direct (or all dual) triangles. In the Kitaev model, a ribbon operator $F^{(h,g)}$ along a ribbon creates a pair of magnetic fluxes $(h,h^{-1})$ at the two end cilia and an electric flux $g$ along the ribbon. In general, the  excitations at the two cilia do not have well-defined topological charges, but a beautiful Fourier transform presented in \cite{Bombin08} expresses the topological charges as superpositions of the ribbon operators $F^{(h,g)}$ (see Section \ref{sec:ribbon-operators} for details).  

To extend ribbon operators to the boundary, we define the {\it boundary ribbon} of a hole $\mathfrak{h}$ to be the closed ribbon consisting of all direct triangles inside $\mathfrak{h}$ anchored on the border of $\mathfrak{h}$ and dual triangles anchored on the smaller square in the dual lattice (blue and red triangles in Fig. \ref{fig:main-results}).  The different colored triangles represent various functors in the algebraic model.

Our first technical result is a generalization of ribbon operators to the boundary. Boundary ribbon operators are operators supported on a ribbon that commute with all vertex and plaquette terms of $H^{(K,1)}_{(G,1)}$ except at the two end cilia, and can create all excitations that can result from dragging a bulk anyon to the boundary. As in the bulk case, they do not necessarily have definite \lq\lq topological charges".  Motivated by the formula in \cite{Bombin08} for bulk excitations, we develop a Fourier transform to express the irreducible types of boundary excitations. This parametrizes the boundary elementary excitation types as

\begin{equation}
\{(T,R): \text{ } T = K r_T K \in K\backslash G/K, \text{ } R \in (K^{r_T})_{\text{ir}}\},
\end{equation}

\noindent
where $K^{r_T}=K\cap r_TKr_T^{-1}$ is a stabilizer group. The quantum dimensions of these excitations are given by

\begin{equation}
\textrm{FPdim}(T,R)=\frac{|K|\textrm{dim}(R)}{|K^{r_T}|}.
\end{equation}

\noindent
We provide a simple and systematic method to determine how a bulk anyon $(C,\pi)$ condenses into the boundary into $(T,R)$'s and vice versa. These details are presented in Sections \ref{sec:bd-hamiltonian}-\ref{sec:bd-excitations}.

One of our most interesting contributions is a microscopic theory for boundary defects between different boundary types (i.e. different subgroups $K_1, K_2 \subseteq G$), which behave like non-abelian objects such as the Majorana and para-fermion zero modes. We design an exactly solvable Hamiltonian to create these defects in the quantum double model. As before, this Hamiltonian is a combination of $H_{(G,1)}$ and $H_{(G,1)}^{(K,1)}$:

\begin{equation}
H_{\text{dft}} = H_{(G,1)} (\mfB) + H^{(K_1,1)}_{(G,1)} (\mfr_1) + H^{(K_2,1)}_{(G,1)} (\mfr_2) + H^{(K_1 \cap K_2,1)}_{(G,1)} (\mfL).
\end{equation}

\noindent
Here, $\mfr_1$, $\mfr_2$ are the blue and red regions in Fig. \ref{fig:main-results}, respectively, and $\mfL$ is the purple line dividing them. As before, the bulk $\mfB$ consists of everything else in the lattice.

We next analyze the topological properties of boundary defects. We find that the simple defect types are parametrized by

\begin{equation}
\{
(T,R): T \in K_1 \backslash G / K_2, \text{ } R \in ((K_1, K_2)^{r_T})_{\text{ir}} 
\},
\end{equation}

\noindent
where $(K_1, K_2)^{r_T} = K_1 \cap r_T K_2 r_T{-1}$ for some representative $r_T \in T$, and their quantum dimensions are given by

\begin{equation}
\FPdim(T,R) = \frac{\sqrt{|K_1| |K_2|}}{|(K_1, K_2)^{r_T}|} \cdot \Dim(R).
\end{equation}

\noindent
These defects also generate topological degeneracy and can be used for topological quantum computation.  An important class of such defects are equivalent to genons in bilayer systems \cite{Bark13a,Bark13b,Bark13c}, and our Hamiltonian generates such genons in bilayer $\mathfrak{D}(G)$ theories.  Genons in bilayer Ising theory can be used to provide the missing $\frac{\pi}{8}$-gate in Ising theory, and make bilayer Ising theory universal \cite{Barkeshli16}.  A similar protocol makes the doubled Ising theory universal when it is enhanced with gapped boundaries \cite{Barkeshli16}.

With the Hamiltonians, we can derive many properties of gapped boundaries and boundary defects. Gapped boundaries can be created or annihilated one at a time, unlike anyons (which have to be created in pairs). We believe that boundary defects share many properties with bulk anyons, and so must also be created in pairs. Using adiabatic Hamiltonian tuning, both defects and gapped boundaries can be moved, and therefore both may be braided. We do not know if it is possible or how to fuse different holes.

In principle, we can derive all properties for the gapped boundaries and boundary defects from the microscopic Hamiltonians. However, due to the high dimension of the corresponding Hilbert spaces, this is not easy in practice (just as in the case of the bulk). Hence, we develop an algebraic theory, where the main tool is the extension of the modular tensor category formalism to holes and their boundaries.  Category theory is well suited for studying topological properties of quantum systems without local states, since it is a formulation of set theory without elements. Hence, this formalism will be rigorously developed in Chapter \ref{sec:algebraic} and is outlined below.

\subsubsection{Algebraic theory}

In the categorical formalism, the bulk of a TQFT is given by a modular tensor category $\B = \mZ(\mC)$ for some unitary fusion category $\mC$, and a (gapped) hole is a Lagrangian algebra $\mathcal{A}=\oplus_{a}n_a a$ in $\B$. In the case of Dijkgraaf-Witten theories, we have $\mC = \text{Vec}_G$. For most purposes, $\A$ can be regarded as a (composite) non-abelian anyon of quantum dimension $d_{\mathcal{A}}$. Gapped boundaries are in one-to-one correspondence to indecomposable module categories $\mathcal{M}_i$ over $\mC$. Then, elementary excitations on $\mathcal{M}_i$ are the simple objects in the functor fusion category $\mC_{ii} = \textrm{Fun}_{\mC}(\mathcal{M}_i, \mathcal{M}_i)$, and simple
boundary defects between two gapped boundaries $\mathcal{M}_i, \mathcal{M}_j$ are the simple objects in the bimodule category  $\mC_{ij} = \textrm{Fun}_{\mC}(\mathcal{M}_i, \mathcal{M}_j)$.  In this formalism, the condensation functor is a tensor functor from $\mathcal{Z}(\mC)$ to $\mC_{ii}$. The collections of fusion categories $\mC_{ii}$ and their bimodule categories $\mC_{ij}$ form a multi-fusion category $\mathfrak{C}$. From this multi-fusion category, we can find quantum dimensions of both boundary excitations and the defects between gapped boundaries. We also find that the fusion of boundary defects is given by the sequence

\begin{equation}
\mC_{ij} \otimes \mC_{jk} \rightarrow (\mC_{ij} \boxtimes \mC_{ij}^{\text{op}}) \otimes (\mC_{jk} \boxtimes \mC_{jk}^{\text{op}})
\rightarrow
\mZ(\mfC)^{\otimes 2}
\rightarrow
\mZ(\mfC)
\rightarrow
\mC_{ik} \boxtimes \mC_{ik}^{\text{op}} 
\rightarrow 
\mC_{ik}
\end{equation}

\noindent
Here, each arrow represents a functor between categories. Hence, fusion of defects mainly occurs in the doubled category $\mZ(\mfC)$, which is equivalent to the Drinfeld center of the original input category $\mC$ \cite{Chang15}. This allows defects to be braided among themselves and with the bulk anyons. If $x \in \mC_{ij}$ and $y \in \mC_{ji}$ are boundary defects, the topological degeneracy in the fusion of $x$ and $y$ is given by the degeneracy of the hom-space $\Hom(\one_{\M_i}, x \otimes y)$, where $\one_{\M_i}$ is the tensor unit of the fusion category $\mC_{ii}$. This is easily generalized to the case where there are $n$ boundary defects.

Similarly, topological degeneracies in the presence of holes $\mathfrak{h_i}$ labeled by $\mathcal{A}_i$ are described by the morphism space $\textrm{Hom}(\mathcal{A_\infty}, \otimes_i\mathcal{A}_i)$, where the outermost boundary is labeled by $\mathcal{A_\infty}$. $\mathcal{A_\infty}$ can be either an anyon type or a boundary type.

\subsubsection{Surface code implementation}

Inspired by the surface code approach to fault-tolerant quantum computation, we turn to the quantum computing side of gapped boundaries.  We treat the edge qudit in $\mathbb{C}[G]$ as a data qudit and the vertex and plaquette terms of the Hamiltonian as syndrome qudits on the vertices and plaquettes.  The syndrome and data qudits are considered as local physical qudits, while qudits encoded in ground states of gapped boundaries are logical ones.  We are interested in which quantum gates on the logical qudits can be realized by low depth and efficient quantum circuits on the physical ones.  

\subsubsection{Topologically protected operations and universal quantum computation}

Finally, in Chapters \ref{sec:operations} and \ref{sec:uqc}, we discuss topological quantum computation using gapped boundaries. 

Examples of topologically protected operations such as tunneling and loop operators can be computed using all data of the modular tensor category $\B$ and the indecomposable modules $\M_i$, as discussed in Chapter \ref{sec:operations}. Furthermore, we can braid holes around each other to obtain a representation of the pure braid group. In general, these braids may be used to produce two- or multiple-qudit entangling gates.

In the topological degeneracy manifold $\textrm{Hom}(\mathcal{A_\infty}, \otimes_i\mathcal{A}_i)$, there are no natural tensor structures to encode qudits for quantum computing.  For gapped boundaries in our models, they also behave like integral non-abelian anyons. By the property F conjecture \cite{Naidu09}, braidings alone probably would not be sufficient to achieve universal quantum computing. Therefore, to achieve universality, we must supplement braiding with extra topological operations such as topological charge measurement.

In Chapter \ref{sec:uqc}, we analyze two concrete examples, namely $\mathfrak{D}(S_3)$ and $\mathfrak{D}(\Z_3)$. There are several existing schemes to make $\mfD(S_3)$ universal. We believe that introducing gapped boundaries to the $\mfD(S_3)$ anyon theory would provide an elegant improvement over the existing universal gate set. Finally, we are able to produce a universal qutrit gate set using purely gapped boundaries of $\mathfrak{D}(\Z_3)$. This theory is particularly interesting, because the gapped boundaries can be potentially realized in bilayer fractional quantum Hall states \cite{Bark16}. The accomplishment is especially significant, as $\mathfrak{D}(\Z_3)$ is an abelian theory, so braidings of anyons in the plane are all projectively trivial. In fact, this is the first purely topological method (i.e. it does not use external high-fidelity state injection) to obtain a universal quantum computation model using only an abelian theory.

\subsection{Previous Works}

The first example of gapped boundaries appeared in \cite{Bravyi98} as the smooth and rough boundaries of the $\Z_2$ toric code. Boundaries of Kitaev's quantum double model for general finite groups $G$ were studied by Beigi et al. in \cite{Beigi11}. In that work, they generated gapped boundaries with a different Hamiltonian and described condensations to vacuum, but left the description of the boundary excitations as an important open problem.  In 2009, Kitaev contemplated the categorical formulation that a gapped boundary is modeled by a condensable Frobenius algebra \cite{Kitaev09}. Later, a related categorical description with some details is outlined in \cite{KitaevKong, Fuchs2014, Kapustin1410b}.  Further clarifications appeared in \cite{Kong} on boundary excitations, but no explicit general solvable Hamiltonian is presented.  The mathematics of such a theory is in \cite{KO,Davydov12}. 

A physical theory of gapped boundaries related to defects for abelian topological phases of matter is developed in \cite{ Levin13, Bark13a,Bark13b,Bark13c, Kapustin14}. For recent works on a physical understanding of gapped boundaries in more general topological phases and in the closely related topic of anyon condensation, see \cite{Kapustin10, Bais09, Kong13, Eliens13, LWW, Kong15, Neupert16a, Neupert16b, Wan16}. Ref. \cite{Bark13a} gives a universal quantum computing gate set from braiding anyons in the bilayer Ising theory supplemented with genons (a special case of boundary defects).  Topological degeneracy has been studied using various techniques in \cite{Kapustin14, Bark13b, LWW,WW,HW}. Surface code implementations of gapped boundaries of the $\Z_2$ toric code have been studied by \cite{Dennis02,Fowler12}, especially for applications to
quantum information processing.
Braiding of gapped boundaries has been used to produce quantum gates in \cite{Bombin11,Fowler12,Raussendorf03}.
Topological charge projection has been introduced recently to produce more topological quantum gates, and can be used to produce a universal gate set based on the doubled Ising theory \cite{Barkeshli16}.

\subsection{Notations}

The notations we adopt throughout the paper are presented in Appendix \ref{sec:notations}.  

Throughout the paper, all algebras and tensor categories are over the complex numbers $\C$.  All fusion and modular tensor categories are unitary.  Unitary fusion categories are spherical.

\subsection{Acknowledgment}
The authors thank Maissam Barkeshli, Shawn Cui, and Cesar Galindo for answering many questions.  We thank Alexei Davydov for pointing out the example that two different Lagrangian algebras can have the same underlying object. I.C. would like to thank Michael Freedman and Microsoft Station Q for hospitality in hosting the summer internship and visits during which this work was done. M.C. thanks Chao-Ming Jian for collaborations on related topics. Z.W. is partially supported by NSF grants DMS-1108736 and DMS-1411212.

\vspace{2mm}
\section{Hamiltonian realization of gapped boundaries}
\label{sec:hamiltonian}

In this chapter, we present a Hamiltonian realization of gapped boundaries in any Kitaev quantum double model for the (untwisted) Dijkgraaf-Witten theory based on a finite group $G$. Sections \ref{sec:kitaev-hamiltonian} and \ref{sec:ribbon-operators} review existing works on the Hamiltonian for the standard Kitaev model and its corresponding ribbon operators. In Section \ref{sec:bd-hamiltonian}, we review existing works on Hamiltonians for Kitaev models with boundaries, and develop our own Hamiltonian that is best suited towards topological quantum computation with gapped boundaries. Section \ref{sec:hamiltonian-gsd-condensation} presents the ground state degeneracy for this model and describes our topological qudit encoding. Section \ref{sec:bd-excitations} classifies the elementary excitations on the boundary and systematically describes the bulk-to-boundary condensation procedure. In Section \ref{sec:defect-hamiltonian}, we further generalize the Hamiltonians presented in this chapter, to consider cases where two distinct boundaries of the Kitaev model meet and form a defect. We analyze topological properties of defects such as the simple defect types and their quantum dimensions. Finally, in Sections \ref{sec:tc-hamiltonian-example} and \ref{sec:ds3-hamiltonian-example}, we provide concrete examples for all of the theory by considering the toric code and $\mfD(S_3)$. Section \ref{sec:genon-hamiltonian} discusses a particular example of boundary defects, namely genons in the bilayer theory $\mfD(G\times G)$ for any finite group $G$.

\subsection{Kitaev quantum double models}
\label{sec:kitaev-hamiltonian}

Kitaev's famous toric code paper \cite{Kitaev97} presents a model for topological quantum computation on a general lattice based on any finite group $G$. For simplicity of illustration and calculation, we will assume throughout our paper that the lattice is the square lattice in the plane; however, it is clear that all of the developed theory here extends to arbitrary lattices. In this model, a data qudit is placed on each edge of the lattice, as shown in Fig. \ref{fig:kitaev}. The Hilbert space for each qudit has an orthonormal basis given by $\{\ket{g}: g \in G\}$, so the total Hilbert space is $\mathcal{L}=\otimes_{e}\mathbb{C}[G]$.

\begin{figure}
\centering
\includegraphics[width = 0.65\textwidth]{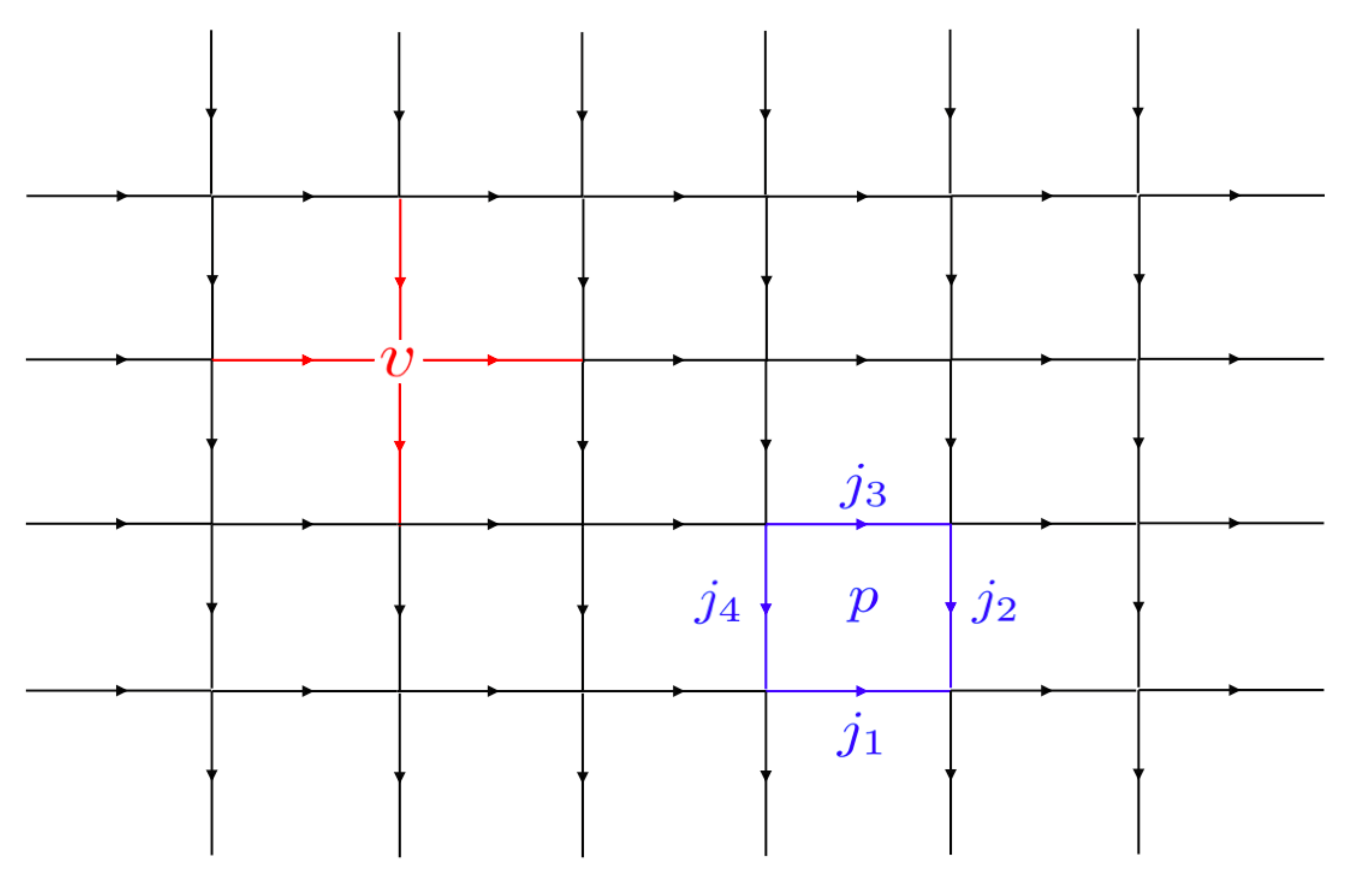}
\caption{Lattice for the Kitaev model. For simplicity of illustration and calculation, we use a square lattice, but in general, one can use an arbitrary lattice. If the group $G$ is nonabelian, it is necessary to define orientations on edges, as we have shown here. The edges $j$ and $j_1,...j_m$, used to obtain $A^g(v)$ and $B^h(p)$, are illustrated for this example of $v,p$.}
\label{fig:kitaev}
\end{figure}

As discussed in Ref. \cite{Kitaev97}, a Hamiltonian is used to transform the high-dimensional Hilbert space of all data qudits into a topological encoding. This Hamiltonian is built from several basic operators on a single data qudit:

\begin{equation}
L^{g_0}_+ \ket{g} = \ket{g_0g}
\end{equation}
\begin{equation}
L^{g_0}_- \ket{g} = \ket{gg_0^{-1}}
\end{equation}
\begin{equation}
T^{h_0}_+ \ket{h} = \delta_{h_0,h}\ket{h}
\end{equation}
\begin{equation}
T^{h_0}_- \ket{h} = \delta_{h_0^{-1},h}\ket{h}
\end{equation}

\noindent
where $\delta_{i,j}$ is the Kronecker delta function. These operators are defined for all elements $g_0,h_0 \in G$, and provide a faithful representation of the left/right multiplication and comultiplication in the Hopf algebra $\C[G]$. Using these operators, local gauge transformations and magnetic charge operators are defined as follows, on each vertex $v$ and plaquette $p$ \cite{Kitaev97}:

\begin{equation}
\label{eq:kitaev-vertex-g-term}
A^{g}(v,p) = A^{g}(v) = \prod_{j \in \text{star}(v)} L^g(j,v)
\end{equation}

\begin{equation}
\label{eq:kitaev-plaquette-h-term}
B^h(v,p) = \sum_{h_1 \cdots h_k = h} \prod_{m=1}^k T^{h_m}(j_m, p)
\end{equation}

Here, $j_1, ..., j_k$ are the boundary edges of the plaquette $p$ in counterclockwise order (see Fig. \ref{fig:kitaev}), and $L^g$ and $T^h$ are defined as follows: if $v$ is the origin of the directed edge $j$, $L^g(j,v) = L^g_-(j)$, otherwise $L^g(j,v) = L^g_+(j)$; if $p$ is on the left (right) of the directed edge $j$, $T^h(j,p) = T^h_-(j)$ ($T^h_+(j)$) \cite{Kitaev97}.

Note that since the $A^g(v)$ satisfy $A^g(v) A^{g'}(v) = A^{gg'}(v)$, the set of all $A^g(v)$ (for fixed $v$) form a representation of $G$ on the entire Hilbert space $\mathcal{L}=\otimes_{e}\mathbb{C}[G]$ of all data qudits \cite{Bombin08}.

In fact, we can define operators

\begin{equation}
\label{eq:bulk-local-operators}
D^{(h,g)} (v,p) = B^h (v,p) A^g (v,p)
\end{equation}

\noindent
that act on a {\it cilium} $s = (v,p)$, where $v$ is a vertex of $p$. These operators act locally, and they form the basis of a quasi-triangular Hopf algebra $\mathcal{D} = \Span\{D^{(h,g)}\}$, the {\it quantum double} $D(G)$ of the group $G$. The specific multiplication, comultiplication, and antipode for the Hopf algebra are presented in Ref. \cite{Kitaev97}. As vector spaces, we have

\begin{equation}
D(G) = F[G] \otimes \C[G],
\end{equation}
where $F[G]$ are complex functions on $G$.

In the next sections, we will see the importance of these local operators in determining the topological properties of excitations in this group model.

Finally, two more linear combinations of these $A^g$ and $B^h$ operators are required to define the Hamiltonian:

\begin{equation}
\label{eq:kitaev-vertex-term}
A(v) = \frac{1}{|G|} \sum_{g \in G} A^g(v,p) 
\end{equation}

\begin{equation}
\label{eq:kitaev-plaquette-term}
B(p) = B^1(v,p).
\end{equation}

\noindent
The Hamiltonian\footnote{Note: We call this Hamiltonian $H_{(G,1)}$, as this model is the Dijkgraaf-Witten theory with trivial cocycle (twist). In general, this Hamiltonian may be twisted by a 3-cocycle $\omega \in H^3(G,\C^\times)$, and may be written as $H_{(G,\omega)}$.} for the Kitaev model is then defined as\footnote{For a physical implementation of this Hamiltonian using quantum circuits, see Section \ref{sec:circuits}.}

\begin{equation}
\label{eq:kitaev-hamiltonian}
H_{(G,1)} = \sum_v (1-A(v)) + \sum_p (1-B(p))
\end{equation}

It is important to note that all terms in the Hamiltonian $H_{(G,1)}$ commute with each other. By the spectral theorem, this means that these operators share simultaneous eigenspaces. Each individual operator is a projector and has eigenvalues $\lambda = 0,1$. (Specifically, the $A(v)$ terms project onto the trivial representation, and the $B(p)$ terms project onto trivial flux \cite{Bombin08}.) The ground state of the Hamiltonian corresponds to the eigenspace with overall eigenvalue (energy) $\lambda = 0$, and states with excitations will have positive energy. Here, an excitation or a quasi-particle is defined so that exactly one of terms $(1-A(v))$ and one of the terms $(1-B(p))$ is in the $\lambda = 1$ eigenstate; we say the quasi-particle is located at the cilium $s=(v,p)$. The resulting quantum encoding is hence ``topological'': regardless of how densely we place the data qudits, there will always be a constant energy gap between the ground state and the first excited state, and between each excited state.

\subsection{Ribbon operators}
\label{sec:ribbon-operators}

In this section, we review the algebra of bulk ribbon operators for the Kitaev models as presented in Refs. \cite{Bombin08,Kitaev97}. These definitions play a crucial role in this chapter, as a major contribution of this chapter will be the presentation of boundary ribbon operators in Sections \ref{sec:bd-hamiltonian} and \ref{sec:bd-excitations}.

\subsubsection{Basic definitions}

Before we proceed to define the algebra of ribbon operators, let us first review the following basic definitions. In what follows, the {\it direct} lattice will denote the original lattice of the Kitaev model (cf. the {\it dual} lattice, in which vertices and plaquettes of the direct lattice are switched). Both lattices are shown in Fig. \ref{fig:ribbon-defs}.

\begin{figure}
\centering
\includegraphics[width = 0.65\textwidth]{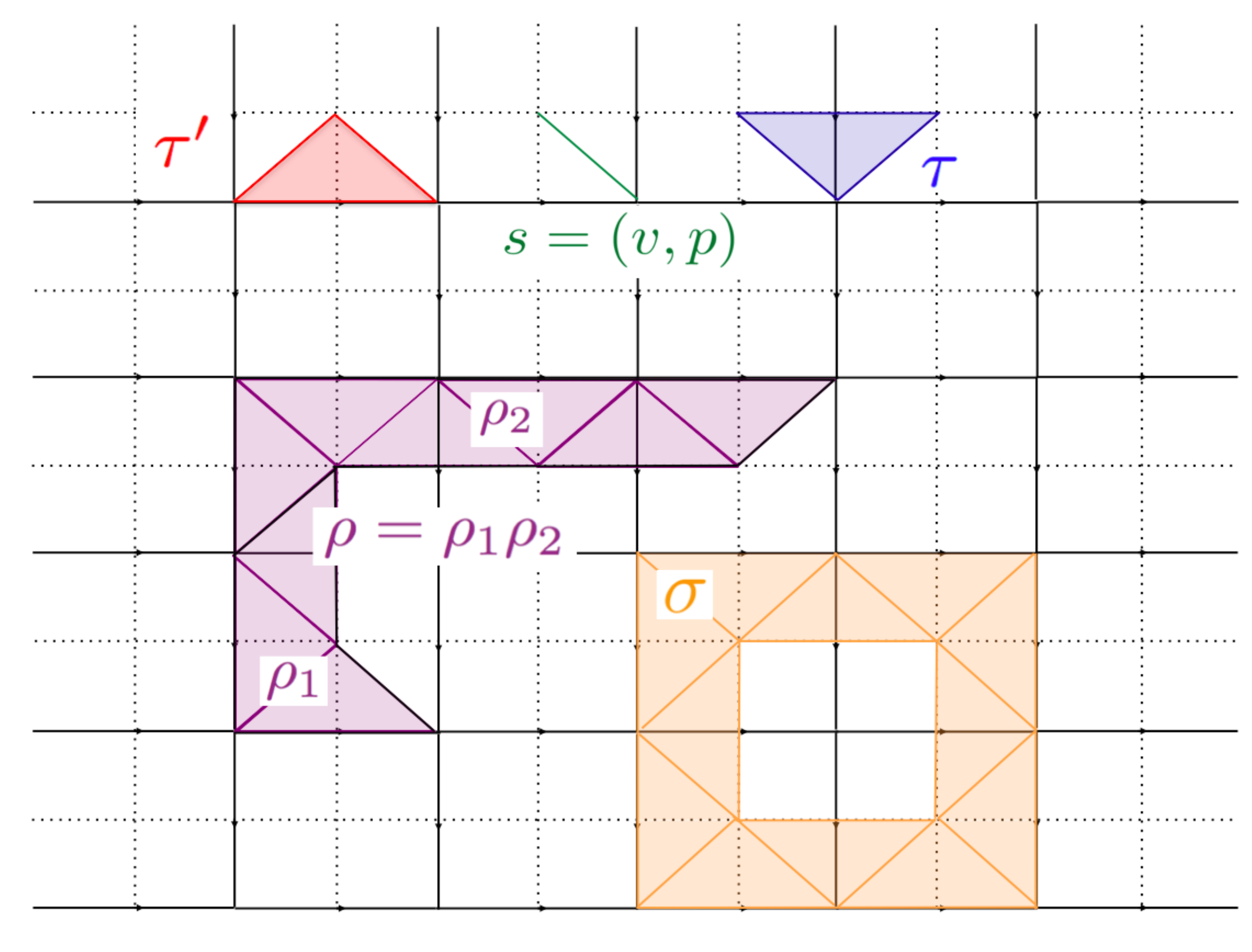}
\caption{Illustration of Definitions \ref{cilium-def}-\ref{ribbon-def}. The direct lattice is shown as before, and the dual lattice is shown in dotted lines. $s=(v,p)$ is a cilium. $\tau$ is a dual triangle, and $\tau'$ is a direct triangle. $\rho = \rho_1 \rho_2$ is a composite ribbon, formed by gluing the last site of $\rho_1$ to the first site of $\rho_2$. $\rho$ is an open ribbon, and $\sigma$ is a closed ribbon.}
\label{fig:ribbon-defs}
\end{figure}

\begin{definition}
\label{cilium-def}
A {\it cilium} is a pair $s = (v,p)$, where $p$ is a plaquette in the lattice, and $v$ is a vertex of $p$. These are visualized (e.g. in Fig. \ref{fig:ribbon-defs}) as lines connecting $v$ to the center of $p$ (i.e. the dual vertex corresponding to $p$).
\end{definition}

\begin{definition}
\label{triangle-def}
A {\it direct (dual) triangle} $\tau$ consists of two adjacent cilia $s_0,s_1$ connected via an edge $e$ on the direct (dual) lattice, as shown in Fig. \ref{fig:ribbon-defs}. We write $\tau = (s_0, s_1, e) = (\partial_0 \tau, \partial_1 \tau, e)$, listing sides in counterclockwise order if $\tau$ is direct, and clockwise order if $\tau$ is dual. Throughout the chapter, $\tau$ will be used to denote a dual triangle, and $\tau'$ a direct triangle.
\end{definition}

\begin{definition}
\label{ribbon-def}
A {\it ribbon} $\rho$ is an oriented strip of triangles $\tau_1, ... \tau_n$, alternating direct/dual, such that $\partial_1 \tau_i = \partial_0 \tau_{i+1}$ for each $i = 1,2,...n-1$, and the intersection $\tau_i \cap \tau_j$ has zero area if $i \neq j$ (i.e. $\rho$ does not intersect itself).

$\rho$ is said to be {\it closed} if $\partial_1 \tau_n = \partial_0 \tau_1$. $\rho$ is {\it open} if it is not closed. Examples of closed and open ribbons are shown in Fig. \ref{fig:ribbon-defs}.
\end{definition}

\begin{definition}
\label{open-ribbon-operator-def}
Let $\rho$ be an open ribbon, with endpoint cilia $s_0 = (v_0, p_0)$, $s_1 = (v_1, p_1)$. A {\it ribbon operator} on $\rho$ is an operator $F_\rho$ that commutes with all terms of the Hamiltonian (\ref{eq:kitaev-hamiltonian}) except possibly the terms corresponding to $v_0,p_0,v_1,$ or $p_1$.
\end{definition}

The goal of this section is hence to determine the algebra $\mathcal{F}$ of ribbon operators in the bulk of the Kitaev model.

\subsubsection{Triangle operators and the gluing relation}

The ribbon operators are defined recursively \cite{Bombin08,Kitaev97}. As discussed in Refs. \cite{Bombin08,Kitaev97}, given a ribbon $\rho$, the set of ribbon operators on $\rho$ has basis elements $F^{(h,g)}_\rho$ indexed by two elements of $G$. The simplest ribbon is the empty ribbon $\epsilon$, for which the ribbon operators are given by

\begin{equation}
F^{(h,g)}_\epsilon = \delta_{1,g}.
\end{equation}

The next simple case is when $\rho$ is a single triangle. Let $\tau = (s_0,s_1,e)$ be any dual triangle, and let $\tau'=(s_0',s_1',e')$ be any direct triangle. The ribbon operators are defined as follows:

\begin{equation}
\label{eq:triangle-operator-def}
F^{(h,g)}_\tau := \delta_{1,g} L^h (e), \qquad F^{(h,g)}_{\tau'} := T^g (e')
\end{equation}

\noindent
In this definition, the choice of $+$ or $-$ for the $L,T$ operators is determined by the orientation of the edge on each triangle.

Finally, we define a ``gluing relation'' on ribbon operators. Let $\rho = \rho_1 \rho_2$ be the ribbon formed by gluing the last cilium of $\rho_1$ to the first cilium of $\rho_2$ (see Fig. \ref{fig:ribbon-defs}). We define the ribbon operator on this composite ribbon to be

\begin{equation}
\label{eq:ribbon-gluing}
F^{(h,g)}_\rho := \sum_{k \in G} F^{(h,k)}_{\rho_1} F^{(k^{-1}hk,k^{-1}g)}_{\rho_2}.
\end{equation}

It is simple to check that this definition makes $F^{(h,g)}_\rho$ independent of the particular choice of $\rho_1,\rho_2$ \cite{Bombin08}.

The operators $F^{(h,g)}$ also form a basis for a quasi-triangular Hopf algebra $\mathcal{F}$, as shown in Ref. \cite{Kitaev97}. In fact, Ref. \cite{Kitaev97} also shows that the algebra $\mathcal{F}$ is precisely the dual Hopf algebra to the quantum double $\mathcal{D} = D(G)$.

\subsubsection{Elementary excitations in the Kitaev model}

One of the most important applications of ribbon operators is to classify the elementary excitations\footnote{There are many terms in the literature that all refer to essentially the same thing: an elementary excitation, a simple quasi-particle, an anyon, or a simple object of $\mathcal{Z}(\Rep(G))$.  A topological charge or a superselection sector is an isomorphism class of all the above. We will alternate in our use of these terms in different sections of the paper, to best match the current literatures in the corresponding fields (physics, mathematics, or computer science).} in the Kitaev model. As shown in the previous section, the operators $F^{(h,g)}_\rho$ create a pair of excitations at the endpoints of the ribbon $\rho$. However, these excitations may be a superposition of ``elementary'' excitations, so they are not stable and may easily decohere. Let us formally define these elementary excitations as follows:

\begin{definition}
Let $\mathcal{E}$ denote the space of excitations that can be created at any cilium $s = (v,p)$ by applying a linear combination of the operators $F^{(h,g)}_\rho$ to some ribbon $\rho$ terminating at $s$. An {\it elementary excitation}
or {\it simple quasi-particle}
is given by a subspace of $\mathcal{E}$ that is preserved under the action of local operators $D^{(h,g)}(s)$ 
(defined in Eq. (\ref{eq:bulk-local-operators})),
that cannot be further decomposed (non-trivially) into the direct sum of such subspaces.
\end{definition}

Since these subspaces cannot be modified by local operators, they determine the ``topological charge'' of the excitation. On the other hand, the degrees of freedom within this subspace are purely local properties of the excitation. We define the quantum dimension of the excitation to be the square root of the dimension of this subspace.

It turns out that the basis $F^{(h,g)}$ for the algebra $\mathcal{F}$ is not useful in classifying elementary excitations. Instead, by Ref. \cite{Bombin08}, we have the following Theorem:

\begin{theorem}
\label{bulk-anyon-types}
The elementary excitations of the Kitaev model with group $G$ are given by pairs $(C,\pi)$, where $C$ is a conjugacy class of $G$ and $\pi$ is an irreducible representation of the centralizer $E(C)$ of $C$.
\end{theorem}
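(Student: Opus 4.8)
The plan is to follow the Fourier-transform argument of Bombin--Martin-Delgado \cite{Bombin08}, reorganizing the basis $\{F^{(h,g)}_\rho\}$ of the ribbon algebra $\mathcal{F}$ into a basis adapted to the action of the local operators $D^{(h,g)}(s)$, and then to read off the simple subspaces. First I would fix an open ribbon $\rho$ with endpoint cilia $s_0, s_1$ and recall that, by the gluing relation \eqref{eq:ribbon-gluing}, the operators $F^{(h,g)}_\rho$ span $\mathcal{F}$, which (as quoted above from \cite{Kitaev97}) is the Hopf algebra dual to $D(G)$. Because $\mathcal{E}$ is obtained by acting with all of $\mathcal{F}$ on the ground state and then cutting at $s_0$, the space of excitations at a single cilium is, as a module over the local algebra $\mathcal{D} = D(G)$, isomorphic to $D(G)$ acting on itself (or on $\mathcal{F}^*$) by the appropriate regular-type action. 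So the classification of simple quasi-particles reduces to decomposing this regular representation of $D(G)$ into irreducibles.

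The key computational step is therefore the representation theory of the quantum double $D(G) = F[G] \otimes \C[G]$. I would diagonalize the ``magnetic'' part: the operators $B^h(s)$ commute among themselves and measure the flux $h \in G$, so $\mathcal{E}$ decomposes over $G$-labeled flux sectors. The ``electric'' operators $A^g(s)$ act by conjugating the flux, $h \mapsto ghg^{-1}$, hence they permute flux sectors within a single conjugacy class $C$ and do not mix different conjugacy classes; this gives the first index $C$. Fixing a representative $g_0 \in C$, the stabilizer of $g_0$ under conjugation is exactly the centralizer $E(C)$, and the subspace of $\mathcal{E}$ with flux $g_0$ is acted on by $E(C)$; decomposing that action into irreducibles $\pi \in \widehat{E(C)}$ and inducing back up produces one simple $D(G)$-module per pair $(C,\pi)$. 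This is the standard construction of irreducibles of a transformation-groupoid algebra (Mackey/Wigner little-group method), and I would cite or reproduce it; the content I actually need to verify is that the $D(G)$-action arising from the ribbon/local operators $D^{(h,g)}(s)$ is the one for which this decomposition applies, i.e.\ that $B^h(s)$ reads the flux and $A^g(s)$ conjugates it as claimed. That identification is essentially the definition \eqref{eq:bulk-local-operators} together with the commutation relations in \cite{Kitaev97,Bombin08}, so it is bookkeeping rather than a genuine difficulty.

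The one subtlety worth dwelling on — and the step I expect to be the main obstacle to a fully self-contained write-up — is showing that a simple quasi-particle as \emph{defined} here (a $D^{(h,g)}(s)$-invariant subspace of $\mathcal{E}$ that admits no nontrivial further decomposition) coincides with an irreducible $D(G)$-submodule, and that $\mathcal{E}$ at a single cilium contains each such irreducible with the right multiplicity so that nothing is lost or double-counted. For this I would argue that the local operators $D^{(h,g)}(s)$ at the cilium $s$ generate precisely a copy of $D(G)$ acting on $\mathcal{E}$, that this action is faithful on the excitation space at $s$, and that the excitation space is a free (or at least multiplicity-controlled) module so that its isotypic decomposition is exactly indexed by $\Irr(D(G))$. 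Given that, simple subspaces are irreducible submodules, irreducible submodules are classified by $(C,\pi)$ by the previous paragraph, and we are done. I would also record the quantum dimension as a consistency check: the dimension of the little-group-induced module is $|C|\cdot\dim\pi$, and $\sqrt{}$ of the analogous internal-degeneracy count reproduces $\FPdim(C,\pi) = |C|\dim\pi$ — matching the normalization $\mathrm{FPdim}(T,R) = |K|\dim(R)/|K^{r_T}|$ stated later for the boundary in the special case $K = G$, which is a reassuring sanity check rather than part of the proof proper.
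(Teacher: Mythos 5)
Your proposal is correct, and its mathematical content is the same as the paper's, but it is organized differently: the paper's proof is the explicit Bombin--Martin-Delgado Fourier transform on the ribbon algebra $\mathcal{F}$ — it writes down the new basis $F^{(C,\pi);(\mathbf{u},\mathbf{v})}_\rho$ of Eq.\ (\ref{eq:bulk-ribbon-FT}) and its inverse, and then quotes \cite{Bombin08} for the two locality facts (operators $D^{(h,g)}$ at the endpoints mix only the $(\mathbf{u},\mathbf{v})$ indices, and no operator supported near a single endpoint can change $(C,\pi)$) — whereas you work on the module side, identifying the local algebra at a cilium with $D(G)$ and invoking the little-group/Mackey classification of $\mathrm{Irr}(D(G))$ (flux sectors permuted by conjugation, stabilizer $E(C)$, irreps $\pi$). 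The two are genuinely equivalent: the paper's Fourier basis is exactly the matrix elements of the irreducible $D(G)$-representations you construct (the paper itself remarks on this right after the proof), and both arguments ultimately rest on the same \cite{Bombin08} locality statements to promote $(C,\pi)$ from a representation label to a superselection label. What your route buys is conceptual economy — the classification of $\mathrm{Irr}(D(G))$ is standard and you never need the explicit change-of-basis or inverse formulas; what the paper's route buys is exactly those explicit formulas, which are then reused verbatim to derive the boundary Fourier transform (Theorem \ref{bd-anyon-types}) and the condensation coefficients (Theorems \ref{condensation-products} and \ref{inverse-condensation-products}). The gap you flag — that $\mathcal{E}$ at one cilium is a $D(G)$-module realizing each irrep with controlled multiplicity, so that "simple quasi-particle" in the paper's Definition coincides with "irreducible $D(G)$-submodule up to isomorphism" — is real but is handled in the paper implicitly by the explicit construction: for fixed $(C,\pi)$ the span of the $F^{(C,\pi);(\mathbf{u},\mathbf{v})}_\rho$ has dimension $(|C|\dim\pi)^2$, i.e.\ $\mathcal{F}\cong\bigoplus_{(C,\pi)}V_{(C,\pi)}\otimes V_{(C,\pi)}^*$ under the two endpoint actions, which is precisely your "regular-type" multiplicity claim and also yields $\FPdim(C,\pi)=|C|\dim\pi$ as you note.
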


\begin{proof}
To prove this theorem, we construct a change-of-basis for the ribbon operator algebra $\mathcal{F}$. Following Ref. \cite{Bombin08}, let us construct a new basis as follows:

\begin{enumerate}
\item
Choose an arbitrary element $r_C \in G$ and form its conjugacy class $C = \{gr_Cg^{-1}: g \in G\}$. Index the elements of $C$ so that $C = \{ c_i \}_{i=1}^{|C|}$.
\item
Form the centralizer $E(C) = \{ g \in G: gr_C = r_Cg \}$.\footnote{It is not hard to show that, up to conjugation, $E(C)$ depends only on $C$ and not on $r_C \in C$.}
\item
Form a set of representatives $P(C) = \{ p_i \}_{i=1}^{|C|}$ of $G/E(C)$, so that $c_i = p_i r_C p_i^{-1}$.
\item
Choose a basis for each irreducible representation $\pi$ of $E(C)$. Let $\Gamma_\pi(k)$ denote the corresponding unitary matrix for the representation of $k \in G$.
\item
The new basis is
\begin{multline}
\label{eq:elementary-ribbon-basis}
\{F^{(C,\pi);({\bf u,v})}_\rho: \text{ } C \text{ a conjugacy class of }G, \text{ } \pi \in (E(C))_{\text{ir}},
\\{\bf u} = (i,j), {\bf v} = (i', j'), 1 \leq i, i' \leq |C|, 1 \leq j,j', \leq \dim(\pi)\},
\end{multline}
where $E(C)_{\text{ir}}$ denotes the irreducible representations of $E(C)$, and each $F^{(C,\pi);({\bf u,v})}_\rho$ is given by
\begin{equation}
\label{eq:bulk-ribbon-FT}
F^{(C,\pi);({\bf u,v})}_\rho := \frac{\dim(\pi)}{|E(C)|}
\sum_{k \in E(C)} \left(\Gamma_\pi^{-1}(k)\right)_{jj'}F^{(c_i^{-1}, p_i k p_{i'}^{-1})}.
\end{equation}
\end{enumerate}

We can also construct the inverse change of basis \cite{Bombin08}. Suppose we are given $g,h \in G$. Then:

\begin{enumerate}
\item
Let $C$ be the conjugacy class of $h^{-1}$. Index the elements of $C$ so that $C = \{ c_i \}_{i=1}^{|C|}$.
\item
Let $E(C)$ be the centralizer of $C$ as above.
\item
Form a set of representatives $P(C)$ of $G/E(C)$ as above.
\item
Any $g \in G$ has a unique decomposition $g = p_i k$ s.t. $p_i \in P(C)$ and $k \in E(C)$.
\item
For each $g \in G$, let $i(g)$, $k(g)$ denote the index functions to obtain $p_i$ and $k$ from (4).
\item
Let $k_{(h,g)} = \left(p_{i(h^{-1})}\right)^{-1}g p_{i(g^{-1}h^{-1}g)}$.
\item
The inverse change-of-basis is
\begin{equation}
F^{(h,g)}_\rho =
\sum_{\pi \in E(C)_{\text{ir}}}\sum_{j,j' = 1}^{\dim(\pi)}
\left(\Gamma_\pi(k_{(h,g)})\right)_{jj'} F^{(C,\pi);(\bf{u,v})}_\rho
\end{equation}
where ${\bf u} = (i(h^{-1}), j)$, ${\bf v} = (i(g^{-1}h^{-1}g), j')$.
\end{enumerate}

The basis (\ref{eq:elementary-ribbon-basis}) is particularly useful because the parameters $(C,\pi)$ completely encode the global degrees of freedom of the particles created, and the $(\bf{u,v})$ completely encode the local degrees of freedom. Specifically, as shown in Ref. \cite{Bombin08}, different operators $F^{(C,\pi);(\bf{u,v})}_\rho$ with the same $(C,\pi)$ but different $(\bf{u,v})$ may be changed into one another by applying the local operators $D^{(h,g)}$ at the two endpoints $s_0,s_1$ of $\rho$. Similarly, if two ribbon operators in this new basis have different $(C,\pi)$ pairs, any operator that can change one to another must have support that connects $s_0$ and $s_1$. It follows that the elementary excitations of the Kitaev model are described precisely by pairs $(C,\pi)$, where $C$ is a conjugacy class of the original group $G$, and $\pi$ is an irreducible representation of the centralizer of $C$.

\end{proof}

Physically, in the basis (\ref{eq:elementary-ribbon-basis}), $C$ represents magnetic charge, and $\pi$ represents electric charge. The quantum dimension of an elementary excitation $(C,\pi)$ is given by the square root of the dimension of the subalgebra spanned by all $F^{(C,\pi);({\bf u,v})}_\rho$, or

\begin{equation}
\FPdim(C,\pi) = |C|\Dim(\pi).
\end{equation}

As a special case, the simple particle $(C,\pi)$, where $C = \{1\}$ is the conjugacy class of the identity element and $\pi$ is the trivial representation, is the vacuum particle (i.e. absence of excitation). The vacuum particle always has a quantum dimension of 1.

When two anyons (given by pairs $(C_1,\pi_1)$ and $(C_2, \pi_2)$, say) are brought by ribbon operators to the same cilium on the lattice, one can essentially consider them as one composite anyon. Specifically, one can again consider the local operators $D^{(h,g)}$ acting on this cilium, which will determine new sets of local/global degrees of freedom on the new composite anyon. This process is known as {\it anyon fusion}. It can be shown that anyon fusion in this group model is described by the fusion rules of the unitary modular tensor category $\mZ(\Rep(G))$.

We would like to note that the change of basis to (\ref{eq:elementary-ribbon-basis}) and its inverse is essentially a general Fourier transform and its inverse. However, this Fourier transform acts on the {\it operator algebra} $\mathcal{F}$, not on the vectors themselves. In fact, this is because each pair $(C,\pi)$ corresponds to an irreducible representation of the quantum double $D(G)$ \cite{Kitaev97}. For general (non-abelian) groups, the Fourier basis is precisely given by matrix elements of the irreducible representations \cite{Moore06}.

\subsection{Quantum double models with boundaries}
\label{sec:bd-hamiltonian}

In previous sections, we have defined the Kitaev quantum double model on a sphere, or an infinitely large lattice on the plane. However, it is also important to consider the case where the lattice has boundaries/holes (e.g. in Fig. \ref{fig:boundary}), as this is a powerful model with degeneracy that will allow us to achieve universal quantum computation. In this section, we present the Hamiltonian and ribbon operators for the Kitaev model with boundary. The Hamiltonian will be adapted from previous works on gapped boundaries and domain walls by Beigi et al. \cite{Beigi11} and Bombin and Martin-Delgado \cite{Bombin08}.

\begin{figure}
\centering
\includegraphics[width = 0.65\textwidth]{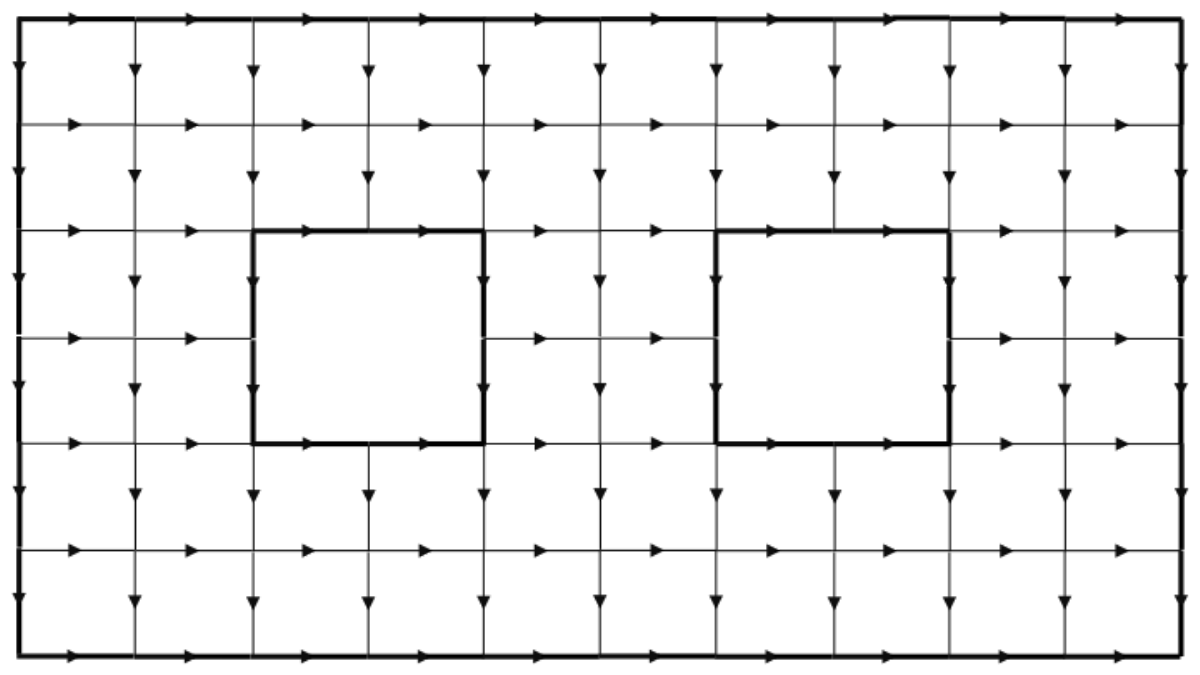}
\caption{Lattice for the Kitaev model with boundary. For any fixed group $G$, there can be multiple ways to define projection operators at the boundary such that all terms in the new Hamiltonian still commute. These are studied in Section \ref{sec:bd-hamiltonian}.}
\label{fig:boundary}
\end{figure}

\subsubsection{Hamiltonians for quantum double models with boundaries}

We will consider the model in which a gapped boundary is determined by a subgroup $K \subseteq G$ up to conjugation. In general, as shown in Ref. \cite{Beigi11}, a boundary is determined by both $K$ and a 2-cocycle $\phi \in H^2(K,\C^\times)$, and it is straightforward to generalize our results. Before we define the Hamiltonian, let us first define some new projector terms, as in Ref. \cite{Bombin08}:

\begin{equation}
A^K(v) := \frac{1}{|K|} \sum_{k \in K} A^k(v)
\end{equation}

\begin{equation}
B^K(p) := \sum_{k \in K} B^k(p)
\end{equation}

\begin{equation}
L^K(e) := \frac{1}{|K|} \sum_{k \in K} L^k(e)
\end{equation}

\begin{equation}
T^K(e) := \sum_{k \in K} T^k(e)
\end{equation}

Here, $e$ is an edge on the lattice and $A^k,B^k,L^k,T^k$ are the operators defined in Section \ref{sec:kitaev-hamiltonian}. In this context, we see that $A^k$ and $L^k$ are now different faithful representations of the multiplication in the Hopf sub-algebra $\C[K] \subseteq \C[G]$, and $B^k$ and $T^k$ are representations of the comultiplication in the subalgebra. The new projectors $A^K$ ($L^K$) now project vertices (edges) to a trivial sector of the representation of $K \subseteq G$, where the representation matrices are again given by $A^g$ ($L^g$) as noted in Section \ref{sec:kitaev-hamiltonian}. Similarly, the new projectors $B^K$ and $T^K$ now restrict the flux through a plaquette/on an edge to an element of $K$.

Following Ref. \cite{Bombin08}, we can now define the following Hamiltonian\footnote{As before, we write $H^{(K,1)}_{(G,1)}$ to leave room for the generalized version, where a boundary depends also on a 2-cocycle $\phi$ of $K$.}: 

\begin{equation}
\label{eq:bd-hamiltonian-K}
H^{(K,1)}_{(G,1)} = \sum_v (1-A^K(v)) + \sum_p (1 - B^K(p)) + \sum_e ((1-T^K(e)) + (1-L^K(e))
\end{equation}

It is important to note that as in the Hamiltonian (\ref{eq:kitaev-hamiltonian}), all terms in this Hamiltonian commute with each other. Hence $H^{(K,1)}_{(G,1)}$ is also gapped.

\begin{figure}
\centering
\includegraphics[width = 0.65\textwidth]{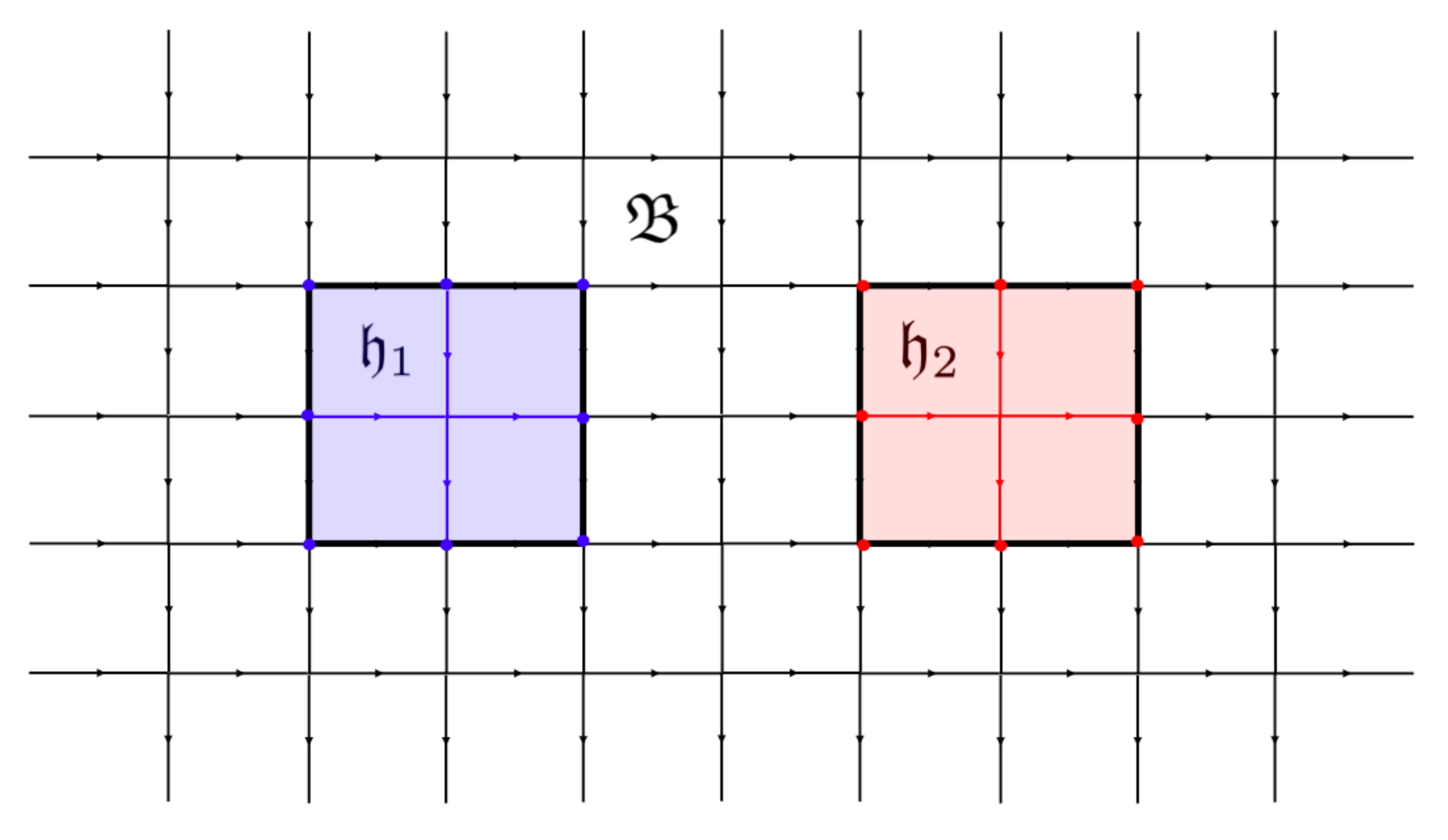}
\caption{Example: defining the Hamiltonian (\ref{eq:gapped-bds-hamiltonian}), in the case of two holes on an infinite lattice. The new Hamiltonians $H^{(K_1,1)}_{(G,1)}$ ($H^{(K_2,1)}_{(G,1)}$) are applied to all vertices, plaquettes, and edges within the blue (red) shaded region, and all vertices (blue or red dots) on the boldfaced lines. Specifically, we note that the vertices on the boldfaced lines are part of the holes, while the edges are not (black lines vs. blue or red dots). The bulk Hamiltonian $H_{(G,1)}$ is applied to all other vertices and plaquettes (white region).}
\label{fig:boundary-hamiltonian}
\end{figure}

We wish to take the standard Kitaev model, but modify the Hamiltonian in the presence of $n$ holes $\mathfrak{h}_1, ... \mathfrak{h}_n$ in the lattice, given by subgroups $K_1,...K_n$, respectively. Each hole is defined to contain all vertices, plaquettes, and edges within its border, and all (direct or dual) vertices on its border. We specifically note that edges on the border are {\it not} a part of the hole. Let $\mathfrak{B}$ denote the bulk, i.e. the complement of $\cup_i \mathfrak{h}_i$. The situation is shown in Fig. \ref{fig:boundary-hamiltonian}. The new Hamiltonian for this gapped boundary model will be defined as follows:

\begin{equation}
\label{eq:gapped-bds-hamiltonian}
H_{\text{G.B.}} = H_{(G,1)}(\mathfrak{B}) + \sum_{i=1}^{n} H^{(K_i,1)}_{(G,1)}(\mathfrak{h}_i).
\end{equation}

Here, $H^{(K_i,1)}_{(G,1)}(\mathfrak{h}_i)$ indicates that the Hamiltonian $H^{(K_i,1)}_{(G,1)}$ is acting on all edges, vertices, and plaquettes of the hole $\mathfrak{h}_i$, and similarly for $H_{(G,1)}(\mathfrak{B})$. As in the cases of (\ref{eq:kitaev-hamiltonian}) and (\ref{eq:bd-hamiltonian-K}), all terms in the Hamiltonian commute with each other, and $H_{\text{G.B.}}$ is also gapped.

\begin{remark}
\label{bd-ribbon-def}
As discussed in Ref. \cite{Bombin08}, the Hamiltonian $H^{(K,1)}_{(G,1)}$, $K \subseteq G$, reduces the gauge symmetry of the original Hamiltonian $H_{(G,1)}$ to the trivial one (equivalent to vacuum) in all areas to which it is applied. Hence, if it is preferable, we may simply have $H^{(K,1)}_{(G,1)}$ act on a border of the hole with a width of a single plaquette, and empty space beyond it. Because of this, the term ``boundary'' will henceforth be used to refer to the ribbon that runs along the line dividing two different Hamiltonians and lies within the region of $H^{(K,1)}_{(G,1)}$, as anything beyond this boundary ribbon is essentially vacuum. Similarly, boundary cilia will be cilia along this ribbon. This configuration is illustrated in Fig. \ref{fig:boundary-hamiltonian-2}.

Similarly, one can consider the case where the lattice has an external boundary given by subgroup $K_0$. In this case, it is not practical or necessary to have $H^{(K_0,1)}_{(G,1)}$ act on all (i.e. infinitely many) data qudits outside the original lattice. Instead, we will simply have the Hamiltonian $H^{(K_0,1)}_{(G,1)}$ act on a border of the entire lattice with a width of a single plaquette. Anything beyond the border may then be regarded as empty space. This is also illustrated in Fig. \ref{fig:boundary-hamiltonian-2}.
\end{remark}

\begin{remark}
\label{single-bd-creation}
We would like to note that the Hamiltonian $H_{\text{G.B.}}$ can be used to create just a single gapped boundary, unlike anyons in the bulk, which must be created in pairs.

We also note that a gapped boundary may be moved via adiabatic Hamiltonian tuning of the Hamiltonian $H_{\text{G.B.}}$, to enlarge or shrink the hole. This becomes very important in the context of Chapter \ref{sec:operations}, where we would like to braid gapped boundaries around each other to obtain quantum gates.
\end{remark}

\begin{figure}
\centering
\includegraphics[width = 0.7\textwidth]{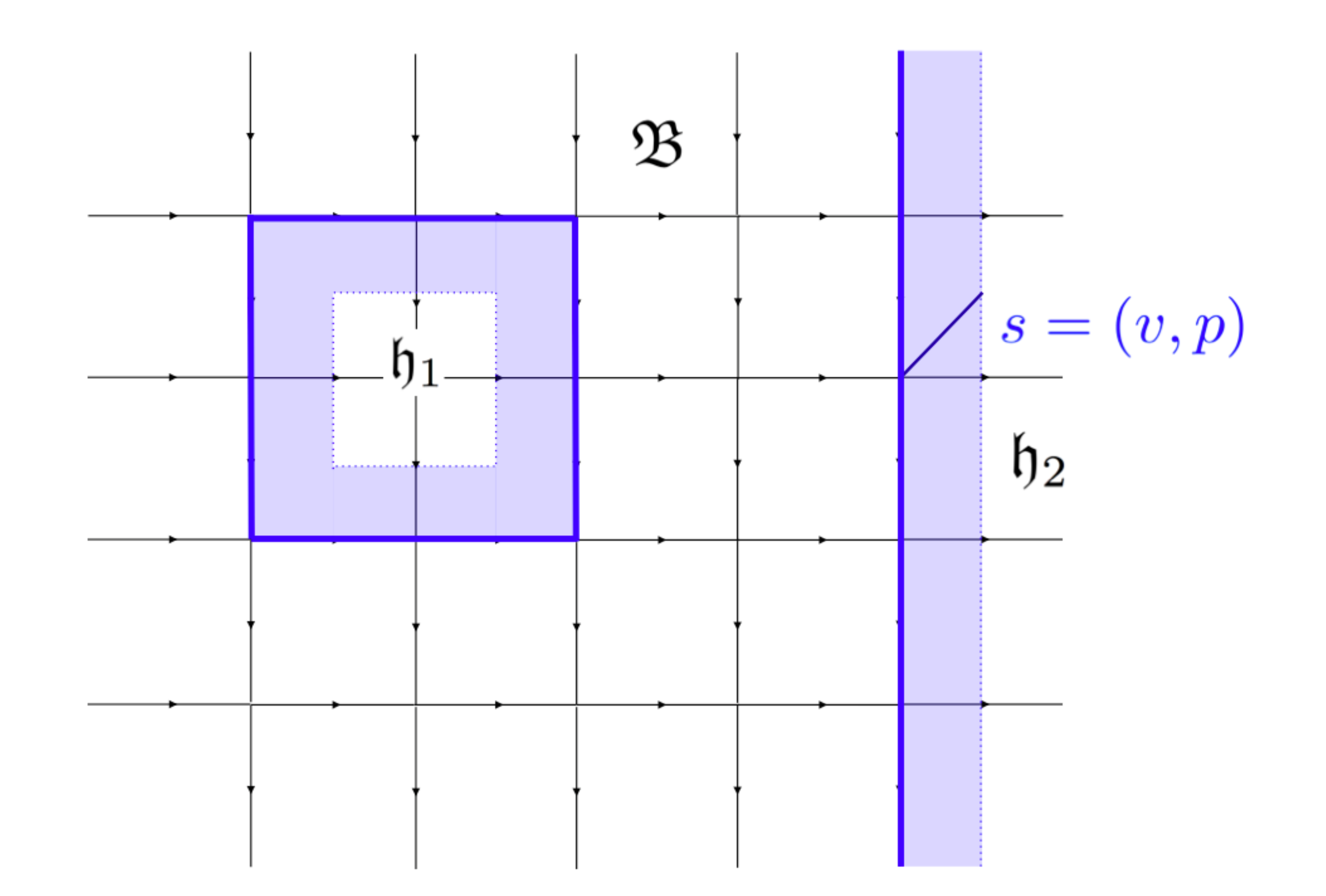}
\caption{Definitions of boundary ribbon/cilia. Given boundary lines (boldfaced) dividing regions of bulk/boundary Hamiltonians as shown, the boundary ribbons are the shaded ribbons. A cilium on the boundary ribbon (e.g. $s$ in the figure) is said to be a boundary cilium. Since the Hamiltonians $H^{(K,1)}_{(G,1)}$ break all gauge symmetries, anything beyond the boundary ribbon in the region of $H^{(K,1)}_{(G,1)}$ is essentially vacuum; one may ignore all Hamiltonian terms there, if desired.}
\label{fig:boundary-hamiltonian-2}
\end{figure}

\begin{figure}
\centering
\includegraphics[width = 0.7\textwidth]{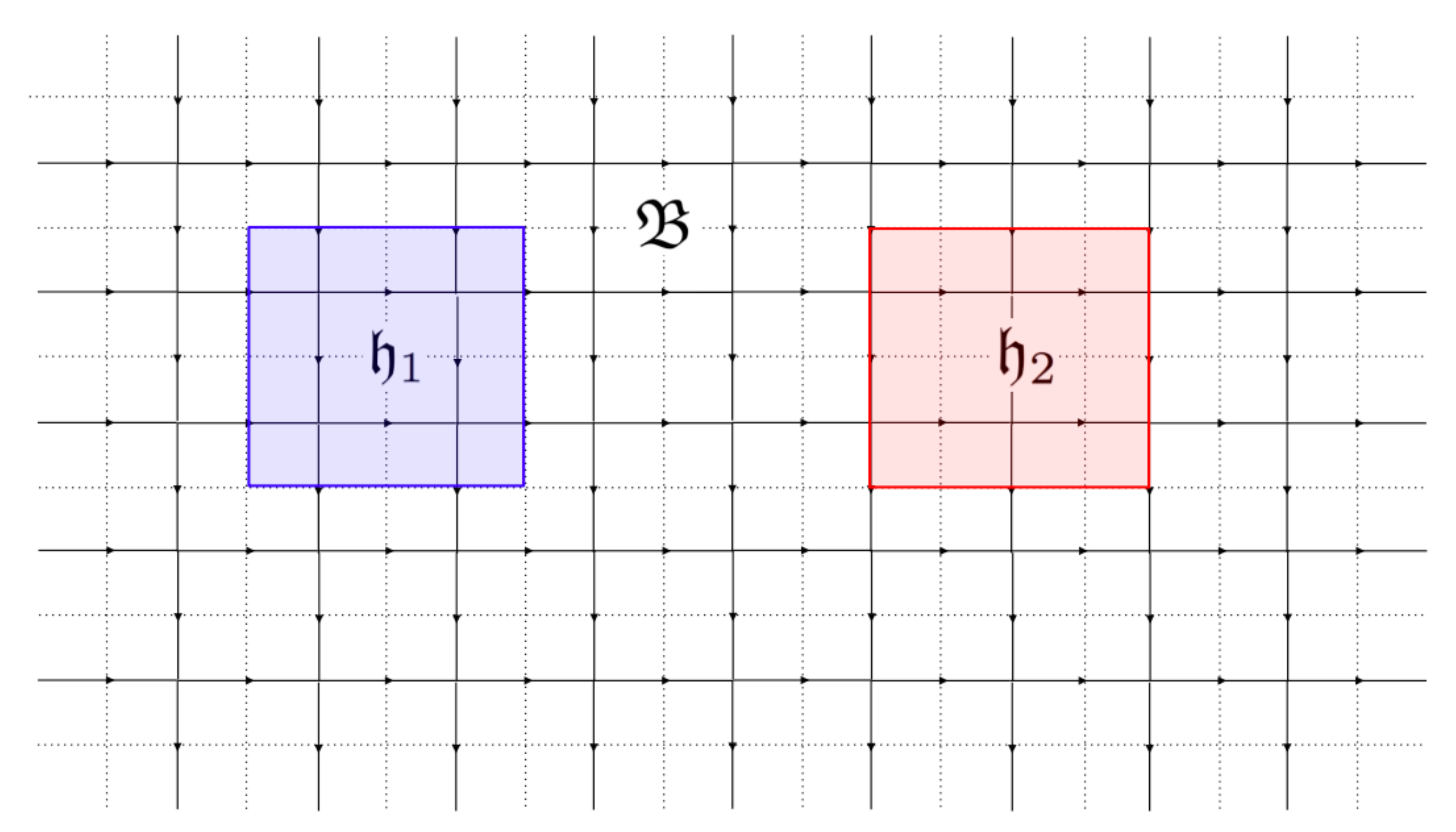}
\caption{Definition of the Hamiltonian $H_{\text{G.B.}}$, in cases where some (e.g. $\mathfrak{h}_2$) or all (e.g. $\mathfrak{h}_1$) of the hole's sides lie on the dual lattice. In this case, we note that the vertices and dual vertices on the boldfaced lines are part of the holes, while the edges are not.}
\label{fig:boundary-hamiltonian-3}
\end{figure}

\begin{remark}
\label{dual-bd-rmk}
In this section, we have defined the Hamiltonian $H_{\text{G.B.}}$ that can create holes in the lattice, whose sides lie on the direct lattice (see Fig. \ref{fig:boundary-hamiltonian}). More generally, we can create holes where some or all of the sides lie on the dual lattice, such as $\mathfrak{h}_1$ and $\mathfrak{h}_2$, respectively, in Fig. \ref{fig:boundary-hamiltonian-3}. For both cases, we say that the Hamiltonian $H^{(K_i,1)}_{(G,1)}$ acts on vertices, plaquettes, and edges within the shaded square, and on the vertices and plaquettes of the boldfaced boundary of the square. Note that as before, $H^{(K_i,1)}_{(G,1)}$ does not act on the edges of the boldfaced square boundary. In general, the properties of holes completely on the dual lattice such as $\mathfrak{h}_1$ are almost the same as those of holes on the direct lattice; the only difference is up to an electromagnetic symmetry in the model. However, holes such as $\mathfrak{h}_2$ that are partially on the dual lattice are of more interest. These holes were briefly considered by Fowler et al. in Ref. \cite{Fowler12} in the special case of the toric code; we consider them in generality in Section \ref{sec:defect-hamiltonian}.
\end{remark}

\subsubsection{Local operators}
\label{sec:bd-local-operators}

We now introduce a {\it quasi-Hopf algebra} $\mathcal{Z} = Z(G,1,K,1)$ which describes the local operators that act on a cilium $s = (v,p)$ on the boundary. We call this the {\it group-theoretical quasi-Hopf algebra} based on the group $G$, subgroup $K$, and trivial cocycles $\omega \in H^3(G, \C^\times)$ and $\psi \in H^2(K, \C^\times)$, following the definition of the group-theoretical category $\mC(G,\omega,K,\psi)$ in Ref. \cite{Etingof05}. The algebra was first constructed by Zhu in Ref. \cite{Zhu01}, and is also discussed in detail in Ref. \cite{Schauenburg02}. The following operators form a basis for $\mathcal{Z}$:

\begin{equation}
\label{eq:bd-local-operators}
Z^{(hK,k)}(v,p) = B^{hK}(v,p)A^k(v,p)
\end{equation}

\noindent
where we have defined

\begin{equation}
B^{hK}(v,p) = \sum_{j \in hK} B^j(v,p).
\end{equation}

\noindent
Here, $k \in K$ is an element of the subgroup, and $hK = \{hk: k \in K\}$ is a left coset. HHere, we only need to consider the local vertex operators $A^k$ where $k \in K$, as the actions of $A^g(v,p)$, $g \notin K$ on the representations of $K$ at the vertex $v$ and edges in $\text{star}(v)$ are linear combinations of the actions of $A^k(v,p)$. Similarly, since the $B^K$ terms of the Hamiltonian (\ref{eq:bd-hamiltonian-K}) project onto flux in the subgroup $K$, we only need to consider the generalization where the flux lies within a left coset (instead of restricting to a particular group element). Hence, as vector spaces, we have

\begin{equation}
Z(G,1,K,1) = F[G/K] \otimes \C[K],
\end{equation}

\noindent
i.e. $Z(G,1,K,1)$ is the tensor product of the algebra of complex functions over the left cosets of $K$ and the group algebra $\C[K]$. The multiplication, comultiplication and antipode for the quasi-Hopf algebra $\mathcal{Z}$ are presented in Refs. \cite{Zhu01,Schauenburg02}, where they show that they satisfy all the quasi-Hopf algebra axioms. We present them in the context of local operators in Appendix \ref{sec:quasi-hopf-algebra}. In Section \ref{sec:algebraic-condensation}, we show how the representation category of $Z(G,1,K,1)$ is the group-theoretical category $\mC(G,1,K,1)$ as defined in \cite{Etingof05}.

\subsubsection{Ribbon operators}
\label{sec:bd-ribbon-operators}

We will now describe the coquasi-Hopf algebra $\mathcal{Y} = Y(G,K)$ of ribbon operators which create all possible excited states on the boundary that may result from pushing a bulk anyon into the boundary (see Fig. \ref{fig:condensation}). In this case, a {\it boundary ribbon operator} is defined as an operator that is supported on a boundary ribbon (and acts trivially elsewhere) and commutes with all vertex and plaquette terms in the Hamiltonian $H_{\text{G.B.}}$ except at the two end cilia. These are defined recursively, as in the case of the bulk ribbon operators. As with the local operators $\mZ$, these will also be indexed by a pair $(hK,k)$, $hK$ a left coset, and $k \in K$. As before, we have the following definition for the trivial ribbon:

\begin{equation}
Y^{(hK,k)}_\epsilon := \delta_{1,k}
\end{equation}

Similarly, let $\tau = (s_0,s_1,e)$ be any dual triangle, and let $\tau'=(s_0',s_1',e')$ be any direct triangle. We define

\begin{equation}
\label{eq:bd-triangle-operator-def}
Y^{(hK,k)}_\tau := \delta_{1,k} L^{hK} (e), \qquad Y^{(hK,k)}_{\tau'} := T^k (e')
\end{equation}

\noindent
where we have defined

\begin{equation}
L^{hK}(e) := \frac{1}{|hK|} \sum_{j \in hK} L^j(e).
\end{equation}

Finally, we have the following gluing relation: If $\rho = \rho_1 \rho_2$ is a composite ribbon on the boundary, then

\begin{equation}
\label{eq:Y-gluing-formula}
Y^{(hK,k)}_\rho = \sum_{j \in K} Y^{(hK,j)}_{\rho_1} Y^{(j^{-1}hjK,j^{-1}k)}_{\rho_2}.
\end{equation}

Simple group theory manipulations show that for any $h,j \in K$ the left coset $j^{-1}hjK$ depends only on the left coset $hK$, and not on the particular representative $h$. By the same argument as in Ref. \cite{Bombin08}, the operator $Y^{(hK,k)}_\rho$ is independent of the particular choice of $\rho_1,\rho_2$. Hence, Eq. (\ref{eq:Y-gluing-formula}) is well-defined.

Looking at Eq. (\ref{eq:bd-triangle-operator-def}), we see that as vector spaces,

\begin{equation}
\label{eq:Y-vector-space}
Y(G,1,K,1) = \C[G/K] \otimes F[K]
\end{equation}

\noindent
so it is clear that $\mathcal{Y}$ is dual to $\mathcal{Z}$. The multiplication, comultiplication and antipode for the coquasi-Hopf algebra $\mathcal{Y}$ are presented in Refs. \cite{Zhu01,Schauenburg02}. These structures are discussed in the context of ribbon operators in Appendix \ref{sec:quasi-hopf-algebra}. We note now that the gluing formula (\ref{eq:Y-gluing-formula}) is in fact the comultiplication of $\mathcal{Y}$ (and also corresponds to the multiplication of $\mathcal{Z}$ by duality).

This gluing procedure is very similar to the movement of a bulk anyon via bulk ribbon operators and the gluing formula of Eq. (\ref{eq:ribbon-gluing}). However, there is one very important difference: After applying the gluing formula (\ref{eq:ribbon-gluing}) to move a bulk anyon from the endpoint $s_1$ of the ribbon $\rho_1$ to the endpoint $s_2$ of $\rho_2$, the resulting operator $F^{(h,g)}_\rho$ on the ribbon $\rho = \rho_1 \rho_2$ now commutes with the terms in the Hamiltonian $H_{(G,1)}$ of Eq. (\ref{eq:kitaev-hamiltonian}) at the cilium $s_1$. Instead, the only places where $F^{(h,g)}_\rho$ does not commute with the terms in $H_{(G,1)}$ are $s_2$ and the other endpoint cilium of $\rho_1$. In this new case, applying the gluing formula (\ref{eq:Y-gluing-formula}) on such a ribbon $\rho = \rho_1 \rho_2$ on the boundary allows the new operator $Y^{(hK,k)}_\rho$ to commute with the $A^K(v)$ and $B^K(p)$ terms at $s_1$, but it still does not commute with the edge terms $L^K(e), T^K(e)$ surrounding $s_1$. Hence, the excitations in the boundary are {\it confined}: the energy required to move an excitation along a boundary ribbon $\rho$ is linearly proportional to the length of $\rho$ (measured in the number of dual triangles). Physically, this means that $L^K, T^K$ in the Hamiltonian (\ref{eq:bd-hamiltonian-K}) represent string tension terms, which break all gauge symmetries past the boundary.

\begin{remark}
We would like to note that the above definition of boundary ribbon operators cannot create all types of excitations on the boundary at the end cilia of the ribbon; instead, it can only create vertex and plaquette excitations at these cilia. In general, an excitation of $H^{(K,1)}_{(G,1)}$ can be given by a vertex, plaquette, and edge excitation simultaneously. However, the boundary ribbon operators discussed here create all excitations that can be formed by the condensation a bulk anyon to the boundary (discussed in the next two sections).

Furthermore, by the detailed analysis of Ref. \cite{Bombin08}, all excitations within the region of the Hamiltonian $H^{(K,1)}_{(G,1)}$ (vertex, plaquette, edge, or any combination) are confined, and no particles are deconfined. This means any definition of excitation-creating operators and the gluing relation will always make the energy cost to move an excitation linear in the ribbon length. For instance, another set of excitation-creating operators for $H^{(K,1)}_{(G,1)}$ would be those that act only on edges, and not on triangles. In this case, the vertex and plaquette terms would serve to confine particles, instead of the edge terms $L^K, T^K$. 
\end{remark}

\subsection{Degeneracy and condensations to vacuum}
\label{sec:hamiltonian-gsd-condensation}

The standard Kitaev model has no ground state degeneracy on a surface with trivial topology like the infinite plane or the sphere, as the only operators that commute with the Hamiltonian $H_{(G,1)}$ of (\ref{eq:kitaev-hamiltonian}) are closed ribbon operators on contractible loops, which can be expressed as a linear combination of products of $A(v)$ or $B(p)$ \cite{Kitaev97}. However, once boundaries are introduced, one can construct operators that commute with the Hamiltonian $H_{\text{G.B.}}$ of (\ref{eq:gapped-bds-hamiltonian}) that cannot be expressed as such a product.

\begin{figure}
\centering
\includegraphics[width = 0.65\textwidth]{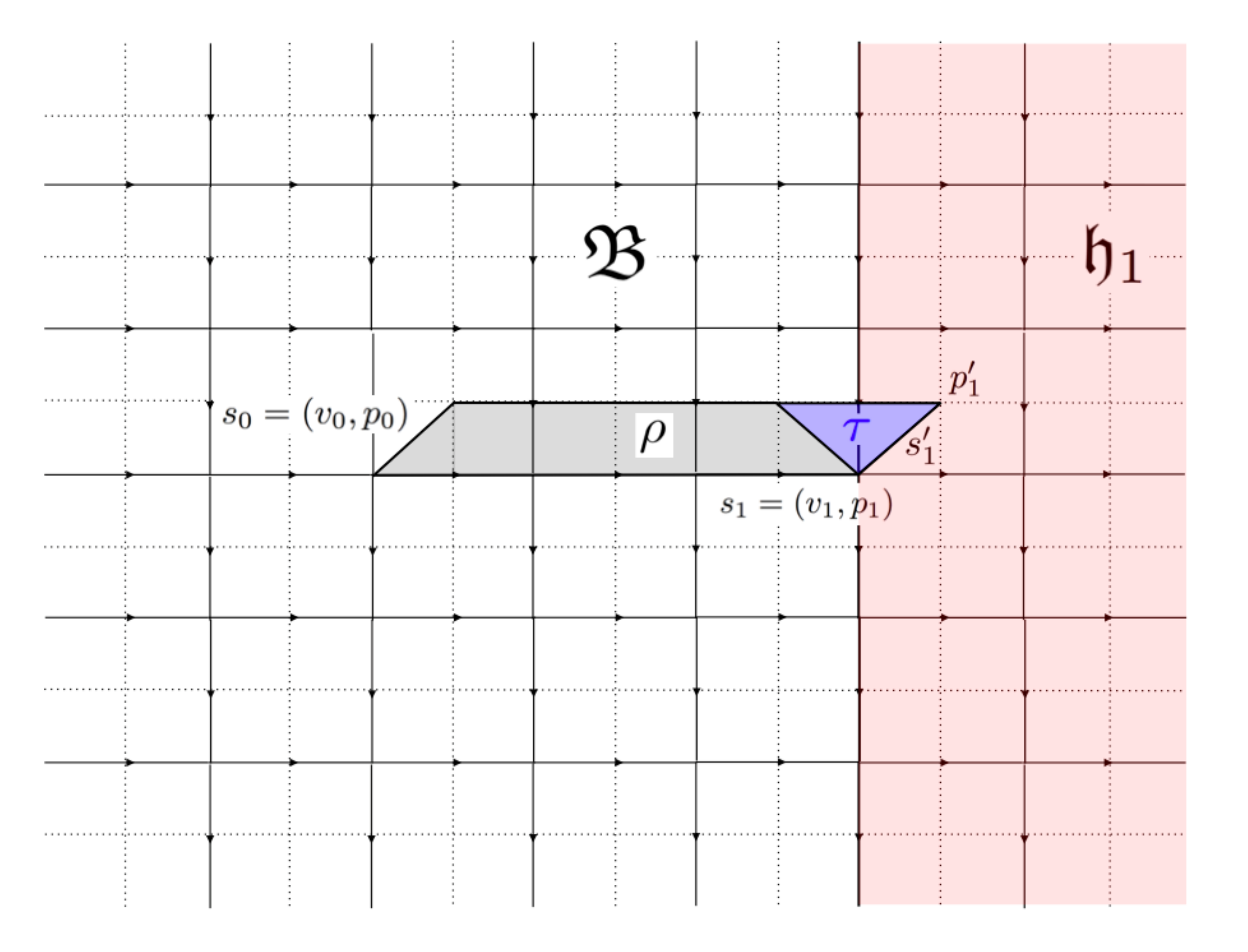}
\caption{Illustration of the bulk-to-boundary condensation procedure. If the new ribbon operator $F_{\rho \cup \tau}$ commutes with the Hamiltonian terms $A_{v_1}$ and $B_{p_1'}$  at the cilium $s_1'$, we say that the anyon $a$ has condensed to vacuum on the boundary.}
\label{fig:condensation}
\end{figure}

Let us first consider a scenario where we create a pair of anyons $a,\overbar{a}$ in the bulk from vacuum, by applying a ribbon operator $F^{(C,\pi);(\bf{u,v})}_\rho$. Without loss of generality, we suppose we have chosen $\rho$ with endpoints $s_0 = (v_0,p_0)$ and $s_1 = (v_1,p_1)$ such that $a$ is located at $s_1$ and is as close to the boundary as possible, as shown in Fig. \ref{fig:condensation}. Specifically, $s_1 = (v_1,p_1)$, where $v$ already lies on the line separating two regions with different Hamiltonians.

Suppose we would like to extend $\rho$ to $\rho \cup \tau$ and push the anyon $a$ into the boundary. We can apply the gluing formula (\ref{eq:ribbon-gluing}) as we would in the bulk, and the original excitation is pushed to the boundary cilium $s_1'=(v_1,p_1')$. However, since the Hamiltonian terms at $s_1'$ are different from those at $s_1$, it is possible that the new ribbon operator now commutes with all Hamiltonian terms in the vicinity of $s_1'$. The only terms that do not commute with $F^{(C,\pi);(\bf{u,v})}_\rho$ are now the terms corresponding to $s_0$. In this case, we say the anyon $a$ has {\it condensed to vacuum} in the boundary.

\begin{figure}
\centering
\includegraphics[width = 0.65\textwidth]{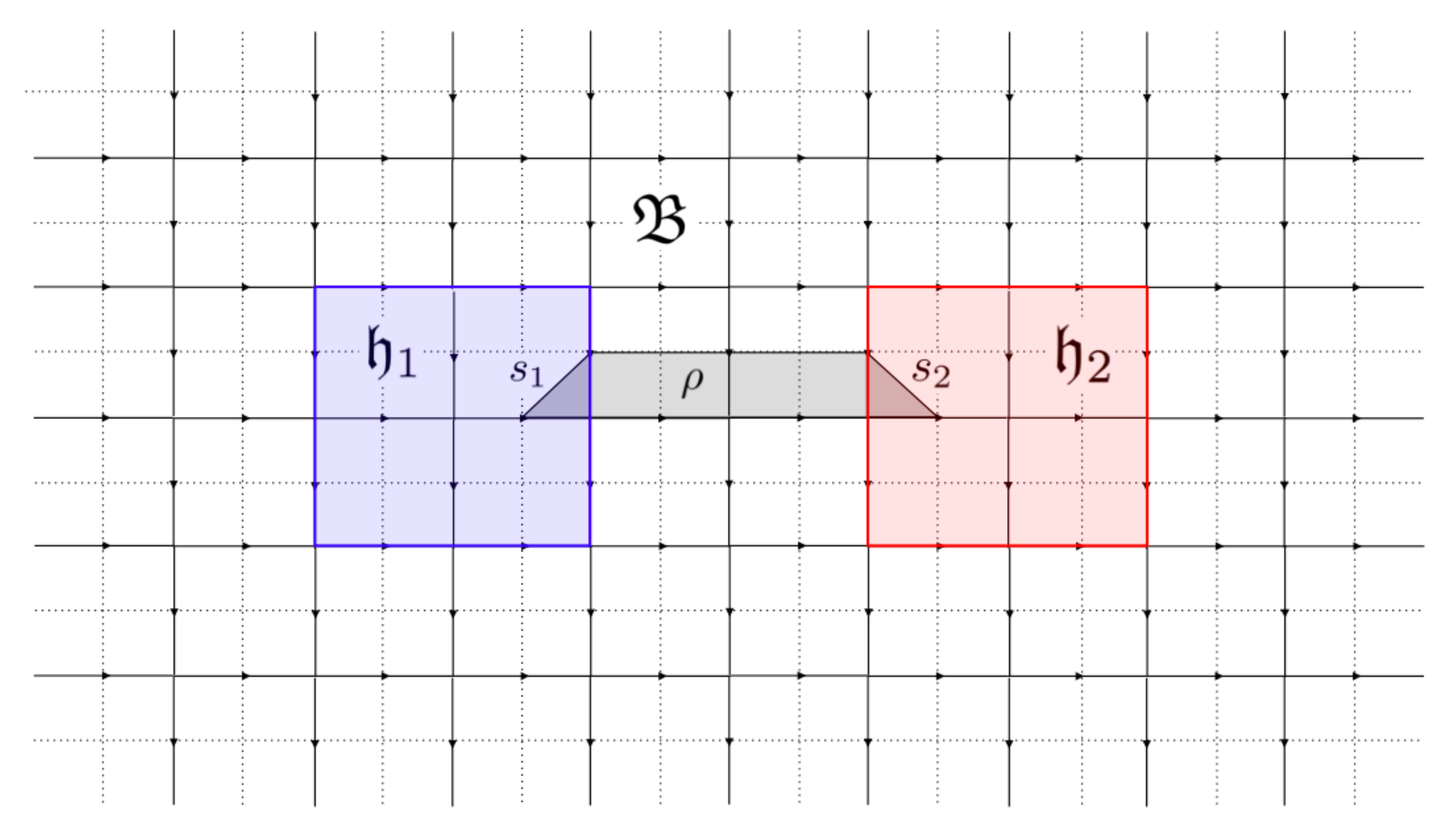}
\caption{Ground state degeneracy in the Kitaev model with boundary. By definition, a ribbon operator $F_\rho$ on $\rho$ commutes with all Hamiltonian terms in the bulk. Because the cilia $s_1$ and $s_2$ lie in areas with ribbon operators, it is possible that $F_\rho$ may also commute with the Hamiltonian terms at these cilia. The algebra of such operators $F_\rho$ will form the degenerate ground state of this system.}
\label{fig:degeneracy}
\end{figure}

Suppose we now have two holes $\mathfrak{h_1},\mathfrak{h_2}$ in the lattice, as shown in Fig. \ref{fig:degeneracy}. $\mathfrak{h_1},\mathfrak{h_2}$ are in the ground state of the Hamiltonians $H^{(K_1,1)}_{(G,1)}$, $H^{(K_2,1)}_{(G,1)}$, respectively. Again, let us consider a ribbon operator $F^{(C,\pi);(\bf{u,v})}_\rho$ which creates anyons $a,\overbar{a}$ in the bulk; it may now be possible be possible to condense $a$ to vacuum along the boundary of $\mathfrak{h_1}$, and $\overbar{a}$ to vacuum along the boundary of $\mathfrak{h_2}$, if the ribbon operator commutes with the boundary Hamiltonians in $\mathfrak{h_1}$, $\mathfrak{h_2}$.

It is now clear that the ground state of the Hamiltonian (\ref{eq:gapped-bds-hamiltonian}) may have nontrivial degeneracy. Specifically, each such operator $F^{(C,\pi);(\bf{u,v})}_\rho$ now corresponds to an operator $W_{(C,\pi);(\bf{u,v})}$ that commutes with the Hamiltonian. Given any ground state $\ket{0}$, the set of states $W_{(C,\pi);(\bf{u,v})}\ket{0}$ now form a basis for the Hilbert space of ground states. This Hilbert space is topologically protected, and will be used as our qudit for topological quantum computation.

\subsection{Excitations on the boundary}
\label{sec:bd-excitations}

\subsubsection{Elementary excitations}

In Section \ref{sec:bd-ribbon-operators}, we defined a basis for the coquasi-Hopf algebra $\mathcal{Y}$ of ribbon operators that create arbitrary excited states on the boundary of the Kitaev model given by subgroup $K$ that can result from bulk-to-boundary condensation. However, as in the case of bulk ribbon operators, we would like to classify the elementary excitations on the boundary. This is described in the following Theorem:

\begin{theorem}
\label{bd-anyon-types}
The elementary excitations on a subgroup $K$ boundary of the Kitaev model with group $G$ are given by pairs $(T,R)$, where $T \in K\backslash G / K$ is a double coset, and $R$ is an irreducible representation of the stabilizer $K^{r_T} = K \cap r_T K r_T^{-1}$ ($r_T \in T$ is any representative of the double coset).
\end{theorem}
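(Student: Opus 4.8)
The plan is to mimic the proof of Theorem \ref{bulk-anyon-types} as closely as possible, constructing a change of basis for the coquasi-Hopf algebra $\mathcal{Y} = Y(G,K)$ of boundary ribbon operators that exhibits the elementary excitations. Recall from Section \ref{sec:bd-ribbon-operators} that as vector spaces $\mathcal{Y} = \C[G/K] \otimes F[K]$, so a basis is given by the operators $Y^{(hK,k)}_\rho$ indexed by a left coset $hK \in G/K$ and an element $k \in K$. The gluing relation (\ref{eq:Y-gluing-formula}) is the comultiplication of $\mathcal{Y}$, equivalently the multiplication of its dual $\mathcal{Z} = Z(G,1,K,1)$. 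So, just as in the bulk, classifying elementary excitations amounts to decomposing $\mathcal{Y}$ (as a module under the local operators $\mathcal{Z}$ acting at the two endpoint cilia) into simple blocks; the labels of these blocks are exactly the irreducible representations of $\mathcal{Z}$, and the $\mathcal{Z}$-action mixes the \lq\lq local'' degrees of freedom within a block while leaving the labels invariant.

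First I would set up the group-theoretic bookkeeping. The relevant orbit structure here is the action of $K$ (acting by conjugation on one side, accounting for the coset structure on the other) on $G/K$; the orbits are precisely the double cosets $T = Kr_TK \in K\backslash G/K$, and the stabilizer of the coset $r_TK$ under this action is $K^{r_T} = K \cap r_T K r_T^{-1}$. Then I would choose, for each double coset $T$, a representative $r_T$, a transversal $P(T) = \{p_i\}$ of $K / K^{r_T}$ (so the left cosets inside $T$ are enumerated as $p_i r_T K$), and a basis/unitary matrix realization $\Gamma_R(\cdot)$ for each irreducible representation $R$ of $K^{r_T}$. Exactly paralleling (\ref{eq:bulk-ribbon-FT}), I would define the new basis elements by a Fourier transform over the stabilizer group,
\begin{equation}
Y^{(T,R);({\bf u},{\bf v})}_\rho := \frac{\dim(R)}{|K^{r_T}|} \sum_{k \in K^{r_T}} \left(\Gamma_R^{-1}(k)\right)_{jj'} Y^{(p_i r_T K,\, k')}_\rho,
\end{equation}
with indices ${\bf u} = (i,j)$, ${\bf v} = (i',j')$ running over $1 \le i,i' \le |T|/|K|$ and $1 \le j,j' \le \dim(R)$ (the precise internal index $k'$ needs care). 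Then I would write down the inverse change of basis, mirroring the seven-step recipe after (\ref{eq:bulk-ribbon-FT}): given $(hK,k)$, locate the double coset $T$ containing $hK$, decompose $hK = p_i r_T K$ uniquely, and express $Y^{(hK,k)}_\rho$ as a sum over $R \in (K^{r_T})_{\text{ir}}$ of entries of $\Gamma_R$ applied to the appropriate stabilizer element built from $k$ and the transversal. Counting dimensions, $\sum_T \sum_R (|T|/|K|)^2 \dim(R)^2$ should recover $|G/K| \cdot |K| = |G|$ by the orthogonality/Burnside-type identity $\sum_{R}\dim(R)^2 = |K^{r_T}|$ together with $\sum_i 1 = |K|/|K^{r_T}|$ summed correctly — this is the sanity check that the transformation is a genuine change of basis.

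The main obstacle, and where the proof differs substantively from the bulk case, is the \emph{quasi}-Hopf (as opposed to honest Hopf) nature of $\mathcal{Z}$ and $\mathcal{Y}$: the associator is nontrivial in general (even though the $3$-cocycle $\omega$ and $2$-cocycle $\psi$ are trivial here, the group-theoretical quasi-Hopf algebra $Z(G,1,K,1)$ still carries a nontrivial associator coming from the coset combinatorics). Concretely, the issue is that $j^{-1}hjK$ in (\ref{eq:Y-gluing-formula}) must be shown to depend only on $hK$, and more importantly the conjugation action of $K$ on representatives of $G/K$ does not lift to a clean group action on the transversal $P(T)$ the way it does for $P(C)$ in the bulk — there are cocycle-valued discrepancies $\beta(k,i)$ when one writes $k \cdot p_i = p_{i'} \cdot (\text{stabilizer element})$. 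I expect these discrepancies to be absorbed into the definition of $\Gamma_R$ on $K^{r_T}$ (they form the $2$-cocycle that would appear with nontrivial $\psi$, which is trivial by our hypothesis), so the final answer is clean, but verifying that the new basis elements genuinely split $\mathcal{Y}$ into $\mathcal{Z}$-invariant blocks — i.e.\ that applying $Z^{(hK,k)}$ at an endpoint preserves the $(T,R)$ label and acts irreducibly on the $({\bf u},{\bf v})$ indices — requires pushing the associativity constraints through the computation carefully rather than citing \cite{Bombin08} verbatim. Once that is done, the identification of elementary excitations with pairs $(T,R)$, and the quantum dimension formula $\FPdim(T,R) = |K|\dim(R)/|K^{r_T}|$ as the square root of the block dimension $(|T|/|K|)^2\dim(R)^2 = (|K|\dim(R)/|K^{r_T}|)^2$, follows as in the bulk; I would also note in passing that these are exactly the simple objects of the group-theoretical fusion category $\mathcal{C}(G,1,K,1) \cong \mathrm{Rep}(\mathcal{Z})$, consistent with Section \ref{sec:bd-local-operators} and the categorical picture promised in Chapter \ref{sec:algebraic}.
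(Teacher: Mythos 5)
Your proposal follows essentially the same route as the paper's proof: a Fourier transform on the boundary ribbon operator algebra $\mathcal{Y}$ producing a basis labeled by a double coset $T$, an irreducible representation $R$ of $K^{r_T}$, and local indices built from a transversal of $K/K^{r_T}$ and matrix entries of $\Gamma_R$, followed by the observation that the local operators $Z^{(hK,k)}$ at the end cilia mix only the local indices while preserving $(T,R)$; the quantum dimension count also matches. The only piece you leave open, the internal index $k'$, is filled in the paper exactly as in the bulk case, namely $q_i k q_{i'}^{-1}$ with $q_i$ the transversal elements, and the invariance of $(T,R)$ under the local algebra is asserted there with no more detail than you give.
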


\begin{proof}
As before, we must perform a Fourier transform on the group-theoretical co-quasi-Hopf algebra to obtain a new basis. This change-of-basis formula is constructed as follows:

\begin{enumerate}
\item
Choose a representative $r_T \in G$ and construct the corresponding double coset $T = K r_T K \in K\backslash G/K$.
\item
Construct the subgroup $K^{r_T} = K \cap r_T K r_T^{-1}$.
\item
Construct a set of representatives $Q$ of $K/K^{r_T}$. Label the elements of $Q$ as $Q = \{ q_i \}_{i=1}^{|Q|}$.
\item
Choose an irreducible representation $R$ of the subgroup $K^{r_T}$. Choose a basis for $R$ and denote the resulting unitary matrix representations $\Gamma_R(k)$ for $k \in K^{r_T}$.
\item
For each $i = 1,2,...|Q|$, let $s_i = q_i r_T q_i^{-1}$. Construct the set of right cosets $S_R(T) = \{ K s_i\}_{i=1}^{|Q|}$. Simple group theory shows that the set $S_R(T)$ forms a partition of $T$. Similarly, the set of left cosets $S_L(T) = \{ s_i^{-1}K\}_{i=1}^{|Q|}$ is also a partition of $T$.
\item
The new basis is
\begin{multline}
\label{eq:elementary-bd-ribbon-basis}
\{Y^{(T,R);({\bf u,v})}_\rho: \text{ } T = K r_T K \in K\backslash G/K, \text{ } R \in (K^{r_T})_{\text{ir}},
\\{\bf u} = (i,j), {\bf v} = (i', j'), 1 \leq i, i' \leq |Q|, 1 \leq j,j', \leq \dim(R)\},
\end{multline}
where each $Y^{(T,R);({\bf u,v})}_\rho$ is given by
\begin{equation}
\label{eq:bd-ribbon-FT}
Y^{(T,R);({\bf u,v})}_\rho := \frac{\dim(R)}{|K^{r_T}|}
\sum_{k \in K^{r_T}} \left(\Gamma_R^{-1}(k)\right)_{jj'}Y^{(s_i^{-1}K, q_i k q_{i'}^{-1})}.
\end{equation}
\end{enumerate}

As before, this Fourier basis for $\mathcal{Y}$ completely separates the topological and local degrees of freedom in the created excitations. It is straightforward to show that linear combinations of the local operators $Z^{(hK,k)}$ at the endpoints $s_0,s_1$ of $\rho$ may be used to transform any $Y^{(T,R);({\bf u,v})}_\rho$ into another basis operator that differs in only the pair $({\bf u,v})$. Similarly, if two operators in the basis (\ref{eq:elementary-bd-ribbon-basis}) have different pairs $(T,R)$, any operator that can change one to another must have support that connects $s_0$ and $s_1$. We can now conclude that the elementary excitations on the boundary of the Kitaev model are described precisely by pairs $(T,R)$, where $T= K r_T K$ is a double coset, and $R$ is an irreducible representation of the group $K^{r_T}$. 
\end{proof}

In the basis (\ref{eq:elementary-bd-ribbon-basis}), the quantum dimension of $(T,R)$ is given by the square root of the dimension of the subalgebra spanned by all $Y^{(T,R);({\bf u,v})}_\rho$, or

\begin{equation}
\FPdim(T,R) = |Q|\Dim(R) = \frac{|K|}{|K^{r_T}|} \Dim(R).
\end{equation}

As a special case, the simple particle $(T,R)$, where $T = K1K = K$ is the double coset of the identity element and $R$ is the trivial representation, is the vacuum particle (i.e. absence of excitation). The vacuum particle always has a quantum dimension of 1.

As in the case of the bulk, one can also fuse boundary excitations by bringing two excitations to the same boundary cilium via boundary ribbon operators, and consider the local operators $Z^{(hK,k)}$ acting on the new composite excitation. One can then show that the excitations on the boundary have a \lq\lq topological order" given by a unitary fusion category, as we will discuss in Chapter \ref{sec:algebraic}.  This new kind of boundary topological order exists only in the presence of bulk, and we will refer to as \lq\lq bordered topological order".  In particular, the fusion category is the representation category of the group-theoretical quasi-Hopf algebra $\mathcal{Z}$ introduced in Section \ref{sec:bd-local-operators} (or equivalently, the representation category of the coquasi-Hopf algebra $\mathcal{Y}$). In fact, this category is Morita equivalent to the representation category $\Rep(G)$; its Drinfeld center is indeed the same as $\mZ(\Rep(G))$.

\subsubsection{Products of bulk-to-boundary condensation}

In Section \ref{sec:hamiltonian-gsd-condensation}, we informally described how a ground state degeneracy can result from the ability for certain bulk particles to condense to vacuum on the boundary. Now that we have formally defined and classified the elementary excitations of the boundary, we can provide a formal classification of these special bulk particles. More generally, given any elementary excitation $(C,\pi)$ in the bulk, we present a way to determine the products $(T,R)$ that are formed by condensation to the boundary. 

Suppose we have a boundary given by subgroup $K$, and a bulk anyon $a = (C,\pi)$ to condense to the boundary. In terms of ribbon (triangle) operators, if the border line between the bulk Hamiltonian $H_{(G,1)}$ and the boundary Hamiltonian $H^{(K,1)}_{(G,1)}$ lies on the direct lattice, the condensation procedure is always described by a dual triangle operator on a triangle such as the triangle $\tau$ in Fig. \ref{fig:condensation}. To bring $a$ to the boundary, we simply apply one of the operators $F^{(C,\pi);(\bf{u,v})}_\tau$. So far, this movement operator is the same as moving the anyon to anywhere else in the bulk.

The difference arises when $a$ crosses the boundary. Once this happens, $a$ may no longer be an elementary excitation: instead, it could be a superposition of the elementary excitations of the boundary that we classified earlier. Mathematically, this corresponds to the fact that the ribbon operators $F^{(C,\pi);(\bf{u,v})}_\tau$ no longer form a basis for the triangle operators in the boundary, so we must express them as a linear combination of the basis operators $Y^{(T,R);({\bf u,v})}_\tau$. This linear combination is constructed as follows:

Since $\tau$ is a dual triangle, by Equations (\ref{eq:triangle-operator-def}) and (\ref{eq:bd-triangle-operator-def}), we have (before the Fourier transform)
\begin{equation}
F^{(h,g)}_\tau = \delta_{1,g} L^h(e) \qquad Y^{(hK,k)}_\tau = \delta_{1,k} L^{hK}(e)
\end{equation}

By Equations (\ref{eq:bulk-ribbon-FT}) and (\ref{eq:bd-ribbon-FT}), we have (after the Fourier transform)
\begin{multline}
\begin{aligned}[t]
F^{(C,\pi);(\bf{u,v})}_\tau & := \frac{\dim(\pi)}{|E(C)|}
\sum_{k \in E(C)} \left(\Gamma_\pi^{-1}(k)\right)_{jj'}\delta_{1,p_i k p_{i'}^{-1}} L^{c_i^{-1}}(e)
\\ &= \frac{\dim(\pi)}{|E(C)|}\left(\Gamma_\pi^{-1}(p_i^{-1} p_{i'})\right)_{jj'}L^{c_i^{-1}}(e).
\end{aligned}
\end{multline}
\begin{multline}
\begin{aligned}[t]
Y^{(T,R);(\bf{u,v})}_\tau & := \frac{\dim(R)}{|K^{r_T}|}
\sum_{k \in K^{r_T}} \left(\Gamma_R^{-1}(k)\right)_{jj'}\delta_{1,q_i k q_{i'}^{-1}} L^{s_i^{-1}K}(e)
\\ &= \frac{\dim(R)}{|K^{r_T}|}\left(\Gamma_R^{-1}(q_i^{-1} q_{i'})\right)_{jj'}L^{s_i^{-1}K}(e).
\end{aligned}
\end{multline}
(In both cases, it is possible that $p_i^{-1} p_{i'} \notin C$ or $q_i^{-1} q_{i'} \notin K^{r_T}$; if that happens, we simply have $F^{(C,\pi);(\bf{u,v})}_\tau = 0$ or $Y^{(T,R);(\bf{u,v})}_\tau = 0$).

The following theorem governs the products of condensation:

\begin{theorem}
\label{condensation-products}
Let $(T,R)$ and $(C,\pi)$ be given elementary excitations in the boundary and bulk, respectively. The term $Y^{(T,R);(\bf{u_2,v_2})}_\tau$ has a nonzero coefficient in the decomposition of $F^{(C,\pi);(\bf{u_1,v_1})}_\tau$ (for some quadruple $(\bf{u_1,v_1,u_2,v_2})$) if and only if the following two conditions hold:
\begin{enumerate}
\item
The intersection $C \cap T$ is nonempty.
\item
Let $\rho_\pi$ be the (possibly reducible) representation of the subgroup $E(C) \cap K^{r_T}$ resulting from the restriction of $\pi$ to $E(C) \cap K^{r_T}$; let $\rho_R$ be the representation of the same subgroup formed by restricting $R$. Decompose $\rho_\pi$, $\rho_R$ into irreducible representations of $E(C) \cap K^{r_T}$:
\begin{equation}
\rho_\pi = \oplus_\sigma n^{\rho_\pi}_{\sigma} \sigma
\end{equation}
\begin{equation}
\rho_R = \oplus_\sigma n^{\rho_R}_{\sigma} \sigma
\end{equation}
\noindent
There must exist some irreducible representation $\sigma$ of $E(C) \cap K^{r_T}$ such that $n^{\rho_\pi}_{\sigma} \neq 0$ and $n^{\rho_R}_{\sigma} \neq 0$.
\end{enumerate}
In particular, for given $(C,\pi), (T,R)$ let us write the decomposition after condensation as
\begin{equation}
\label{eq:condensation}
(C,\pi) = \oplus n_{(T,R)}^{(C,\pi)} (T,R).
\end{equation}
Then, we have 
\begin{equation}
\label{eq:condensation-coefficients}
n_{(T,R)}^{(C,\pi)} = n_{C,T} n_{\pi,R},
\end{equation}
\noindent
where we define
\begin{equation}
n_{C,T} = |\{ i: s_i^{-1} K = c_i^{-1} K\}|
\end{equation}
\begin{equation}
n_{\pi,R} = \sum_{\sigma \in (E(C) \cap K^{r_T})_{\text{ir}}} n^{\rho_\pi}_{\sigma}  n^{\rho_R}_{\sigma}.
\end{equation}
\noindent
($c_i$ and $s_i$ have been defined in the proofs of Theorems \ref{bulk-anyon-types} and \ref{bd-anyon-types}, respectively).

Furthermore, these coefficients imply that the two sides of Eq. (\ref{eq:condensation}) will always have the same quantum dimensions.
\end{theorem}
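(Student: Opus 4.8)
The plan is to read off the condensation coefficients in (\ref{eq:condensation}) from the change of basis displayed just above the theorem. When the bulk anyon $(C,\pi)$ is pushed through the dual triangle $\tau$ onto the boundary, the confining edge terms of $H^{(K,1)}_{(G,1)}$ force $L^h(e)$ and $L^{hK}(e)$ to act identically, so $F^{(C,\pi);(\mathbf u,\mathbf v)}_\tau$ must be re-expanded in the $Y^{(T,R);(\mathbf u',\mathbf v')}_\tau$. Equivalently, and more transparently, the boundary local algebra generated by the $Z^{(hK,k)}$ of (\ref{eq:bd-local-operators}) is the image of the bulk local algebra generated by the $D^{(h,g)}$ of (\ref{eq:bulk-local-operators}) under $F[G/K]\otimes\C[K]\subseteq F[G]\otimes\C[G]$; that is, $\mathcal Z=Z(G,1,K,1)$ sits in $D(G)$ as a subalgebra, so the excitation space carried across by a bulk ribbon --- the $D(G)$-irrep $(C,\pi)$ of Theorem \ref{bulk-anyon-types} --- decomposes after condensation as its restriction to $\mathcal Z$, and $n^{(C,\pi)}_{(T,R)}$ is the multiplicity of the $\mathcal Z$-irrep $(T,R)$ of Theorem \ref{bd-anyon-types} in it. Since $\mathcal Z$ is the crossed product $F[G/K]\rtimes\C[K]$ for the conjugation action of $K$ on $G/K$, whose orbits are the double cosets $T$ with point-stabilizers $K^{r_T}$, this becomes a Clifford--Mackey computation.

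I would then separate the magnetic and electric data. Magnetic part: the $F[G]$-grading of $(C,\pi)$ is supported on $C$, so after $F[G]\twoheadrightarrow F[G/K]$ its $F[G/K]$-grading is supported on the left cosets that meet $C$, while the $F[G/K]$-grading of $(T,R)$ runs over the cosets partitioning $T$; these overlap iff $C\cap T\neq\emptyset$, which is Condition (1), and $n_{C,T}$ is the resulting count of overlapping cosets, equivalently the number of $K^{r_T}$-conjugation orbits inside $C\cap T$, matching the count written in the statement. Electric part: picking $g\in C\cap T$ and taking $r_C=r_T=g$, the common group becomes $E(C)\cap K^{r_T}=K\cap E(C)=:H$; over the coset $gK$ the $K^{r_T}$-action on the fiber is the induced representation $\mathrm{Ind}_{H}^{K^{r_T}}\!\bigl(\Res^{E(C)}_{H}\pi\bigr)$, so the multiplicity of $R$ is $\langle R,\mathrm{Ind}_{H}^{K^{r_T}}\Res_{H}\pi\rangle_{K^{r_T}}=\langle\Res_{H}R,\Res_{H}\pi\rangle_{H}=\sum_\sigma n^{\rho_R}_\sigma n^{\rho_\pi}_\sigma=n_{\pi,R}$, positive exactly when Condition (2) holds. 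The product formula $n^{(C,\pi)}_{(T,R)}=n_{C,T}\,n_{\pi,R}$ then follows because $C\cap T$ splits into $n_{C,T}$ orbits with mutually conjugate stabilizers, so the same electric multiplicity recurs on each.

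Finally, quantum-dimension preservation is immediate: restriction of modules preserves dimension, and the $\FPdim$ of each label equals the dimension of the corresponding simple module --- $|C|\Dim\pi$ for $D(G)$ by Theorem \ref{bulk-anyon-types}, $\tfrac{|K|}{|K^{r_T}|}\Dim R$ for $\mathcal Z$ by Theorem \ref{bd-anyon-types} --- so $\FPdim(C,\pi)=\sum_{(T,R)}n^{(C,\pi)}_{(T,R)}\FPdim(T,R)$; alternatively one checks directly that $\sum_R n_{\pi,R}\Dim R=\tfrac{|K^{r_T}|}{|H|}\Dim\pi$ by Frobenius reciprocity and that $\sum_T n_{C,T}\tfrac{|K|}{|H|}=|C|$ via $C=\bigsqcup_T(C\cap T)$ and orbit counting. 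I expect the main obstacle to be the bookkeeping that matches the explicit $(\mathbf u,\mathbf v)$-indexed Fourier formulas to this decomposition: a single triangle operator truncates the anyon's internal ($\mathbf v$) degrees of freedom, so one must either carry a genuine bulk ribbon through the gluing relation (\ref{eq:ribbon-gluing}) and follow how the internal index transforms under the $L^h\mapsto L^{hK}$ projection, or argue entirely through the local operator algebra at the boundary cilium; and one must check with care that $n_{\pi,R}$ really is independent of which magnetically-compatible coset is chosen, since that conjugacy-of-stabilizers fact is precisely what makes the product formula valid.
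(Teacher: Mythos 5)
Your route is genuinely different from the paper's. The paper gives no separate argument for this theorem beyond the two displayed Fourier formulas for the dual-triangle operators, $F^{(C,\pi);(\mathbf{u},\mathbf{v})}_\tau\propto L^{c_i^{-1}}(e)$ and $Y^{(T,R);(\mathbf{u},\mathbf{v})}_\tau\propto L^{s_i^{-1}K}(e)$: the coefficients are read off by expanding the former in the latter, and the result is cross-checked against Schauenburg. You instead note that $\mathcal{Z}=Z(G,1,K,1)$ is literally the subalgebra of $D(G)$ spanned by $B^{hK}A^k=\sum_{j\in hK}D^{(j,k)}$, i.e. the crossed product $F[G/K]\rtimes\C[K]$ for $K$ acting on $G/K$, so that condensation multiplicities are branching multiplicities for restriction of $D(G)$-irreps to $\mathcal{Z}$, computed by Clifford--Mackey theory. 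This is sound and in several respects cleaner than the paper's sketch: it avoids the awkward fact that a single bulk triangle operator $L^{c^{-1}}(e)$ does not actually lie in the span of the $K$-averaged boundary operators $L^{hK}(e)$ (so the ``decomposition of $F_\tau$ into the $Y_\tau$'' is really a statement about modules over the local algebra, which is exactly what you compute), and it yields the quantum-dimension identity for free. Your per-orbit computation is correct: for $g\in C\cap T$ one has $E(g)\cap K\subseteq gKg^{-1}$, so $H=E(g)\cap K^{g}=E(g)\cap K$, the fiber over $gK$ is $\mathrm{Ind}_{H}^{K^{g}}\Res_{H}\pi$, and Frobenius reciprocity gives the contribution $\langle\Res_H R,\Res_H\pi\rangle_H$.

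The genuine gap is the step you yourself flagged: the claim that the $K$-orbits in $C\cap T$ have ``mutually conjugate stabilizers,'' so that the same electric multiplicity recurs on each orbit and the answer factorizes as $n_{C,T}\,n_{\pi,R}$. This is false in general. Take $G=S_4$, $K=D_4=\langle(1234),(13)\rangle$, $T=K$ (the vacuum double coset) and $C$ the class of double transpositions, so $C\cap T=C\subseteq K$ splits into the two $K$-orbits $\{(13)(24)\}$ and $\{(12)(34),(14)(23)\}$. The corresponding groups $E(g)\cap K$ have orders $8$ and $4$ (the Klein four-group), hence are not conjugate; and if $\pi$ is the nontrivial linear character of $E((12)(34))$ that is trivial on the Klein subgroup, the two orbits contribute $0$ and $1$ respectively to the vacuum channel $(T,R)=(K,\mathbf{1})$. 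So the honest output of your method is an orbit sum, $n^{(C,\pi)}_{(T,R)}=\sum_{O\subseteq C\cap T}\langle\Res_{H_O}({}^{x_O}R),\Res_{H_O}({}^{y_O}\pi)\rangle_{H_O}$ over $K$-orbits $O$ with suitably transported representations, and this collapses to the product $n_{C,T}n_{\pi,R}$ only when the orbit terms coincide (as they do in all the $\Z_2$, $S_3$, $\Z_3$ examples treated in the paper, which is why the discrepancy is invisible there). The same example shows that the ``if and only if'' criterion in condition (2) is sensitive to the choice of representatives $r_C,r_T$ when $C\cap T$ meets several $K$-orbits. To complete your proof you should therefore prove and state the orbit-sum version (which still yields the nonvanishing criterion, orbit by orbit, and the quantum-dimension identity exactly as you argue), rather than attempt to establish orbit-independence, which does not hold in the stated generality.
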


Similarly, we may also consider the process of pulling a boundary excitation back into the bulk. The situation here is exactly the inverse of the above: we wish to write the $Y^{(T,R);(\bf{u,v})}_\tau$ as a linear combination of $F^{(C,\pi);(\bf{u,v})}_\tau$. Hence, by the same reasoning as above, we have the following theorem:

\begin{theorem}
\label{inverse-condensation-products}
Let $(T,R)$ and $(C,\pi)$ be given elementary excitations in the boundary and bulk, respectively. The term $F^{(C,\pi);(\bf{u_1,v_1})}_\tau$ has a nonzero coefficient in the decomposition of $Y^{(T,R);(\bf{u_2,v_2})}_\tau$ (for some quadruple $(\bf{u_1,v_1,u_2,v_2})$) if and only if the conditions (1) and (2) of Theorem \ref{condensation-products} hold.

In particular, let us write the decomposition of the simple boundary excitation as
\begin{equation}
\label{eq:inverse-condensation}
(T,R) = \oplus n_{(C,\pi)}^{(T,R)} (C,\pi).
\end{equation}
Then, we have 
\begin{equation}
n_{(C,\pi)}^{(T,R)} = n^{(C,\pi)}_{(T,R)},
\end{equation}
where $n^{(C,\pi)}_{(T,R)}$ is defined as in Theorem \ref{condensation-products}. Furthermore, these coefficients imply that the quantum dimension of the right hand side of Eq. (\ref{eq:inverse-condensation}) will always be $|G|$ times that of the left hand side.
\end{theorem}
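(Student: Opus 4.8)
This statement is the exact mirror of Theorem~\ref{condensation-products}, and as the surrounding text indicates the plan is to run the same Fourier-transform argument in the opposite direction. In the proof of Theorem~\ref{condensation-products} one expands the bulk dual-triangle operator $F^{(C,\pi);(\mathbf{u}_1,\mathbf{v}_1)}_\tau$ in the boundary basis $\{Y^{(T,R);(\mathbf{u}_2,\mathbf{v}_2)}_\tau\}$; here one instead expands each $Y^{(T,R);(\mathbf{u}_2,\mathbf{v}_2)}_\tau$ in terms of the $F^{(C,\pi);(\mathbf{u}_1,\mathbf{v}_1)}_\tau$. Concretely I would start from the two explicit triangle-operator formulas displayed just before Theorem~\ref{condensation-products}, use $L^{hK}(e)=\tfrac{1}{|K|}\sum_{g\in hK}L^{g}(e)$ to rewrite the boundary triangle operator as an equally weighted average of single-element operators $L^{g}(e)$, and then re-expand each $L^{g}(e)$ through the inverse change of basis of Theorem~\ref{bulk-anyon-types} (Step~7 of that proof). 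Composing the two substitutions produces the required linear combination together with its coefficients.

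The crux is that the pattern of nonvanishing coefficients coincides with conditions (1) and (2) of Theorem~\ref{condensation-products}, and this is \emph{not} a formal consequence of inverting a matrix, since the inverse of a sparse matrix need not be sparse. What I would prove is that both changes of basis factor through the \emph{same} block decomposition. First, the block labelled by a double coset $T$ is nonzero only when $C\cap T\neq\emptyset$: a common element $c\in C\cap T$ is exactly what makes a left coset $c^{-1}K$ appearing in $F^{(C,\pi)}_\tau$ equal to one of the left cosets $s_i^{-1}K$ appearing in $Y^{(T,R)}_\tau$, and this condition is visibly symmetric in the bulk and boundary labels; this is condition (1). Second, inside that block the transform is controlled by the matrix elements $\Gamma_\pi^{-1}(\cdot)$ and $\Gamma_R^{-1}(\cdot)$ restricted to the common stabilizer $E(C)\cap K^{r_T}$; decomposing both restrictions $\rho_\pi$ and $\rho_R$ into irreducibles of $E(C)\cap K^{r_T}$ and applying Schur orthogonality shows the block is itself block-diagonal along the $E(C)\cap K^{r_T}$-isotypic index, with dense pieces, so it has ``full support within each isotypic piece'' in \emph{both} directions --- which is precisely condition (2). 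I expect this symmetry argument to be the main obstacle: it forces one to track the two Peter--Weyl expansions simultaneously rather than treating either as a black box.

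Granting the support description, the multiplicities are bookkeeping: the coefficient count is the same bilinear pairing as in Theorem~\ref{condensation-products}, so $n^{(T,R)}_{(C,\pi)}=n_{C,T}\,n_{\pi,R}$ with $n_{\pi,R}=\sum_\sigma n^{\rho_\pi}_\sigma n^{\rho_R}_\sigma$, and since $n_{C,T}$ and $n_{\pi,R}$ are each manifestly symmetric under interchange of the bulk and boundary data, one reads off $n^{(T,R)}_{(C,\pi)}=n^{(C,\pi)}_{(T,R)}$ with no further work.

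For the quantum-dimension statement I would do a direct global count of $\sum_{(C,\pi)}n^{(T,R)}_{(C,\pi)}\FPdim(C,\pi)$, using $\FPdim(C,\pi)=|C|\Dim(\pi)$, $\FPdim(T,R)=\tfrac{|K|}{|K^{r_T}|}\Dim(R)$, the orbit--stabilizer identity $|C|\,|E(C)|=|G|$, Frobenius reciprocity in the form $\sum_\pi n_{\pi,R}\Dim(\pi)=\tfrac{|E(C)|}{|E(C)\cap K^{r_T}|}\Dim(R)$, and the combinatorics of the coincidence count $n_{C,T}$; this mirrors the last sentence of Theorem~\ref{condensation-products}, with the factor $|G|$ replacing the equality there. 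As a consistency check I would note the categorical reading: pulling a boundary excitation back into the bulk is the right adjoint ``induction'' functor of the dimension-preserving condensation tensor functor $\mZ(\Rep(G))\to\mathcal{C}(G,1,K,1)$, so on Frobenius--Perron dimension it multiplies by $\FPdim(\mZ(\Rep(G)))/\FPdim(\mathcal{C}(G,1,K,1))=|G|^2/|G|=|G|$, and by adjunction it carries the multiplicities $n^{(C,\pi)}_{(T,R)}$, which re-derives both claims.
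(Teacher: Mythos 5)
Your proposal is correct and follows essentially the same route as the paper, which proves this theorem simply "by the same reasoning" as Theorem \ref{condensation-products}, namely by expanding the boundary dual-triangle operator $Y^{(T,R);(\mathbf{u},\mathbf{v})}_\tau$ in the bulk Fourier basis using the explicit triangle-operator formulas. The details you add — the coset-matching/Schur-orthogonality argument for why the support conditions are symmetric in both directions, and the Frobenius-reciprocity count (equivalently the adjoint-functor Frobenius--Perron ratio $|G|^2/|G|=|G|$) for the quantum-dimension statement — are precisely the steps the paper leaves implicit, and they are sound.
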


Theorem \ref{inverse-condensation-products} gives us a straightforward way to determine which quasi-particles $(C,\pi)$ may be condensed to vacuum on a given boundary based on subgroup $K$: we can simply find all quasi-particles appearing with nonzero coefficient in the decomposition (\ref{eq:inverse-condensation}) corresponding to $(T,R)$ trivial. In general, we will use these anyon types to label the corresponding gapped boundary.

We would like to note that the above two theorems are exactly consistent with the mathematical results presented by Schauenburg for group-theoretical categories in Ref. \cite{Schauenburg15}.

\begin{figure}
\centering
\includegraphics[width = 0.65\textwidth]{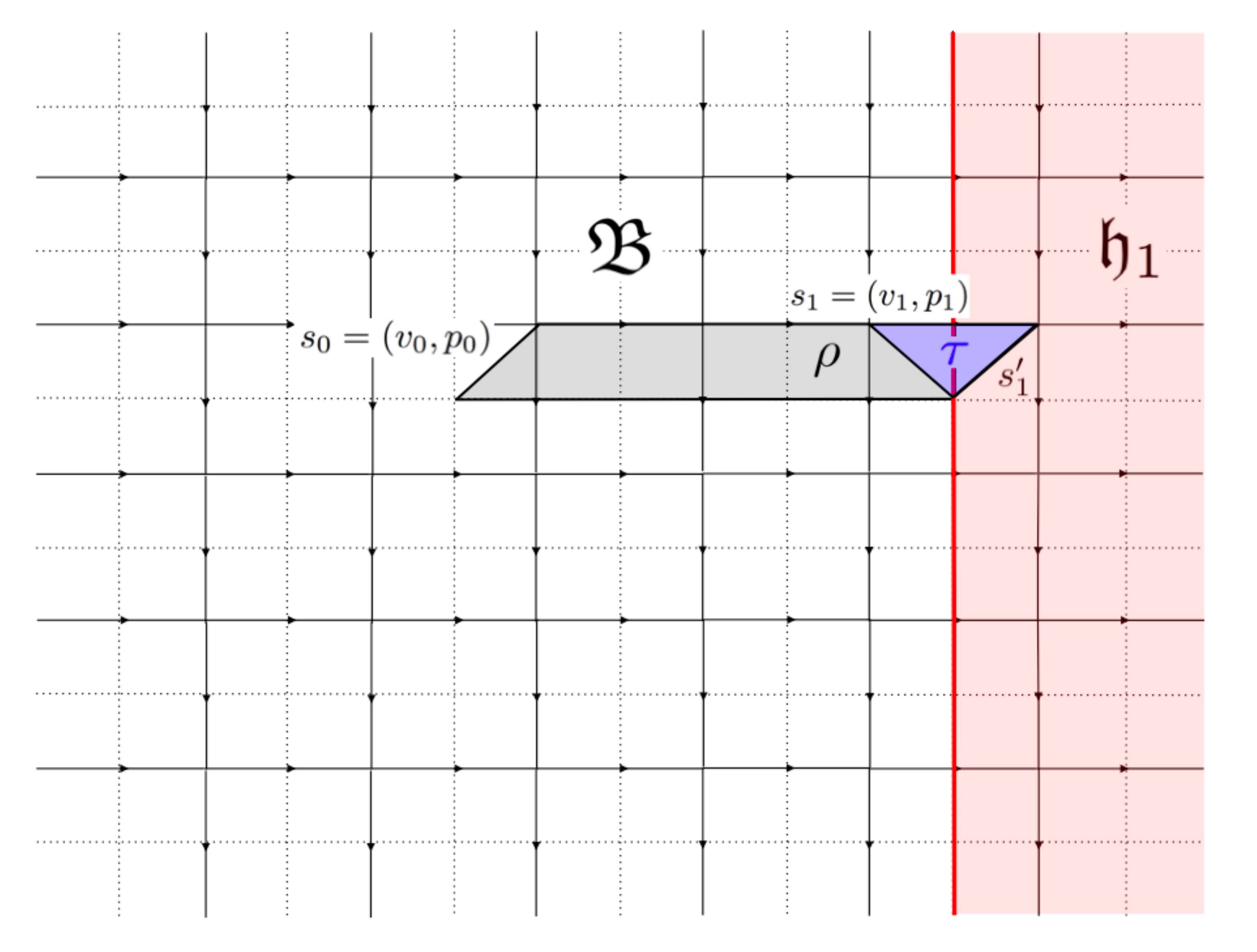}
\caption{Illustration of the bulk-to-boundary condensation procedure. If the new ribbon operator $F_{\rho \cup \tau}$ commutes with the Hamiltonian terms at the cilium $s_1$, we say that the particle $a$ has condensed to vacuum on the boundary.}
\label{fig:condensation-2}
\end{figure}

\begin{remark}
\label{dual-bd-rmk-2}
In Remark \ref{dual-bd-rmk}, we noted that it is also possible to create a boundary line on the dual lattice. In this case, the ``condensation triangle'' $\tau$ of Fig. \ref{fig:condensation} is now a direct triangle instead of a dual triangle (see Fig. \ref{fig:condensation-2}). For this case, there are analogous results to Theorems \ref{condensation-products} and \ref{inverse-condensation-products}, which are obtained using ribbon operators on the direct triangle. In general, these two methods of creating boundaries with the same subgroup $K$ can result in different boundary types (i.e. different condensation formulas as in Equations (\ref{eq:condensation}) and (\ref{eq:inverse-condensation})). However, it is straightforward to show that the boundary type corresponding to the dual lattice boundary may also be created using a different subgroup on a direct lattice boundary.
\end{remark}

\subsubsection{Multiple condensation channels}
\label{sec:multiple-condensation-channels}

As seen in Theorems \ref{condensation-products} and \ref{inverse-condensation-products}, it is possible (e.g. in the case of $G = S_3$ which we study in Section \ref{sec:ds3-hamiltonian-example}) that one can have condensation multiplicities $n^{(C,\pi)}_{(T,R)}$ greater than 1. In this case, we say that there are multiple {\it condensation channels}. Physically, this is very similar to multiple fusion channels in the bulk (such as in the UMTC given by $\text{SU}(3)_3$), where we have fusion rule coefficients greater than 1. 

The origin of the condensation multiplicity can be traced back to the local degrees of freedom in the definition of the (bulk) ribbon operators. Recall that a ribbon operator $F^{(C, \pi);(\vec{u},\vec{v})}$ has local degrees of freedom indexed by $\vec{u}$ (or $\vec{v}$) at the two ends of the ribbon, resulting in quantum dimension $|C|\text{Dim}(\pi)$. Application of local operators $D^{(h,g)}$ at the ends can mix the local states completely. However, if we move one end of the ribbon to a gapped boundary, to distinguish the local degrees of freedom we can only apply operators that commute with the boundary Hamiltonian. Therefore, on the boundary, without creating additional excitations, we may not be able to distinguish all the local degrees of freedom of the ribbon operator completely; the remaining degeneracy becomes the condensation multiplicity.

The most common situation where such multiplicity arises in quantum double models is when $K=\{1\}$. In this case, it is easy to see that all gauge charges condense to vacuum on the boundary, and the multiplicity is given by $n^{(\{1\}, \pi)}_{(\{1\},1)}=\mathrm{Dim}(\pi)$. Let us understand the multiplicity in this example more concretely in terms of ribbon operators. Recall that for a gauge charge corresponding to an irreducible representation $\pi$ of $G$, the ribbon operators read
\begin{equation}
	F_\rho^{(\{1\},\pi);(\bf{u,v})}=\mathrm{Dim}(\pi)\sum_{g\in G} \big(\Gamma_\pi^{-1}(g)\big)_{jj'}F^{(1, g)}_\rho.
	\label{}
\end{equation}
Suppose we now use the above ribbon operator to create a pair of charges in the bulk, and then move one of them, say the end corresponding to the $\vec{u}$ index, to the boundary.
In the bulk, one can easily show that applying $A^h$ at the end of the ribbon mixes the local indices. However, on the boundary $A^h$ do not commute with the $T^K$ boundary terms, unless one applies the product of all such $A$'s of the entire boundary. Therefore, different indices $\vec{u}$ are now locally indistinguishable, which explains the origin of the multiplicity.

From this example we also see that condensation channels are topologically protected. In order to change from one channel to another, without leaving any trace of excitation, one must apply a boundary ribbon operator that completely encircles the boundary. Since boundary particles are confined, such an operator would require energy input proportional to the perimeter of the boundary.

\subsection{Defects between different boundary types}
\label{sec:defect-hamiltonian}

Thus far, this chapter has presented general Hamiltonians to create arbitrary gapped boundaries (holes) in the Kitaev model for the Dijkgraaf-Witten theory based on group $G$. In these cases, each hole has been associated with a single subgroup $K \subseteq G$ (i.e. a single boundary type). In this section, we will examine cases in which one hole can be associated with multiple boundary types. When this happens, we say that a {\it defect} is located where two distinct boundary types meet. Although we will not use these defects for quantum computation in our paper, such defects actually have many interesting topological properties, and can also be used to encode qubits and qudits \cite{Bark13a,Bark13b,Bark13c}. Hence, we will write down explicit Hamiltonians to create defects between boundaries, and analyze some of their simple topological properties. We hope that this will provide a good foundation that future papers can use to study topological quantum computation.

\subsubsection{Hamiltonians for defects}

In general, there are multiple ways to create defects between different boundary types. The first and simplest way is a direct generalization of Fig. 3 of Ref. \cite{Fowler12}. This was illustrated as the hole $\mathfrak{h}_2$ in Fig. \ref{fig:boundary-hamiltonian-3}: as seen in Remark \ref{dual-bd-rmk-2}, in general, using the same subgroup $K$ to create boundaries on the direct/dual lattices will result in two different boundary types. As a result, one way to create defects between boundary types is to apply our Hamiltonian $H_{\text{G.B.}}$ to a lattice such as the one in Fig. \ref{fig:boundary-hamiltonian-3}: specifically, this can create defects between these two boundary types at each corner of the hole where a direct lattice boundary line meets a dual lattice boundary line. In the case of the hole $\mathfrak{h}_2$, this occurs at all four corners.

However, the above procedure only allows us to create very special kinds of defects: the two boundary types involved must be the direct/dual lattice boundary types corresponding to a common subgroup $K \subseteq G$. More generally, we would like to create defects between arbitrary boundary types. This can be done by defining a new commuting Hamiltonian $H_{\text{dft}}$.

\begin{figure}
\centering
\includegraphics[width = 0.78\textwidth]{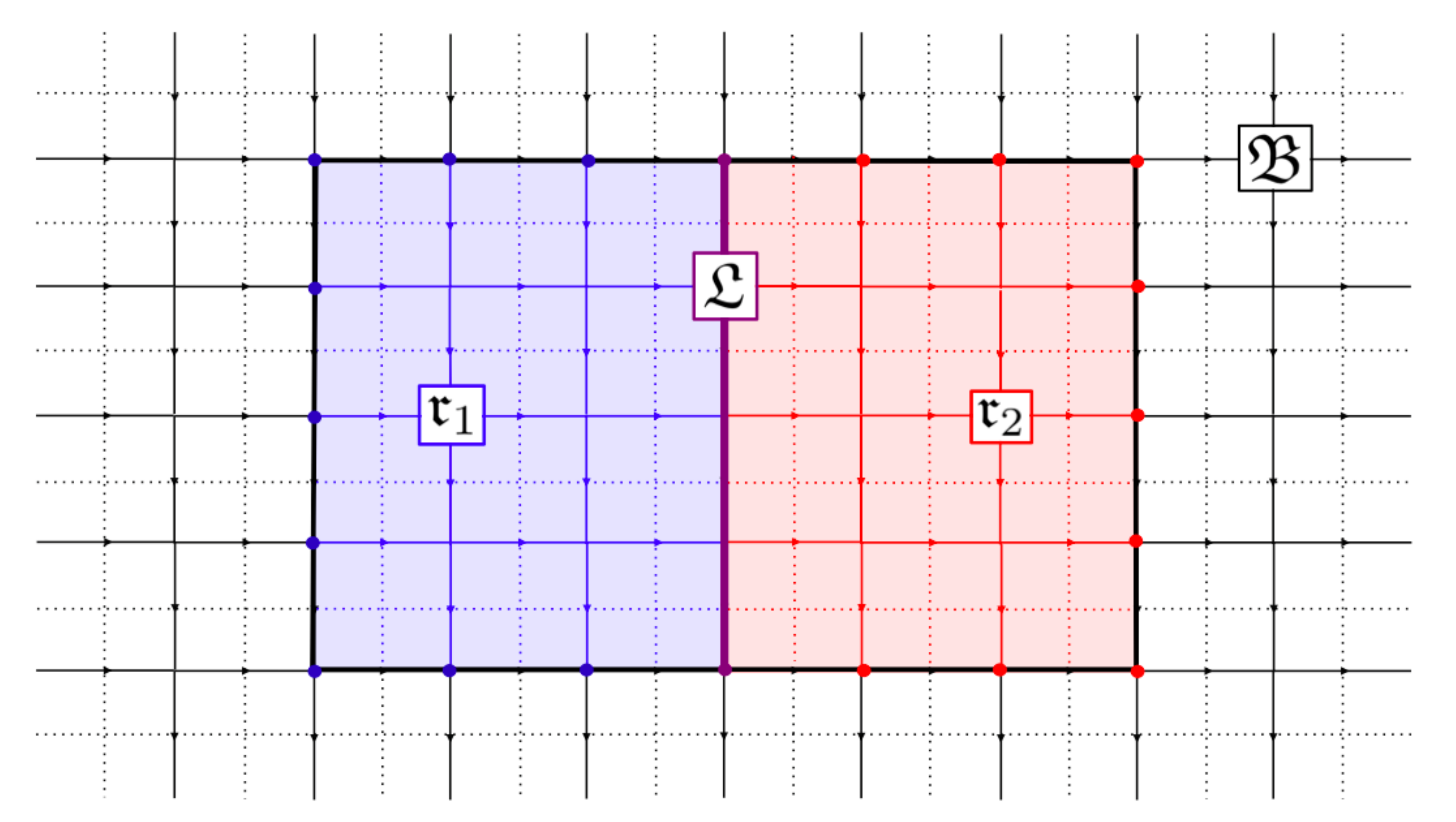}
\caption{Definition of the Hamiltonian $H_{\text{dft}}$. The line $\mathfrak{L}$ consists of all vertices and edges along the purple line which divides the blue and red regions. The region $\mathfrak{r}_1$ consists of all vertices, plaquettes, and edges within the blue shaded rectangle, and all blue vertices on the border of the hole. $\mathfrak{r}_2$ is defined similarly. As before, the bulk $\mfB$ will denote all other vertices, edges, and plaquettes (i.e. the black and white region). Specifically, we would like to note that edges on the border of the hole are considered as part of the bulk $\mfB$.} 
\label{fig:defect-hamiltonian}
\end{figure}

To define this new Hamiltonian, let us first consider a picture such as Fig. \ref{fig:defect-hamiltonian}. Suppose we would like to create two defects, one at each endpoint of the line $\mfL$. Specifically, we would like the boundary type to the left of $\mfL$ (i.e. the blue portion) to be the one given by subgroup $K_1$, and the boundary type to the right of $\mfL$ (red portion) to be given by $K_2$. In Section \ref{sec:bd-hamiltonian}, we defined the Hamiltonian $H^{(K,1)}_{(G,1)}$ for each subgroup $K \in G$, and combined this with the original Kitaev Hamiltonian to define the Hamiltonian $H_{\text{G.B.}}$ for the lattice with arbitrary holes and boundary types. Following that model, we will now combine several Hamiltonians $H^{(K,1)}_{(G,1)}$ and the original $H_{(G,1)}$ to form $H_{\text{dft}}$:

\begin{equation}
\label{eq:defect-hamiltonian-1}
H_{\text{dft}} = H_{(G,1)} (\mfB) + H^{(K_1,1)}_{(G,1)} (\mfr_1) + H^{(K_2,1)}_{(G,1)} (\mfr_2) + H^{(K_1 \cap K_2,1)}_{(G,1)} (\mfL).
\end{equation}

\noindent
Here, the region $\mathfrak{r}_1$ consists of all vertices, plaquettes, and edges within the blue shaded rectangle, and all blue vertices on the border of the hole. $\mathfrak{r}_2$ is defined similarly. The line $\mfL$ consists of all vertices and edges along the purple line that divides $\mfr_1$ from $\mfr_2$. As before, the bulk $\mfB$ will denote all other vertices, edges, and plaquettes (i.e. the black and white region). Again, we would like to note that edges on the border are considered as part of the bulk $\mfB$.

It is simple to check that the above Hamiltonian is also a commuting Hamiltonian.

\begin{figure}
\centering
\includegraphics[width = 0.78\textwidth]{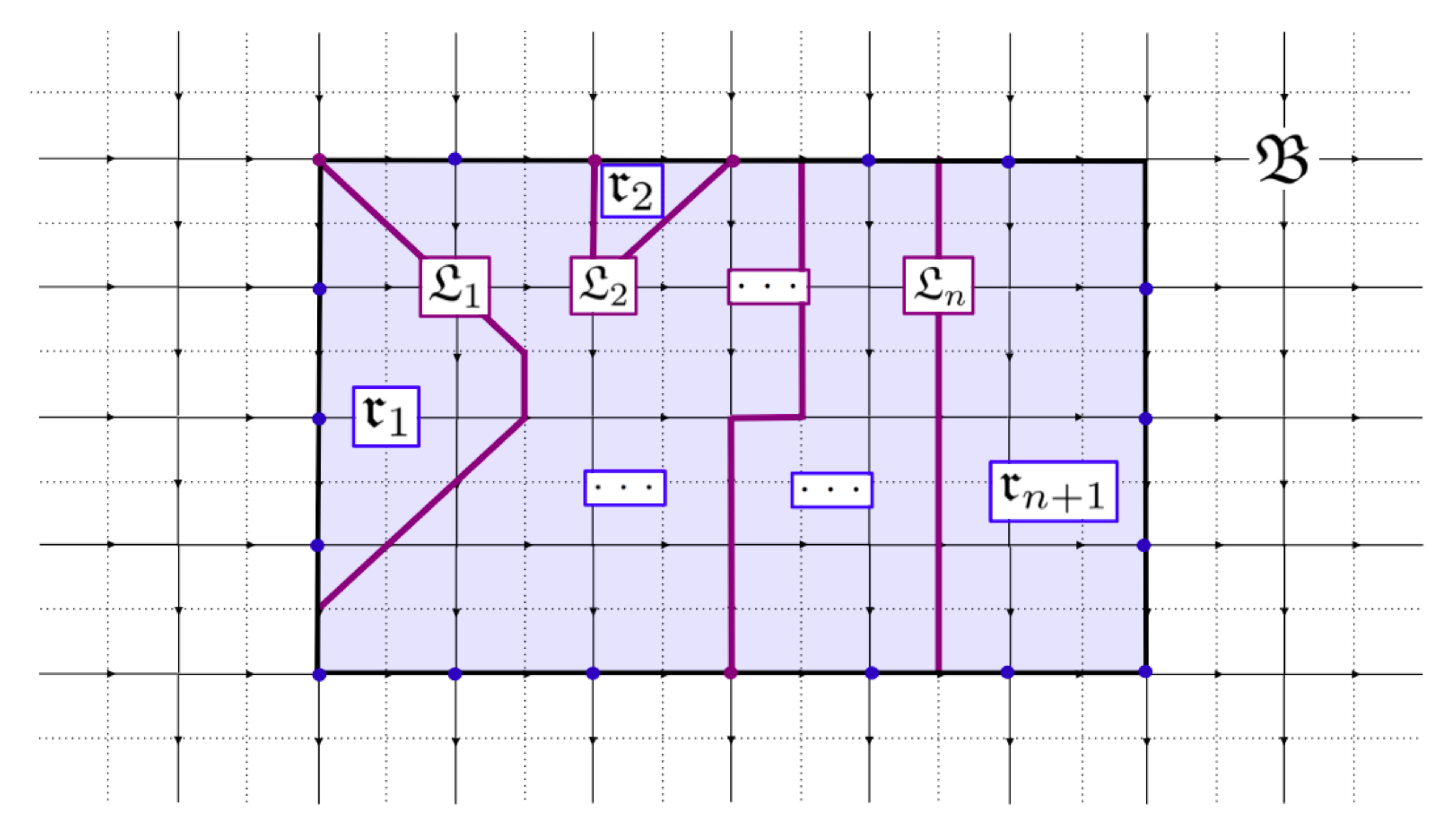}
\caption{More general definition of the Hamiltonian $H_{\text{dft}}$. There are $n$ non-intersecting piecewise-linear borders $\mfL_1, ... \mfL_n$, which separate the hole into $n+1$ regions $\mfr_1, ... \mfr_{n+1}$. Each $\mfL_i$ is formed from a chain of cilia and (direct or dual) edges. A total of $2n$ defects are created, one at each endpoint of each border $\mfL_i$. Each border $\mfL_i$ consists of all vertices, plaquettes, and edges that the corresponding purple line(s) cross, and includes the direct or dual vertex on the boundary of the hole if applicable (purple dots). Each region $\mathfrak{r}_j$ consists of all vertices, plaquettes, and edges within the corresponding blue shaded region, including the direct or dual vertices along the border of the hole (blue dots), but not along the purple lines. The bulk $\mfB$ will denote all other vertices, edges, and plaquettes (i.e. the black and white region). Again, we would like to note that edges on the border of the hole are considered as part of the bulk $\mfB$.} 
\label{fig:defect-hamiltonian-2}
\end{figure}

More generally, we may create as many defects as we like on the boundary of a hole, and the border lines between different boundary types need not be straight. This is illustrated in Fig. \ref{fig:defect-hamiltonian-2}. In this case, for each $i = 1, 2, ... n$, let us define $a(i), b(i) \in \{1,2,...n+1\}$ to be the numbers such that the regions on the two sides of $\mfL_i$ are $\mfr_{a(i)}$ and $\mfr_{b(i)}$ (it does not matter which is which). Suppose each region $\mfr_j$ is given by the subgroup $K_j \subseteq G$. The general Hamiltonian $H_{\text{dft}}$ is then defined as follows:

\begin{equation}
\label{eq:defect-hamiltonian-2}
H_{\text{dft}} = H_{(G,1)} (\mfB) + \sum_{j=1}^{n+1} H^{(K_j,1)}_{(G,1)} (\mfr_j) + \sum_{i = 1}^{n} H^{(K_{a(i)} \cap K_{b(i)},1)}_{(G,1)} (\mfL_i).
\end{equation}

\noindent
Here, the Hamiltonian $H^{(K_{a(i)} \cap K_{b(i)},1)}_{(G,1)}$ is applied to all vertices, plaquettes, and edges that the corresponding purple line(s) of $\mfL_i$ cross, including the direct or dual vertex on the boundary of the hole if applicable (purple dots in Fig. \ref{fig:defect-hamiltonian-2}). The Hamiltonian $H^{(K_j,1)}_{(G,1)}$ is applied to all vertices, plaquettes, and edges within the corresponding blue shaded region, including the direct or dual vertices along the border of the hole (blue dots in Fig. \ref{fig:defect-hamiltonian-2}), but not those along the purple lines. The bulk Hamiltonian is applied to all other vertices, edges, and plaquettes (i.e. the black and white region). Again, we would like to note that edges on the border of the hole are considered as part of the bulk $\mfB$.

Of course, it is very simple to generalize this to the case with many holes.

\begin{remark}
We would like to note that the above Hamiltonian creates boundary defects in pairs (corresponding to the two endpoints of each purple line of Fig. \ref{fig:defect-hamiltonian-2}). In general, defects are quite similar to anyons, in the sense that they live on cilia (as opposed to gapped boundaries/holes). Hence, we believe that it is not possible to create a single boundary defect, unlike gapped boundaries (see Remark \ref{single-bd-creation}).
\end{remark}

\subsubsection{Topological properties of defects}

We will now examine some topological properties of the defects between different boundary types, as these are properties that must be harnessed in order to perform topological quantum computation. Let us consider again the simple case of two defects, as shown in Fig. \ref{fig:defect-ribbons} (the generalizations are obvious). As before, the two regions $\mfr_1$, $\mfr_2$ are given by subgroups $K_1, K_2 \subseteq G$, respectively. 

\begin{figure}
\centering
\includegraphics[width = 0.78\textwidth]{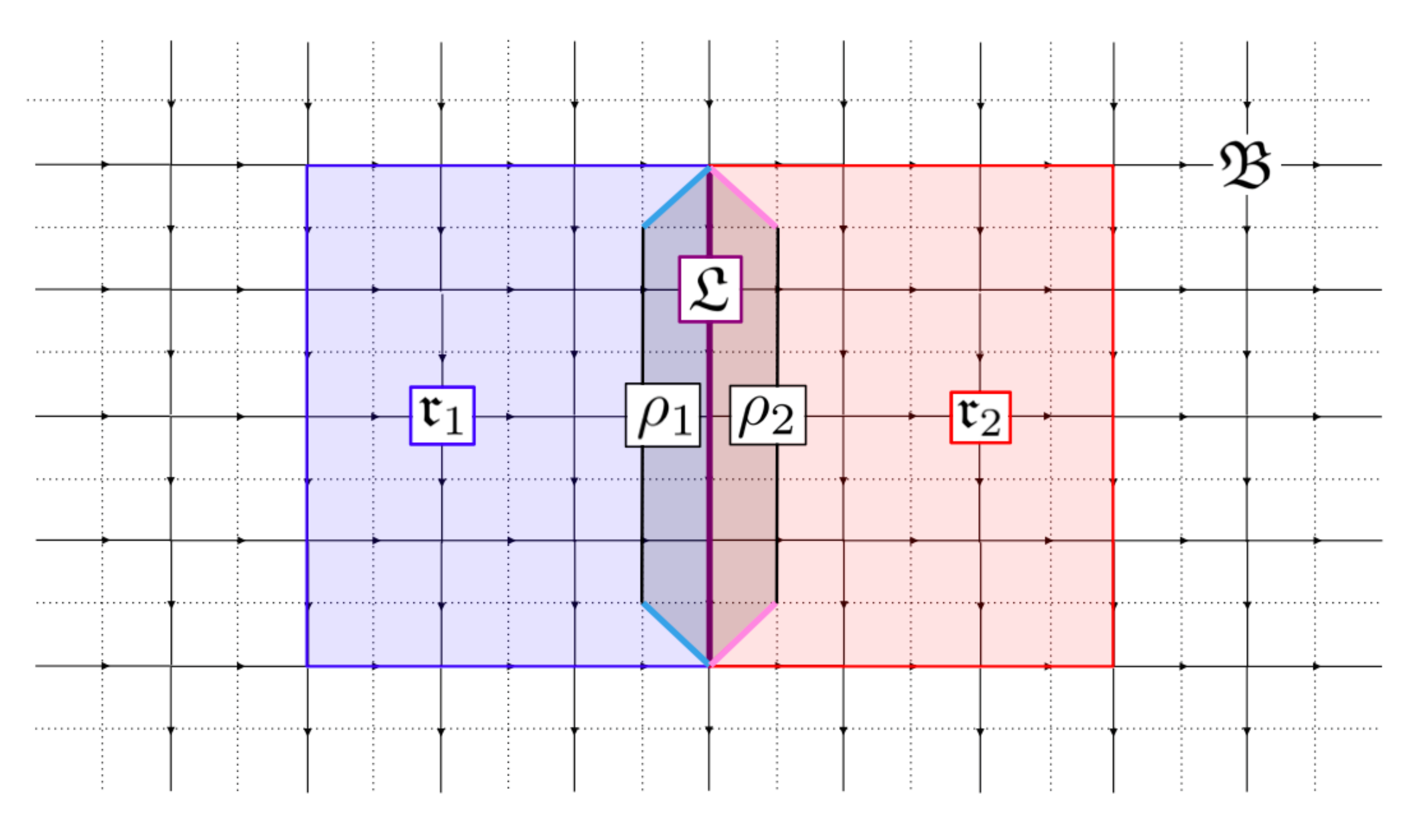}
\caption{Ribbons along the defect border. In principle, it is possible to define ribbon operators along the special ribbons $\rho_1$ and $\rho_2$, to analyze the topological properties of the defects. To be precise, each defect is a cilium, and these properties depend on whether we choose the cilia as endpoints of $\rho_1$ (light blue) or $\rho_2$ (pink).} 
\label{fig:defect-ribbons}
\end{figure}

In Sections \ref{sec:ribbon-operators}, \ref{sec:bd-hamiltonian}, and \ref{sec:bd-excitations}, we have defined ribbon operators for the bulk and boundary Hamiltonians and used them to analyze topological properties bulk and boundary excitations. In principle, one may do the same with the new Hamiltonians $H_{\text{dft}}$, where we would consider ribbons along $\rho_1$ and $\rho_2$ in Fig. \ref{fig:defect-ribbons}. For the sake of space and relevance, we will not present the details here; we will simply state the results for the topological parameterizations of defect types. Furthermore, as we will discuss soon, these operators are not of practical use, due to the energy costs of quasi-particle confinement with a hole.

As we saw in Section \ref{sec:bd-excitations}, by performing a Fourier transform on the ribbon operator algebra $Y(G,1,K,1)$, we showed that the topological labels of the elementary excitations on a subgroup $K$ boundary are given by pairs $(T,R)$, where $T \in K \backslash G / K$ is a double coset, and $R$ is an irreducible representation of the stabilizer $K^{r_T} = r_T K r_T^{-1}$ for some representative $r_T \in T$. Furthermore, if no ribbon operator was applied, we obtained the trivial excitation, which was given by the pair $(T = K, \text{ trivial representation})$. Similarly, in this situation, we may write down the ribbon operator algebra for all ribbons along $\rho_1$ or $\rho_2$, and perform a Fourier transform to obtain the corresponding simple defect types. In doing so, we see that if we consider the two defects as the end cilia of $\rho_1$ (light blue segments in Fig. \ref{fig:defect-ribbons}), the simple defect types are given by the following set:

\begin{equation}
\{
(T,R): T \in K_1 \backslash G / K_2, \text{ } R \in (K_1, K_2)^{r_T}_{\text{ir}} 
\}.
\end{equation}

\noindent
As always, $r_T \in T$ is a representative of the double coset. Here, we define the subgroup $(K_1, K_2)^{r_T} = K_1 \cap r_T K_2 r_T^{-1}$ to be the generalized stabilizer group.  Likewise, if we consider the two defects as the end cilia of $\rho_2$ (pink segments in Fig. \ref{fig:defect-ribbons}), the simple defect types are given by the set

\begin{equation}
\label{eq:simple-defect-types}
\{
(T,R): T \in K_2 \backslash G / K_1, \text{ } R \in (K_2, K_1)^{r_T}_{\text{ir}} 
\}.
\end{equation}

For the former case, the quantum dimension of the simple defect is given by:

\begin{equation}
\label{eq:simple-defect-qdim}
\FPdim(T,R) = \frac{\sqrt{|K_1| |K_2|}}{|K_1 \cap r_T K_2 r_T^{-1}|} \cdot \Dim(R).
\end{equation}

\noindent
The formula for the latter case is obtained by swapping $K_1$ and $K_2$. For the rest of the chapter, without loss of generality, we will consider only the first case.

When we create a pair of defects by applying the Hamiltonian $H_{\text{dft}}$, both defects are of the type $(T,R)$, where $T = K_1 1 K_2$ is the trivial double coset and $R$ is the trivial representation. In general, by applying a ribbon operator along one of the ribbons $\rho_i$, one may change the defect into a different defect type. However, this is a highly impratical procedure: as before, all excitations within the hole are still confined, so the amount of energy required to apply such a ribbon operator would be linearly proportional to the length of $\rho_i$, which is typically very large for the purposes of topological protection. Hence, in general, it is only of interest to consider the defect type $(T,R) = (K_1 1 K_2, \text{ trivial})$.

We note that in the case where $K_1 = K_2 = K$, we obtain all the relations and properties for elementary excitations in the boundary, which we studied in Section \ref{sec:bd-excitations}.

\begin{remark}
In this analysis, we have only considered defect types created using our Hamiltonian $H_{\text{dft}}$ for holes completely along the direct lattice. If we have a corner defect as in the case of $\mfh_2$ of Fig. \ref{fig:boundary-hamiltonian-3}, one should first determine the subgroups that would give the corresponding boundary types if all boundary lines were on the direct lattice (see Remark \ref{dual-bd-rmk-2}) and then apply these formulas.
\end{remark}

\subsubsection{Fusion and braiding of boundary defects}

Boundary defects are very similar to anyons, as both live on cilia. Hence, it is reasonable to expect that boundary defects can also be moved. This is certainly true, as one can move a defect along the boundary of a hole, simply by adiabatically tuning the Hamiltonian $H_{\text{dft}}$. 

Motivated by the topological operations that can be obtained by anyon braiding (e.g. in \cite{Cui15-m,Cui15}), we would also like to braid defects with each other. Suppose we have two boundary types as shown in Fig. \ref{fig:defect-ribbons}, where the blue and red regions $\mfr_1$, $\mfr_2$ have boundary types given by subgroups $K_1, K_2 \subseteq G$, respectively. The effective ``braiding'' of boundary defects occurs when two of them are moved very close to each other and they fuse. In general, fusion of boundary defects produces a degeneracy given by symmetries of the Hamiltonian $H_{\text{dft}}$. The braid relation will result from looking at the fusion of the defects from the perspective of the light blue cilium on the top fusing with the pink cilium on the bottom, vs. the pink cilium on the top fusion with the light blue cilium on the bottom (see Fig. \ref{fig:defect-ribbons}). These two distinct tensor products are not necessarily related to each other trivially, and hence gives a braiding.

\subsection{Example: The Toric Code}
\label{sec:tc-hamiltonian-example}

In this section, we present the toric code as an example to illustrate the theory we have developed in this chapter.

\subsubsection{Toric code Hamiltonian}

The toric code is a special case of the Kitaev model, where $G = \Z_2$. In this case, a standard qubit is attached to each edge of the lattice in Fig. \ref{fig:kitaev}. Since $G$ is abelian, it is not necessary to define orientations for edges. By definition, we have  

\begin{equation}
L^0_- \ket{g} = L^0_+ \ket{g} = \ket{g}, \qquad
L^1_- \ket{g} = L^1_+ \ket{g} = \sigma^x \ket{g}.
\end{equation}

\noindent
where $\sigma^x$ is the Pauli $x$ operator. Hence,

\begin{equation}
1-A(v) = \frac{1}{2}\left( 1-  \prod_{j \in \text{star}(v)} \sigma^x_j \right).
\end{equation}

Similarly, one can show that 

\begin{equation}
1-B(p) = \frac{1}{2} \left( 1- \prod_{j \in \text{boundary}(p)} \sigma^z_j \right).
\end{equation}

This means that, up to rescaling and constant shift, the Hamiltonian for the toric code is given by 

\begin{equation}
\label{eq:tc-hamiltonian}
H_0 = - \sum_v \prod_{j \in \text{star}(v)} \sigma^x_j - \sum_p \prod_{j \in \text{boundary}(p)} \sigma^z_j.
\end{equation}

\subsubsection{Elementary excitations}

We can now examine the elementary excitations of the toric code. By the discussions of this chapter, each elementary excitation is a pair $(C,\pi)$, where $C$ is a conjugacy class of $\Z_2$, and $\pi$ is an irreducible representation of $E(C)$, the centralizer of an element of $C$.

The conjugacy classes $C$ of $\Z_2$, the corresponding centralizers $E(C)$ and their irreducible representations are:

\begin{enumerate}
\item
$C_0 = \{0\}: E(C_0) = \Z_2$. Irreducible representations: $\pi_{0} = \{1,1\}$, $\pi_{1} = \{1,-1\}$.
\item
$C_1 = \{1\}: E(C_1) = \Z_2$. Irreducible representations: same as above.
\end{enumerate}

Hence, the toric code has 4 elementary excitations. In most literature, and in our paper, these quasi-particle types are labeled as follows: 

\begin{equation}
\begin{split}
(C_0, \pi_0) \rightarrow 1, \qquad (C_0, \pi_1) \rightarrow e \\
(C_1, \pi_0) \rightarrow m, \qquad (C_1, \pi_1) \rightarrow \epsilon
\end{split}
\end{equation}

\subsubsection{Gapped boundaries}

We now describe the gapped boundary Hamiltonian terms for the toric code. $G = \Z_2$ has two subgroups, namely the trivial subgroup $K = \{0\}$ and the group $K = G$ itself, which correspond to the ``rough'' and ``smooth'' boundaries in the literature, respectively. Let us first consider the former.

Since $K$ is trivial, the operators $A^K,L^K$ defined in Section \ref{sec:bd-hamiltonian} are equal to identity. The projectors $B^K,T^K$ restrict all edges to the state $\ket{0}$. 

Let us follow the method of Section \ref{sec:bd-excitations} to determine the excitations on the boundary, and the particles that can condense to vacuum. First, there are two double cosets:

\begin{enumerate}
\item
$r_{T_0} = 0: T_0 = \{0\}, K^{r_{T_0}} = \{0\} = Q({T_0}), S({T_0}) = \{0\}$ 
\item
$r_{T_1} = 1: T_1 = \{1\}, K^{r_{T_1}} = \{0\} = Q({T_1}), S({T_1}) = \{1\}$ 
\end{enumerate}

In each case, the only representation of $K^{r_{T_i}}$ is the trivial representation $R_0$. The particle $(T_0, R_0)$ corresponds to vacuum on the boundary.

By Theorem \ref{inverse-condensation-products}, the bulk elementary excitations that condense to $(T_0, R_0)$ are the particles $1,e$, and the bulk elementary excitations that condense to $(T_1, R_0)$ are the particles $m,\epsilon$. Because of this, we adopt the following notation for the elementary excitations on the boundary:

\begin{equation}
(T_0, R_0) \rightarrow 1, \qquad (T_1,R_0) \rightarrow m
\end{equation}

For the rest of the paper, we will call this boundary the $1+e$ boundary. In general, especially in Chapter \ref{sec:algebraic}, we will denote a boundary by the corresponding decomposition (\ref{eq:inverse-condensation}) for vacuum on that boundary. This is because the bulk excitations that can condense to vacuum will play the most important role in quantum computation using the boundary.

In a similar fashion, one can show that when we take $K = G$ for the toric code, we get a $1+m$ boundary: the pure fluxons $1,m$ condense to vacuum, and the particles $e,\epsilon$ condense to an excited state $e$.

\subsubsection{Defects between boundaries}
\label{sec:tc-defects-hamiltonian}

We may now consider the defects between $1+e$ and $1+m$ boundaries of the toric code, as an illustration of Section \ref{sec:defect-hamiltonian}. Let us examine the case where $K_1 = \{0\}$ and $K_2 = \Z_2$ By Eq. (\ref{eq:simple-defect-types}), the simple defect types are parameterized by pairs $(T,R)$, where $T \in K_1 \backslash G / K_2$, and $R$ is an irreducible representation of $(K_1, K_2)^{r_T}$. In this case, there is only one such pair $(T_0, R_0)$, namely the trivial double coset and the trivial representation. Furthermore, by Eq. (\ref{eq:simple-defect-qdim}), we have

\begin{equation}
\FPdim(T_0,R_0) = \sqrt{2}.
\end{equation}

The same defect type is obtained if we switch $K_1$ and $K_2$ in this case.

In fact, it is believed that this defect is topologically equivalent to (i.e. has the same projective braid statistics as) the Majorana zero mode. The above calculation for the quantum dimension of this defect is strong evidence for this claim. In principle, one can use techniques such as ribbon operators to compute other properties of the defect to fully verify it.

In general, this same calculation for the cyclic group $\Z_p$, $p \geq 3$ any prime, will show that the boundary defect between boundary types $K_1 = \{0\}$ and $K_2 = \Z_p$ gives a defect with the same projective braid statistics as the para-fermion zero mode.

\subsection{Example: $\mfD(S_3)$}
\label{sec:ds3-hamiltonian-example}

In this section, we present an example using the group $G = S_3 = \{r,s|r^3 = s^2 = srsr = 1 \}$, the permutation group on three elements, to illustrate our theory on the simplest non-abelian group. Since this group is already quite complicated, we will not explicitly write out the Hamiltonian in full, although the interested reader can easily obtain it from Eq. (\ref{eq:kitaev-hamiltonian}). Instead, we focus our attention on the elementary excitations, gapped boundaries, and boundary defects of this example.

\subsubsection{Elementary excitations}

To determine the elementary excitations of this model, we again only need to find the pairs $(C,\pi)$, as described in Eq. (\ref{eq:elementary-ribbon-basis}). The conjugacy classes $C$ of $S_3$, the corresponding centralizers $E(C)$ and their irreducible representations are:

\begin{enumerate}
\item
$C_0 = \{1\}: E(C_0) = S_3$. Three irreducible representations: trivial ($A$), sign ($B$), and the two-dimensional one ($C$).
\item
$C_1 = \{s, sr, sr^2\}: E(C_1) = \{1,s\} = \Z_2$. Two irreducible representations: trivial ($D$), sign ($E$).
\item
$C_2 = \{r, r^2\}: E(C_2) = \{1,r,r^2\} = \Z_3$. Three irreducible representations: trivial ($F$), $\{1, \omega, \omega^2\}$\footnote{Here $\omega = e^{2\pi i/3}$ is the third root of unity. $\{1, \omega, \omega^2\}$ means the representation where $1 \rightarrow 1$, $r \rightarrow \omega$, $r^2 \rightarrow \omega^2$.} ($G$),
$\{1, \omega^2, \omega^4\}$ ($H$).
\end{enumerate}

Hence, there are 8 anyon types for this model, namely $A-H$ as listed above.

\subsubsection{Gapped boundaries}

The group $G = S_3$ has 4 distinct subgroups up to conjugation, namely the trivial subgroup, $\Z_2$, $\Z_3$, and $G$ itself. In what follows, we solve for the 4 gapped boundaries corresponding to these subgroups. As before, we will follow the method of Section \ref{sec:bd-excitations} to determine the excitations on the boundary, and the bulk anyons that can condense to vacuum.

\vspace{2mm}
\begin{itemize}[wide, label={}, listparindent=1.5em, parsep=0.25mm, itemsep=3mm, labelindent=0pt]
\item
{\it \underline{Case I}:} $K = \{1\}$.

Since $K$ is trivial, there are 6 distinct double cosets, corresponding to each element $r_T \in G$. In each case, we have $K^{r_T} = Q(T) = \{1\}$, and $S(T) = \{r_T\}$, so the only representation of $K^{r_T}$ is the trivial one. There are hence 6 elementary excitations on the boundary; let us label each excitation by the corresponding choice of $r_T$.

By Theorem \ref{inverse-condensation-products}, the bulk elementary excitations that condense to the trivial excitation on the boundary are the particles corresponding to the trivial conjugacy class $C = \{1\}$, i.e. the chargeons. In general, for any finite group $G$, a simple argument shows that $K = \{1\}$ will always form the charge condensate boundary. More specifically, the ``boundary topological order'' corresponding to this boundary will always be described by the fusion category $\C[G]$.

More generally, we can use Theorem \ref{condensation-products} to determine the result of condensing each simple bulk anyon to the boundary:

\begin{multicols}{2}
\begin{enumerate}[label=(\roman*),leftmargin=0.5in]
\item
$A,B \rightarrow 1$
\item
$C \rightarrow 2 \cdot 1$
\item
$D,E \rightarrow s \oplus sr \oplus sr^2$
\item
$F,G,H \rightarrow r \oplus r^2$
\end{enumerate}
\end{multicols}

Similarly, by Theorem \ref{inverse-condensation-products}, we can determine the bulk anyons that result from pulling an elementary excitation out of the boundary:

\vspace{2.5mm}
\begin{enumerate}[label=(\roman*),leftmargin=0.5in]
\item
$1 \rightarrow A \oplus B \oplus 2C$
\item
$s, sr, sr^2 \rightarrow D \oplus E$
\item
$r, r^2 \rightarrow F \oplus G \oplus H$
\end{enumerate}
\vspace{2.5mm}

Following the convention in the previous section, we will say that this subgroup forms an $A+B+2C$ boundary.

\item
{\it \underline{Case II}:} $K = \Z_2 = \{1,s\}$.
\vspace{2mm}

In this case, we see that there are only 2 double cosets, which give 3 elementary boundary excitations:

\vspace{2.5mm}
\begin{enumerate}[label=(\roman*),leftmargin=0.5in]
\item
$r_{T_1} = 1: T_1 = \{1,s\} = K^{r_{T_1}}, Q(T_1) = S(T_1) = \{1\}$. 2 irreducible representations of $K^{r_{T_1}}$: the trivial one ($A$), the sign one ($B$).
\item
$r_{T_2} = r: T_2 = \{r,r^2,sr,sr^2\}, K^{r_{T_2}} = \{1\}, Q(T_2) = S(T_2) = \{1,s\}$. There is only one trivial representation ($C$) of $K^{r_{T_2}}$.
\end{enumerate}
\vspace{2.5mm}

Using ribbon operator techniques, it is possible to show that this is in fact a boundary topological order given by the fusion category $\Rep(S_3)$.

We apply Theorem \ref{condensation-products} to determine the result of condensing each simple bulk anyon to the boundary:

\begin{multicols}{2}
\begin{enumerate}[label=(\roman*),leftmargin=0.5in]
\item
$A \rightarrow {A}$
\item
$B \rightarrow {B}$
\item
$C \rightarrow {A} \oplus {B}$
\item
$D \rightarrow {A} \oplus {C}$
\item
$E \rightarrow {B} \oplus {C}$
\item
$F,G,H \rightarrow {C}$
\end{enumerate}
\end{multicols}

Similarly, by Theorem \ref{inverse-condensation-products}, we have

\vspace{2.5mm}
\begin{enumerate}[label=(\roman*),leftmargin=0.5in]
\item
${A} \rightarrow A \oplus C \oplus D$
\item
${B} \rightarrow B \oplus C \oplus E$
\item
${C} \rightarrow D \oplus E \oplus F \oplus G \oplus H$
\end{enumerate}
\vspace{2.5mm}

Hence, $K = \Z_2$ corresponds to the $A+C+D$ boundary. We would like to note that for this case, it does not matter which of the three $\Z_2$ subgroups we choose, since they are equivalent up to conjugation; in the end, they all give the same boundary condensation rules.

\item
{\it \underline{Case III}:} $K = \Z_3 = \{1,r,r^2\}$.

The subgroup $K = \Z_3$ gives 2 double cosets:

\vspace{2.5mm}
\begin{enumerate}[label=(\roman*),leftmargin=0.5in]
\item
$r_{T_1} = 1: T_1 = \{1,r,r^2\} = K^{r_{T_1}}, Q(T_1) = S(T_1) = \{1\}$. 3 irreducible representations of $K^{r_T}$: the trivial one (${1}$), $\{1,\omega,\omega^2\}$ (${r}$), and $\{1,\omega^2,\omega\}$ (${r^2}$).
\item
$r_{T_2} = s: T_2 = \{s,sr,sr^2\}, K^{r_{T_2}} = \{1,r,r^2\}, Q(T_2) = \{1\}, S(T_2) = \{s\}$. 3 irreducible representations of $K^{r_T}$: the trivial one (${s}$), $\{1,\omega,\omega^2\}$ (${sr}$), and $\{1,\omega^2,\omega\}$ (${sr^2}$).
\end{enumerate}
\vspace{2.5mm}

As in Case I, it is possible to use ribbon operators to show that this boundary has bordered topological order given by the fusion category $\C[S_3]$.

Applying Theorem \ref{condensation-products} gives

\begin{multicols}{2}
\begin{enumerate}[label=(\roman*),leftmargin=0.5in]
\item
$A \rightarrow {1}$
\item
$B \rightarrow {1}$
\item
$C \rightarrow {r} \oplus {r^2}$
\item
$D,E \rightarrow {s} \oplus {sr} \oplus {sr^2}$
\item
$F \rightarrow 2 \cdot {1}$
\item
$G,H \rightarrow {r} \oplus {r^2}$
\end{enumerate}
\end{multicols}

\noindent
and Theorem \ref{inverse-condensation-products} gives

\vspace{2.5mm}
\begin{enumerate}[label=(\roman*),leftmargin=0.5in]
\item
$1 \rightarrow A \oplus B \oplus 2F$
\item
$s, sr, sr^2 \rightarrow D \oplus E$
\item
$r, r^2 \rightarrow C \oplus G \oplus H$
\end{enumerate}
\vspace{2.5mm}

This is the $A+B+2F$ boundary. Note that it is exactly the same as the $A+B+2C$ boundary with $C,F$ switched. This is due to the duality of $C,F$ in the $\mfD(S_3)$ theory.

\item
{\it \underline{Case IV}:} $K = G = S_3$.

In general, for any finite group $G$, the subgroup $K = G$ yields only a single double coset, with $K^{r_T} = G, Q(T) = 1, S(T) = \{r_T\}$. The elementary excitations are given by the irreducible representations of $G$, and the resulting bordered topological order is described by the fusion category $\Rep(G)$. Because of this, the only anyons that condense to the boundary are pure fluxons. (More specifically, each fluxon appears at least once in the decomposition of vacuum on the boundary).

For the case of $G=S_3$, let ${A},{B},{C}$ denote the three irreducible representations of $G$ as in Case II. Theorem \ref{condensation-products} gives the following for the condensation products:

\begin{multicols}{2}
\begin{enumerate}[label=(\roman*),leftmargin=0.5in]
\item
$A \rightarrow {A}$
\item
$B \rightarrow {B}$
\item
$C \rightarrow {C}$
\item
$D \rightarrow {A} \oplus {C}$
\item
$E \rightarrow {B} \oplus {C}$
\item
$F \rightarrow {A}$
\item
$G,H \rightarrow {C}$
\end{enumerate}
\end{multicols}

\noindent
Theorem \ref{inverse-condensation-products} gives

\vspace{2.5mm}
\begin{enumerate}[label=(\roman*),leftmargin=0.5in]
\item
${A} \rightarrow A \oplus D \oplus F$
\item
${B} \rightarrow B \oplus C \oplus E$
\item
${C} \rightarrow C \oplus D \oplus E \oplus G \oplus H$
\end{enumerate}
\vspace{2.5mm}

\noindent
Hence, this is the $A+F+D$ boundary.

\end{itemize}

\subsubsection{Defects between boundaries}

Let us now consider the defects between different boundaries of the $\mfD(S_3)$ model. We first consider defects between the $A+C+D$ boundary and the $A+F+D$ boundary; this corresponds to $K_1 = \Z_2$, $K_2 = S_3$. By Eq. (\ref{eq:simple-defect-types}), the simple defect types are parameterized by pairs $(T,R)$, where $T \in K_1 \backslash G / K_2$, and $R$ is an irreducible representation of $(K_1, K_2)^{r_T}$. In this case, there are two possible pairs: both correspond to the choice $r_T = 1$, as there is only one double coset, but we may choose $R$ to be the trivial representation ($R_0$) or the sign representation ($R_1$) of $\Z_2$. By Eq. (\ref{eq:simple-defect-qdim}), we have

\begin{equation}
\FPdim(T_0,R_0) = \FPdim(T_0,R_1) = \sqrt{3}.
\end{equation}

\noindent
The same defect types are obtained by switching $K_1$ and $K_2$ in this case.

Similarly, we may also consider the simple defects between an $A+B+2C$ boundary and an $A+B+2F$ boundary. This corresponds to $K_1 = \{1\}$, $K_2 = \Z_3$. Here, there are two double cosets ($r_T = 1$ or $r_T = s$), but in both cases, $(K_1, K_2)^{r_T}$ is trivial and has only the trivial representation. Hence, both of these simple defects also have quantum dimension $\sqrt{3}$.

The same calculations may be repeated for all other choices of $K_1$ and $K_2$ as an exercise. However, we have chosen the above two examples due to the following observation:

\begin{remark}
In Section \ref{sec:tc-defects-hamiltonian}, we saw that the defect between the $1+e$ and $1+m$ boundaries of the toric code had quantum dimension $\sqrt{2}$, and claimed that it corresponds to the Majorana zero mode of the Ising theory. In this section, we see that there are two simple defect types between $A+C+D$/$A+F+D$ or between $A+B+2C$/$A+B+2F$, each with quantum dimension $\sqrt{3}$. We believe that in this case, the defects behave like the two particles of quantum dimension $\sqrt{3}$ in the $\SU(2)_4$ theory. It seems more than a mere coincidence that these correspond to the theories formed by gauging a $\Z_2$ symmetry in the original Dijkgraaf-Witten theory (the $e-m$ symmetry for the toric code, or the $C-F$ symmetry for the case of $\mfD(S_3)$). In general, it would be interesting to study whether there is any correspondence between these boundary defects and $\Z_2$ symmetry gauging.
\end{remark}

\subsection{Example: Genons in $\mfD(G \times G)$}
\label{sec:genon-hamiltonian}

In this section, we illustrate how the Hamiltonian $H_{\text{dft}}$ of Section \ref{sec:defect-hamiltonian} can be used to create genons in a bilayer TQFT based on an arbitrary finite group $G$. As discussed in Ref. \cite{Bark13a}, a genon in a bilayer TQFT is a defect in the $\Z_2$ symmetry. Genons have been studied extensively in Refs. \cite{Barkeshli11, Bark13a}, and can allow for universal quantum computation (see Refs. \cite{Bark13a, Barkeshli15}). However, these works have almost exclusively constructed genons for bilayer abelian group TQFTs. In this section, we will present a systematic Hamiltonian construction of the ``bare defect'' genon \cite{Barkeshli14} in bilayer Dijkgraaf-Witten TQFTs based on an arbitrary finite group $G$.

\begin{figure}
\centering
\includegraphics[width = 0.78\textwidth]{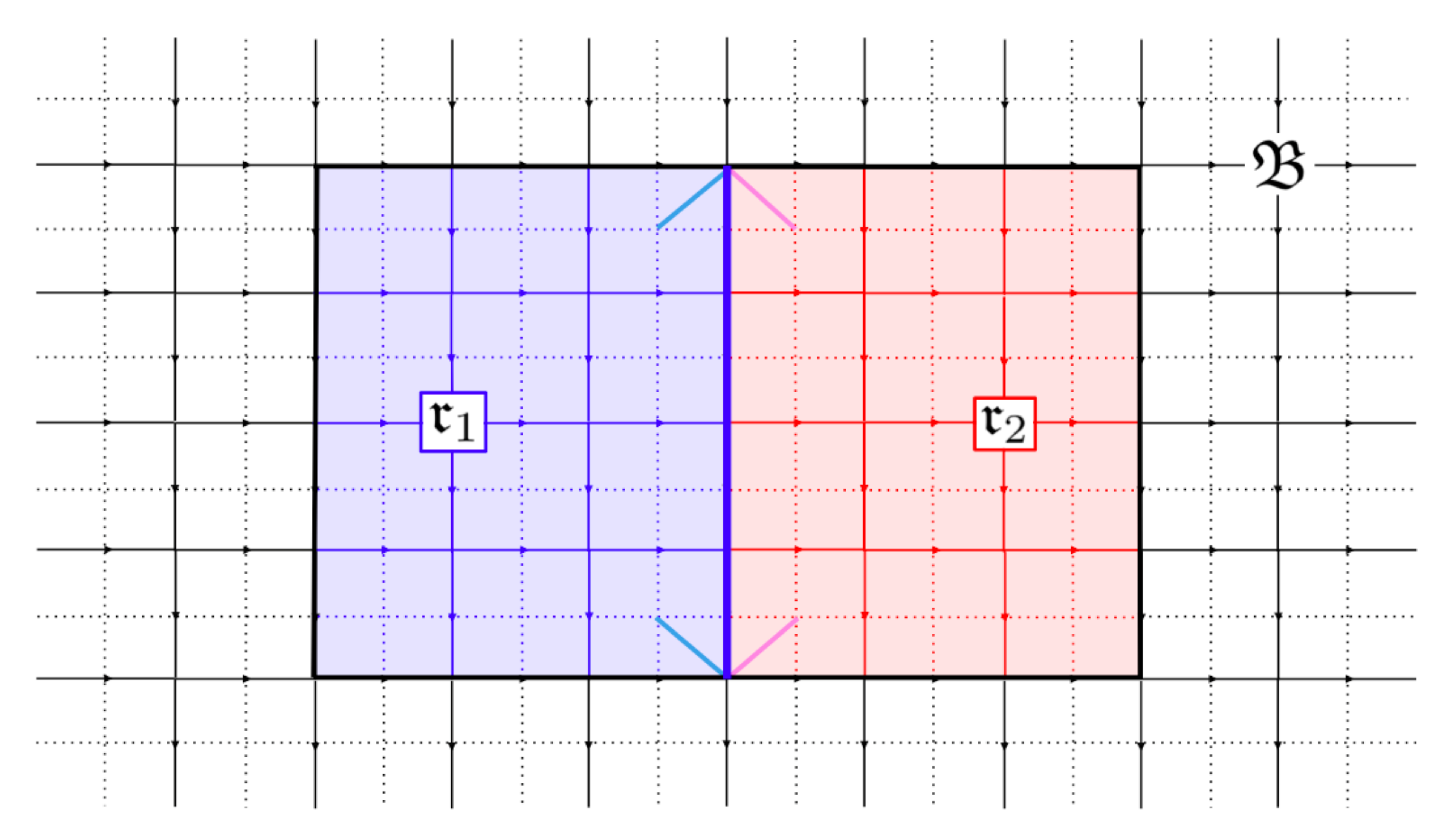}
\caption{Definition of the Hamiltonian $H_{\text{gn}}$. The region $\mfr_1$ consists of all vertices and plaquettes within and on all boundaries of the blue shaded rectangle, and all blue edges, including the ones on the thick blue dividing line between the two regions. The vertices and plaquettes of the region $\mfr_2$ are all those within the red shaded rectangle and on the upper, right, or lower boundaries of the rectangle; the edges of $\mfr_2$ are all red edges. The bulk $\mfB$ consists of everything else (black and white). Each resulting genon is shown by two cilia.} 
\label{fig:genon-hamiltonian}
\end{figure}

The Hamiltonian to create the bare defect genon is a special case of the general defect Hamiltonian $H_{\text{dft}}$, in the case where the input group is $G \times G$ (i.e. all data qudits on edges take values in the Hilbert space $\C[G \times G]$). In this situation, we begin with a hole in the lattice, divided into two regions, $\mfr_1$ and $\mfr_2$, as pictured in Fig. \ref{fig:genon-hamiltonian}. The region $\mfr_1$ consists of all vertices and plaquettes within and on all boundaries of the blue shaded rectangle, and all blue edges, including the ones on the thick blue dividing line between the two regions. The vertices and plaquettes of the region $\mfr_2$ are all those within the red shaded rectangle and on the upper, right, or lower boundaries of the rectangle; the edges of $\mfr_2$ are all red edges. To create bare defect genons, we will consider two specific subgroups of $G \times G$, namely the trivial subgroup $K_1 = \{1_G\} \times \{1_G\}$ and the subgroup $K_1 = G \times \{1_G\}$ ($1_G$ is the identity element of $G$). We apply the Hamiltonian $H^{(K_1,1)}_{(G \times G, 1)}$ to the region $\mfr_1$, and the Hamiltonian $H^{(K_2,1)}_{(G \times G, 1)}$ to the region $\mfr_2$; as always, the bulk Hamiltonian $H_{(G \times G, 1)}$ is applied to the bulk $\mfB$. Hence, the Hamiltonian to produce two bare defect genons is given by 

\begin{equation}
\label{eq:genon-hamiltonian}
H_{\text{gn}} = H_{(G \times G,1)} (\mfB) + H^{(K_1,1)}_{(G \times G,1)} (\mfr_1) + H^{(K_2,1)}_{(G \times G,1)} (\mfr_2).
\end{equation}

\noindent
By the analysis of Section \ref{sec:defect-hamiltonian}, this Hamiltonian is exactly solvable. Of course, the generalization to producing multiple genons on the same hole is simple; it directly follows from the generalization in Section \ref{sec:defect-hamiltonian}.

By Section \ref{sec:defect-hamiltonian}, the simple defect types that can be created by $H_{\text{gn}}$ are given by pairs $(T,R)$, where $T \in K_1 \backslash G / K_2$ is a double coset, and $R$ is an irreducible representation of $(K_1, K_2)^{r_T} = K_1 \cap r_T K_2 r_T^{-1}$ for some representative $r_T \in T$. Since $K_1$ is trivial, $(K_1, K_2)^{r_T}$ is always trivial, and there are exactly $|G|$ double cosets $T$, one corresponding to each $r_T = (1_G,g)$, $g \in G$. The bare defect genon corresponds to the choice $r_T = 1$. As discussed in Section \ref{sec:defect-hamiltonian}, this is also the only simple defect type which may be created in this way without a high energy cost due to confinement of excitations on and within the boundary. 

We may now compute the quantum dimension of the simple defect types of the Hamiltonian $H_{\text{gn}}$. By Eq. (\ref{eq:simple-defect-qdim}), we have

\begin{equation}
\label{eq:genon-qdim}
\FPdim(T,R) = \frac{\sqrt{|K_1| |K_2|}}{|K_1 \cap r_T K_2 r_T^{-1}|} \cdot \Dim(R) = \sqrt{|G|}
\end{equation} 

\noindent
for each simple defect type, including the bare defect genon. This agrees precisely with the prediction of Section X.H in Ref. \cite{Barkeshli14}, which states (for group-theoretical cases) that there should be exactly $|G|$ defects, and that the bare defect genon is a direct sum of all simple objects in the modular tensor category formed by giving $\C[G]$ a braided structure (and hence has quantum dimension $\sqrt{|G|}$).

\vspace{2mm}
\section{Algebraic model of gapped boundaries}
\label{sec:algebraic}

In this chapter, we present a mathematical model of gapped boundaries using Lagrangian algebras in a modular tensor category. Throughout the chapter, we will assume the reader is familiar with the concepts of a fusion category and a modular tensor category; for reference on these topics, see Ref. \cite{BakalovKirillov}.

\subsection{Topological order}
\label{sec:topological-order}

In this section, we briefly review the mathematical theory that describes elementary excitations and anyons.

As we saw in Section \ref{sec:ribbon-operators}, the topological charges/anyon types in the Kitaev model with group $G$ are given by irreducible representations of the Drinfeld double $\mfD(G) = \mZ(\text{Vec}_G)$, which are pairs $(C,\pi)$ of a conjugacy class of $G$ and an irreducible representation of the centralizer of $C$. 

If we think dually, we can also view the Kitaev model in terms of the representation category of $G$. In this case, every edge of the lattice will be labeled by an object in the unitary fusion category $\mC = \Rep(G)$, the complex linear representations of the group $G$. The elementary excitations in this model will be given by simple objects in the modular tensor category $\B = \mZ(\Rep(G))$, the Drinfeld double of the representation category. Because $\text{Vec}_G$ is Morita equivalent to $\Rep(G)$, these simple objects are given precisely by the same pairs $(C,\pi)$. In fact, this dualization exactly gives the same topological order as the Kitaev model, using the Levin-Wen Hamiltonian \cite{Levin04}.

By using ribbon operator techniques as presented in Chapter \ref{sec:hamiltonian}, one can in principle compute the twists and braidings of all of the elementary excitations in the Kitaev model. In doing so, one can determine all of the $\mathcal{S},\mathcal{T}$ matrix entries for this anyon system. It is conjectured that these two matrices uniquely determine a modular tensor category. If this conjecture holds, using such an analysis, one can show that the topological order of the Kitaev model is indeed described by the modular tensor category $\B = \mZ(\Rep(G))$.

In fact, it is widely believed that modular tensor categories can be used to describe not only the topological order of Kitaev models, but also of Levin-Wen models \cite{Levin04}. These models also use a lattice (similar to Fig. \ref{fig:kitaev}), with the modification that the lattice should be trivalent (e.g. the honeycomb lattice). Here, the label on each edge is given by a simple object in a unitary fusion category $\mC$. As shown in Ref. \cite{Levin04}, string operators may also be defined for this model, although it is not as simple to characterize the elementary excitations and anyon fusion. The topological order would be given by the Drinfeld center $\B = \mZ(\mC)$, although one must also at least compute the $\mathcal{S},\mathcal{T}$ matrices using string operators to verify this for each particular case.

In the rest of this chapter, we present an algebraic theory for gapped boundaries for any model whose topological order is given by a doubled theory $\B = \mZ(\mC)$ for some unitary fusion category $\mC$.

\subsection{Lagrangian algebras}
\label{sec:frobenius-algebras}

We will now describe the gapped boundaries in a theory with topological order given by $\B = \mZ(\mC)$. Let us first state a few definitions and theorems that will be crucial for the rest of the paper.

\begin{definition}
Let $\mC$ be a tensor category. A {\it (left) module category} is a category $\M$ with an exact bifunctor $\otimes: \mC \times \M \rightarrow \M$, with functorial associativity and unit isomorphisms $m_{X,Y,M}:(X \otimes Y) \otimes M \rightarrow X \otimes (Y \otimes M)$, $l_M: \one \otimes M \rightarrow \M$ ($\one$ is the tensor unit of $\mC$) for every $X,Y \in \Obj(\mC), M \in \Obj(\M)$ such that the following diagrams commute:
\begin{equation}
\vcenter{\hbox{\includegraphics[width = 0.66\textwidth]{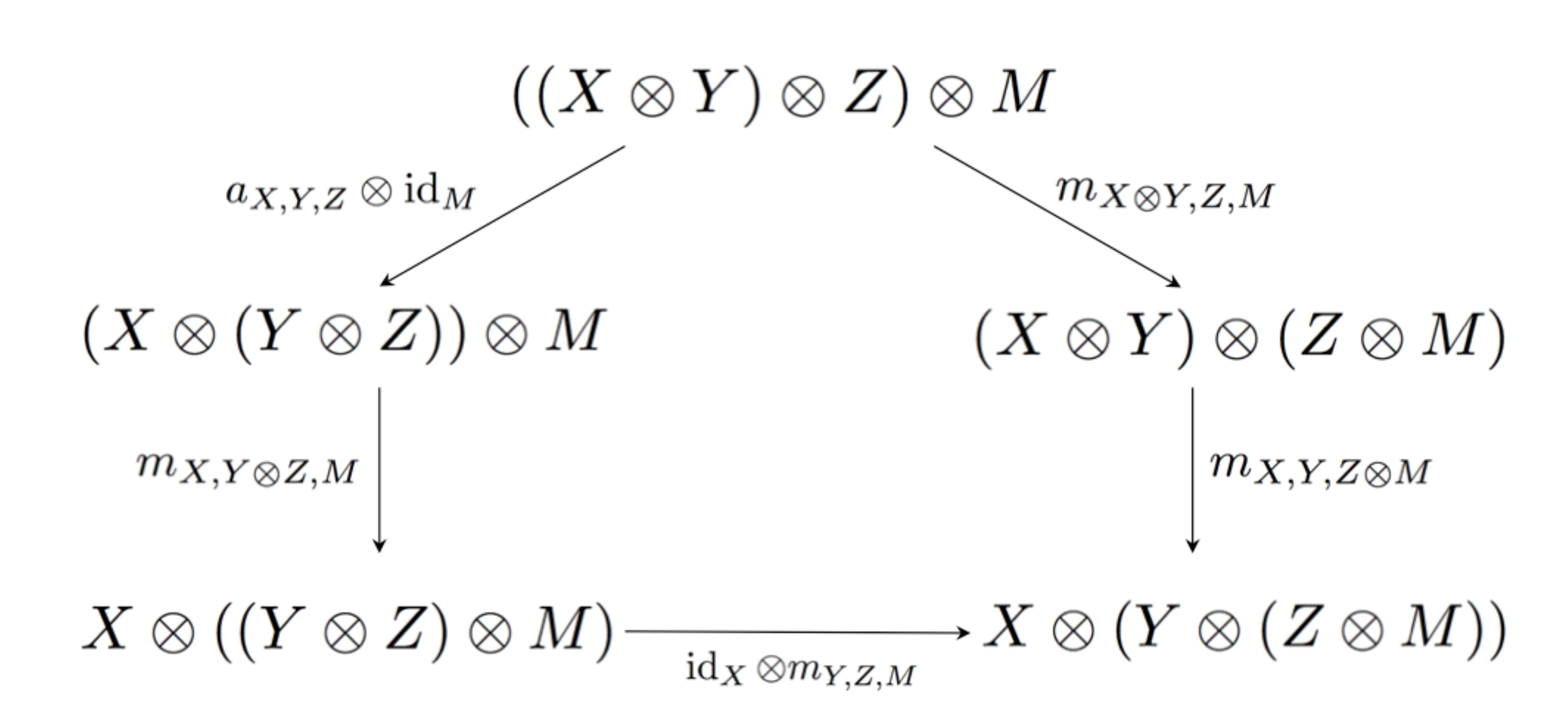}}}
\end{equation}
and
\begin{equation}
\vcenter{\hbox{\includegraphics[width = 0.48\textwidth]{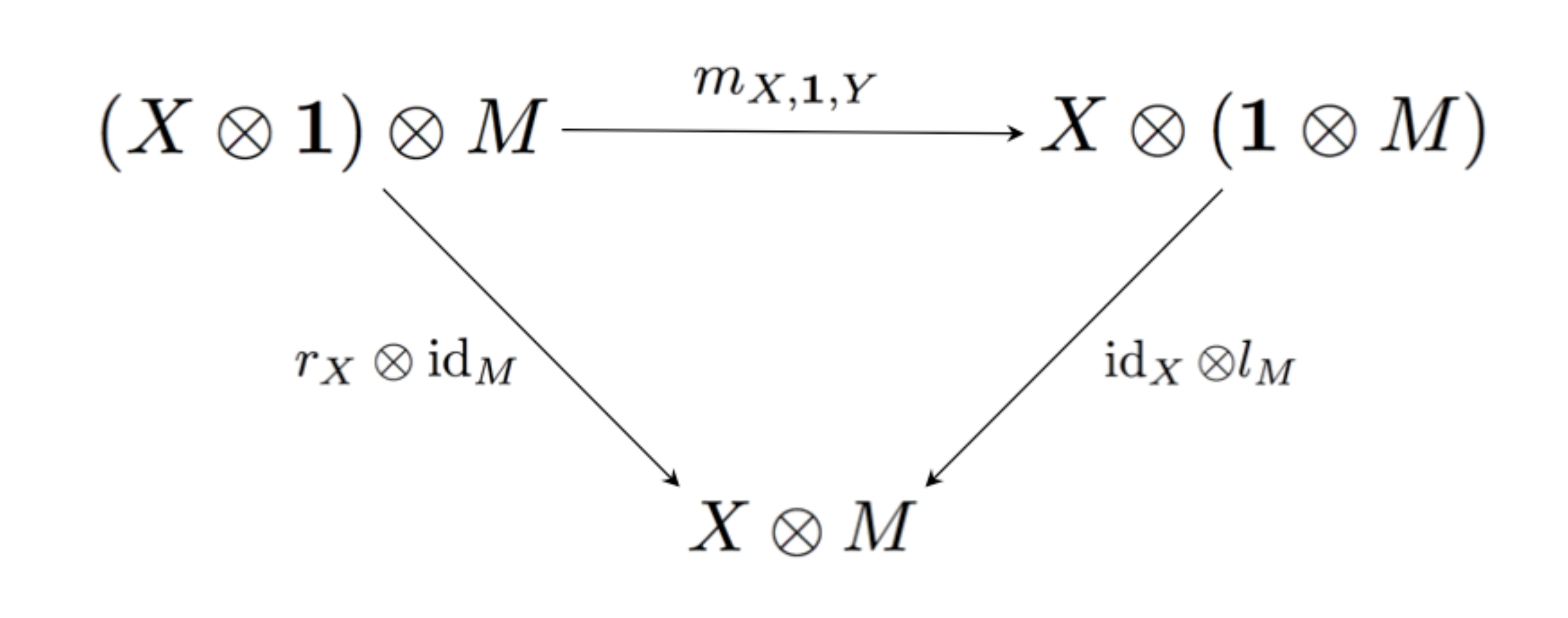}}}
\end{equation}
\noindent
Here, $a_{X,Y,Z}$ and $r_X$ are the functorial associativity and right unit isomorphisms from the monoidal category $\mC$, respectively. Right module categories are defined analogously.

$\M$ is said to be {\it indecomposable} if it is not the direct sum of two nontrivial module categories.
\end{definition}

\begin{theorem}
\label{indecomposable-module-repG}
Let $G$ be some finite group. There exists a one-to-one correspondence between the indecomposable module categories of $\Rep(G)$ and the pairs $(\{K\},\omega)$, where $\{K\}$ is an equivalence class of subgroups $K \subseteq G$ up to conjugation, and $\omega \in H^2(K,\C^\times)$ is a 2-cocycle of a representative $K \in \{K\}$.
\end{theorem}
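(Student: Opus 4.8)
The plan is to reduce the classification to two standard inputs: Ostrik's equivalence between module categories over a fusion category and algebra objects in it (taken up to Morita equivalence), and the Morita equivalence $\Rep(G)\simeq\mathrm{Vec}_G$. Concretely, $\mathrm{Vec}_G$ acts on $\mathrm{Vec}$ through the one-point $G$-set, and $(\mathrm{Vec}_G)^*_{\mathrm{Vec}} := \mathrm{Fun}_{\mathrm{Vec}_G}(\mathrm{Vec},\mathrm{Vec}) \simeq \Rep(G)$; since a fusion category and its dual have equivalent $2$-categories of exact (hence semisimple) module categories, it is enough to classify indecomposable module categories over $\mathrm{Vec}_G$ and then transport the answer back. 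One should keep track of the transport carefully, since on the $\Rep(G)$ side it will produce the explicit family $\M(K,\omega) := \Rep_\omega(K)$, the $\omega$-projective representations of $K$, with $\Rep(G)$ acting by $V \otimes W = (\Res^G_K V)\otimes W$ (equivalently $\M(K,\omega) = \mathrm{Mod}_{\Rep(G)}(A_{K,\omega})$ for the algebra object $A_{K,\omega} = \mathrm{Ind}_K^G \C_\omega[K]$, realizable as an internal-endomorphism algebra $\underline{\Hom}(M,M)$ for a simple $M$).

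Next I would carry out the classification over $\mathrm{Vec}_G$. Let $\M$ be semisimple with set of simple objects $I$. Each $\delta_g$ is invertible, so $\delta_g\otimes(-)$ permutes $I$; this gives a $G$-action on $I$, and $\M$ splits along its orbits, so $\M$ is indecomposable iff the action is transitive, i.e.\ $I\cong G/K$ for a subgroup $K$ unique up to conjugation (the conjugation ambiguity being the choice of basepoint in $G/K$). The only remaining structure is the module associativity isomorphism, which on simples is a scalar function $m : G\times G\times (G/K)\to\C^\times$; the module pentagon axiom says exactly that $m$ is a $2$-cocycle in the complex of $G$-equivariant $\C^\times$-cochains on $G/K$, the unit axiom normalizes it, and equivalences of module categories act by equivariant coboundaries. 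Hence such structures up to equivalence are classified by $H^2_G(G/K,\C^\times)$, which by Shapiro's lemma is canonically $H^2(K,\C^\times)$. Assembling the two invariants yields the pair $(\{K\},\omega)$.

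Finally I would check both ends of the bijection on the $\Rep(G)$ side: that each $\M(K,\omega)$ is indecomposable (equivalently that $A_{K,\omega}$ is an indecomposable algebra object, which reduces to $\C_\omega[K]$ being a single $G$-twisted block), and that $\M(K,\omega)\simeq\M(K',\omega')$ forces $(\{K\},\omega) = (\{K'\},\omega')$, by recovering the stabilizer and the cohomology class from the module category — e.g.\ from the $G$-set of simples of the module category transported over $\mathrm{Vec}_G$, or from the internal-Hom algebra. An alternative to the Morita route is to work directly with algebra objects in $\Rep(G)$, i.e.\ finite-dimensional semisimple $G$-algebras, and reduce such an algebra up to $G$-equivariant Morita equivalence to a twisted group algebra $\C_\omega[K]$ of a point stabilizer $K$; this carries the same combinatorial content.

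I expect the main obstacle to be the completeness direction — showing every indecomposable module category has the form $\M(K,\omega)$ — and within it the clean identification of the module-associator obstruction with honest group cohomology $H^2(K,\C^\times)$ via Shapiro's lemma, performed simultaneously with the verification that two pairs yield equivalent module categories exactly when they are conjugate. Setting up the Morita equivalence $\Rep(G)\simeq\mathrm{Vec}_G$ so that it induces an \emph{equivalence of $2$-categories} of module categories (not merely a bijection on equivalence classes), and so that it carries $\M(K,\omega)$ to the correct pointed module category, also needs care but is routine.
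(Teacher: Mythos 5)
Your proposal is correct in outline, but it is worth noting that the paper does not actually prove this statement at all: its ``proof'' is a citation to Theorem 2 of Ostrik's paper \cite{Ostrik03}, so any honest argument you give is necessarily doing more than the text. Your main route --- transport across the Morita equivalence $\Rep(G)\simeq (\text{Vec}_G)^*_{\text{Vec}}$, classify indecomposable semisimple module categories over $\text{Vec}_G$ by letting the invertible simples $\delta_g$ permute the simples of $\M$ (indecomposable $\Leftrightarrow$ transitive, so $I\cong G/K$), read the module associator as an equivariant $2$-cocycle, and apply Shapiro's lemma $H^2_G(G/K,\C^\times)\cong H^2(K,\C^\times)$ --- is the standard ``pointed'' proof and is sound, provided you really do invoke the $2$-equivalence of module $2$-categories under Morita equivalence (a bijection on equivalence classes would not let you transport the explicit family $\Rep_\omega(K)$ and recover the invariants). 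The cited proof of Ostrik instead stays inside $\Rep(G)$ and works with algebra objects and internal Homs, identifying the relevant algebras with (induced) twisted group algebras $\C_\omega[K]$ --- essentially the alternative you sketch in your last paragraph. The trade-off is clear: your $\text{Vec}_G$ route makes the combinatorics elementary and the completeness direction transparent (pentagon $\Rightarrow$ cocycle, coboundaries $\Rightarrow$ equivalences), at the cost of importing the Morita $2$-equivalence as a black box; the algebra-object route is self-contained in $\Rep(G)$ and matches the reference, but hides the $G$-set picture. One small point you handle better than the theorem statement itself: the correspondence is really with pairs $(K,\omega)$ modulo the simultaneous identification $(K,\omega)\sim(gKg^{-1},\omega^g)$, which your ``choice of basepoint in $G/K$'' remark makes explicit, whereas the statement quietly fixes a representative $K\in\{K\}$.
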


\begin{proof}
See Ref. \cite{Ostrik03}, Theorem 2.
\end{proof}

\begin{definition}
\label{lagrangian-algebra-def}
A {\it Lagrangian algebra} $\A$ in a modular tensor category $\B$ is an algebra with a multiplication $m: \A \otimes \A \rightarrow \A$ such that:
\begin{enumerate}
\item
$\A$ is {\it commutative}, i.e. $\A \otimes \A \xrightarrow{c_{\A\A}} \A \otimes \A \xrightarrow{m} \A$ equals $\A \otimes \A \xrightarrow{m} \A$, where $c_{\A\A}$ is the braiding in the modular category $\B$.
\item
$\A$ is {\it separable}, i.e. the multiplication morphism $m$ admits a splitting $\mu:\A \rightarrow \A \otimes \A$ a morphism of $(\A,\A)$-bimodules.
\item
$\A$ is {\it connected}, i.e. $\Hom_\B(\one_\B, \A) = \C$, where $\one_\B$ is the tensor unit of $\B$.
\item
The Frobenius-Perron dimension (a.k.a. quantum dimension) of $\A$ is the square root of that of the modular tensor category $\B$,
\begin{equation}
\label{eq:lagrangian-algebra-dim}
\FPdim(\A)^2 = \FPdim(\B).
\end{equation}
\end{enumerate}
\end{definition}

\begin{remark}
We note that an algebra that satisfies conditions (2) and (3) in the above definition is often known in the literature as an {\it et\'ale} algebra.
\end{remark}

As seen in Section \ref{sec:bd-hamiltonian}, in a Kitaev model for the untwisted Dijkgraaf-Witten theory based on group $G$, every subgroup $K \subseteq G$ (up to conjugation) with a cocycle $\omega \in H^2(K,\C^\times)$ determines a distinct gapped boundary of the model (i.e. a unique boundary Hamiltonian). It follows from Theorem \ref{indecomposable-module-repG} that there is an injection from the indecomposable modules of the category $\mZ(\Rep(G))$ to the set of gapped boundaries of the Kitaev model.

By Proposition 4.8 of Ref. \cite{Davydov12}, we may state the following Theorem:

\begin{theorem}
\label{indecomposable-module-lagrangian-algebra}
Let $\mC$ be any fusion category, and let $\B = \mZ(\mC)$. There exists a one-to-one correspondence between the indecomposable modules of $\mC$ and the Lagrangian algebras of $\B$.
\end{theorem}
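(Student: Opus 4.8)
The plan is to construct the correspondence in both directions, first using Ostrik's dictionary between module categories and algebras to reduce everything to a statement about algebras, and then Davydov's \emph{full centre} construction to pass between algebras in $\mC$ and commutative algebras in $\mZ(\mC)$. I would begin by invoking Ostrik's classification \cite{Ostrik03}: every indecomposable semisimple module category $\M$ over $\mC$ is of the form $\Mod_{\mC}(A)$ for a separable algebra $A$ in $\mC$ --- one may take $A=\underline{\Hom}(M,M)$ for any generator $M\in\M$ --- and two such algebras define equivalent module categories precisely when they are Morita equivalent in $\mC$. Hence it suffices to produce a bijection between Lagrangian algebras in $\B=\mZ(\mC)$ and Morita equivalence classes of separable algebras $A$ in $\mC$ for which $\Mod_{\mC}(A)$ is indecomposable.

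For the forward map, let $F\colon\mZ(\mC)\to\mC$ be the forgetful tensor functor and $R$ its right adjoint, which exists because $\mC$ is fusion. Given an algebra $A$ in $\mC$, I would define its full centre $Z(A)$ as the object of $\mZ(\mC)$ representing the functor $X\mapsto\Hom_{A\text{-}A}\!\left(A,\,F(X)\otimes A\right)$; concretely it is the largest subobject of $R(A)$ on which the two natural $A$-bimodule actions agree, and the half-braiding on it equips $Z(A)$ with the structure of a commutative algebra in $\mZ(\mC)$. I would then verify: (i) $Z(A)$ is a connected separable algebra whenever $\Mod_{\mC}(A)$ is indecomposable; and (ii) $\FPdim\!\left(Z(A)\right)=\FPdim(\mC)$, so that $\FPdim\!\left(Z(A)\right)^2=\FPdim(\mC)^2=\FPdim(\mZ(\mC))=\FPdim(\B)$ and $Z(A)$ is Lagrangian. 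The last ingredient is Morita invariance: if $\Mod_{\mC}(A)\simeq\Mod_{\mC}(B)$ as $\mC$-module categories, then $Z(A)\cong Z(B)$ as algebras in $\mZ(\mC)$. Together these show that $\M\mapsto Z(\underline{\Hom}(M,M))$ descends to a well-defined map $\Phi$ from equivalence classes of indecomposable module categories to Lagrangian algebras.

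For the inverse, given a Lagrangian algebra $\A$ in $\B$ I would set $A:=F(\A)$; using that $\A$ is separable together with the properties of the adjunction $F\dashv R$ one checks that $A$ is a separable algebra in $\mC$, and the connectedness of $\A$ forces $\Mod_{\mC}(A)$ to be indecomposable, so $\Psi(\A):=\Mod_{\mC}(A)$ is well defined. To see $\Phi$ and $\Psi$ are mutually inverse: the unit of $F\dashv R$ and the universal property of $Z$ produce a canonical algebra map $\A\to Z(F(\A))$ in $\mZ(\mC)$, which is injective since both sides are connected and then an isomorphism since $\FPdim(\A)=\FPdim(\mC)=\FPdim(Z(F(\A)))$; conversely, for such $A$ one shows that $F(Z(A))$ is Morita equivalent to $A$ in $\mC$ --- it is typically strictly larger than $A$ but has the same category of modules --- so that $\Psi(\Phi(\M))\simeq\M$.

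The step I expect to be the main obstacle is the full-centre construction itself together with its Morita invariance: building $Z(A)$ with its commutative algebra structure requires the internal-Hom and bimodule calculus for module categories, and the isomorphism $Z(A)\cong Z(B)$ for Morita equivalent $A$, $B$ is the genuinely delicate point --- it is the heart of \cite[Prop.~4.8]{Davydov12}, which extends the case of modular $\mC$ treated earlier by Kong and Runkel. By comparison, the Frobenius--Perron dimension bookkeeping in step (ii) and the translation between indecomposability of module categories and the relevant conditions on algebras are routine. An alternative route that avoids the full centre is to attach to $\M$ the dual fusion category $\mC^{*}_{\M}=\Fun_{\mC}(\M,\M)$ and the canonical braided equivalence $\mZ(\mC^{*}_{\M})\simeq\mZ(\mC)$ coming from the fact that $\M$ is an invertible $(\mC,\mC^{*}_{\M})$-bimodule category, and then transport along this equivalence the canonical Lagrangian algebra $R'(\one)$ of $\mZ(\mC^{*}_{\M})$, where $R'$ is the right adjoint of the forgetful functor of $\mC^{*}_{\M}$; this is conceptually cleaner, but it shifts the burden onto proving that $R_{\mathcal D}(\one)$ is Lagrangian with $\Mod_{\mZ(\mathcal D)}(R_{\mathcal D}(\one))\simeq\mathcal D$ for every fusion category $\mathcal D$, and that the resulting assignment is a genuine bijection.
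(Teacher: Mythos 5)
Your proposal is essentially the argument behind the result the paper relies on: the paper offers no proof of its own, deferring entirely to Proposition 4.8 of \cite{Davydov12}, and what you sketch (Ostrik's dictionary plus the full centre $Z(A)$ with its Morita invariance, or equivalently the dual-category route via $\Fun_\mC(\M,\M)$ and the canonical Lagrangian algebra $R(\one)$) is precisely the standard proof of that cited proposition. One small caution: in your verification that $\A \cong Z(F(\A))$, the equality $\FPdim\bigl(Z(F(\A))\bigr)=\FPdim(\mC)$ already presupposes that $\Mod_{\mC}(F(\A))$ is indecomposable, so that indecomposability must be established first (as it is in the cited treatment) rather than extracted from the dimension count.
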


As a result, gapped boundaries of the Kitaev model may be determined by enumerating all Lagrangian algebras in $\mZ(\Rep(G))$.

In fact, it is well believed \cite{KitaevKong} that in any Levin-Wen model based on unitary fusion category $\mC$, the gapped boundaries are in one-to-one correspondence with the indecomposable modules $\M$ of $\mC$. In this case, by determining all the Lagrangian algebras of the Drinfeld center $\B = \mZ(\mC)$, we can also obtain gapped boundaries of the Levin-Wen model. In what follows, the theory we develop will be applicable to any model where gapped boundaries are given by indecomposable modules of the input fusion category.

\begin{remark}
The above definition of a Lagrangian algebra is the same as a special, symmetric Frobenius algebra, with an additional restriction on the quantum dimension of the algebra. This condition enforces that $\A$ has the maximal quantum dimension possible. Physically, this makes $\A$ into a gapped boundary (or equivalently, a domain wall between the $\B$ and the trivial category $\text{Vec}$), as we have discussed above. A special, symmetric Frobenius algebra with smaller quantum dimension would correspond to a domain wall between $\B$ and another topological phase.

We note that while this chapter deals purely with Lagrangian algebras and gapped boundaries, our work generalizes to the case of domain walls. In fact, a domain wall is mathematically equivalent to a gapped boundary, by using the ``folding'' technique discussed in Refs. \cite{Beigi11} and \cite{KitaevKong}.  
\end{remark}

To find all Lagrangian algebras of a modular tensor category, we will first state the following propositions:

\begin{prop}
\label{bosons}
$\A$ is a commutative algebra in a modular category $\B$ if and only if the object $\A$ decomposes into simple objects as $\A = \oplus_s n_s s$, with $\theta_s = 1$ (i.e. $s$ is bosonic) for all $s$ such that $n_s \neq 0$. 
\end{prop}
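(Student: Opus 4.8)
The real content is the forward implication: a commutative \emph{separable} algebra $\A$ in $\B$ has only bosonic simple summands. The plan is to first show that the ribbon twist $\theta_\A\colon\A\to\A$ equals $\mathrm{id}_\A$, and then read off the spins. For the first ingredient, note that $\theta$ is a natural transformation of the identity functor, hence natural with respect to the multiplication $m\colon\A\otimes\A\to\A$, so $\theta_\A\circ m=m\circ\theta_{\A\otimes\A}$; the balancing axiom of a ribbon category gives $\theta_{\A\otimes\A}=(\theta_\A\otimes\theta_\A)\circ c_{\A\A}\circ c_{\A\A}$. Sliding $\theta_\A\otimes\theta_\A$ past $c_{\A\A}$ by naturality of the braiding and using commutativity $m\circ c_{\A\A}=m$ twice, this collapses to $\theta_\A\circ m=m\circ(\theta_\A\otimes\theta_\A)$; together with $\theta_\A\circ\iota=\iota\circ\theta_{\one}=\iota$ this says $\theta_\A$ is a unit-preserving algebra endomorphism of $\A$.

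This alone does not force $\theta_\A=\mathrm{id}_\A$ — a priori $\theta_\A$ could be a nontrivial algebra automorphism — and in fact without separability the statement genuinely fails: for any simple $s$, the object $\one\oplus s$ with the $s\otimes s$-component of $m$ set to zero is a commutative (non-separable) algebra, regardless of $\theta_s$. So the separability of $\A$ (part of being an \'etale/condensable algebra, cf. Definition \ref{lagrangian-algebra-def}) is exactly what I would use to finish: for a commutative separable algebra the category $\B_\A^{\mathrm{loc}}$ of \emph{local} (dyslectic) $\A$-modules is again a ribbon fusion category, its twist is the restriction of the twist of $\B$, and $\A$ is the tensor unit of $\B_\A^{\mathrm{loc}}$; hence $\theta_\A=\mathrm{id}_\A$. (If one prefers to avoid local modules, this is precisely the statement established in Kirillov--Ostrik / Davydov--M\"uger--Nikshych--Ostrik and may simply be cited; if $\A$ is not connected, decompose into connected separable commutative blocks and apply this to each.) Since $\theta_\A$ acts on the $s$-isotypic component $\Hom(s,\A)\otimes s$ by the scalar $\theta_s$ (Schur's lemma, $s$ simple), $\theta_\A=\mathrm{id}_\A$ forces $\theta_s=1$ whenever $n_s\neq 0$.

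For the converse I would observe that it is used only in the mild sense that the twist imposes no \emph{further} obstruction: given $\A=\oplus_s n_s s$ with all summands bosonic, one still has to endow it with a commutative separable (ultimately Frobenius) multiplication, and sufficiency is supplied by the remaining conditions of Definition \ref{lagrangian-algebra-def}; in practice the Proposition is the tool that \emph{restricts} the search for (Lagrangian) algebras to sums of bosons, while those later conditions cut the list down to actual algebras. The step I expect to be the main obstacle is precisely the passage from ``$\theta_\A$ is a unit-preserving algebra endomorphism'' to ``$\theta_\A=\mathrm{id}_\A$'': this is where separability (not merely commutativity) is essential and where one either builds/cites the ribbon structure on $\B_\A^{\mathrm{loc}}$ or invokes the known result; everything else reduces to routine use of the ribbon and balancing axioms together with Schur's lemma.
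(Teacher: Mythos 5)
The paper's own proof of this proposition is a one-line citation to Proposition 2.25 of \cite{Frohlich06}, so your proposal is necessarily more explicit than what the paper records, and your reading of the statement is the right one: as literally phrased the equivalence fails in both directions (your $\one\oplus s$ with vanishing multiplication on the $s\otimes s$ component is a genuine commutative counterexample to the forward direction, and, say, a $2\times 2$ matrix algebra supported on copies of $\one$ kills the literal converse), so the proposition must be read with the separability and connectedness hypotheses of Definition \ref{lagrangian-algebra-def} in force --- which is exactly the setting of the cited result and the only way the proposition is used in Sections \ref{sec:tc-algebraic} and \ref{sec:ds3-algebraic-example}. Your computation that commutativity plus the balancing axiom makes $\theta_\A$ a unit-preserving algebra automorphism is correct, as is the endgame via Schur's lemma once $\theta_\A=\mathrm{id}_\A$ is known.

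The genuine gap is the step you yourself flag as the crux, and as written your way of closing it is circular. The assertion that $\B_\A^{\mathrm{loc}}$ is ribbon \emph{with the restricted twist} is exactly what presupposes $\theta_\A=\mathrm{id}_\A$: the twist of $\B$ descends to ($\A$-)modules only if $\theta_M$ is an $\A$-module morphism, and applying this to $M=\A$ itself is the whole question. Indeed your identity $\theta_\A\circ m=m\circ(\theta_\A\otimes\theta_\A)$ says $\theta_\A$ is an algebra map, whereas what is needed is $\A$-linearity, $\theta_\A\circ m=m\circ(\mathrm{id}_\A\otimes\theta_\A)$; once $\A$-linearity is available, connectedness gives $\End_{\B_\A}(\A)\cong\Hom_\B(\one,\A)\cong\C$ and unitality forces $\theta_\A=\mathrm{id}_\A$, so the entire content of the crux is $\A$-linearity of $\theta_\A$, and that is precisely where the separable (symmetric special Frobenius) structure must enter. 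Note also that \cite{KO} cannot be cited for this implication: there triviality of the twist is an axiom in the definition of the algebra, not a consequence. The clean repairs are either to do what the paper does and quote \cite{Frohlich06}, where triviality of the twist is derived from the commutative symmetric special Frobenius structure that separability supplies, or to prove $\A$-linearity of $\theta_\A$ directly from that Frobenius structure; commutativity plus ``$\theta_\A$ is an algebra automorphism'' alone will not close it.
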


\begin{proof}
See Proposition 2.25 in Ref. \cite{Frohlich06}.
\end{proof}

\begin{prop}
\label{separability-prop}
$\A$ is a separable algebra in a unitary fusion category $\B$ if and only if for every $a,b \in \Obj(\B)$, there exists a partial isometry from $\Hom(a,\A) \otimes \Hom(b,\A) \rightarrow \Hom(a \otimes b, \A)$.\footnote{$\B$ is a fusion category, so all hom-spaces in $\B$ have vector space structure. The tensor product of hom-spaces is just the usual tensor product for vector spaces.}
\end{prop}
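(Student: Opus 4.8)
My plan is to prove the Proposition by analyzing the canonical morphism induced by the multiplication. For $a,b\in\Obj(\B)$ put
\[
\Phi_{a,b}\colon \Hom(a,\A)\otimes\Hom(b,\A)\longrightarrow\Hom(a\otimes b,\A),\qquad f\otimes g\mapsto m\circ(f\otimes g),
\]
and read the statement as asserting that for every $a,b$ a suitable positive rescaling of $\Phi_{a,b}$ is a partial isometry for the canonical Hilbert-space structures on these $\Hom$-spaces (the ones coming from the unitary, spherical structure of $\B$); equivalently, that $\Phi_{a,b}^{\dagger}\circ\Phi_{a,b}$ is a positive scalar multiple of an orthogonal projection. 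A useful first observation is that the assignment $f\otimes g\mapsto f\otimes g$ embeds $\Hom(a,\A)\otimes\Hom(b,\A)$ \emph{isometrically} into $\Hom(a\otimes b,\A\otimes\A)$, since $(f'\otimes g')^{\dagger}\circ(f\otimes g)=(f'^{\dagger}\circ f)\otimes(g'^{\dagger}\circ g)$ and hence $\langle f\otimes g,\,f'\otimes g'\rangle=\langle f,f'\rangle\langle g,g'\rangle$ on both sides; write $P_{a,b}$ for the orthogonal projection of $\Hom(a\otimes b,\A\otimes\A)$ onto this subspace and $\iota_{a,b}$ for the inclusion.

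For the implication ``$\A$ separable $\Rightarrow$ partial isometry exists'', I would first replace the abstract splitting by a unitary-compatible one. In a unitary fusion category, a connected separable algebra carries a special symmetric Frobenius structure; after the standard normalization one has a comultiplication $\Delta$ with $m\circ\Delta=\mathrm{id}_{\A}$, with $\Delta$ a morphism of $\A$-bimodules and with $\Delta$ proportional to $m^{\dagger}$ (this is the $Q$-system normalization and may be quoted, e.g.\ from Kirillov--Ostrik or Fuchs--Runkel--Schweigert). Then $e:=\Delta\circ m$ is a self-adjoint idempotent on $\A\otimes\A$ with $m\circ e=m$ and $e\circ\Delta=\Delta$, and using that post-composition with $m$ has adjoint post-composition with $m^{\dagger}$ one computes $\Phi_{a,b}^{\dagger}(h)=P_{a,b}(m^{\dagger}\circ h)$, so that $\Phi_{a,b}^{\dagger}\circ\Phi_{a,b}=\beta\cdot P_{a,b}\circ(e\circ-)\circ\iota_{a,b}$ for the specialness constant $\beta$, a composition of the two orthogonal projections $P_{a,b}$ and $(e\circ-)$ on $\Hom(a\otimes b,\A\otimes\A)$. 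The remaining point is to verify, using the Frobenius relations $\Delta\circ m=(\mathrm{id}_{\A}\otimes m)\circ(\Delta\otimes\mathrm{id}_{\A})=(m\otimes\mathrm{id}_{\A})\circ(\mathrm{id}_{\A}\otimes\Delta)$ together with $m\circ\Delta=\mathrm{id}_{\A}$, that $\beta^{-1}\,\Phi_{a,b}^{\dagger}\circ\Phi_{a,b}$ is idempotent; being also self-adjoint and positive it is then an orthogonal projection, so $\beta^{-1/2}\Phi_{a,b}$ is a partial isometry.

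For the converse I would assume that for all $a,b$ the map $\Phi_{a,b}$ becomes a partial isometry after a positive rescaling and recover a bimodule splitting of $m$. Probing with $a=b=\A$: since $\Phi_{\A,\A}(\mathrm{id}_{\A}\otimes\mathrm{id}_{\A})=m$, the partial-isometry identity applied to $\mathrm{id}_{\A}\otimes\mathrm{id}_{\A}$ forces the element $E:=\Phi_{\A,\A}^{\dagger}(m)\in\Hom(\A,\A)\otimes\Hom(\A,\A)\subseteq\End(\A\otimes\A)$ to satisfy $m\circ E=m$ (up to a nonzero scalar), whence $\mu:=E\circ(\eta\otimes\mathrm{id}_{\A})$ is a one-sided section of $m$ by the unit axiom; symmetrizing $\mu$ over the left and right $\A$-actions using the $C^{*}$-structure of ${}_{\A}\B_{\A}$ produces a genuine bimodule section, so $\A$ is separable. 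Alternatively one can run a Yoneda-style argument showing that partial-isometry-ness of $\Phi_{a,b}$ for all $a,b$ is equivalent to the existence of such a special Frobenius comultiplication, which is the separability criterion in the unitary setting.

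The main obstacle is the idempotency check in the forward direction, i.e.\ showing that the orthogonal projection $P_{a,b}$ onto the ``product'' subspace $\Hom(a,\A)\otimes\Hom(b,\A)$ and post-composition by the canonical idempotent $e=\Delta\circ m$ are compatible on $\Hom(a\otimes b,\A\otimes\A)$ (concretely, that $P_{a,b}\,(e\circ-)\,P_{a,b}$ restricted to the product subspace is again $\beta^{-1}$ times a projection). This is the one place where the algebra structure genuinely enters rather than formal Hilbert-space manipulation, and it is exactly where the Frobenius relations do the work; by comparison the unitary normalization of the splitting and the symmetrization in the converse are routine, and since the whole statement is essentially folklore about étale algebras in $C^{*}$-tensor categories, large parts of it may simply be cited.
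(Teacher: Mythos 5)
Your setup is fine as far as it goes: reading the statement as ``a positive rescaling of the canonical map $\Phi_{a,b}(f\otimes g)=m\circ(f\otimes g)$ is a partial isometry'', the isometric embedding $\iota_{a,b}$, and the adjoint formula $\Phi_{a,b}^{\dagger}=P_{a,b}\circ(m^{\dagger}\circ-)$ are all correct, and your route through the canonical idempotent $e=\Delta\circ m$ is genuinely different from the paper's, which is a graphical trace argument built directly on the two separability identities (\ref{eq:separability})--(\ref{eq:separability-2}). But the step you defer -- that $\beta^{-1}\Phi_{a,b}^{\dagger}\Phi_{a,b}=\beta^{-1}P_{a,b}(e\circ-)\iota_{a,b}$ is idempotent -- is not a formal consequence of the Frobenius and specialness relations. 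It is the assertion that the compression of one orthogonal projection by another is proportional to a projection, i.e.\ that all principal angles between $\mathrm{im}\,\iota_{a,b}$ and $\mathrm{im}(e\circ-)$ coincide; that is essentially the whole content of the Proposition, and in the generality you work in (arbitrary separable $\A$, which is also how the statement is phrased) it is false. Take $\B=\Rep(S_3)$, $x=\one\oplus V$ with $V$ the two-dimensional irreducible, and $\A=x\otimes x^{*}$ with its standard Q-system structure (so $\Delta\propto m^{\dagger}$ and specialness hold). For $a=b=\one$ the canonical map is, under $\Hom(\one,\A)\cong\End(x)\cong\C^{2}$, just componentwise multiplication $(\alpha,\beta)\otimes(\gamma,\delta)\mapsto(\alpha\gamma,\beta\delta)$, while the inner product you declared (the one from the unitary spherical structure, $\langle\phi_s,\phi_t\rangle=\Tr_x(t^{\dagger}s)$) gives $\|(\alpha,\beta)\|^{2}=|\alpha|^{2}+2|\beta|^{2}$. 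The nonzero singular values are then $1$ and $1/\sqrt{2}$, and the nonzero pure tensor $(1,0)\otimes(0,1)$ is annihilated; no rescaling is a partial isometry. So the Frobenius relations cannot close your idempotency check: any correct argument must use connectedness (and, for the paper's application to Lagrangian algebras, commutativity) of $\A$, which your plan never invokes.

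The converse has a second gap: the ``symmetrization'' that upgrades the plain section $\mu=E\circ(\eta\otimes\mathrm{id}_{\A})$ of $m$ to an $\A$-bimodule section is circular, because averaging a morphism over the left and right $\A$-actions requires exactly a separability idempotent for $\A$, i.e.\ presupposes the conclusion. Mere existence of a section of $m$ in $\B$ can never suffice: in $\mathrm{Vec}$ the algebra $\C[x]/(x^{2})$ has linear splittings of its multiplication but is not separable. What you would actually need to extract from the partial-isometry hypothesis is that $E\in\End(\A)\otimes\End(\A)$ (equivalently $m^{\dagger}$, suitably normalized) is itself a bimodule map, and your outline does not address this. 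The paper's own argument, terse as it is, ties the partial-isometry property directly to the separability identity (\ref{eq:separability-2}) and runs the injectivity step through a trace computation; your outline, by contrast, leaves both decisive steps open, and the first one, as the example above shows, cannot be closed at the stated level of generality.
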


\begin{proof}
Fix $a,b \in \Obj(\B)$. Define a map $M$ from $\Hom(a,\A) \otimes \Hom(b,\A)$ to $\Hom(a \otimes b, \A)$ as follows:

By definition of a tensor category, there exists an injective map $\gamma: \Hom(a,\A) \otimes \Hom(b,\A) \rightarrow \Hom(a \otimes b, \A \otimes \A)$. Suppose we are given two morphisms $f \in \Hom(a,\A)$, $g \in \Hom(b,\A)$. Then $M(f \otimes g) = m \circ \gamma(f \otimes g)$ is a morphism in the hom-space $\Hom(a \otimes b, \A \otimes \A)$. We now show that this map $M$ is injective.

Suppose $M(f \otimes g) = 0$. Since $\B$ is a unitary fusion category, $M(f \otimes g) = 0$ if and only if the following trace is equal to 0: (note here that the pictures are read bottom-up)

\begin{equation}
\label{eq:M-trace}
\vcenter{\hbox{\includegraphics[width = 0.2\textwidth]{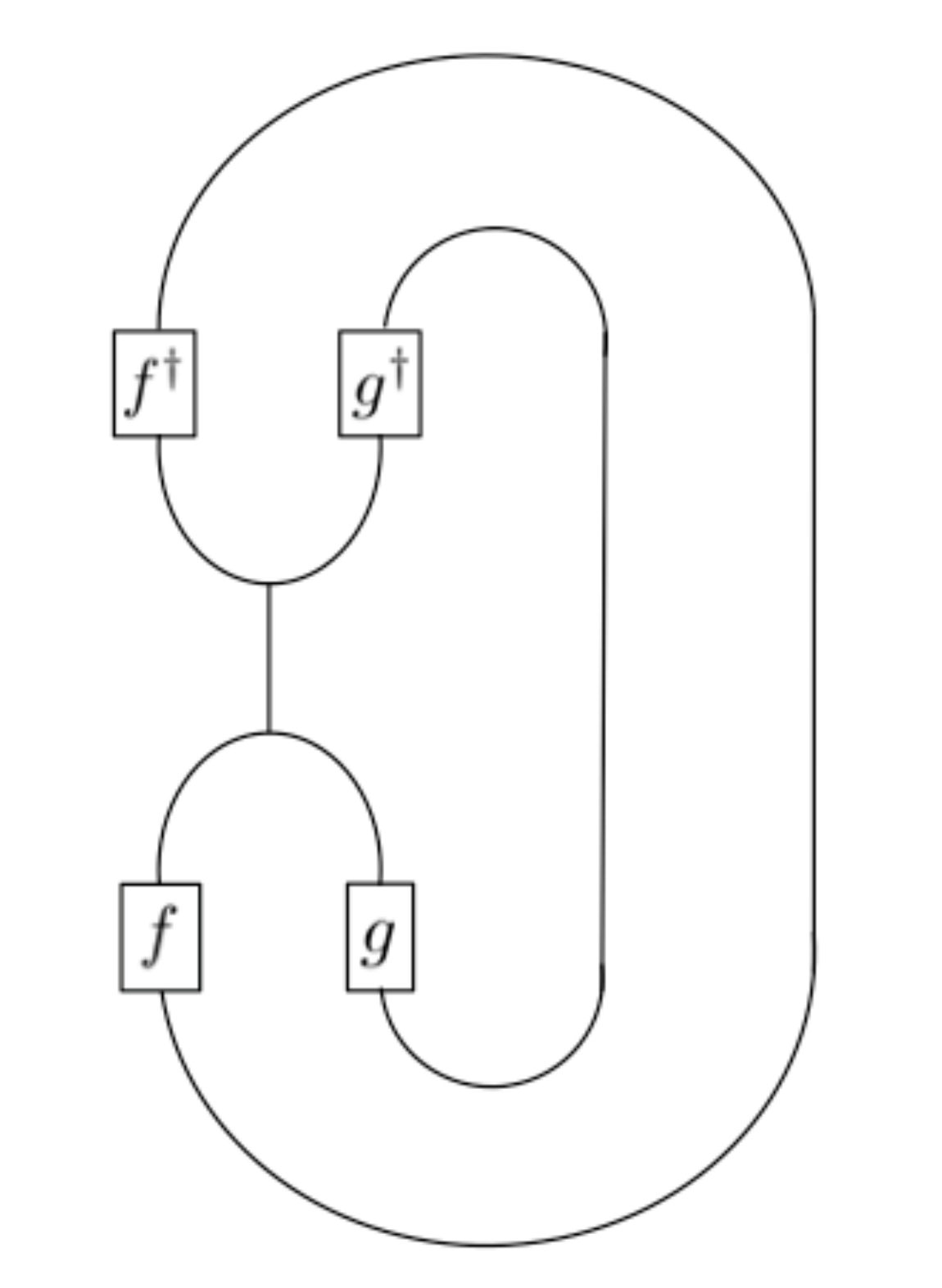}}} = 0.
\end{equation}

By definition of an $(A,A)$-bimodule, the condition (2) in Definition \ref{lagrangian-algebra-def} is equivalent to the following two conditions \cite{Muger12}

\begin{equation}
\label{eq:separability}
\vcenter{\hbox{\includegraphics[width = 0.55\textwidth]{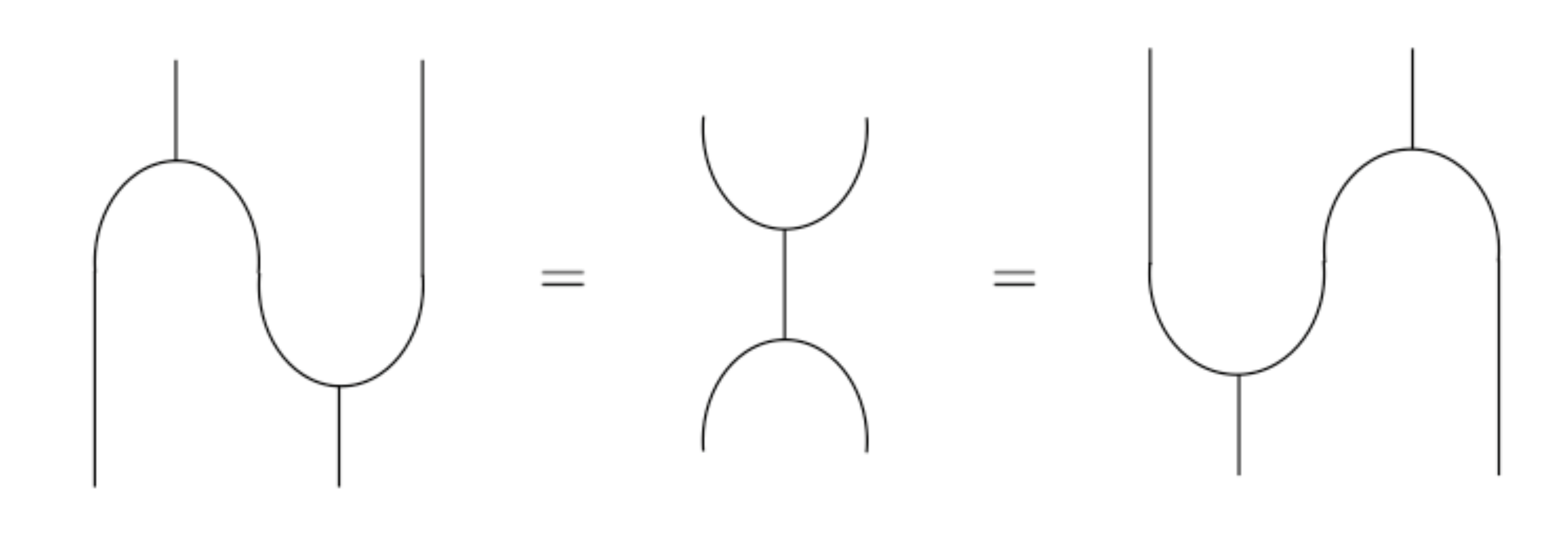}}}
\end{equation}

\noindent
and

\begin{equation}
\label{eq:separability-2}
\vcenter{\hbox{\includegraphics[width = 0.25\textwidth]{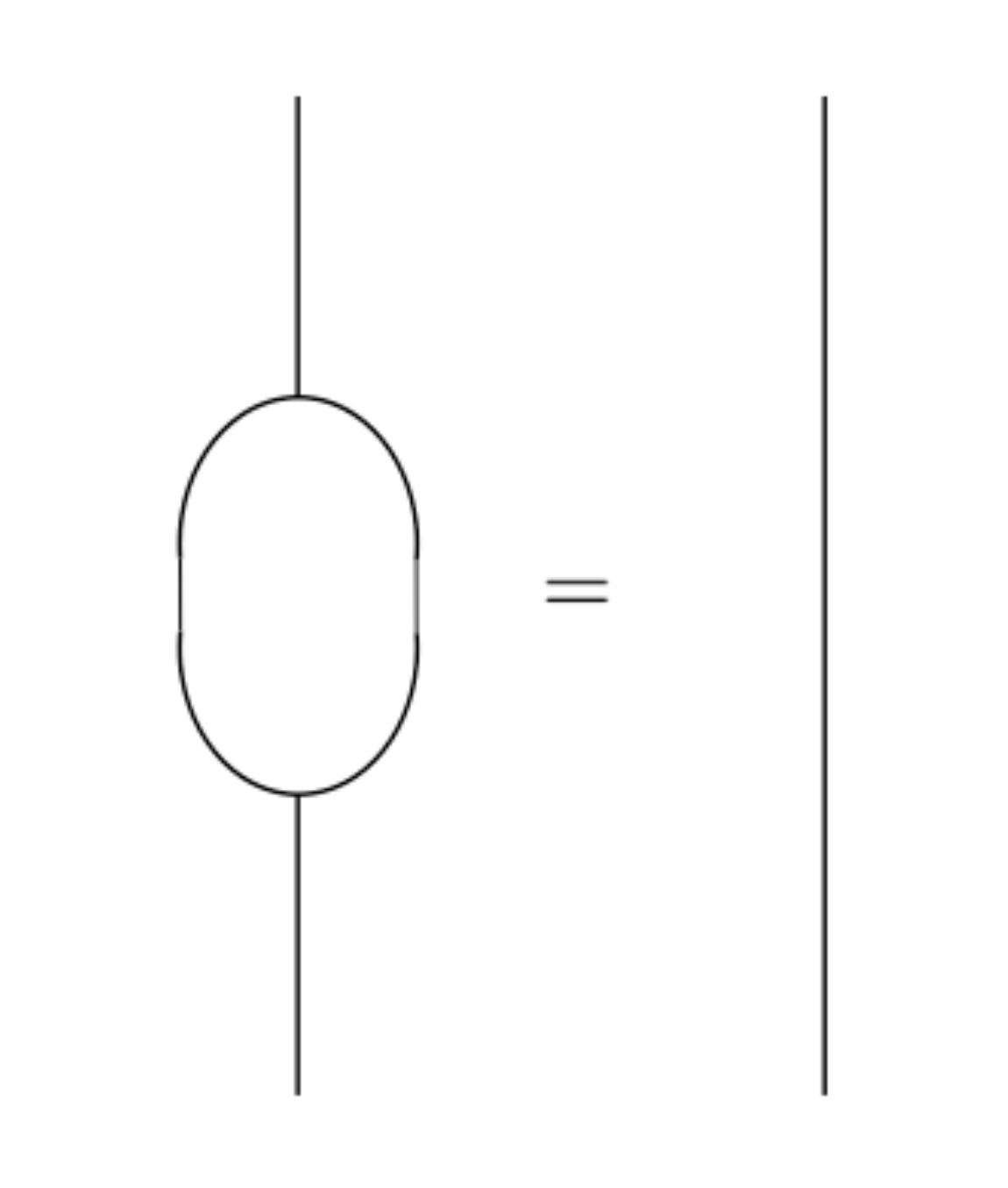}}}
\end{equation}

Hence, Eq. (\ref{eq:M-trace}) holds if and only if

\begin{equation}
\label{eq:M-trace-2}
\vcenter{\hbox{\includegraphics[width = 0.22\textwidth]{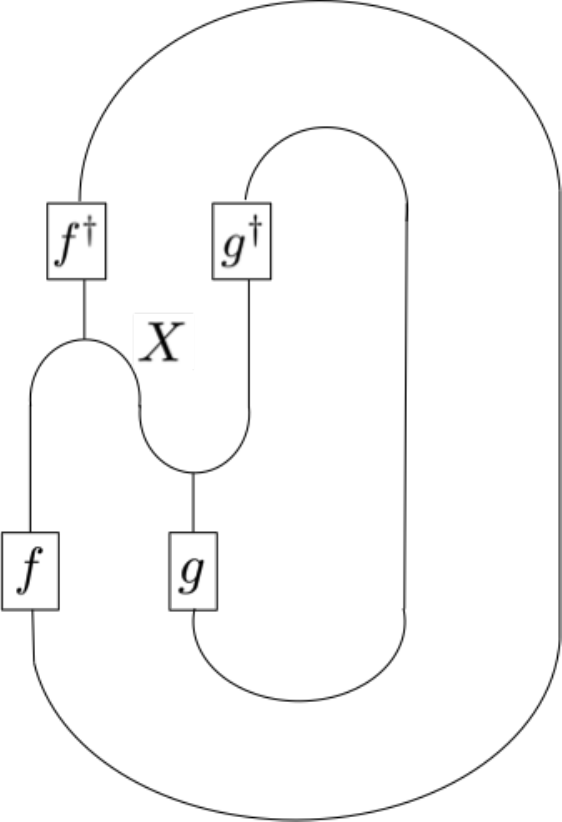}}} = 0.
\end{equation}

Since it is not possible to have tadpoles in such a diagram, the middle line connecting the $f$ and $g$ loops (labeled $X$) in Eq. (\ref{eq:M-trace-2}) must be equal to vacuum. We hence have the following picture:

\begin{equation}
\label{eq:M-trace-3}
\vcenter{\hbox{\includegraphics[width = 0.25\textwidth]{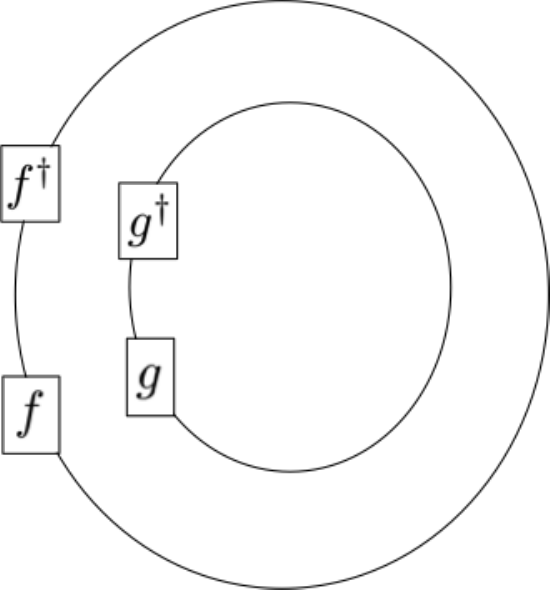}}} = 0.
\end{equation}

The left hand side of Eq. (\ref{eq:M-trace-3}) is precisely given by $\Tr(f)\Tr(g)$. Since $\B$ is a unitary fusion category, this equation holds if and only if $f = 0$ or $g = 0$, i.e. $f \otimes g = 0$.

Finally, $M$ is a partial isometry if and only if Eq. (\ref{eq:separability-2}) holds, which completes the forward direction of the proof.

Note that all steps in this proof were reversible, so that both directions of the Proposition hold.
\end{proof}

\begin{corollary}
A commutative connected algebra $\A = \oplus_s n_s s$ with $\FPdim(\A)^2 = \FPdim(\B)$ is a Lagrangian algebra in the unitary modular category $\B$ if and only if the following inequality holds for all $a,b \in \Obj(\B)$:

\begin{equation}
\label{eq:lagrangian-algebra-inequality}
n_a n_b \leq \sum_c N_{ab}^c n_c
\end{equation}

\noindent
where $N_{ab}^c$ are the coefficients given by the fusion rules of $\B$.
\end{corollary}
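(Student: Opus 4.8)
The plan is to reduce the claim to Proposition~\ref{separability-prop} combined with an elementary computation of hom-space dimensions. By hypothesis $\A$ is commutative, connected, and satisfies $\FPdim(\A)^2=\FPdim(\B)$, so conditions (1), (3) and (4) of Definition~\ref{lagrangian-algebra-def} already hold and the only thing left to decide is separability, condition (2). By Proposition~\ref{separability-prop}, separability is equivalent to the existence, for every pair of objects $a,b\in\Obj(\B)$, of a partial isometry $\Hom(a,\A)\otimes\Hom(b,\A)\to\Hom(a\otimes b,\A)$; by additivity of both sides in $a$ and $b$ it suffices to treat simple $a,b$.

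First I would record the two dimensions. Since $\B$ is semisimple and $\A=\oplus_s n_s s$ with each $s$ simple, for a simple object $a$ we have $\dim\Hom(a,\A)=n_a$, so $\dim\bigl(\Hom(a,\A)\otimes\Hom(b,\A)\bigr)=n_a n_b$. On the other side, $\dim\Hom(a\otimes b,\A)=\sum_c n_c\dim\Hom(a\otimes b,c)=\sum_c N_{ab}^c n_c$, using $N_{ab}^c=\dim\Hom(a\otimes b,c)$. Hence inequality~(\ref{eq:lagrangian-algebra-inequality}) is exactly the statement that, for all simple $a,b$, the source of the map of Proposition~\ref{separability-prop} has dimension at most that of its target.

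For the forward implication (Lagrangian $\Rightarrow$ (\ref{eq:lagrangian-algebra-inequality})): if $\A$ is Lagrangian it is separable, and the argument inside the proof of Proposition~\ref{separability-prop} shows the natural map $M=m\circ\gamma$ is injective; an injection forces the domain dimension to be at most the codomain dimension, which is (\ref{eq:lagrangian-algebra-inequality}). For the converse, given (\ref{eq:lagrangian-algebra-inequality}) one must produce, for each $a,b$, the partial isometry required by Proposition~\ref{separability-prop}, and thereby the bimodule splitting $\mu:\A\to\A\otimes\A$ of $m$. Since $\B$ is unitary the natural candidate is $\mu=\lambda\,m^\dagger$, with $\lambda>0$ fixed by the normalization $m\circ m^\dagger=\lambda^{-1}\Id_\A$ (which makes sense because the unit axiom makes $m$ split-surjective, so $m\circ m^\dagger$ is positive definite), and the content is to verify the graphical identities~(\ref{eq:separability})--(\ref{eq:separability-2}) --- equivalently, that $M$ is a partial isometry for every $a,b$.

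This converse verification is the step I expect to be the main obstacle: upgrading the purely numerical inequality~(\ref{eq:lagrangian-algebra-inequality}) to the statement that $m$ admits a genuine $(\A,\A)$-bimodule section. The commutativity and connectedness hypotheses should enter decisively here: connectedness, $\Hom(\one_\B,\A)=\C$, pins the Frobenius pairing down to a single scalar and forces $n_{\one}=1$; commutativity makes the left and right $\A$-actions coincide, so that only one-sided equivariance needs checking; and the identity can then be analysed block by block on $\Hom(a\otimes b,\A)=\bigoplus_c\Hom(a\otimes b,c)^{\oplus n_c}$, where (\ref{eq:lagrangian-algebra-inequality}) guarantees exactly enough room for the $n_a n_b$ factorized morphisms in $\gamma\bigl(\Hom(a,\A)\otimes\Hom(b,\A)\bigr)$ to embed isometrically, with the residual unitary freedom in the multiplicity spaces absorbed into the choice of $\mu$. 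One should also recheck that the claim ``all steps in this proof were reversible'' at the end of the proof of Proposition~\ref{separability-prop} really does run backwards under only the dimension hypothesis, since that is precisely the implication being used.
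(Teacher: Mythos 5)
Your reduction is the same as the paper's: conditions (1), (3), (4) of Definition \ref{lagrangian-algebra-def} are hypotheses, so only separability is at stake, and the Corollary is intended as nothing more than the dimension-count translation of Proposition \ref{separability-prop}. For simple $a,b$ one has $\dim\bigl(\Hom(a,\A)\otimes\Hom(b,\A)\bigr)=n_an_b$ and $\dim\Hom(a\otimes b,\A)=\sum_c N_{ab}^c n_c$, and an injective partial isometry (an isometric embedding) between finite-dimensional Hilbert spaces exists precisely when the first dimension does not exceed the second. Your forward direction and these dimension computations are exactly the paper's implicit argument.

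Where you diverge is the converse, and that is where your proposal has a genuine gap as a standalone proof. The paper simply invokes the ``if'' direction of Proposition \ref{separability-prop}: since that proposition, as stated, asks only for the existence of \emph{some} partial isometry $\Hom(a,\A)\otimes\Hom(b,\A)\to\Hom(a\otimes b,\A)$, inequality (\ref{eq:lagrangian-algebra-inequality}) immediately supplies one, and the corollary is finished; all of the algebraic content (producing the bimodule splitting $\mu$ and the identities (\ref{eq:separability})--(\ref{eq:separability-2})) is carried by the proposition, not by the corollary. You instead undertake to construct $\mu=\lambda\,m^\dagger$ and verify the graphical identities inside the corollary's proof, and you explicitly stop short, leaving the verification as ``the main obstacle''; so as written your converse is incomplete, and the missing content is exactly what the corollary is entitled to take from the proposition as a black box. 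Your closing suspicion is nevertheless well placed: the ``all steps were reversible'' claim is the weak point of Proposition \ref{separability-prop} itself, since an abstract isometry between hom-spaces carries no information about the multiplication, and an honest reverse direction must work with the natural map $M=m\circ\gamma$ or genuinely construct $\mu$ --- essentially the program you sketch. But that burden lives in the proposition; if you accept the proposition as stated, your converse becomes one line, and if you do not, you must actually carry out the construction you outline rather than defer it.
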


\begin{remark}
We would like to note that the algebra object $\A$ is not enough to uniquely identify the gapped boundary.  Let $G$ be the order$-64$ class $3$ group in Sec. IIIA of \cite{Davydov14}.  The standard Cardy Lagrangian algebra of $\mathcal{Z}(G\oplus G)$ has another different Lagrangian structure given by a soft braided auto-equivalence of $\mathcal{Z}(G)$.
\end{remark}

\subsection{Ground state degeneracy and qudit basis encoding}
\label{sec:algebraic-gsd}

In Section \ref{sec:hamiltonian-gsd-condensation}, we used ribbon operators to present the ground state degeneracy of the Kitaev model with gapped boundaries. In this section, we will present this same degeneracy using the algebraic model we have developed in this chapter.

Theorems \ref{condensation-products} and \ref{inverse-condensation-products} of Section \ref{sec:bd-excitations} described how an anyon in the bulk can condense to the boundary. Given any anyon $a$ in the bulk, the condensation space of $a$ to vacuum a boundary given by the Lagrangian algebra $\A$ can be modeled precisely by the hom-space $\Hom(a,\A)$. Specifically, as discussed in Section \ref{sec:bd-excitations}, the number of condensation channels in condensing to vacuum is equivalent to the number of times the particle $a$ appears in the decomposition of $\A$ into simple objects (obtained using Theorem \ref{inverse-condensation-products} on the boundary vacuum particle); since $\B$ is a unitary fusion category, this is exactly the dimension of the hom-space. As in previous sections, this is expected to hold for theories with a topological order $\B = \mZ(\mC)$, not just $\B = \mZ(\Rep(G))$.

More generally, we also can describe the ground state of the TQFT with boundaries using hom-spaces. Suppose we have a system with topological order given by the modular tensor category $\B$, and $n$ gapped boundaries given by Lagrangian algebras $\A_1, \A_2, ... \A_n$, as shown in Fig. \ref{fig:algebraic-gsd-n}. The outside boundary is taken to have total charge vacuum; one may alternatively view this as $n$ holes on a sphere.

\begin{figure}
\centering
\includegraphics[width = 0.65\textwidth]{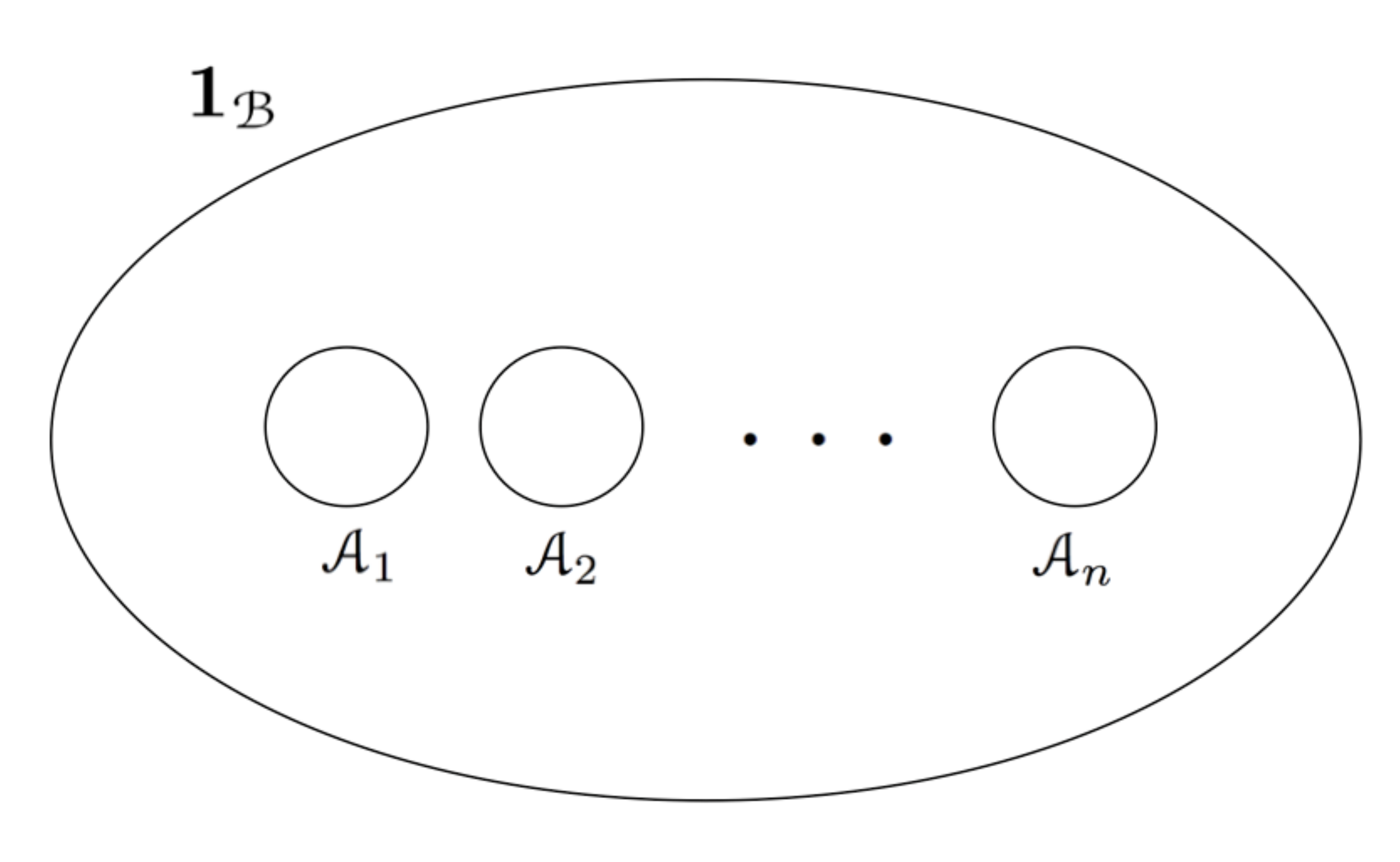}
\caption{Algebraic picture of the TQFT with $n$ gapped boundaries (holes). The outside charge is take to be vacuum, and each boundary type is given by a Lagrangian algebra $\A_i$.}
\label{fig:algebraic-gsd-n}
\end{figure}

The ground state degeneracy of this model is given by the number of ways we can create a pair of anyons from vacuum, admissibly split them into $n$ anyons, and condense all $n$ of them to vacuum onto the boundary, as shown in Fig. \ref{fig:algebraic-gsd-n-2}. Hence, the ground state of this model is given by

\begin{equation}
\label{eq:ground-state-algebraic}
\text{G.S.} = \Hom(\one_\B, \A_1 \otimes \A_2 \otimes ... \otimes \A_n)
\end{equation}

\noindent
where $1_\B$ is the tensor unit of $\B$ and represents the vacuum particle.

\begin{figure}
\centering
\includegraphics[width = 0.6\textwidth]{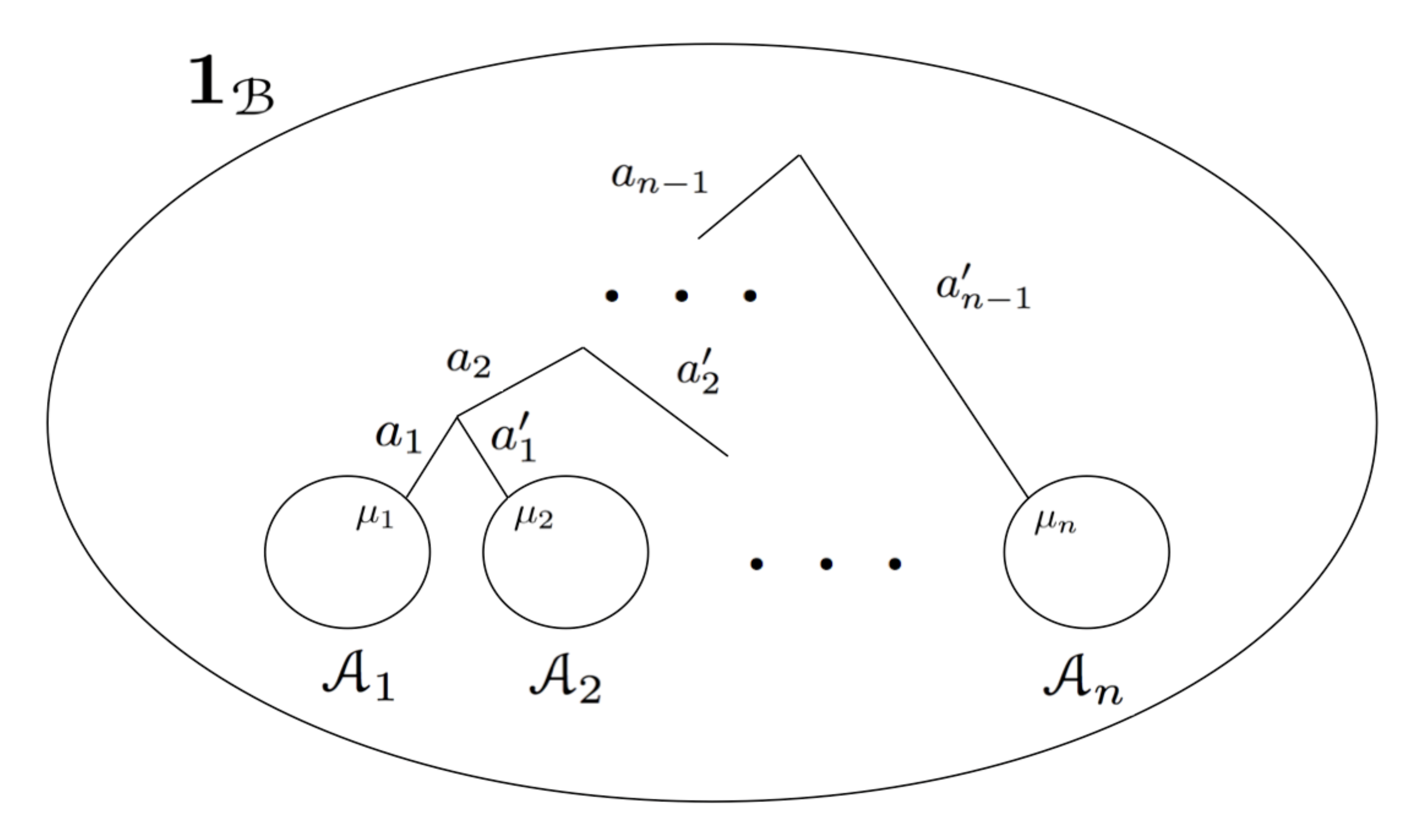}
\caption{Ground state of the model presented in Fig. \ref{fig:algebraic-gsd-n}. We assume here that all edges are directed to point downward. The $a_i$, $a_i'$ are simple bulk anyons such that the splittings are admissible, and the final products may all condense to vacuum on the respective boundaries. The $\mu_i$ correspond to multiplicities in condensing to vacuum on the boundary (see for instance Theorem \ref{condensation-products} of Section \ref{sec:bd-excitations}, and Section \ref{sec:multiple-condensation-channels}). The set of all such tuples $(a_1, a_1', ... a_{n-1}', \mu_1, ... \mu_n)$ form a basis for the ground state.}
\label{fig:algebraic-gsd-n-2}
\end{figure}

We encode our qudit in the special case where $n = 2$, $\A_1 = \oplus n_{1s} s$, $\A_2 = \oplus n_{2s} s$. This is the generalization of a qubit encoding chosen by Fowler et al. in Ref. \cite{Fowler12} for gapped boundaries of the $\Z_2$ toric code. In this case, the ground state is a Hilbert space with degeneracy given by 

\begin{equation}
\label{eq:gsd-formula}
\text{G.S.D.} = d = \sum_s n_{1s} n_{2s}.
\end{equation}

The basis for this Hilbert space is given by the action of ``tunneling operators'' $W_a(\gamma)_{\mu\nu}$, where $\gamma$ is a fixed ribbon joining the two gapped boundaries, $a$ is any particle such that $n_{1a} > 0$ and $n_{2\overbar{a}} > 0$ ($\overbar{a}$ is the dual particle of $a$), and $\mu,\nu$ denote the condensation channels for the condensation of $a$ and $\overbar{a}$, respectively. Graphically, this basis is given by the dumbbell picture shown in Fig. \ref{fig:gapped-boundary-basis}.

\begin{figure}
\centering
\includegraphics[width = 0.58\textwidth]{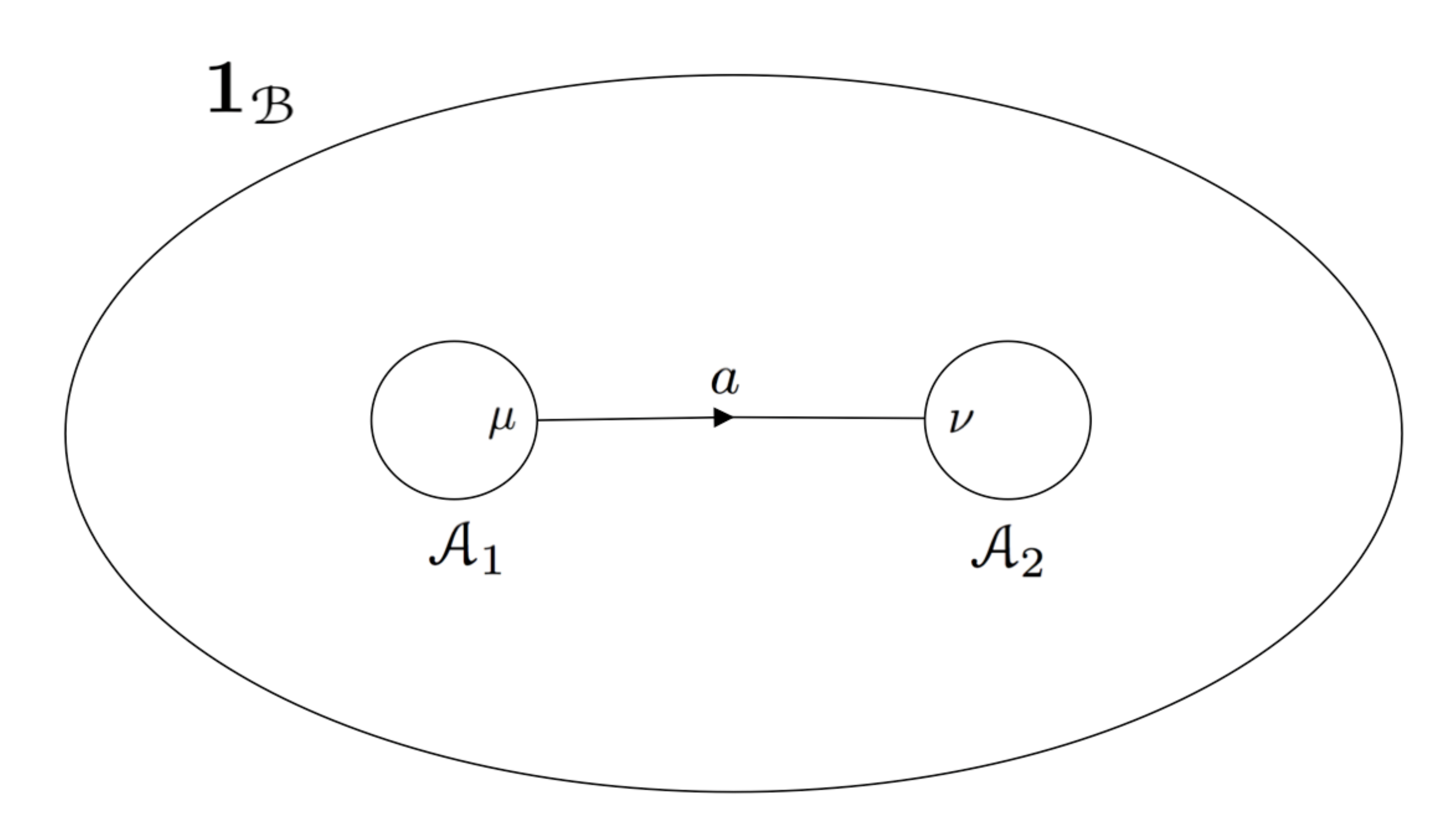}
\caption{Qudit encoding using gapped boundaries. $a$ is a bulk anyon such that the antiparticle $\overbar{a}$ condenses to vacuum on the $\A_1$ boundary, and $a$ condenses to vacuum on the $\A_2$ boundary. $\mu$, $\nu$ are the multiplicities corresponding to these condensations, respectively (see for instance Theorem \ref{condensation-products} of Section \ref{sec:bd-excitations}). For the case of Kitaev models, this is equivalent to the encoding presented in Fig. \ref{fig:degeneracy}.}
\label{fig:gapped-boundary-basis}
\end{figure}

This basis and the operators $W_a(\gamma)_{\mu\nu}$ will play a crucial role in defining the topologically protected operations in Chapter \ref{sec:operations}.

\subsection{Condensation and elementary excitations on the boundary}
\label{sec:algebraic-condensation}

In Section \ref{sec:bd-excitations}, we saw that the elementary excitations on a boundary of subgroup $K$ in the Kitaev model based on group $G$ are given by the irreducible representations of the group-theoretical quasi-Hopf algebra corresponding to $G,K$. In that section, we presented a method using the Hamiltonian to obtain the products of the condensation procedure and its reverse. In this section, we will present this procedure categorically, and show that these two views are actually the same in the case of group models. Let us first make the following definition \cite{MacLane}:

\begin{definition}
\label{quotient-cat-def}
Let $\B$ be a monoidal category, and let $\A$ be any object in $\B$. The {\it pre-quotient category} $\B/\A$ is the category such that:
\begin{enumerate}
\item
The objects of $\B/\A$ are the same as the objects of $\B$.
\item
The morphisms of $\B/\A$ are given by
\begin{equation}
\Hom_{\B/\A}(X,Y) = \Hom_{\B}(X,\A \otimes Y).
\end{equation}
\end{enumerate}
We denote by $\widetilde{F}:\B \rightarrow \B/\A$ the central functor that sends each object of $\B$ to the corresponding quotient object, and by $I: \B/\A \rightarrow \B$ its right adjoint.
\end{definition}

In our case, we would like to consider condensation of an anyon onto a gapped boundary. Here, the elementary excitations on the boundary seem to be given by the simple objects in the pre-quotient category $\widetilde{\mathcal{Q}} = \B / \A$. However, one minor problem with the above pre-quotient category definition is that the resulting category $\widetilde{\mathcal{Q}}$ may not be semisimple. As a result, the following definition/proposition \cite{Muger03} is needed to fully describe the condensation products:

\begin{definition}
\label{IC-def}
Let $\B$ be a strict braided tensor category and let $\A$ be a strongly separable Frobenius algebra in $\B$. Let $\widetilde{\mQ} = \B/\A$ be the pre-quotient category formed via Definition \ref{quotient-cat-def}. Let us form the canonical idempotent completion ${\mQ}$ of $\widetilde{\mQ}$ as follows:
\begin{enumerate}
\item
The objects of ${\mQ}$ are given by pairs $(X,p)$, where $X \in \Obj \widetilde{\mQ}$ and $p = p^2 \in \End_{\widetilde{\mQ}}(X)$.
\item
The morphisms of ${\mQ}$ are given by
\begin{equation}
\Hom_{{\mQ}}((X,p),(Y,q)) = \{ f \in \Hom_{\widetilde{\mQ}}(X,Y): f \circ p = p \circ f \text{ and } f \circ q = q \circ f\}.
\end{equation}
\end{enumerate}
Then, by Proposition 2.15 of Ref. \cite{Muger03}, the new category ${\mQ}$ is semisimple and the desired quotient.
\end{definition}

In general, condensation to a domain wall between two topological phases with topological orders $\B,\mD$ is mathematically described as the procedure

\begin{equation}
\label{eq:condensation-quotient-IC}
F: \mZ(\mC) = \B \xrightarrow{\text{quotient}} \B/\A = \widetilde{\mQ} \xrightarrow{\text{I.C.}}  {\mQ} = \msC \sqcup \msD.
\end{equation}

\noindent
(Here, I.C. is the idempotent completion). After condensation to a domain wall, all excitations in $\msC$ become {\it confined} to the domain wall, and all excitations in $\msD$ are {\it deconfined} and can enter the phase $\mD$. Physically, an excitation is said to be confined if there is an energy cost to move the excitation growing linearly with the distance of movement.

In the case of gapped boundaries, $\msD = \text{Vec}$ is vacuum, so there are no nontrivial deconfined excitations, and all excitations on the boundary are confined. In the case of Kitaev models for Dijkgraaf-Witten theories, the condensation procedure described in Eq. (\ref{eq:condensation-quotient-IC}) is then exactly the condensation procedure of Theorem \ref{condensation-products} of Section \ref{sec:bd-excitations}. Furthermore, the right adjoint $I$ of this procedure, which is given by the composition of the adjoint of the idempotent completion and the adjoint $\widetilde{I}$ of the quotient functor, is exactly the inverse condensation procedure of Theorem \ref{inverse-condensation-products}.

In Ref. \cite{KitaevKong}, Kitaev and Kong have claimed that in the Levin-Wen model based on input fusion category $\mC$, the excitations on a boundary given by the indecomposable module $\M$ are given by objects in the fusion category $\Fun_\mC(\M,\M)$. By Ref. \cite{Davydov12}, we know that this category is equivalent to the category ${\mQ}$ obtained through the procedure (\ref{eq:condensation-quotient-IC}). In what follows, we will prove the claim of Ref. \cite{KitaevKong} for the case of Kitaev models for Dijkgraaf-Witten theories.

In Section \ref{sec:bd-ribbon-operators}, we claimed that the elementary excitations on a boundary of type $K$ in a Kitaev model based on group $G$ are given by pairs $(T,R)$, where $T = K r_T K \in K\backslash G / K$ is a double coset in $G$, and $R$ is an irreducible representation of the group-theoretical quasi-Hopf algebra $\mZ = Z(G,1,K,1)$. We would now like to present the relationship between $Z(G,1,K,1)$ and the group-theoretical category $\mC(G,1,K,1)$ of Ref. \cite{Etingof05}, which is defined as follows:

\begin{definition}
Let $\text{Vec}_G^{\omega}$ be the category of finite-dimensional $G$-graded vector spaces with associativity $\omega$, where $G$ is a finite group and $\omega \in H^3(G, \C^\times)$. Let $K \in G$ be a subgroup of $G$ and $\psi \in H^2(K, \C^\times)$ be a 2-cocycle of $K$ such that $d \psi = \omega |_H$. Let $\text{Vec}_G^{\omega}(H)$ be the subcategory of $\text{Vec}_G^{\omega}$ of objects graded by $H$. The twisted group algebra $A = \C_\psi [H]$ is then an associative algebra in $\text{Vec}_G^{\omega}(H)$. The {\it group-theoretical category} $\mC(G,1,K,1)$ is defined as the category of $(A,A)$-bimodules in $\text{Vec}_G^{\omega}$. In particular, $\M$ is a fusion category with tensor product $\otimes_A$ and unit object $A$.
\end{definition}

The goal of this section is now to establish the equivalence between $\mC(G,1,K,1)$ and the representation category of $Z(G,1,K,1)$ as fusion categories.

To begin, we state the following theorem:

\begin{theorem}
\label{Z-irreps}
Let $G$ be a finite group, and let $K \subseteq G$ be a subgroup. The irreducible representations of $Z(G,1,K,1)$ are given by pairs $(T,R)$, where $T = K r_T K \in K\backslash G /K$ is a double coset and $R$ is an irreducible representation of the subgroup $K^{r_T} = K \cap r_T K r_T^{-1}$ of $K$.
\end{theorem}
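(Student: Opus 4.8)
The plan is to classify the simple modules by writing the algebra structure of $Z = Z(G,1,K,1)$ explicitly in the basis $\{Z^{(hK,k)} : hK \in G/K,\ k \in K\}$ and then recognizing the result as the algebra of a finite action groupoid. First I would derive the multiplication rule. Using the flux-conjugation identity $A^k B^{h'K} = B^{kh'K} A^k$ for $k \in K$ — which holds because conjugating the left coset $h'K$ by $k \in K$ produces $kh'K$ again — together with the orthogonality $B^{hK} B^{h''K} = \delta_{hK,h''K} B^{hK}$ of the coset projectors and $A^k A^{k'} = A^{kk'}$, one obtains
\begin{equation}
Z^{(hK,k)}\, Z^{(h'K,k')} \;=\; \delta_{hK,\,kh'K}\; Z^{(hK,\,kk')}.
\end{equation}
A one-line check shows this multiplication is associative. (This is consistent with the fact that a quasi-Hopf algebra, although only quasi-coassociative, is an honestly associative algebra; the associator $\Phi$ enters only the monoidal structure on $\Rep(Z)$, never the classification of simple $Z$-modules.) The idempotents $Z^{(hK,1)} = B^{hK}$ are orthogonal and sum to $1$, so $Z$ is, as an associative algebra, exactly the groupoid algebra $\C[K \ltimes (G/K)]$ of the transformation groupoid of $K$ acting on $G/K$ by left translation: the objects are the left cosets $hK$, and $Z^{(hK,k)}$ is the unique morphism $k^{-1}hK \to hK$ carrying the label $k \in K$.

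Granting this identification, the theorem becomes the standard representation theory of a finite action-groupoid algebra, which is semisimple (Maschke) and decomposes as a product, over the $K$-orbits in $G/K$, of matrix algebras over the group algebras of the point stabilizers. The $K$-orbits in $G/K$ are precisely the double cosets $T = K r_T K \in K\backslash G/K$, and the stabilizer of the point $r_T K$ is
\begin{equation}
\{ k \in K : k r_T K = r_T K \} = \{ k \in K : r_T^{-1} k r_T \in K \} = K \cap r_T K r_T^{-1} = K^{r_T}.
\end{equation}
Hence the simple $Z$-modules are exactly the pairs $(T,R)$ with $T \in K\backslash G/K$ and $R \in (K^{r_T})_{\text{ir}}$, as asserted; the module attached to $(T,R)$ is the induced module $\mathrm{Ind}_{K^{r_T}}^{K} R$ realized on the left cosets of $K^{r_T}$ in $K$ (compare the transversal $Q$ in the proof of Theorem \ref{bd-anyon-types}), of dimension $|Q|\dim R = [K:K^{r_T}]\dim R$. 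As a check that nothing is missed, $\sum_{R \in (K^{r_T})_{\text{ir}}}(|Q|\dim R)^2 = |Q|^2\,|K^{r_T}| = |T|$, and $\sum_{T}|T| = |G| = \dim_\C Z$.

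A fully self-contained alternative, mirroring the proofs of Theorems \ref{bulk-anyon-types} and \ref{bd-anyon-types}, is to perform on $Z$ the Fourier transform dual to the one used there for $\mathcal{Y}$: for each $(T,R)$, build explicit matrix units out of the $Z^{(s_i^{-1}K,\, q_i k q_{i'}^{-1})}$ and the representation matrices $\Gamma_R$ (with $s_i = q_i r_T q_i^{-1}$ as in the proof of Theorem \ref{bd-anyon-types}), and verify directly that $Z \cong \bigoplus_{(T,R)} \End(V_{T,R})$. I expect the main — and essentially only — obstacle to be bookkeeping: keeping straight which left cosets $hK$ lie in a given double coset $T$, and how $K^{r_T}$ sits inside $K$. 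This is exactly the double-coset combinatorics already assembled in the proof of Theorem \ref{bd-anyon-types}, so it introduces no genuinely new difficulty; the one conceptual point worth stating explicitly is, again, that the quasi-Hopf associator plays no role here, since the classification concerns modules over the underlying associative algebra only.
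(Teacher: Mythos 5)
Your proposal is correct, and it supplies something the paper itself does not: the paper's ``proof'' of Theorem \ref{Z-irreps} is simply a citation to Zhu \cite{Zhu01} and Schauenburg \cite{Schauenburg02}, with no argument given. Your derived multiplication rule $Z^{(hK,k)}Z^{(h'K,k')}=\delta_{hK,\,kh'K}Z^{(hK,kk')}$ agrees exactly with the formula recorded in Appendix \ref{sec:quasi-hopf-algebra} (the identity $A^kB^{h'K}=B^{kh'K}A^k$ holds because $kh'Kk^{-1}=kh'K$ for $k\in K$), and the identification of $Z(G,1,K,1)$ as the transformation-groupoid (crossed product) algebra $F[G/K]\rtimes\C[K]$ is right; from there the orbit--stabilizer classification of simple modules, with $K$-orbits on $G/K$ being double cosets and the stabilizer of $r_TK$ being $K^{r_T}=K\cap r_TKr_T^{-1}$, together with your dimension check $\sum_{T,R}([K:K^{r_T}]\dim R)^2=\sum_T|T|=|G|=\dim_\C Z$, gives a complete and self-contained proof. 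Your observation that the Drinfeld associator is irrelevant here is also the key conceptual point: the theorem concerns modules over the underlying associative algebra only, and a quasi-Hopf algebra is honestly associative. What your route buys is an elementary, checkable argument (essentially the Mackey-type analysis that underlies the cited references, where the same algebra appears as a crossed product); what the paper's route buys is brevity and, via \cite{Zhu01,Schauenburg02}, the additional monoidal information (equivalence of $\Rep(Z)$ with the group-theoretical category) that the paper needs later but that your classification of simples does not by itself provide.
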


\begin{proof}
See Refs. \cite{Zhu01} and \cite{Schauenburg02}.
\end{proof}

By Ref. \cite{Gelaki07}, have the following theorem:

\begin{theorem}
The pairs $(T,R)$, as described in Theorem \ref{Z-irreps}, are in one-to-one correspondence with the simple objects in the group-theoretical category $\mC(G,1,K,1)$.
\end{theorem}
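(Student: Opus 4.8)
The plan is to make the definition of the group-theoretical category $\mC(G,1,K,1)$ completely explicit at the level of simple objects, and then observe that the resulting parametrization is literally the list of pairs $(T,R)$ appearing in Theorem \ref{Z-irreps}. Since both cocycles are trivial, the algebra in the definition is $A=\C[K]$, regarded as an associative algebra in $\text{Vec}_G$ concentrated in the grades lying in $K$, and $\mC(G,1,K,1)$ is the category of $(A,A)$-bimodules in $\text{Vec}_G$. The two references \cite{Ostrik03} and \cite{Gelaki07} already contain such an enumeration; the substance of the proof is to extract it and align conventions.

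First I would analyze the $G$-grading of a bimodule $V=\bigoplus_{g\in G}V_g$. Because $A$ is supported on $K$, the structure maps $A\otimes V\to V$ and $V\otimes A\to V$ carry $V_g$ into $\bigoplus_{k\in K}V_{kg}$ and $\bigoplus_{k\in K}V_{gk}$ respectively; hence the support of any sub-bimodule is a union of double cosets, and a simple bimodule $V$ is supported on exactly one double coset $T=Kr_TK\in K\backslash G/K$. Fixing a representative $r_T$, I would note that the ``fiber'' $V_{r_T}$ carries a genuine (not merely projective, since all cocycles vanish) linear action of the stabilizer $K^{r_T}=K\cap r_TKr_T^{-1}$, where $k\in K^{r_T}$ acts by the composite of left multiplication by $k$ and right multiplication by $r_T^{-1}k^{-1}r_T\in K$. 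Standard Clifford-theory reasoning, as in \cite{Ostrik03}, shows that $V\mapsto V_{r_T}$ and induction along $A$ are mutually inverse equivalences, so $V$ is simple iff $V_{r_T}$ is an irreducible $K^{r_T}$-representation $R$; and replacing $r_T$ by another representative of $T$ conjugates the pair $(K^{r_T},R)$, so the isomorphism class of $V$ depends only on $(T,R)$ with $T\in K\backslash G/K$ and $R\in(K^{r_T})_{\text{ir}}$. This produces a bijection between simple objects of $\mC(G,1,K,1)$ and the set of such pairs, which is exactly the set described in Theorem \ref{Z-irreps}; equivalently one can quote \cite{Gelaki07} for the enumeration directly and match labels.

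The main obstacle is bookkeeping rather than anything conceptual: one must check that the stabilizer subgroup appearing in the bimodule description coincides, under the chosen identifications of representatives and of left versus right cosets, with the group $K^{r_T}=K\cap r_TKr_T^{-1}$ of Theorem \ref{Z-irreps}, and that the induction functor from $\Rep(K^{r_T})$ to $(A,A)$-bimodules is well defined with no spurious projective phase when $\omega$ and $\psi$ are trivial. Once the two dictionaries are synchronized the correspondence is immediate, and it is this set-level bijection that the next subsection upgrades to an equivalence of fusion categories between $\mC(G,1,K,1)$ and $\Rep(Z(G,1,K,1))$.
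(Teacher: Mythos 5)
Your argument is correct, but it is a genuinely different route from the paper's: the paper offers no proof at all for this statement, simply importing it from Ref.~\cite{Gelaki07} (just as Theorem~\ref{Z-irreps} is imported from Refs.~\cite{Zhu01,Schauenburg02}), whereas you prove it directly from the definition of $\mC(G,1,K,1)$ as $(A,A)$-bimodules in $\mathrm{Vec}_G$ with $A=\C[K]$. Your Mackey/Clifford-theoretic analysis --- support of a simple bimodule is a single double coset $T=Kr_TK$, the fiber $V_{r_T}$ carries a genuine action of $K^{r_T}=K\cap r_TKr_T^{-1}$ via left multiplication by $k$ and right multiplication by $r_T^{-1}k^{-1}r_T$, and restriction to the fiber is inverse to induction --- is exactly the standard argument behind the cited classification (it is essentially Ostrik's), and the points you flag as bookkeeping (independence of the representative $r_T$, absence of projective phases when $\omega$ and $\psi$ are trivial, matching of left/right coset conventions) are precisely the right ones to check; the equivalence-of-categories step does deserve the explicit induction functor to be written down, but there is no gap in the idea. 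What the two approaches buy: the paper's citation is economical and fits its emphasis on the physical interpretation, while your self-contained derivation makes the labeling constructive, which is genuinely useful here --- it lets one match the bimodule labels $(T,R)$ concretely against the Fourier basis of boundary ribbon operators in Theorem~\ref{bd-anyon-types} and against the condensation multiplicities of Theorems~\ref{condensation-products} and \ref{inverse-condensation-products}, rather than trusting that the conventions of Refs.~\cite{Zhu01,Schauenburg02,Gelaki07} agree with those of Section~\ref{sec:bd-excitations}.
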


The above theorem implies that the elementary excitations on the boundary in the group-theoretical case are indeed given by the simple objects in the fusion category $\Fun_\mC(\M,\M)$.

Finally, Refs. \cite{Zhu01,Schauenburg02} show the equivalence of $\Rep(Z(G,1,K,1))$ and $\mC(G,1,K,1)$ as fusion categories:

\begin{theorem}
The representation category of the group-theoretical quasi-Hopf algebra $Z(G,1,K,1)$ (or equivalently, the representation category of the coquasi-Hopf algebra $Y(G,1,K,1)$) is equivalent as a fusion category to the group-theoretical category $\mC(G,1,K,1)$.
\end{theorem}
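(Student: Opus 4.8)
The plan is to realize both fusion categories in the statement as the category $\Fun_{\text{Vec}_G}(\M,\M)$ of $\text{Vec}_G$-module endofunctors of the indecomposable module category $\M=\M(K,1)$, where $\M$ is the category of $A$-modules in $\text{Vec}_G$ for the algebra $A=\C[K]$ (here the cocycles $\omega\in H^3(G,\C^\times)$, $\psi\in H^2(K,\C^\times)$ are trivial). On the group-theoretical side this is essentially the definition: $\mC(G,1,K,1)$ is the category of $(A,A)$-bimodules in $\text{Vec}_G$ under $\otimes_A$, and by Ostrik's equivalence \cite{Ostrik03} the assignment $M\mapsto(-\otimes_A M)$ is a tensor equivalence ${}_A(\text{Vec}_G)_A\xrightarrow{\ \sim\ }\Fun_{\text{Vec}_G}(\M,\M)$. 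It therefore suffices to identify $\Rep(Z(G,1,K,1))$ with $\Fun_{\text{Vec}_G}(\M,\M)$ as tensor categories.

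The second identification I would carry out by Tannaka--Krein reconstruction for quasi-Hopf algebras. The module category $\M$ has exactly the $|G/K|$ simple objects $\delta_{gK}$ indexed by left cosets; evaluating a module endofunctor on each and summing gives a faithful exact functor $\Omega\colon\Fun_{\text{Vec}_G}(\M,\M)\to\text{Vec}$, $\Omega(F)=\bigoplus_{gK\in G/K}F(\delta_{gK})$ (objects of $\M$ being understood via the forgetful functor to $\text{Vec}$). Because $\M$ carries a module-associativity constraint, $\Omega$ is only a \emph{quasi}-fiber functor: the comparison maps $\Omega(F)\otimes\Omega(F')\to\Omega(F\circ F')$ associate only up to a coherent correction built from the pentagon data of $\M$. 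The reconstructed algebra $Z:=\End(\Omega)$ then acquires a quasi-Hopf structure; unwinding the construction, $Z$ acts on $\Omega(\mathrm{id}_\M)=\bigoplus_{gK}\C[gK]\cong\C[G]$ as its regular representation, so its underlying vector space is $F[G/K]\otimes\C[K]$, with $F[G/K]$ the idempotents projecting onto the coset summands and $\C[K]$ acting by left translation --- exactly Zhu's algebra from \cite{Zhu01}. Reconstruction then yields $\Rep(Z)\simeq\Fun_{\text{Vec}_G}(\M,\M)$ as tensor categories, with $\Omega$ the forgetful functor; concretely the equivalence sends an $(A,A)$-bimodule $M=\bigoplus_{x\in G}M_x$ to the vector space $M$ with $F[G/K]$ acting through the coarsened grading $M=\bigoplus_{gK}\big(\bigoplus_{x\in gK}M_x\big)$ and $\C[K]$ acting through the left $A$-module structure, and one checks this respects multiplication, the coproduct $\Delta_Z$, and the antipode.

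The substantive step --- and the one I expect to be the main obstacle --- is verifying that the reassociator $\Phi\in Z^{\otimes 3}$ produced by the reconstruction coincides with the one written down by Zhu and Schauenburg, equivalently that their explicit $\Phi$ (a $3$-cochain built from $\omega$, $\psi$, and choices of coset representatives) implements the pentagon of $\M$ and the hexagon/mixed coherences needed for $\Delta_Z$ to be an algebra map up to $\Phi$. This is a direct but lengthy cocycle computation, and I would simply invoke \cite{Zhu01,Schauenburg02} for it rather than reproduce it. Granting this, $\Rep(Z)$ and $\mC(G,1,K,1)$ are equivalent as tensor categories. Both are moreover fusion categories: $\text{Vec}_G$ is fusion and $\M$ is a finite semisimple indecomposable module category, so $\Fun_{\text{Vec}_G}(\M,\M)$ is semisimple and rigid with finitely many simples --- precisely the pairs $(T,R)$ of Theorem \ref{Z-irreps}, which match the simple objects of $\Rep(Z)$ by \cite{Gelaki07} --- so the equivalence is one of fusion categories. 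Finally, the statement for $Y(G,1,K,1)$ follows because $\Rep(Y)\simeq\Rep(Z)$ via the duality $Y=Z^{*}$ noted in Section \ref{sec:bd-ribbon-operators}.
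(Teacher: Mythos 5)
Your overall strategy---identifying both $\mC(G,1,K,1)$ and $\Rep(Z(G,1,K,1))$ with the dual category $\Fun_{\text{Vec}_G}(\M,\M)$ for $\M=\Mod_{\C[K]}(\text{Vec}_G)$, via Ostrik's bimodule/functor equivalence on one side and quasi-Hopf reconstruction on the other---is the standard route underlying the cited works, and it is already more of an argument than the paper gives (the paper's proof is literally a citation of \cite{Zhu01,Schauenburg02}). However, your reconstruction step fails as written. The functor $\Omega(F)=\bigoplus_{gK\in G/K}F(\delta_{gK})$ (underlying vector spaces) is faithful and exact, but it is not a quasi-fiber functor: the unit object $\mathrm{id}_\M$ is sent to $\bigoplus_{gK}\delta_{gK}$, which is $|G|$-dimensional rather than $\C$, and dimensions are not multiplicative under composition. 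Concretely, writing $F=-\otimes_A M$ with $A=\C[K]$, one has $\dim\Omega(F)=|G/K|\,\dim M$, while $\dim\Omega(F\circ F')=|G/K|\,\dim M\,\dim M'/|K|$; these never agree, so no comparison isomorphisms $\Omega(F)\otimes\Omega(F')\cong\Omega(F\circ F')$ exist, coherent or not. Consequently $\End(\Omega)\cong\prod_i\End_\C(\Omega(X_i))$ has dimension $\sum_i(\dim\Omega(X_i))^2$, which already exceeds $|G|^2$ from the unit object alone, whereas $\dim Z(G,1,K,1)=|G/K|\,|K|=|G|$. So $\End(\Omega)$ is not Zhu's algebra, it inherits no quasi-bialgebra structure from this $\Omega$, and the claim that it acts on $\Omega(\mathrm{id}_\M)\cong\C[G]$ as its regular representation cannot be right: under any quasi-fiber functor the tensor unit must go to a one-dimensional space.

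The repair is to remember only the multiplicity data at the distinguished object rather than the full underlying spaces at every simple. In bimodule language, every $(A,A)$-bimodule in $\text{Vec}_G$ is free as a right $A$-module (right translation by $K$ acts freely on the grading set), and the correct quasi-fiber functor is $M\mapsto M\otimes_A\C$, equivalently $M\mapsto\bigoplus_{i} M_{g_i}$ for a chosen set $\{g_i\}$ of left-coset representatives of $G/K$; in functor language this is $F\mapsto\Hom_\M\bigl(\bigoplus_{gK}\delta_{gK},\,F(\delta_K)\bigr)$. On the simple object $(T,R)$ this has dimension $[K:K^{r_T}]\dim(R)$, matching the irreducible representations of $Z(G,1,K,1)$ and the $\FPdim$ formula in the paper, and it sends the unit to $\C$. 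The non-canonical choice of coset representatives (a set-theoretic section of $G\to G/K$) is precisely what produces Zhu's Drinfeld associator, and verifying that the reconstructed associator is the explicit one of \cite{Zhu01,Schauenburg02} is the cocycle computation you rightly defer to those references. With this replacement, the rest of your outline---the identification $\mC(G,1,K,1)\simeq\Fun_{\text{Vec}_G}(\M,\M)$, the matching of simples via \cite{Gelaki07}, and the duality argument for $Y(G,1,K,1)$---goes through.
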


\begin{proof}
See Refs. \cite{Zhu01} and \cite{Schauenburg02}.
\end{proof}

This shows that the ``bordered topological order'' introduced in Section \ref{sec:bd-excitations} is indeed given by the fusion category $\Fun_\mC(\M,\M)$.

\subsection{$M$ symbols}
\label{sec:m-symbols}

\subsubsection{The $M$-3$j$ symbol}
\label{sec:m3j}

Thus far, we have discussed the condensation of a single bulk anyon $a$ into a gapped boundary $\A$ in a system with topological order, and the tunneling of a single anyon from one gapped boundary to another. As we will see in Chapter \ref{sec:operations}, is equally important to consider the case where multiple bulk anyons all condense into the boundary.

\begin{figure}
\centering
\includegraphics[width = 0.65\textwidth]{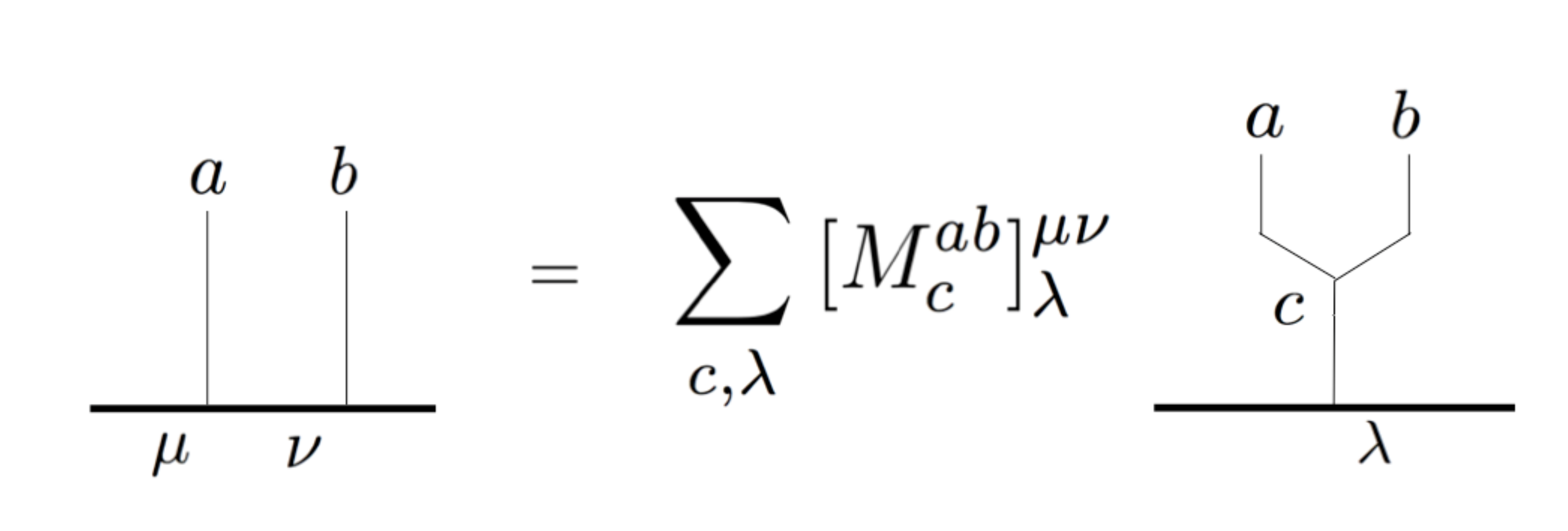}
\caption{Definition of the $M$-3$j$ symbol.}
\label{fig:m-3j}
\end{figure}

In a system with topological order, when three anyons fuse in the bulk, $F-6j$ symbols $F^{abc}_{d;ef}$ describe the associativity in the order of fusion. These 6$j$ symbols must satisfy certain pentagon and hexagon relations, corresponding to the pentagon and hexagon commutative diagrams for a modular tensor category. Similar associativity and braiding rules exist for the $M$ symbols, as we shall discuss below.

Let us first consider a relatively simple case, when the bulk anyons all condense to vacuum on the boundary. In Proposition \ref{separability-prop}, we showed that for any two anyons $a,b$ that can condense to vacuum on the boundary $\A$, there exists an injection $M$ from $\Hom(a,\A) \otimes \Hom(b,\A)$ to $\Hom(a \otimes b, \A)$. Physically, the $M$ operator corresponds to fusing the anyons $a,b$ in the bulk first, and then condensing to the boundary. The action of the $M$ operator is shown in Fig. \ref{fig:m-3j}.

In this case, each $M$ symbol has three topological charge indices, given by two original bulk anyons $a,b$ that condense to the boundary, and a third bulk anyon $c$ that results from the fusion of $a,b$ in the bulk. Furthermore, there will be local indices $\mu,\nu,\lambda$ corresponding to the condensation channels when $a,b,c$ condense to the boundary, respectively.

These $M$-3$j$ symbols are quite similar to the $\theta$-3$j$ symbols in a fusion category. If we start with three anyons in the bulk, the following pentagon diagram must commute:

\begin{equation}
\label{eq:m-3j-pentagon-fig}
\vcenter{\hbox{\includegraphics[width = 0.79\textwidth]{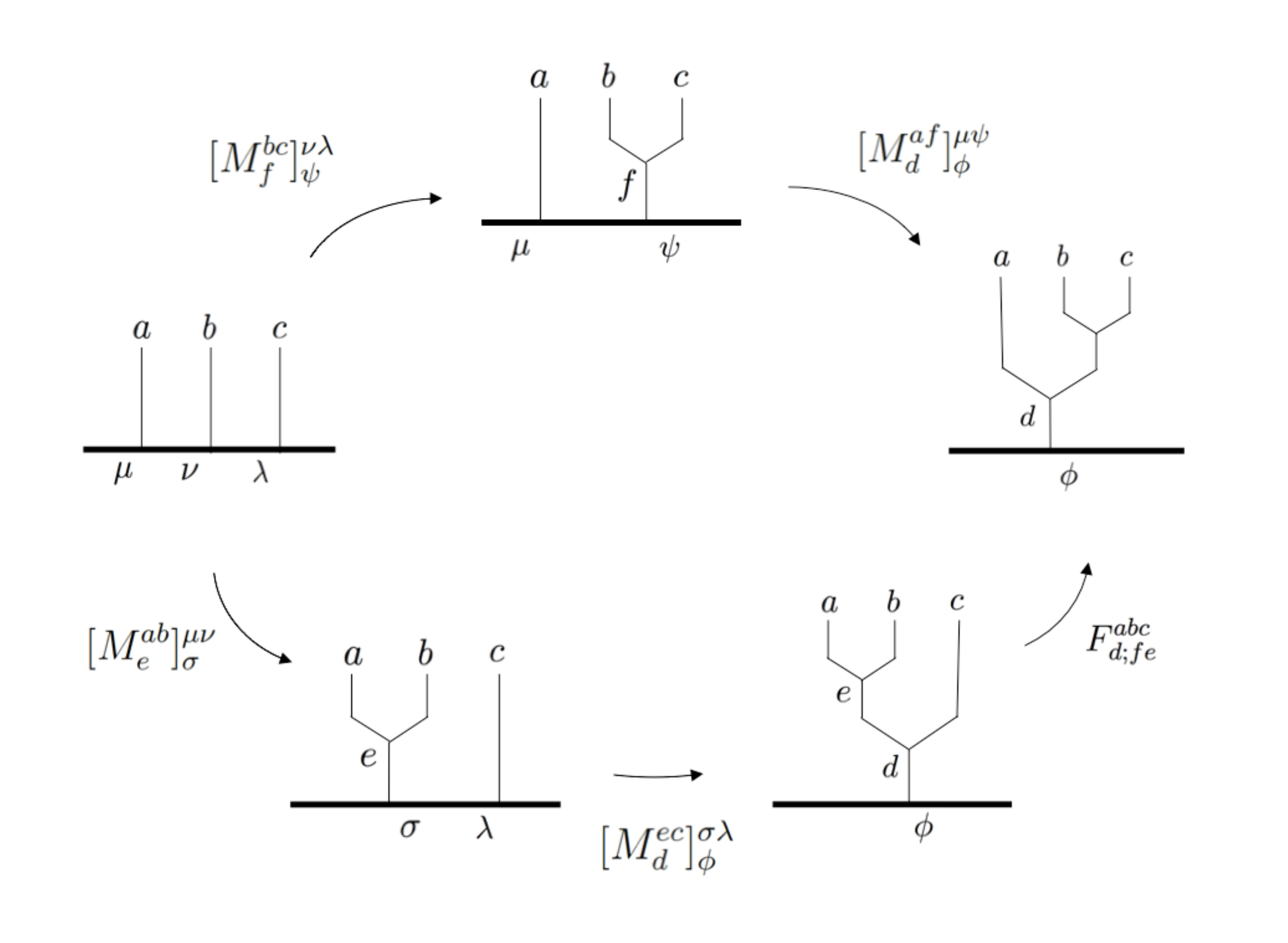}}}
\end{equation}

Algebraically, we can write Eq. \ref{eq:m-3j-pentagon-fig} as\footnote{In this and all associativity/braid relations that follow, we have assumed for simplicity of presentation that the anyon model has no fusion multiplicities. This is true in all of our examples, but the generalization is obvious.}:

\begin{equation}
\label{eq:m-3j-pentagon}
\sum_{e,\sigma} [M^{ab}_{e}]^{\mu\nu}_{\sigma} [M^{ec}_{d}]^{\sigma\lambda}_{\phi}F^{abc}_{d;fe} = \sum_\psi [M^{bc}_{f}]^{\nu\lambda}_{\psi} [M^{af}_{d}]^{\mu\psi}_{\phi}
\end{equation}

We note that the $M$ symbols will have gauge degrees of freedom, originating from the choice of basis for the condensation channels of each particle $a$. Specifically, we can define a unitary transformation $\Gamma^a_{\mu\nu}$ on the condensation space $V_a$: $\widetilde{\ket{a;\mu}} = \Gamma^a_{\mu\nu} \ket{a;\mu}$. These transformations yield new $M$ symbols, which are related to the original ones by the relation

\begin{equation}
\label{eq:m-3j-gauge-freedom}
[\tilde{M}^{ab}_c]^{\mu\nu}_\lambda =
\sum_{\mu',\nu',\lambda'} \Gamma^{a}_{\mu\mu'} \Gamma^{b}_{\nu\nu'}
[M^{ab}_{c}]^{\mu'\nu'}_{\lambda'}
[\Gamma^c]^{-1}_{\lambda\lambda'}
\end{equation}

The $M$ symbols will also be affected by gauge transformations of the bulk fusion space in the case of bulk fusion multiplicities.

Furthermore, when two bulk anyons $a,b$ condense to vacuum on the boundary, by the commutativity of the Frobenius algebra $\A$, it does not matter what order they condense in. Diagrammatically, this corresponds to

\begin{equation}
\label{eq:m-3j-braid-fig}
\vcenter{\hbox{\includegraphics[width = 0.5\textwidth]{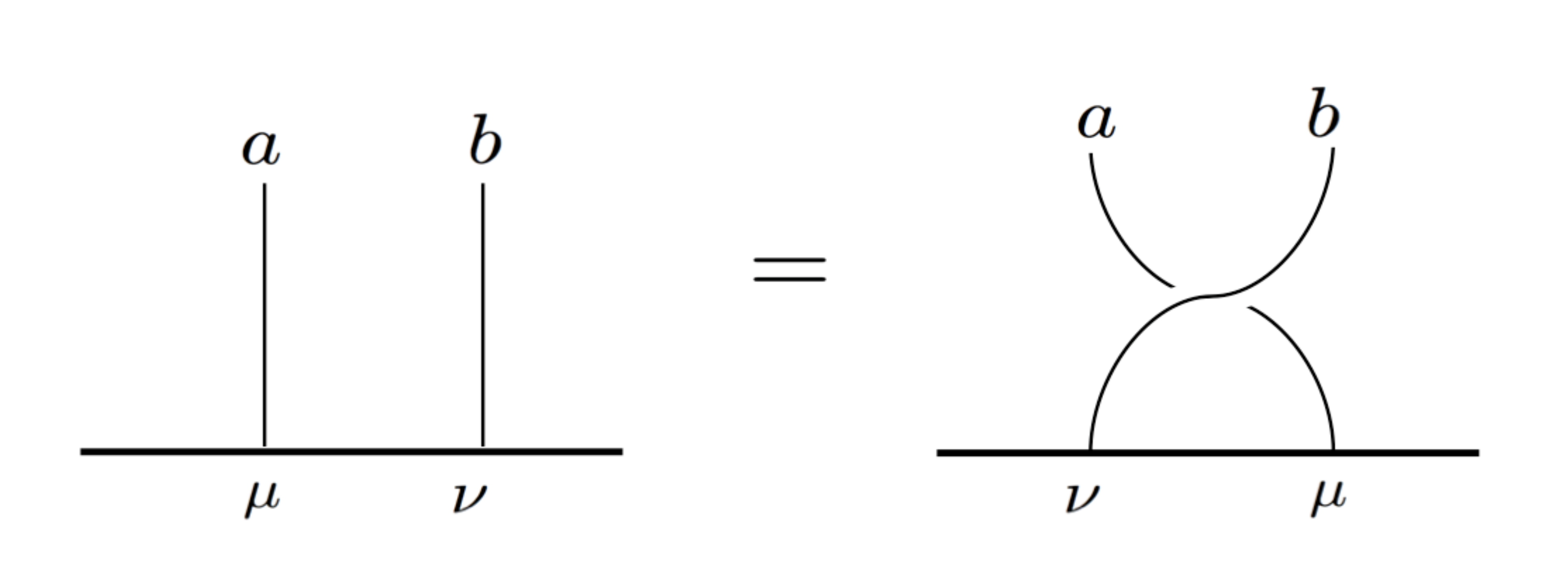}}}
\end{equation}

\noindent
This gives the following equation:

\begin{equation}
\label{eq:m-3j-braid}
\sum_c [M^{ba}_{c}]^{\nu\mu}_{\lambda} R^{ab}_c = \sum_c [M^{ab}_{c}]^{\mu\nu}_{\lambda}
\end{equation}

\noindent
Here, the sum is taken over all $c$ such that $\Hom(c,\A) \neq 0$.

Finally, we define some normalization conditions on these $M$-3$j$ symbols:

\begin{equation}
\label{eq:m-3j-normalization-1}
[M^{a\overbar{a}}_1]^{\mu\nu} = \frac{\delta_{\mu\nu}}{\sqrt{\Dim(\A)}}
\end{equation}

\begin{equation}
\label{eq:m-3j-normalization-2}
[M^{1a}_a]^\mu_\nu = [M^{a1}_a]^\mu_\nu = \delta_{\mu\nu}
\end{equation}

These normalizations are chosen so that $M$ becomes a partial isometry when we have the following basis vectors for the vector spaces $\Hom(a,\A)$, $\Hom(b,\A)$ and $\Hom(a \otimes b, \A)$:

\begin{equation}
\label{eq:m-3j-normalization-fig-1}
\vcenter{\hbox{\includegraphics[width = 0.58\textwidth]{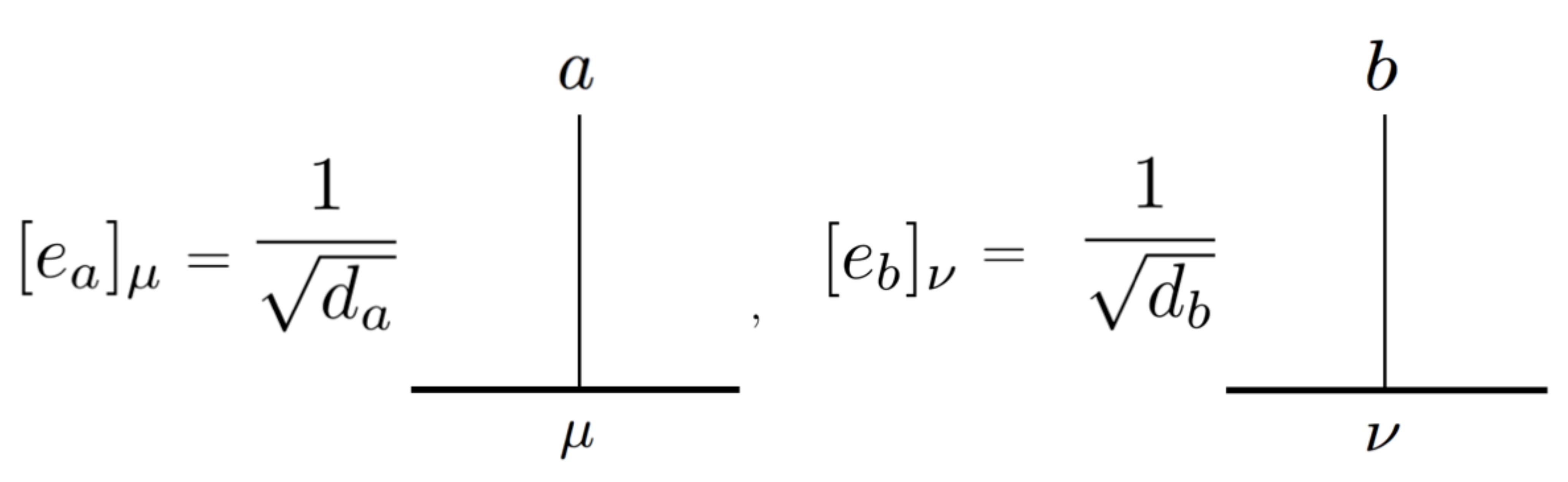}}}
\end{equation}

\begin{equation}
\label{eq:m-3j-normalization-fig-2}
\vcenter{\hbox{\includegraphics[width = 0.4\textwidth]{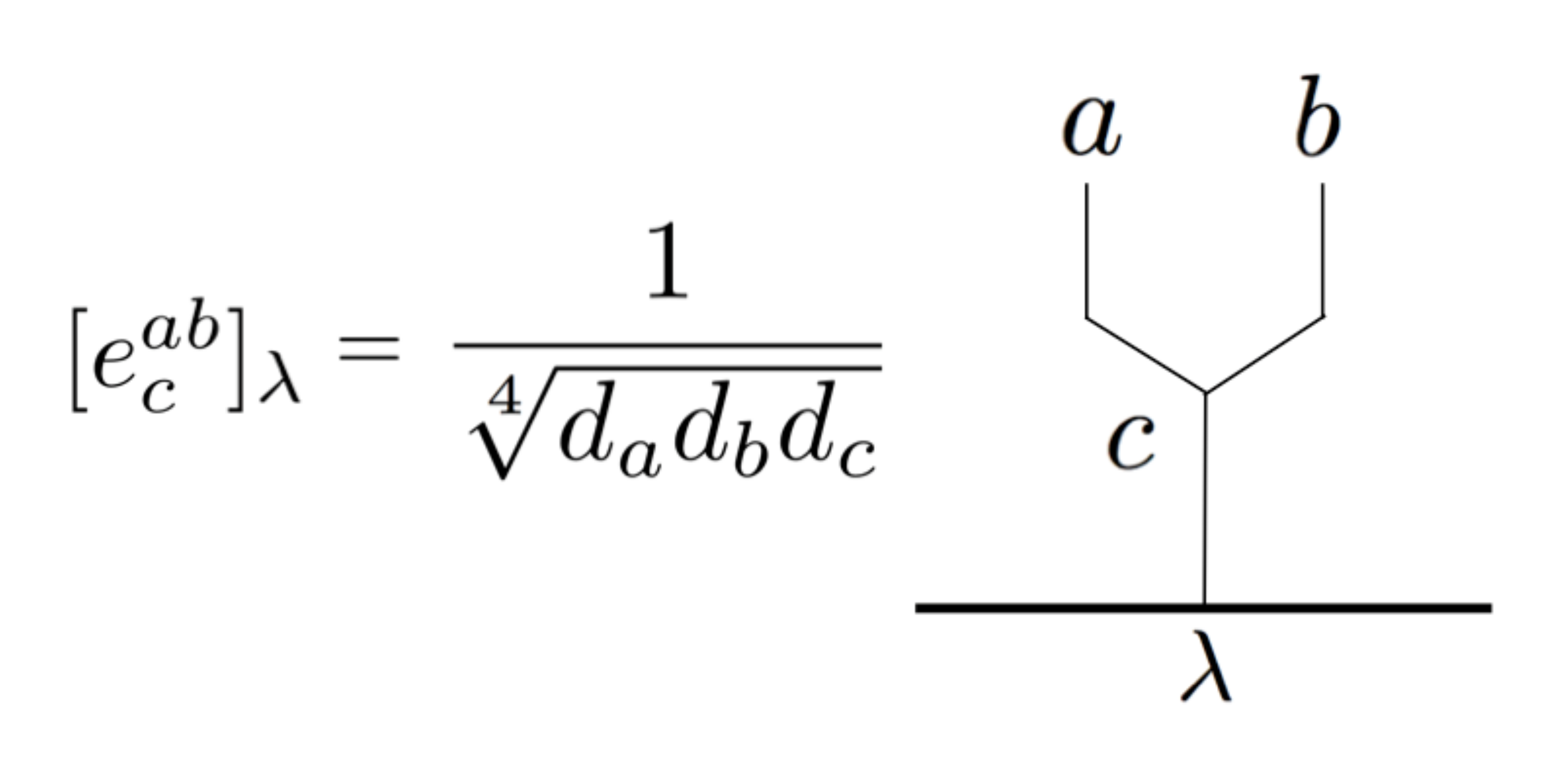}}}
\end{equation}

\noindent
The basis vectors for $\Hom(a,\A) \otimes \Hom(b,\A)$ are simply $[e_a]_\mu \otimes [e_b]_\nu$. These basis vectors are chosen because the following traces evaluate to 1, to provide orthonormal bases:

\begin{equation}
\vcenter{\hbox{\includegraphics[width = 0.28\textwidth]{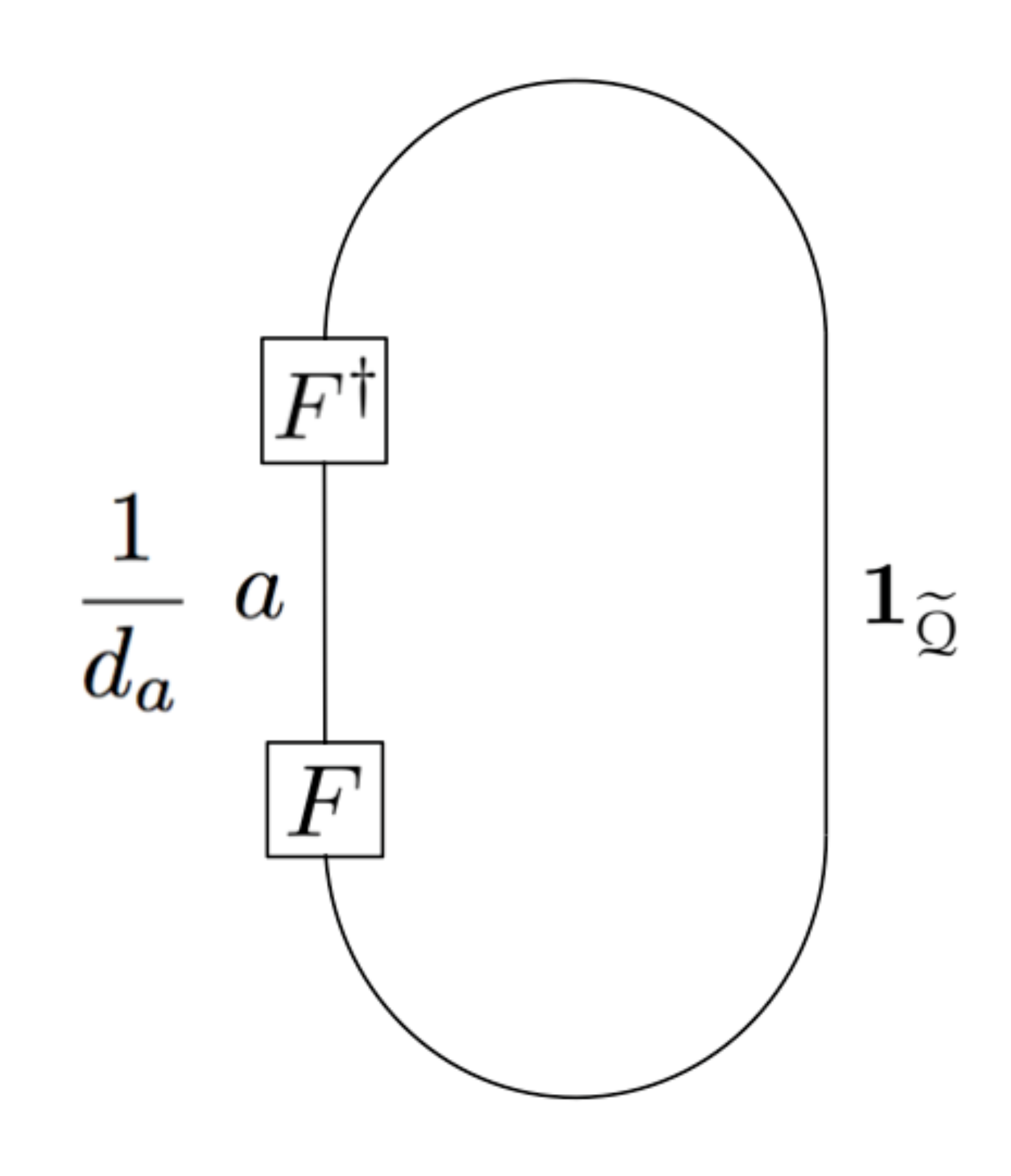}}} = 1,
\qquad
\vcenter{\hbox{\includegraphics[width = 0.35\textwidth]{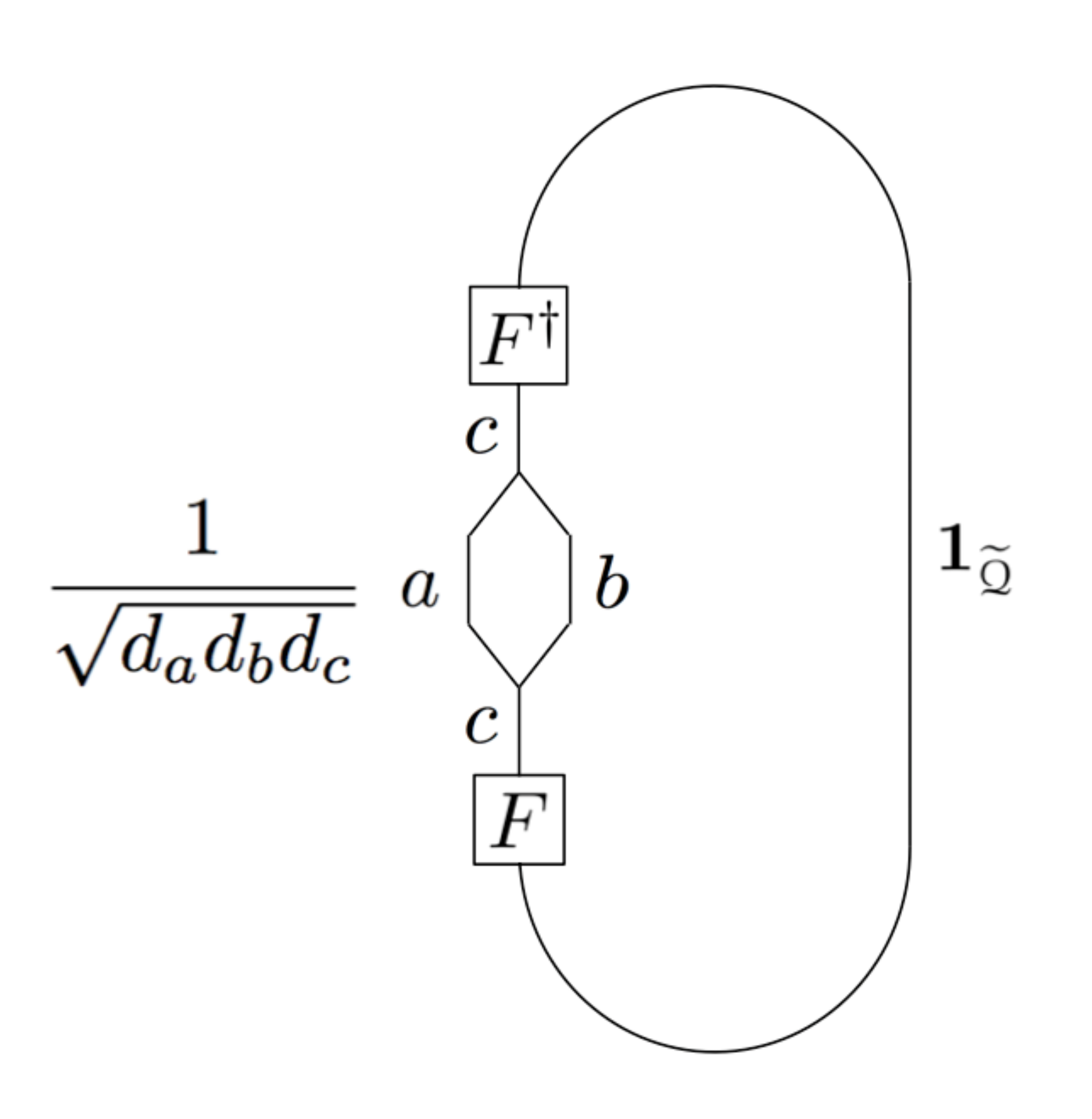}}} = 1
\end{equation}

\subsubsection{The $M$-6$j$ symbol}
\label{sec:m6j}

In the above discussion, we have considered only a special case, where all bulk anyons condense to vacuum on the boundary. More generally, we may consider a case where the bulk anyons become excitations on the boundary, as in Section \ref{sec:bd-excitations}. Here, we will define a $M$-6$j$ symbol, with six topological charge indices, whose action is shown in Fig. \ref{fig:m-6j}. We note that our $M-6j$ symbol is a symmetric version of the vertex lifting coefficients introduced in Ref. \cite{Eliens13}.

\begin{figure}
\centering
\includegraphics[width = 0.65\textwidth]{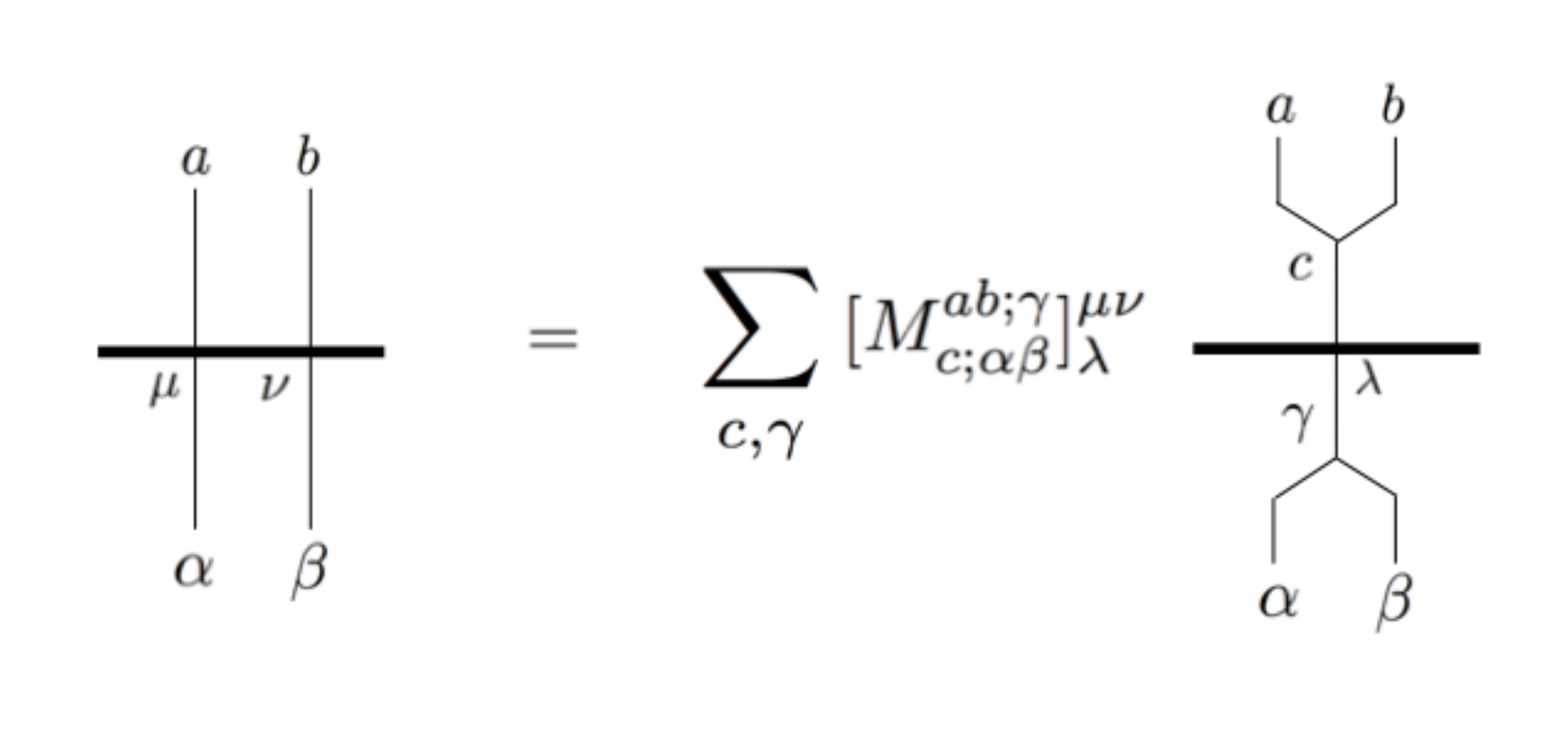}
\caption{Definition of the $M$-6$j$ symbol}
\label{fig:m-6j}
\end{figure}

The $M$-6$j$ symbol is indexed by bulk anyons $a,b,c$, and the boundary excitations $\alpha,\beta,\gamma$ they condense to. As before, we will also have condensation channel labels $\mu,\nu,\lambda$ for the multiplicity corresponding to the dimension of $\Hom(a,I(\alpha))$, etc.

As in the case of $M$-3$j$ symbols, these symbols must also satisfy a pentagon associativity relation. However, this relation will also depend on the $F-6j$ symbols of the fusion category $\Fun_\mC(\M, \M)$. The associativity is hence given by the following commuting pentagon:

\begin{equation}
\label{eq:m-6j-pentagon-fig}
\vcenter{\hbox{\includegraphics[width = 0.79\textwidth]{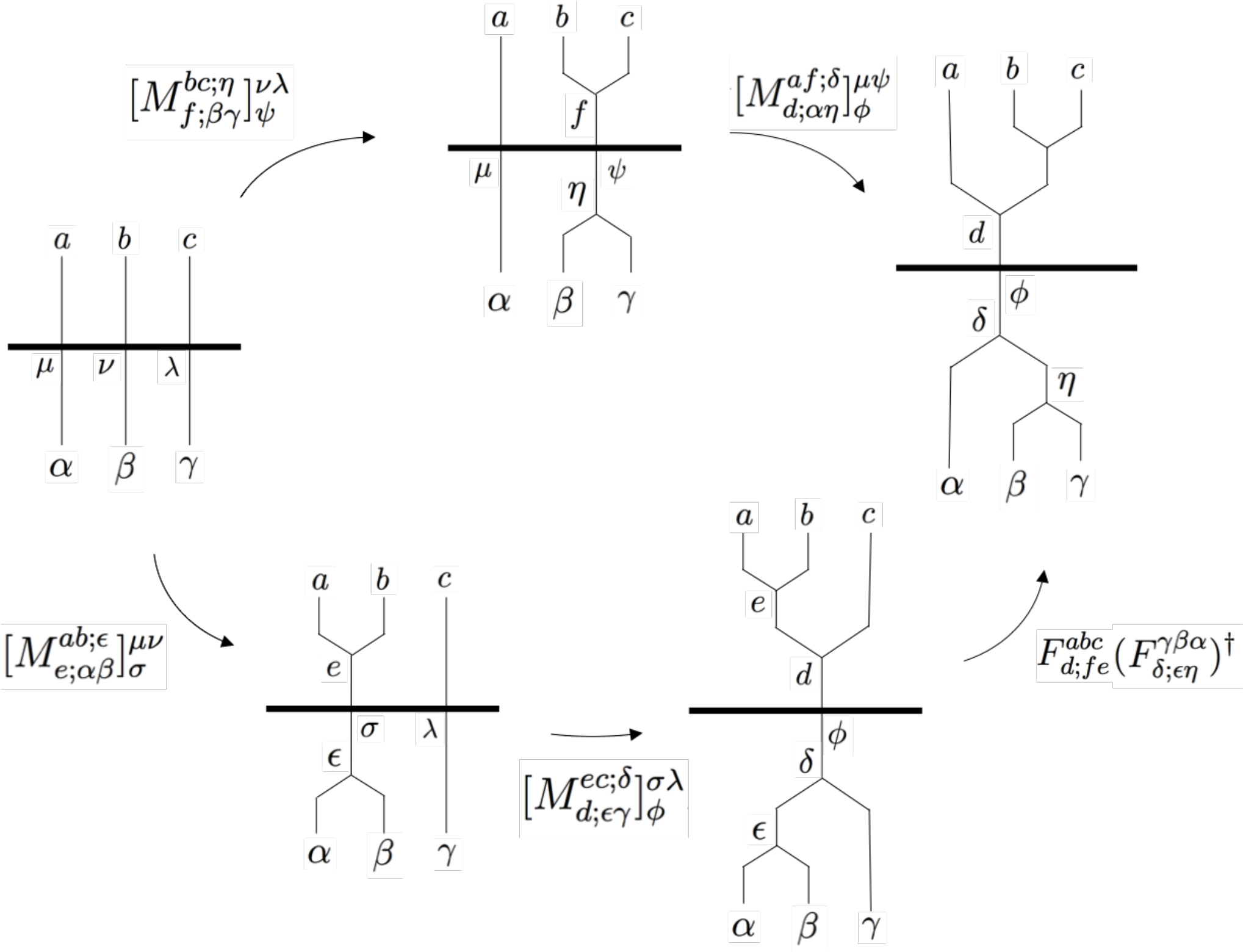}}}
\end{equation}

This is equivalent to the equation

\begin{equation}
\label{eq:m-6j-pentagon}
\sum_{e,\sigma,\epsilon} [M^{ab;\epsilon}_{e;\alpha\beta}]^{\mu\nu}_{\sigma} [M^{ec;\delta}_{d;\epsilon\gamma}]^{\sigma\lambda}_{\phi}F^{abc}_{d;fe} (F^{\gamma\beta\alpha}_{\delta;\epsilon\eta})^\dagger = \sum_\psi [M^{bc;\eta}_{f;\beta\gamma}]^{\nu\lambda}_{\psi} [M^{af;\delta}_{d;\alpha\eta}]^{\mu\psi}_{\phi}
\end{equation}

Similarly, the braiding relation is now given by the diagram

\begin{equation}
\label{eq:m-6j-braid-fig}
\vcenter{\hbox{\includegraphics[width = 0.5\textwidth]{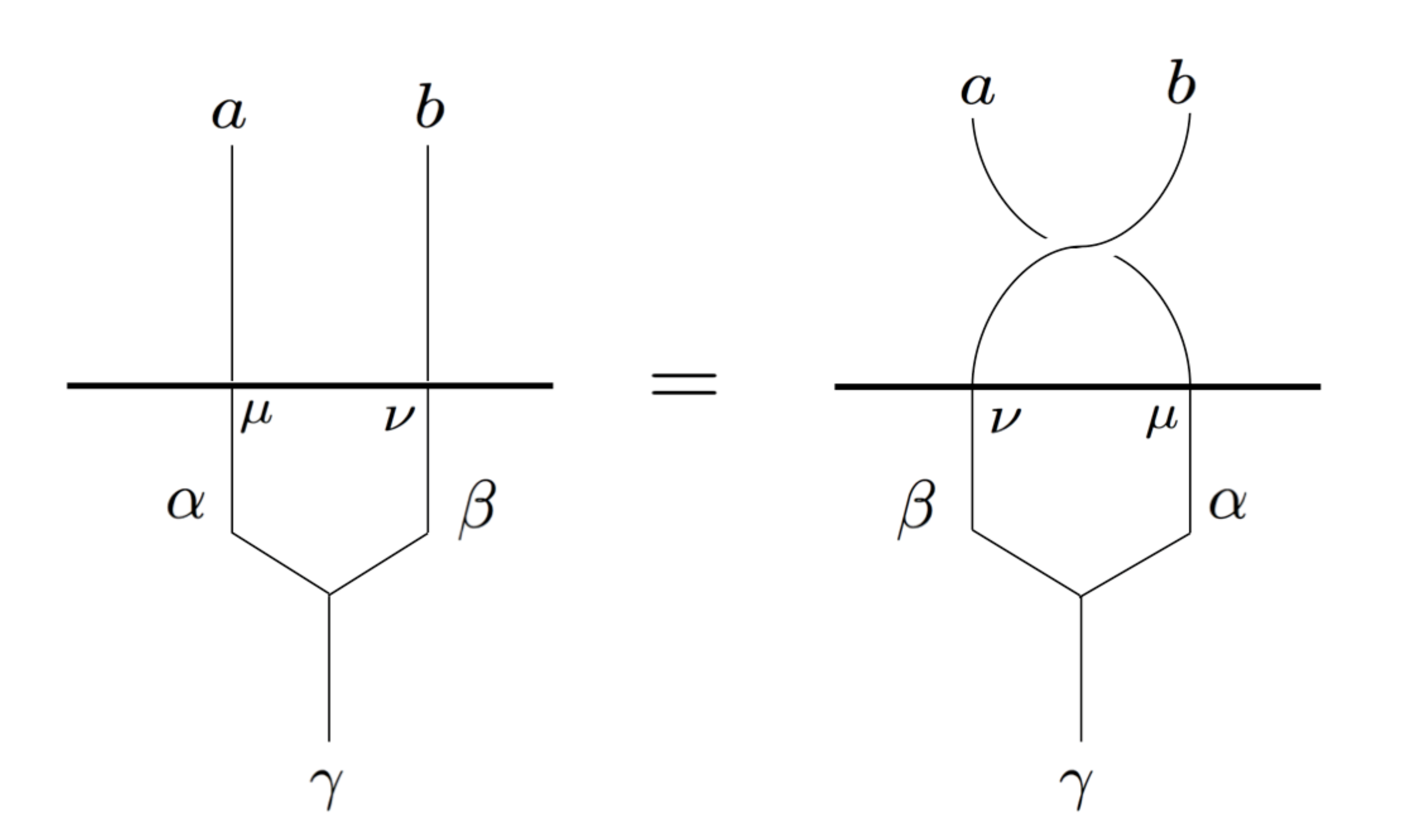}}}
\end{equation}

\noindent
which is equivalent to the equation

\begin{equation}
\label{eq:m-6j-braid}
\sum_c [M^{ba;\gamma}_{c;\beta\alpha}]^{\nu\mu}_{\lambda} R^{ab}_c = \sum_c [M^{ab;\gamma}_{c;\alpha\beta}]^{\mu\nu}_{\lambda}.
\end{equation}

\noindent
The sum is taken over all simple objects $c$ such that $\Hom(c,I(\gamma))\neq 0$, where $I$ is the right adjoint of the condensation procedure $F$.

Finally, we enforce $M$ to be a partial isometry with respect to the following choices of basis for $\Hom(a,I(\alpha)) \otimes \Hom(b,I(\beta))$ and $\Hom(a \otimes b, I(\alpha \otimes \beta))$:

\begin{equation}
\label{eq:m-6j-normalization-fig-1}
\vcenter{\hbox{\includegraphics[width = 0.59\textwidth]{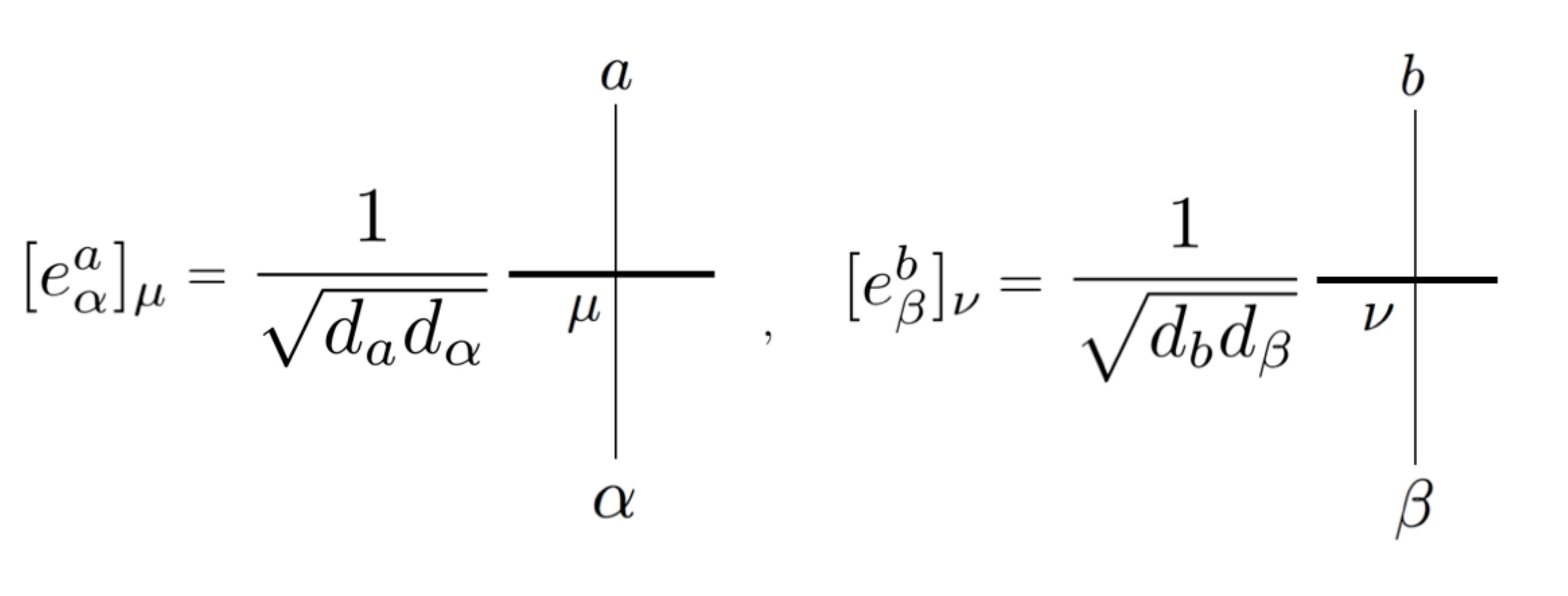}}}
\end{equation}

\begin{equation}
\label{eq:m-6j-normalization-fig-2}
\vcenter{\hbox{\includegraphics[width = 0.5\textwidth]{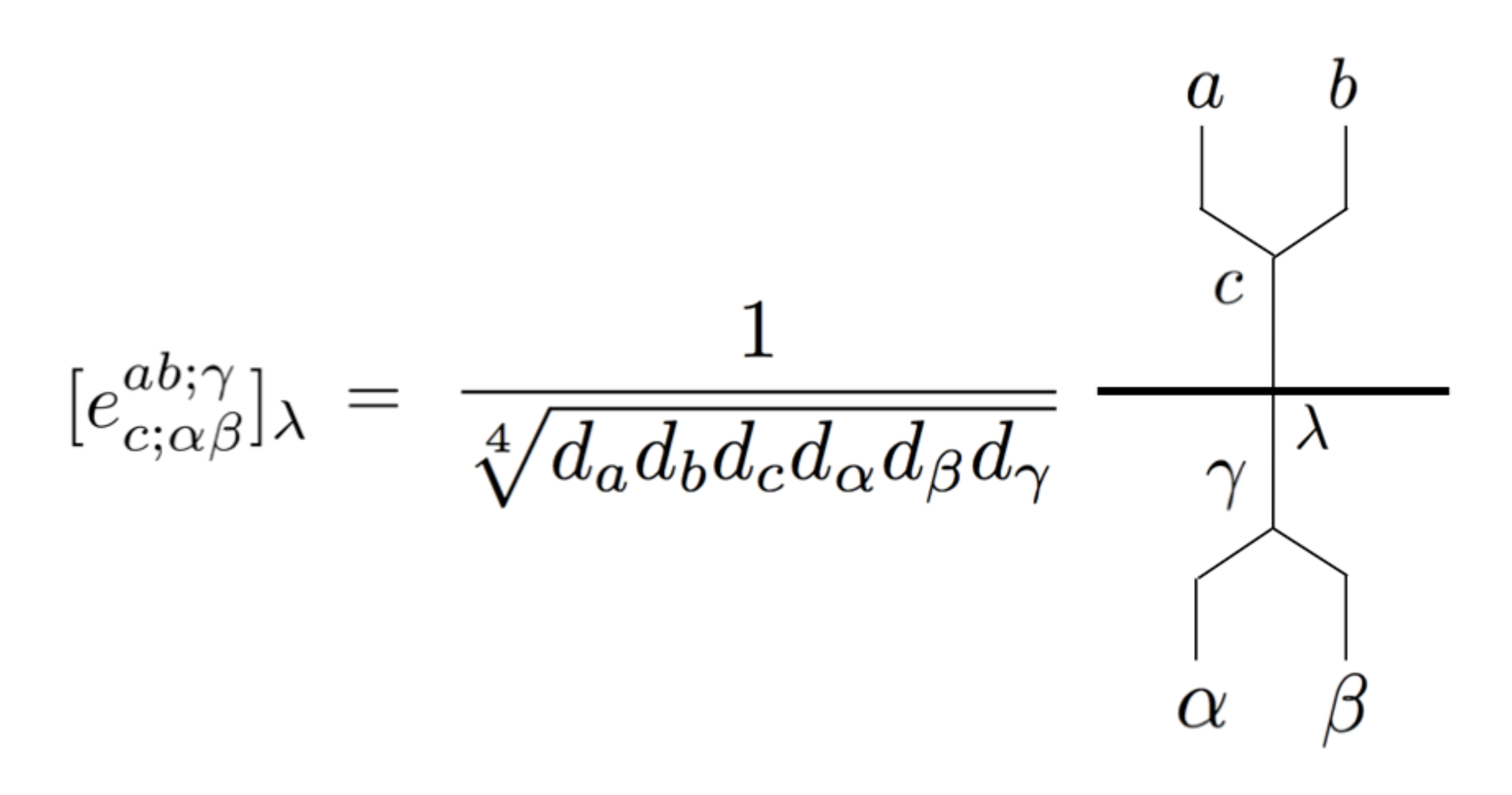}}}
\end{equation}

\noindent
As before, the basis vectors for $\Hom(a,I(\alpha)) \otimes \Hom(b,I(\beta))$ are given by $[e^a_\alpha]_\mu \otimes [e^b_\beta]_\nu$, and these bases are chosen because the following traces evaluate to 1:

\begin{equation}
\vcenter{\hbox{\includegraphics[width = 0.28\textwidth]{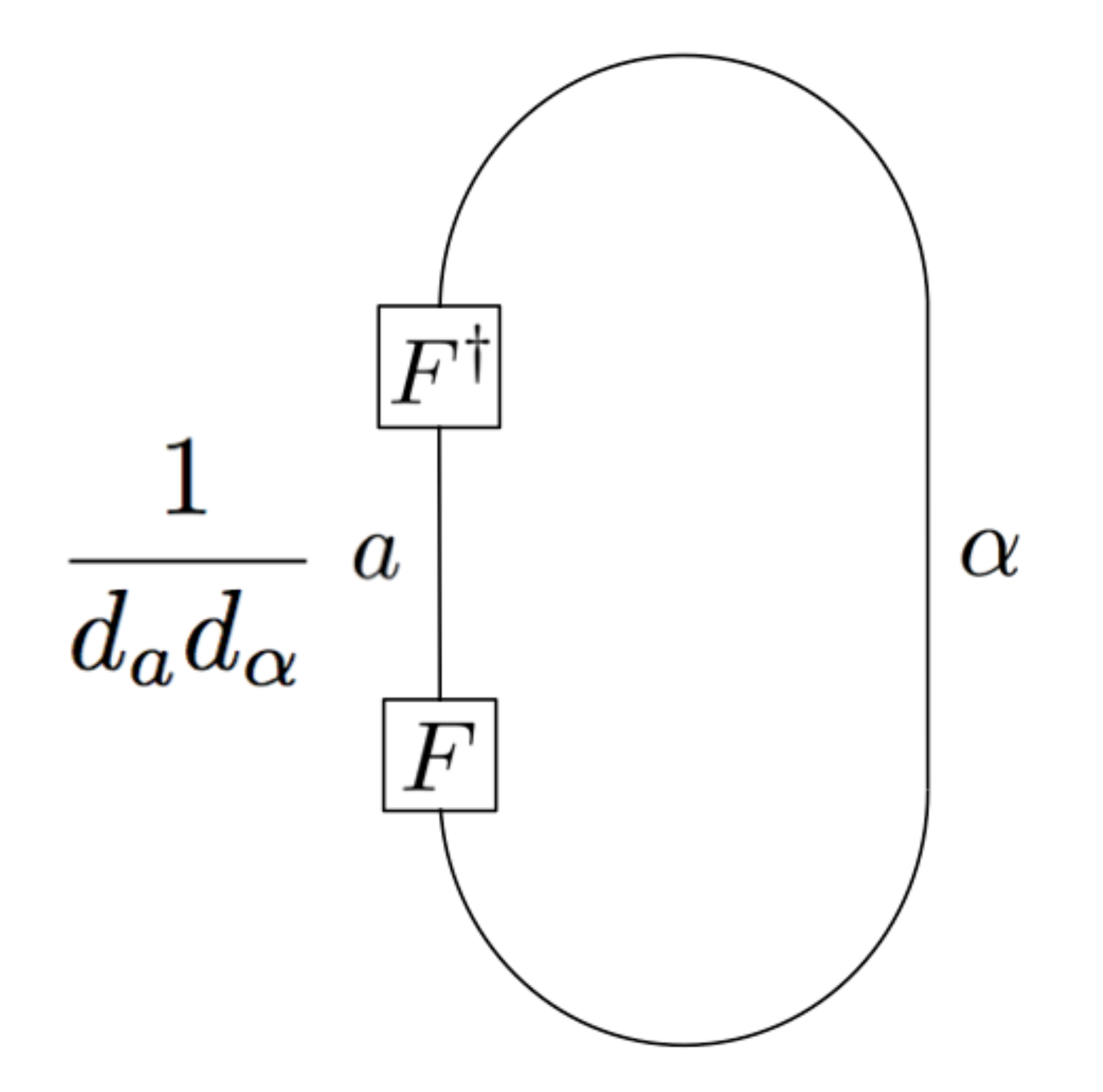}}} = 1,
\qquad
\vcenter{\hbox{\includegraphics[width = 0.41\textwidth]{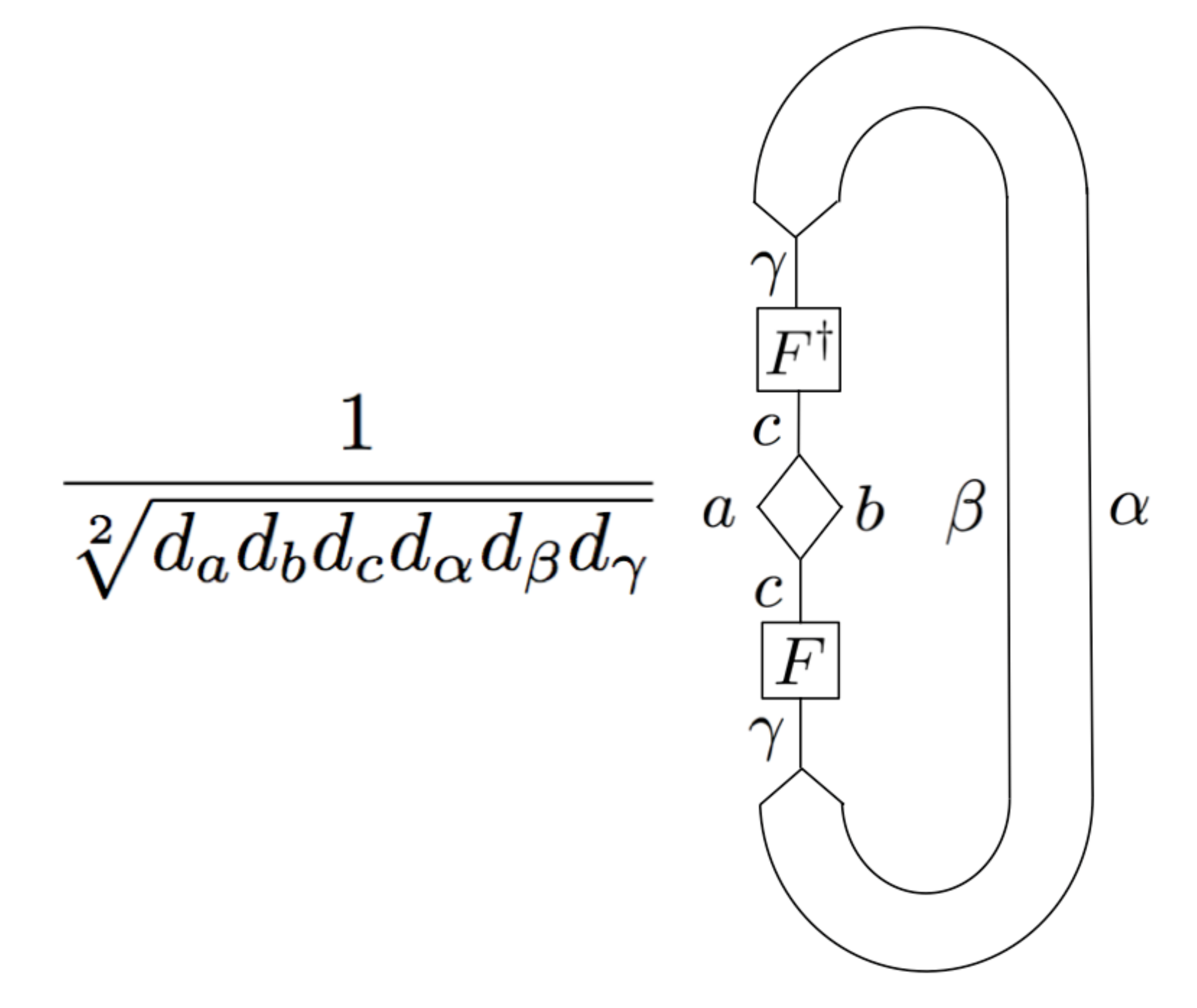}}} = 1
\end{equation}

This gives the normalization condition 

\begin{equation}
\label{eq:m-6j-normalization}
[M^{1a;\alpha}_{a;1\alpha}]^\mu_\nu = [M^{a1;\alpha}_{a;\alpha 1}]^\mu_\nu = \delta_{\mu\nu}.
\end{equation}

In general, $M$ symbols may be computed analytically using software packages. We note that given all data for the categories $\B = \mZ(\mC)$ and $\mQ = \Fun_\mC(\M,\M)$, these symbols are typically easier to calculate than the solutions to the standard $F-6j$ symbols, since Eq. (\ref{eq:m-6j-pentagon}) is at most quadratic.

\subsection{Defects between different boundary types}
\label{sec:defects-categorical}

In this section, we present the categorical model for the defects between different boundary types. These were introduced using the Hamiltonian $H_{\text{dft}}$ in Section \ref{sec:defect-hamiltonian} for the Kitaev model.

\subsubsection{Simple defect types}

Kitaev and Kong have claimed in Ref. \cite{KitaevKong} that in the Levin-Wen model based on input fusion category $\mC$, the defect types between two boundaries given by indecomposable modules $\M_i$ and $\M_j$ are given by objects in the functor category $\mC_{ij} = \Fun_\mC(\M_i, \M_j)$. In what follows, we will show this claim is true in the case of Kitaev models for Dijkgraaf-Witten theories.

In Section \ref{sec:defect-hamiltonian}, we stated that the simple defect types in a Kitaev model based on group $G$ between two boundaries given by subgroups $K_1, K_2 \subseteq G$ are given by pairs $(T,R)$, where $T \in K_1 \backslash G / K_2$ is a double coset, and $R$ is an irreducible representation of the subgroup $(K_1, K_2)^{r_T} = K_1 \cap r_T K_2 r_T^{-1}$. By Ref. \cite{Ostrik02}, we have the following Theorem:

\begin{theorem}
Let $G$ be a finite group, let $\mC = \Rep(G)$, and let $\B = \mZ(\mC)$ be its Drinfeld double. Let $\M_1$, $\M_2$ be indecomposable modules of $\mC$ given by subgroups $K_1$ and $K_2$, respectively, and trivial cocycles. The simple objects of the bimodule category $\mC_{12} = \Fun_\mC(\M_1, \M_2)$ are given by pairs $(T,R)$, where $T \in K_1 \backslash G / K_2$ is a double coset, and $R$ is an irreducible representation of the subgroup $(K_1, K_2)^{r_T} = K_1 \cap r_T K_2 r_T^{-1}$.
\end{theorem}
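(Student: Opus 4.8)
The plan is to reduce the statement to Mackey's double coset theorem by realizing both module categories as categories of equivariant sheaves. First, by Ostrik's classification (Theorem~\ref{indecomposable-module-repG}, cf.~\cite{Ostrik03}), with trivial cocycle the indecomposable module category attached to $K_i$ may be taken to be $\M_i = \Rep(K_i)$, with $\mC = \Rep(G)$ acting by $V \otimes M := (\Res^G_{K_i} V) \otimes M$. Equivalently, $\M_i$ is the category of $A_i$-modules inside $\mC$, where $A_i$ is the commutative algebra in $\Rep(G)$ of $\C$-valued functions on the coset space $G/K_i$ with $G$ acting by left translation (equivalently, the $\Rep(G)$-algebra induced from the trivial $K_i$-representation); indeed $A_i$-modules in $\Rep(G)$ are nothing but $G$-equivariant vector bundles on $G/K_i$, which form $\Rep(K_i)$ upon passing to the fibre over the identity coset. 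I would then invoke the standard fact from the theory of module categories over fusion categories: when $\M_1,\M_2$ are presented as module categories over separable algebras $A_1,A_2$ in $\mC$, the functor category $\Fun_\mC(\M_1,\M_2)$ is equivalent, as an abelian (indeed semisimple) category, to the category of $(A_1,A_2)$-bimodules in $\mC$.

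Next, since $\Rep(G)$ is symmetric and $A_1,A_2$ are commutative, the $(A_1,A_2)$-bimodules in $\Rep(G)$ coincide with modules over $A_1 \otimes A_2$, which is the algebra of $\C$-valued functions on the finite $G$-set $X := G/K_1 \times G/K_2$ (diagonal action); that is, with $G$-equivariant vector bundles on $X$. Decomposing $X$ into $G$-orbits splits $\Fun_\mC(\M_1,\M_2)$ into a direct sum indexed by those orbits, and the category of $G$-equivariant vector bundles on a transitive $G$-set $G/H$ is $\Rep(H)$. The remaining content is purely combinatorial: translating the first coordinate of a point of $X$ to the identity coset identifies the $G$-orbits on $X$ with the $K_1$-orbits on $G/K_2$, hence with the double coset space $K_1 \backslash G / K_2$, the orbit of $T = K_1 r_T K_2$ being that of $(eK_1, r_T K_2)$, whose stabilizer is $\{ g : g \in K_1 \text{ and } r_T^{-1} g r_T \in K_2 \} = K_1 \cap r_T K_2 r_T^{-1} = (K_1,K_2)^{r_T}$. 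Assembling, $\Fun_\mC(\M_1,\M_2) \simeq \bigoplus_{T \in K_1 \backslash G / K_2} \Rep\big( (K_1,K_2)^{r_T} \big)$, whose simple objects are exactly the pairs $(T,R)$ with $R$ an irreducible representation of $(K_1,K_2)^{r_T}$, as claimed. Since the associator of $\Rep(G)$ is trivial, no $2$-cocycle twist appears on the stabilizers — this is precisely where the hypothesis of trivial cocycles on $\M_1,\M_2$ enters.

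The step I would be most careful about is the abstract equivalence $\Fun_\mC(\M_1,\M_2) \simeq {}$ $(A_1,A_2)$-bimodules, together with its concrete incarnation as equivariant bundles on $X$: one must track the module- and bimodule-associativity constraints through the chain of identifications, confirm genuine semisimplicity, and check that the parametrization does not depend on the choice of representative $r_T$ (a different choice conjugates $(K_1,K_2)^{r_T}$ and relabels the $R$'s compatibly). For the purposes of this paper these are standard facts about module categories over $\Rep(G)$, so the cleanest route is to cite \cite{Ostrik02}, where bimodule categories over $\Rep(G)$ are worked out in exactly this form. As an internal consistency check, this parametrization matches the simple defect types produced by the Hamiltonian $H_{\text{dft}}$ in Section~\ref{sec:defect-hamiltonian}, and it specializes at $K_1 = K_2 = K$ to the group-theoretical category $\mC(G,1,K,1)$ whose simple objects were identified in Section~\ref{sec:algebraic-condensation}.
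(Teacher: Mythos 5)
Your argument is correct. Note, though, that the paper does not actually prove this statement: it is quoted directly from Ostrik's work on module categories over the Drinfeld double of a finite group (the reference cited just before the theorem), so your proposal supplies a derivation where the paper supplies a citation. Your route --- realizing $\M_i$ as modules over the commutative \'etale algebra $A_i$ of functions on $G/K_i$ inside $\Rep(G)$, invoking $\Fun_\mC(\M_1,\M_2)\simeq (A_1,A_2)\text{-bimodules}$, using symmetry and commutativity to identify bimodules with modules over $A_1\otimes A_2 = \Fun(G/K_1\times G/K_2)$, i.e.\ $G$-equivariant bundles on the product, and then applying the Mackey orbit--stabilizer decomposition --- is exactly the standard mechanism underlying Ostrik's classification, and each step you use (Etingof--Ostrik's functor-category/bimodule equivalence, separability of $A_i$ giving semisimplicity, orbits on $G/K_1\times G/K_2$ biject with $K_1\backslash G/K_2$ with stabilizer $K_1\cap r_TK_2r_T^{-1}$ at $(eK_1,r_TK_2)$) is sound; the side conventions for bimodules and the independence of the choice of $r_T$ are harmless, as you note. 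What the citation buys is brevity and automatic coverage of the twisted case; what your derivation buys is transparency about where the trivial-cocycle hypothesis enters, and an explicit match with the Hamiltonian defect analysis of Section \ref{sec:defect-hamiltonian}. One caveat if you wanted to extend your argument to nontrivial cocycles: the algebras become twisted group algebras $\C_\psi[K_i]$, which are no longer commutative, so the step ``bimodules $=$ modules over $A_1\otimes A_2$'' would need to be replaced by a genuine bimodule (or projective-representation-of-stabilizer) analysis rather than applied verbatim.
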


As a direct corollary of this theorem, we now see that the defect types between these two boundary types are exactly given by the objects of the functor category $\mC_{12}$.

In general, the category $\mC_{ij}$ is not a fusion category, as there is no canonical way to define a tensor product. However, if we consider the category of all such functor categories over the input fusion category $\mC$, we get a multi-fusion category $\mfC$.  This multi-fusion category is called an {\it $n\times n$ $2$-matrix} in Ref. \cite{Chang15}, where $n$ is the number of different indecomposable modules. Such a unitary multi-fusion category has well-defined quantum dimensions for all simple objects, which are given by formulas above.

\begin{figure}
\centering
\includegraphics[width = 0.55\textwidth]{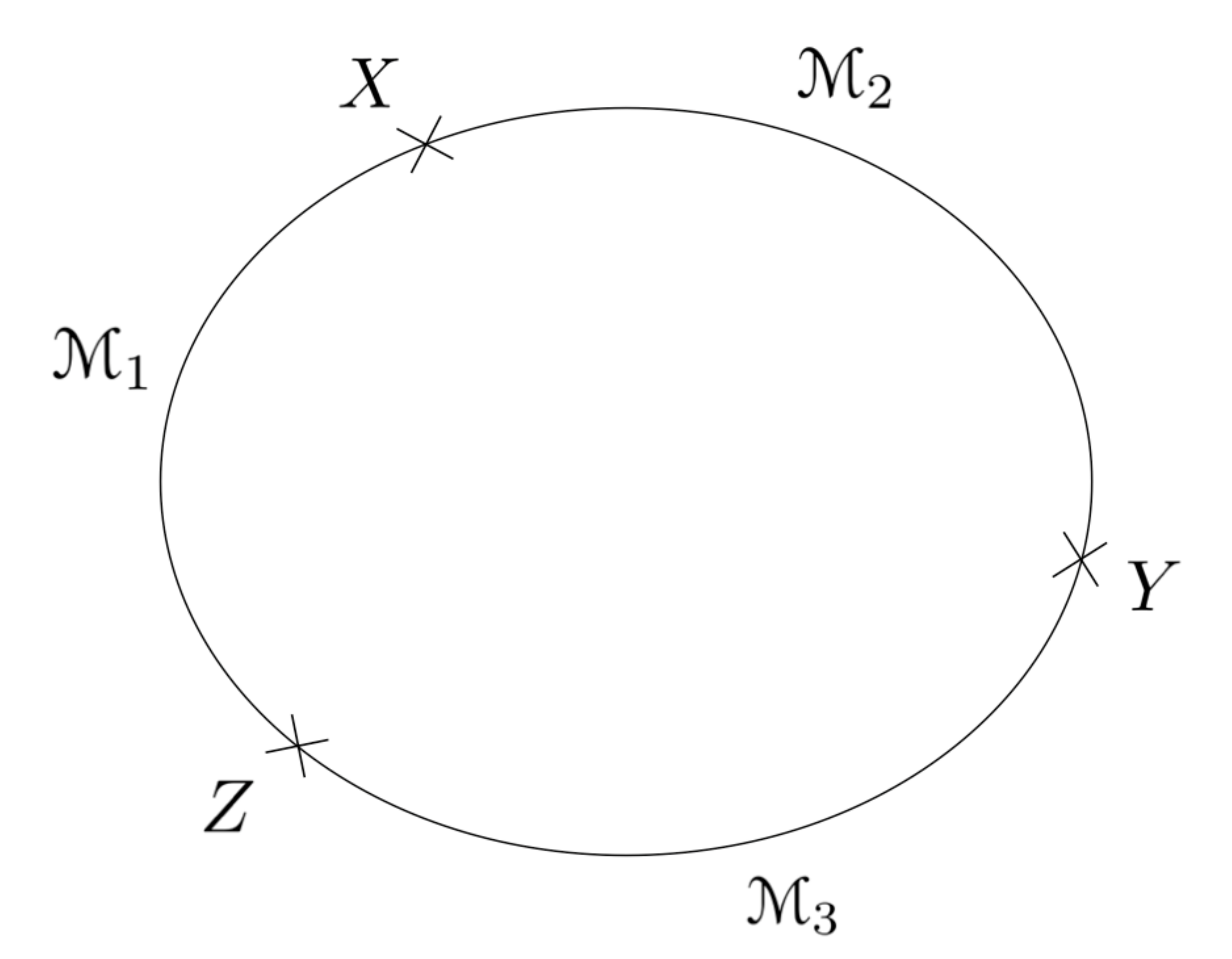}
\caption{Three boundary types and three defects on a hole. If we view all defects as going in the clockwise direction, we have $X \in \mC_{12}$, $Y \in \mC_{23}$, and $Z \in \mC_{31}$. If we view them counterclockwise, we have $X \in \mC_{12}^{\text{op}}$, $Y \in \mC_{23}^{\text{op}}$, and $Z \in \mC_{31}^{\text{op}}$.}
\label{fig:defects-fusion}
\end{figure}

\subsubsection{Fusion and braiding of boundary defects}

Using the multi-fusion category $\mfC$, we can mathematically formulate many physical processes that involve boundary defects. One such process that is most interesting to consider is the fusion of two boundary defects. Suppose we have a hole with three boundary types given by indecomposable modules $\M_1$, $\M_2$, and $\M_3$. We have three boundary defects, as shown in Fig. \ref{fig:defects-fusion}. If we view all defects as going in the clockwise direction, we have $X \in \mC_{12}$, $Y \in \mC_{23}$, and $Z \in \mC_{31}$. If we view them counterclockwise, we have $X \in \mC_{12}^{\text{op}}$, $Y \in \mC_{23}^{\text{op}}$, and $Z \in \mC_{31}^{\text{op}}$. This is because a defect $X_{ij}$ between two indecomposable modules $\M_i$ and $\M_j$ actually corresponds to two functors: one is from $\M_i$ to $\M_j$, and the other is from $\M_j$ to $\M_i$ (i.e. it goes in the opposite direction and hence lives in the category $\mC_{ij}^{\text{op}}$). The quantum dimension of a defect $X_{ij}$ depends only on the quantum dimension of a functor (object) in one of these categories, as the other functor is completely determined by this choice.

We would like to consider what happens when we fuse the two defects $X$ and $Y$, by contracting the boundary portion corresponding to $\M_2$. We expect to get a new defect $Z'$ between $\M_1$ and $\M_3$. As we have discussed, the category that each defect $X$, $Y$ belongs to depends on which way we move around the hole. Suppose we begin by viewing clockwise. When we fuse $X$ and $Y$, however, the fusion must occur in both of the categories $\mC_{12}$ and $\mC_{12}^{\text{op}}$ for $X$, and in both of the categories $\mC_{23}$ and $\mC_{23}^{\text{op}}$ for $Y$. As a result, the fusion of $X$ and $Y$ must be given by the procedure

\begin{equation}
\label{eq:defects-fusion}
\mC_{12} \otimes \mC_{23}
\rightarrow
(\mC_{12} \boxtimes \mC_{12}^{\text{op}}) \otimes (\mC_{23} \boxtimes \mC_{23}^{\text{op}})
\rightarrow
\mZ(\mfC)^{\otimes 2}
\rightarrow
\mZ(\mfC)
\rightarrow
\mC_{13} \boxtimes \mC_{13}^{\text{op}}
\rightarrow
\mC_{13}
\end{equation}

Hence, the fusion of two defects in $\mC_{ij}$ and $\mC_{jk}$ actually occurs in the Drinfeld center $\mZ(\mfC)$. The Drinfeld center is a modular tensor category, so it naturally gives a braiding structure to the defects.

As it turns out, Theorem 2.4 of Ref. \cite{Chang15} says that the categories $\mfC$, $\mC$, and $\mC_{ii}$ are all Morita equivalent:

\begin{equation}
\mZ(\mfC) = \mZ(\mC) = \mZ(\mC_{ii}).
\end{equation}

\noindent
This means that the bulk anyons in $\mZ(\mC)$ may also be braided with the boundary defects.

\begin{figure}
\centering
\includegraphics[width = 0.55\textwidth]{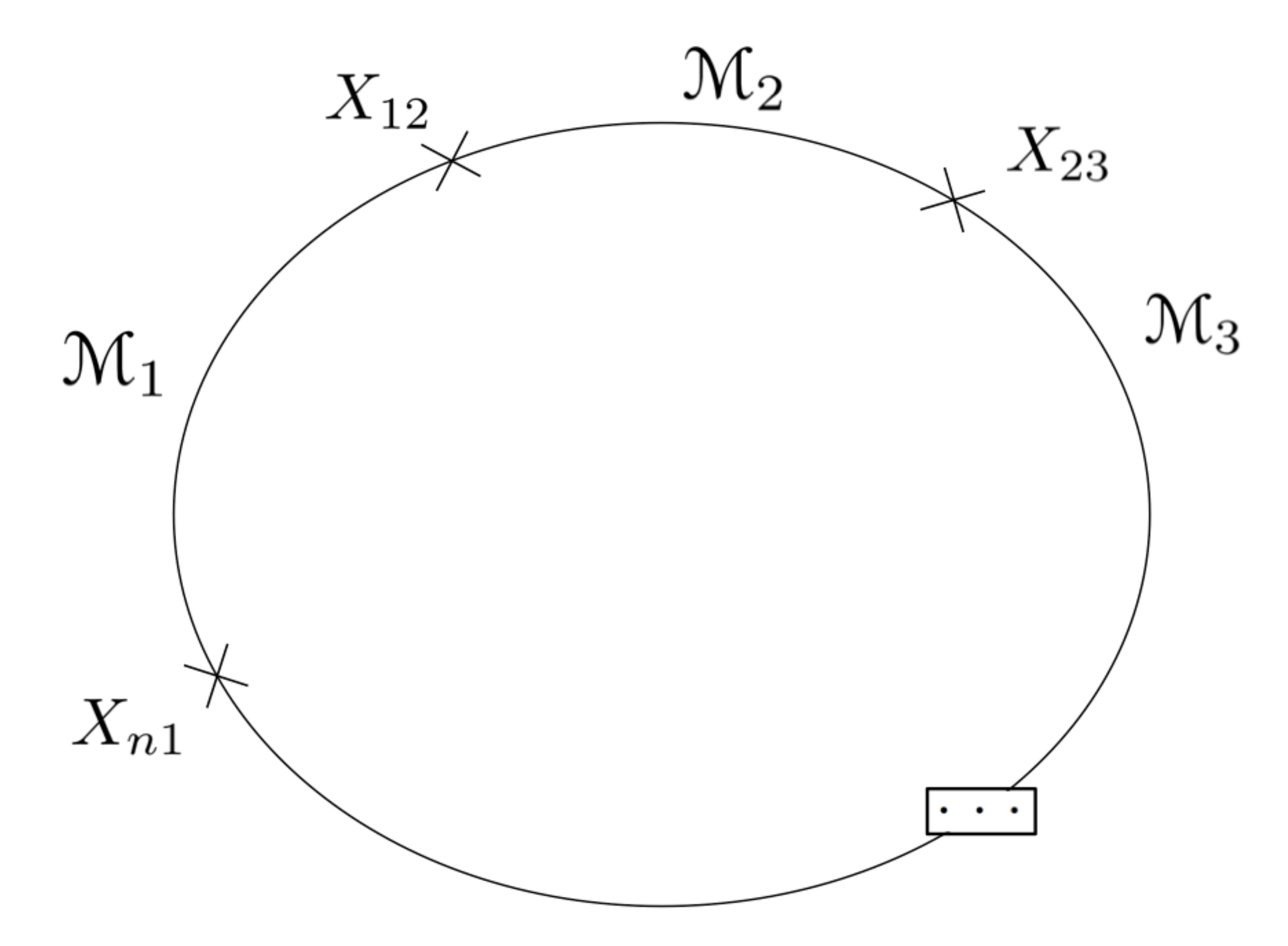}
\caption{Fusion of $n$ boundary defects on a hole with $n$ boundary times given by indecomposable modules $\M_1$, ... $\M_n$.}
\label{fig:defects-fusion-2}
\end{figure}

\subsubsection{Topological degeneracy}

As in the case of bulk anyons, topological degeneracy can also arise from the fusion of boundary defects. The most interesting case is when we have $n$ defects $X_{12} \in \mC_{12}$, $X_{23} \in \mC_{23}$, ... $X_{n1} \in \mC_{n1}$, and we would like to fuse them so that no boundary defect remains, and the resulting boundary is in the ground state of one of the boundaries (say the $\M_1$ boundary). The setup is illustrated in Fig. \ref{fig:defects-fusion-2}. Then, the topological degeneracy is given by the hom-space

\begin{equation}
\Hom(\one_{\M_1}, X_{12} \otimes X_{23} \otimes ... \otimes X_{n1})
\end{equation}

\noindent
where $\one_{\M_1}$ is the tensor unit in the fusion category $\mC_{11}$ (i.e. the trivial boundary excitation).

\subsection{Example: The Toric Code}
\label{sec:tc-algebraic}

\subsubsection{Topological order}

The topological order of the toric code is given by the modular tensor category $\B = \mfD(\Z_2)$. The $F$ symbols in this category are all trivial, and the $R$ symbols are given by $R^{a,b} = e^{\pi i a_2 b_1}$, where we write $a = e^{a_1} m^{a_2}$, $b = e^{b_1} m^{b_2}$ for $a_1,a_2,b_1,b_2 = 0,1$ ($\epsilon = em$) \cite{Barkeshli14}. The modular $\mathcal{S}$, $\mathcal{T}$ matrices are given by \cite{BakalovKirillov}:

\begin{equation}
\mathcal{S} = \frac{1}{4}
\begin{bmatrix}
1 & 1 & 1 & 1 \\
1 & 1 & -1 & -1 \\
1 & -1 & 1 & -1 \\
1 & -1 & -1 & 1
\end{bmatrix}
\end{equation}

\begin{equation}
\mathcal{T} = \text{diag}(1,1,1,-1).
\end{equation}

\noindent
Here, the rows/columns are all ordered as $(1,e,m,\epsilon)$.

\subsubsection{Lagrangian algebras}

In this section, we will calculate the Lagrangian algebras corresponding to the gapped boundaries of the toric code ($\B = \mZ(\Rep(\Z_2)$). There are 4 simple objects in this category, with fusion rules given by the multiplication of $\Z_2 \times \Z_2$. The calculation proceeds as follows:

\begin{enumerate}
\item
The Frobenius-Perron dimension of $\B$ is $\FPdim(\B) = 4$. All simple objects in $\B$ have dimension 1, so each Lagrangian algebra will have two simple objects in its decomposition.
\item
There are two bosons in $\B$, namely $e$ and $m$.
\item
By (1) and (2), the only possibilities for Lagrangian algebras would be $\A_1 = 1+e$, $\A_2 = 1+m$. We check to see that $\A_1,\A_2$ indeed satisfy the inequality (\ref{eq:lagrangian-algebra-inequality}). Hence the gapped boundaries of the toric code are exactly given by these two Lagrangian algebras.
\end{enumerate}

\subsubsection{Condensation procedure: $1+e$ boundary}

We now illustrate the condensation procedure described in Eq. (\ref{eq:condensation-quotient-IC}), for the $1+e$ boundary of the toric code. We first form the pre-quotient category ${\widetilde{\mQ}}$. By Definition \ref{quotient-cat-def}, we have

\begin{equation}
\begin{gathered}
\Hom_{\widetilde{\mQ}} (1,1) = \Hom_\B (1,1+e) \cong \C \\
\Hom_{\widetilde{\mQ}} (1,e) = \Hom_\B (1,e+1) \cong \C \\
\Hom_{\widetilde{\mQ}} (1,m) = \Hom_\B (1,m+\epsilon) = 0 \\
\Hom_{\widetilde{\mQ}} (1,\epsilon) = \Hom_\B (1,\epsilon+m) = 0 \\
\Hom_{\widetilde{\mQ}} (m,m) = \Hom_\B (m,m+\epsilon) \cong \C \\
\Hom_{\widetilde{\mQ}} (m,\epsilon) = \Hom_\B (m,\epsilon + m) \cong \C
\end{gathered}
\end{equation}

\noindent
Since we already have $\Hom_{\widetilde{\mQ}} (1,1) \cong \C$ and $\Hom_{\widetilde{\mQ}} (m,m) \cong \C$, there are no nontrivial splitting idempotents in the category ${\widetilde{\mQ}}$, and we have $\mQ = \widetilde{\mQ}$ (the completion is trivial). This gives the following condensation products:

\begin{equation}
1,e \rightarrow 1, \qquad m,\epsilon \rightarrow m
\end{equation}

\noindent
We would like to note that this is the exact same result as obtained in Section \ref{sec:tc-hamiltonian-example} using Theorem \ref{condensation-products}. The same procedure may be carried out with the $1+m$ boundary, and will also agree with the result of Section \ref{sec:tc-hamiltonian-example}.

\subsubsection{$M$ symbols}

As shown in Ref. \cite{Barkeshli14}, all $F$ symbols for the toric code are trivial. As a result, it is not hard to show using all of the equations from Section \ref{sec:m-symbols} that all $M$-3$j$ and $M$-6$j$ symbols are trivial for both boundaries of the toric code.

\subsection{Example: $\mfD(S_3)$}
\label{sec:ds3-algebraic-example}

\subsubsection{Topological order}

When $G = S_3$, the topological order of the resulting Kitaev model is given by the modular tensor category $\B = \Rep(D(S_3)) = \mZ(\Rep(S_3))$. As discussed in Section \ref{sec:ds3-hamiltonian-example}, there are 8 simple objects in this category, $A,B,...,H$. The fusion rules \cite{Cui15} are given in Fig. \ref{tab:DS3-fusion}, and the $\mathcal{S}$, $\mathcal{T}$ matrices are listed below. The $F$, and $R$ symbols for this category may be found in Appendix A of Ref. \cite{Cui15}.

{\tiny
\begin{table}\caption{Fusion rules of $\mfD(S_3)$}\label{tab:DS3-fusion}
\begin{tabular}{|c|c|c|c|c|c|c|c|c|}
\hline $\otimes$ &$A$ &$B$ &$C$ &$D$ &$E$ &$F$ &$G$ &$H$\\ \hline
$A$ &$A$ &$B$ &$C$ &$D$ &$E$& $F$ &$G$ &$H$\\ \hline
$B$ &$B$ &$A$ &$C$& $E$ &$D$ &$F$ &$G$ &$H$\\ \hline
$C$ &$C$ &$C$ &$A\oplus B\oplus C$& $D\oplus E$ &$D\oplus E$ & $G\oplus H$& $F\oplus H$ &$F\oplus G$\\ \hline
\multirow{2}{*}{$D$} &\multirow{2}{*}{$D$} &\multirow{2}{*}{$E$} &\multirow{2}{*}{$D\oplus E$}& $A\oplus C\oplus  F$ & $B\oplus C\oplus F$ & \multirow{2}{*}{$D\oplus E$} & \multirow{2}{*}{$D\oplus E$} & \multirow{2}{*}{$D\oplus E$} \\
& & & & $\oplus G\oplus H$ & $\oplus G\oplus H$ & & &  \\ \hline
\multirow{2}{*}{$E$} &\multirow{2}{*}{$E$}& \multirow{2}{*}{$D$}& \multirow{2}{*}{$D\oplus E$} & $B\oplus C\oplus F$ & $A\oplus C\oplus F$ & \multirow{2}{*}{$D\oplus E$} &\multirow{2}{*}{$D\oplus E$} & \multirow{2}{*}{$D\oplus E$} \\
& & & & $\oplus G\oplus H$ & $\oplus G\oplus H$ & & &  \\  \hline
$F$ &$F$ & $F$& $G\oplus H$& $D\oplus E$ & $D\oplus E$ & $A\oplus B\oplus F$ & $H\oplus C$ & $G\oplus C$ \\ \hline
$G$ &$G$ & $G$& $F\oplus H$ & $D\oplus E$ & $D\oplus E$ & $H\oplus C$ & $A\oplus B\oplus G$ & $F\oplus C$ \\ \hline
$H$ &$H$ & $H$& $F\oplus G$ & $D\oplus E$ & $D\oplus E$ & $G\oplus C$ & $F\oplus C$ & $A\oplus B\oplus H$\\\hline
\end{tabular}
\end{table}
}

The modular $\mathcal{S}$ and $\mathcal{T}$ matrices of $\mfD(S_3)$ are given by \cite{Cui15}:

\begin{equation}
\label{eq:ds3-S}
\mathcal{S} = \frac{1}{6}
\begin{bmatrix}
1 & 1 & 2 & 3 & 3 & 2 & 2 & 2 \\
1 & 1 & 2 & -3 & -3 & 2 & 2 & 2 \\
2 & 2 & 4 & 0 & 0 & -2 & -2 & -2 \\
3 & -3 & 0 & 3 & -3 & 0 & 0 & 0 \\
3 & -3 & 0 & -3 & 3 & 0 & 0 & 0 \\
2 & 2 & -2 & 0 & 0 & 4 & -2 & -2 \\
2 & 2 & -2 & 0 & 0 & -2 & -2 & 4 \\
2 & 2 & -2 & 0 & 0 & -2 & 4 & -2 \\
\end{bmatrix}
\end{equation}

\begin{equation}
\label{eq:ds3-T}
\mathcal{T} = \text{diag}(1,1,1,1,-1,1,\omega,\omega^2)
\end{equation}

\noindent
where all rows and columns are ordered alphabetically, $A-H$, and $\omega = e^{2 \pi i / 3}$ is the primitive third root of unity.

\subsubsection{Lagrangian algebras}

We now determine gapped boundaries of the $\mfD(S_3)$ model by computing all Lagrangian algebras in $\B$, using the procedure of Section \ref{sec:frobenius-algebras}:

First, the Frobenius-Perron dimension of $\B$ is $\FPdim(\B) = 36$, so each Lagrangian algebra $\A$ should have

\begin{equation}
\label{eq:ds3-dim}
\FPdim(\A) = 6.
\end{equation}

Second, by Eq. (\ref{eq:ds3-T}) the bosons in $\B$ are the anyons $A,B,C,D,F$. By Proposition \ref{bosons}, this means that

\begin{equation}
\label{eq:EGH-0}
n_E = n_G = n_H = 0.
\end{equation}

We now apply the inequality (\ref{eq:lagrangian-algebra-inequality}). By Eq. (\ref{eq:EGH-0}) and the fusion rule $C \otimes F \rightarrow G \oplus H$, we have

\begin{equation}
\label{eq:CF-0}
n_C n_F = 0.
\end{equation}

\noindent
We use casework to obtain the final decomposition $\A = \oplus n_s s$:

\vspace{2mm}
\begin{itemize}[wide, label={}, listparindent=1.5em, parsep=0.25mm, itemsep=3mm, labelindent=0pt]

\item
{\it \underline{Case I}}: $n_B > 0$.

Suppose $n_B > 0$. First, by the fusion rule $B \otimes B \rightarrow A$ and the inequality (\ref{eq:lagrangian-algebra-inequality}), we know that $n_B = 1$. By the fusion rule $B \otimes D \rightarrow E$ and Equations (\ref{eq:lagrangian-algebra-inequality}) and (\ref{eq:EGH-0}), we have $n_D = 0$. Finally, by Equations (\ref{eq:ds3-dim}) and (\ref{eq:CF-0}), we know that the only two possible Lagrangian algebras are 

\begin{equation}
\A_1 = A+B+2C, \qquad \A_2 = A+B+2F.
\end{equation}

By checking the inequality (\ref{eq:lagrangian-algebra-inequality}) on the rest of the fusion rules, we see that $\A_1$ and $\A_2$ are indeed both Lagrangian algebras in $\B$.

\noindent
{\it \underline{Case II}}: $n_B = 0$.

Suppose $n_B = 0$. By Eq. (\ref{eq:ds3-dim}), we know that $n_D = 1$, since $C,F$ have even quantum dimensions. Hence, the only two possible Lagrangian algebras are 

\begin{equation}
\A_3 = A+C+D, \qquad \A_4 = A+D+F.
\end{equation}

By checking the inequality (\ref{eq:lagrangian-algebra-inequality}) on the rest of the fusion rules, we see that $\A_3$ and $\A_4$ are indeed both Lagrangian algebras in $\B$.
\end{itemize}

We note that the Lagrangian algebras $\A_1, ... \A_4$ obtained here are exactly the same as the result we obtained by using Theorem \ref{inverse-condensation-products} in Section \ref{sec:ds3-hamiltonian-example}.

\subsubsection{Condensation procedure: $A+C+D$ boundary}

We now illustrate the condensation procedure of Eq. (\ref{eq:condensation-quotient-IC}) on the $A+C+D$ boundary of the $\mfD(S_3)$ theory. We first form the pre-quotient category ${\widetilde{\mQ}}$, which has the same objects as $\B = \mZ(\Rep(S_3))$. By Definition \ref{quotient-cat-def}, we have:

\begin{equation}
\Hom_{\widetilde{\mQ}}(A,C) = \Hom_\B(A, C+A+B+C+D+E) \cong \C
\end{equation}

\noindent
Similarly, 

\begin{equation}
\begin{gathered}
\Hom_{\widetilde{\mQ}}(A,D) \cong \C \qquad \Hom_{\widetilde{\mQ}}(C,D) \cong \C \\
\Hom_{\widetilde{\mQ}}(A,A) \cong \C \qquad \Hom_{\widetilde{\mQ}}(B,B) \cong \C \\
\Hom_{\widetilde{\mQ}}(B,C) \cong \C \qquad \Hom_{\widetilde{\mQ}}(B,E) \cong \C \\
\Hom_{\widetilde{\mQ}}(F,D) \cong \C \qquad \Hom_{\widetilde{\mQ}}(F,F) \cong \C \\
\Hom_{\widetilde{\mQ}}(F,G) \cong \C \qquad \Hom_{\widetilde{\mQ}}(F,H) \cong \C
\end{gathered}
\end{equation}

\noindent
Many other hom-sets in ${\widetilde{\mQ}}$ may be constructed from the above by composition (simply tensor product the corresponding hom-spaces). All other hom-sets in this category are zero. As in the case of the toric code, no idempotent completion is necessary, as all endomorphism spaces in $\widetilde{\mQ}$ for simple objects are one-dimensional and hence have no nontrivial splitting idempotents. It follows that the following rules describe the condensation of simple bulk particles onto the $A+C+D$ boundary:

\begin{multicols}{2}
\begin{enumerate}[label=(\roman*),leftmargin=0.5in]
\item
$A \rightarrow {A}$
\item
$B \rightarrow {B}$
\item
$C \rightarrow {A} \oplus {B}$
\item
$D \rightarrow {A} \oplus {F}$
\item
$E \rightarrow {B} \oplus {F}$
\item
$F,G,H \rightarrow {F}$
\end{enumerate}
\end{multicols}

We note that this is exactly the same result we obtained in Section \ref{sec:ds3-hamiltonian-example}, if we identify the boundary excitation label $F$ with the label $C$ from that section.

The $A+F+D$ boundary is easily shown to have the same condensation rules and properties, with all instances of $C$ and $F$ switched.

\subsubsection{Condensation procedure: $A+B+2C$ boundary}

We now illustrate the condensation procedure of Eq. (\ref{eq:condensation-quotient-IC}) on the $A+B+2C$ boundary of the $\mfD(S_3)$ theory, as this example will require a nontrivial idempotent completion. As before, we first use Definition \ref{quotient-cat-def} to construct the pre-quotient category $\widetilde{\mQ}$. This gives the following hom-sets:

\begin{equation}
\begin{gathered}
\Hom_{\widetilde{\mQ}}(A,A) \cong \C \qquad \Hom_{\widetilde{\mQ}}(A,B) \cong \C \\
\Hom_{\widetilde{\mQ}}(A,C) \cong \C^2 \qquad \Hom_{\widetilde{\mQ}}(D,D) \cong \C^3 \\
\Hom_{\widetilde{\mQ}}(D,E) \cong \C^3 \qquad \Hom_{\widetilde{\mQ}}(E,E) \cong \C^3 \\
\Hom_{\widetilde{\mQ}}(F,F) \cong \C^2 \qquad \Hom_{\widetilde{\mQ}}(F,G) \cong \C^2 \\
\Hom_{\widetilde{\mQ}}(F,H) \cong \C^2 \qquad \Hom_{\widetilde{\mQ}}(G,G) \cong \C^2 \\
\Hom_{\widetilde{\mQ}}(G,H) \cong \C^2 \qquad \Hom_{\widetilde{\mQ}}(H,H) \cong \C^2 \\
\end{gathered}
\end{equation}

All other hom-sets in $\widetilde{\mQ}$ between simple objects in $\B$ are either tensor products of the above (in case of composition), or zero. It follows that the quotient functor acts as follows on the simple objects of $\B$:

\begin{multicols}{2}
\begin{enumerate}[label=(\roman*),leftmargin=0.5in]
\item
$A,B \rightarrow {A}$
\item
$C \rightarrow {2 \cdot A}$
\item
$D,E \rightarrow D$
\item
$F,G,H \rightarrow F$
\end{enumerate}
\end{multicols}

However, we would now like to note that the pre-quotient category $\widetilde{\mQ}$, with simple objects given by $A$,$D$, and $F$, is not semisimple, as we see $\Hom_{\widetilde{\mQ}}(D,D) \cong \C^3$ and $\Hom_{\widetilde{\mQ}}(F,F) \cong \C^2$, when they should be one-dimensional. This tells us that we must perform the canonical idempotent completion of $\widetilde{\mQ}$ to ${\mQ}$, which transforms the simple objects of $\widetilde{\mQ}$ as follows:

\begin{enumerate}[label=(\roman*),leftmargin=0.5in]
\item
$A \rightarrow A$
\item
$D \rightarrow (D,p_1) \oplus (D,p_2) \oplus (D,p_3)$
\item
$F \rightarrow (F,q_1) \oplus (F,q_2)$
\end{enumerate}

\noindent
where each $p_i$ is a splitting idempotent in $\Hom_{\widetilde{\mQ}}(D,D)$, and each $q_j$ is a splitting idempotent in $\Hom_{\widetilde{\mQ}}(F,F)$. (In general, if $\Hom_{\widetilde{\mQ}}(X,X)$ is $n$-dimensional, there are $n$ splitting idempotents).

Hence, we have the following rules for the overall condensation procedure of simple bulk anyons of $\mfD(S_3)$ to the $A+B+2C$ boundary:

\begin{multicols}{2}
\begin{enumerate}[label=(\roman*),leftmargin=0.5in]
\item
$A,B \rightarrow {A}$
\item
$C \rightarrow {2 \cdot A}$
\item
$D,E \rightarrow (D,p_1) \oplus (D,p_2) \oplus (D,p_3)$
\item
$F,G,H \rightarrow (F,q_1) \oplus (F,q_2)$
\end{enumerate}
\end{multicols}

We note that this is exactly the same result we obtained in Section \ref{sec:ds3-hamiltonian-example}, if we identify the above boundary excitation labels with those of Section \ref{sec:ds3-hamiltonian-example} as follows:

\begin{multicols}{2}
\begin{enumerate}[label=(\roman*),leftmargin=0.5in]
\item
$A \rightarrow 1$
\item
$(F,q_1) \rightarrow r$
\item
$(F,q_2) \rightarrow r^2$
\item
$(D,p_1) \rightarrow s$
\item
$(D,p_2) \rightarrow sr$
\item
$(D,p_3) \rightarrow sr^2$
\end{enumerate}
\end{multicols}

The $A+B+2F$ boundary is easily shown to have the same condensation rules and properties, with all instances of $C$ and $F$ switched.

\subsubsection{$M$-3$j$ symbols of the $A+C+D$ boundary}
\label{sec:ds3-m3j}

We have computed the $M$-3$j$ symbols of the $A+C+D$ boundary, up to a sign in a few cases. By Equations (\ref{eq:m-3j-pentagon}) and (\ref{eq:m-3j-braid}-\ref{eq:m-3j-normalization-2}), we have:

\begin{equation}
\begin{gathered}
M^{AX}_X = 1, \text{ } X = A,C,D \\
M^{CC}_A = \frac{1}{\sqrt{6}} \qquad M^{CC}_C = \pm \frac{i}{\sqrt{2}} \\
M^{DD}_A = \frac{1}{\sqrt{6}} \qquad M^{DD}_C = \pm i \sqrt{\frac{2}{3}} \\
M^{CD}_D = M^{DC}_D = \mp i
\end{gathered}
\end{equation}

\noindent
Other $M-3j$ symbols for this boundary are all zero.

\subsection{Example: Genons}
\label{sec:categorical-genons}

In this section, we focus on an example of the broad categorical theory of defects between two types of gapped boundaries that was developed in Section \ref{sec:defects-categorical}. Specifically, we would like to consider genons, in the case where the input fusion category is a bilayer $\D = \mC \boxtimes \mC$ for some unitary modular tensor category $\mC$. Then, the topological order of the system is given by $\B = \mZ(\D) = \D \boxtimes \overbar{\D} = \mC_1 \boxtimes {\mC_2} \boxtimes \overbar{\mC_3} \boxtimes \overbar{\mC_4}$, where $\mC_1, ... \mC_4$ are four disjoint copies of $\mC$. Hence, we have two identical, non-interacting copies of a TQFT with topological order $\mathcal{E} = \mC \boxtimes \overbar{\mC}$, namely $\mathcal{E}_1 = \mC_1 \boxtimes \overbar{\mC_3}$ and $\mathcal{E}_2 = \mC_2 \boxtimes \overbar{\mC_4}$. Defects in this system arise when two gapped boundaries (Lagrangian algebras) of $\B$ meet; genons are special types of these defects that break the $\Z_2$ symmetry in $\B$ (which interchanges the two layers $\mE_1, \mE_2$) \cite{Barkeshli14}. 

In Section \ref{sec:genon-hamiltonian}, we presented a Hamiltonian implementation of genons in the case where $\mC = \C[G]$ for an arbitrary finite group $G$, and we have chosen a particular modular structure for the fusion category $\C[G]$. We saw that genons were given by a defect between the boundary type given by the trivial subgroup $K_1 = \{1_G\} \times \{1_G\}$, and the subgroup $K_2 = G \times \{1_G\}$. Categorically, these subgroups correspond to indecomposable module categories $\M, \mN$ in $\B = \mZ(\C[G \times G])$, and the simple defect types (genons) correspond to the simple objects in $\Fun_\mC(\M,\mN)$.

We will now generalize this result to provide a construction of genons from any system with topological order $\B$ given by the Drinfeld center of a bilayer $\D$ of a modular tensor category $\mC$. This can be done as follows: Since $\mE = \mC \boxtimes \overbar{\mC}$ is already a modular tensor category, let us find all Lagrangian algebras of $\mE$. Let $\A$ be a Lagrangian algebra such that the functor category $\Fun_\mC(\M,\M) \cong \mC$, where $\M$ is the indecomposable module category corresponding to algebra $\A$ (see Theorem \ref{indecomposable-module-lagrangian-algebra}); in general, $\Fun_\mC(\M,\M)$ is Morita equivalent to $\mC$). We now construct two gapped boundaries (Lagrangian algebras) $\A_{1234}$, $\A_{1423}$ in the MTC $\B$: 

\begin{equation}
\begin{gathered}
\A_{1324} = \A_{13} \boxtimes \A_{24} \\
\A_{1423} = \A_{14} \boxtimes \A_{23}
\end{gathered}
\end{equation}

\noindent
Here, $\A_{ij}$ the Lagrangian algebra corresponding to $\A$ when considered in the modular tensor category $\mE_{ij} = \mC_i \boxtimes \overbar{\mC_j} \cong \mE$. As indecomposable module categories of $\B$, we have

\begin{equation}
\begin{gathered}
\M_{1324} = \M_{13} \boxtimes \M_{24} \\
\M_{1423} = \M_{14} \boxtimes \M_{23}
\end{gathered}
\end{equation}

\noindent
where $\M_{ij}$ is the indecomposable module category in $\mE_{ij}$ corresponding to the Lagrangian algebra $\A_{ij}$. To generalize the language used by Ref. \cite{Bark13a}, $\A_{1324}$ represents the {\it intralayer} gapped boundary, while $\A_{1423}$ represents the {\it interlayer} gapped boundary. Then, the genon corresponds to a simple defect between these two gapped boundaries, i.e. it is a simple object in the functor category $\Fun_\D(\M_{1324}, \M_{1423})$. The ``bare defect'' discussed in Section \ref{sec:genon-hamiltonian} is one of these simple objects.

Genons play a very important role in quantum computation, as their braiding, when combined with anyons, has the power to give universal quantum computation (see Ref. \cite{Barkeshli16}).

\vspace{2mm}
\section{Implementation using surface codes}
\label{sec:circuits}

In this chapter, we will present one possible method to physically realize gapped boundaries of Kitaev models. Specifically, we generalize the $\Z_2$ surface code methodology of Fowler et al. in Ref. \cite{Fowler12}, to perform the necessary operations on gapped boundaries of any quantum double model.

This chapter should be regarded as only a summary/outline for new works to come. Since we would like to keep our theory as general as possible, we will simply present a possible surface code implementation that, in principle, works for any untwisted Dijkgraaf-Witten theory based on some finite group $G$, while omitting error rate analysis and software checking details. These details are of course essential to studying any quantum error correction code, and should be studied for each specific example of interest.

Throughout the chapter, the qudits implementing the stabilizer code will be referred to as the {\it physical} qudits (and will be further sub-classified as data/syndrome qudits in the following section). The qudit encoded by two gapped boundaries will be referred to as a {\it logical} qudit, to follow the notation of Ref. \cite{Fowler12}.

\subsection{Generalized surface codes and stabilizer circuits}
\label{sec:stabilizers}

The concept of the surface code was first introduced by Dennis et al. in \cite{Dennis02}, to physically realize Kitaev's $\Z_2$ toric code \cite{Kitaev97} on a planar surface and implement a topological quantum memory. In that publication, local stabilizer circuits were developed to implement the Hamiltonian (\ref{eq:tc-hamiltonian}). More recently, the same circuits for the $\Z_2$ surface code have been studied by Fowler et al. in \cite{Fowler12}, partly in the context of gapped boundaries. In this section, we present the generalized version of these circuits and the surface code, to implement the Hamiltonian $H_{(G,1)}$ of Eq. (\ref{eq:kitaev-hamiltonian}) based on arbitrary group $G$. We then make one further generalization, to implement the boundary Hamiltonian $H_{(G,1)}^{(K,1)}$ of Eq. (\ref{eq:bd-hamiltonian-K}).

\begin{figure}
\centering
\includegraphics[width = 0.7\textwidth]{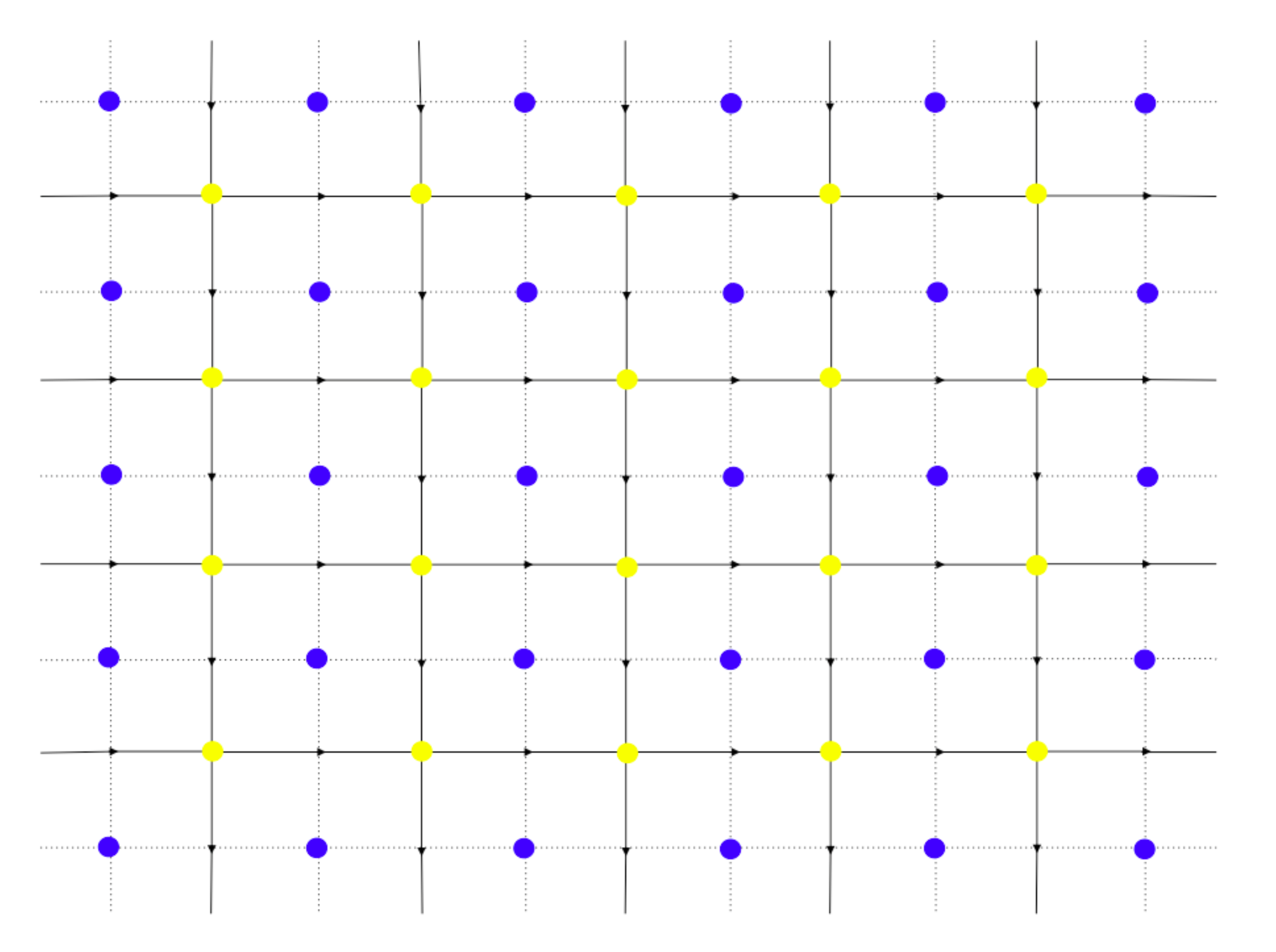}
\caption{Surface code layout that implements the Kitaev Hamiltonian $H_{(G,1)}$ of Eq. (\ref{eq:kitaev-hamiltonian}). As in Section \ref{sec:kitaev-hamiltonian}, we consider a square lattice in the plane, where a data qudit in $\C[G]$ is placed at the center of each edge. Data qudits are drawn as black arrows at the centers of edges, as they need to have an orientation for the definition of $H_{(G,1)}$ (see Chapter \ref{sec:hamiltonian}). We place a vertex syndrome qudit $\ket{v}$ in $\C[G]$ at each vertex (yellow dots) to project the system onto the simultaneous eigenstates of all $A(v)$ operators, and we place a plaquette syndrome qudit $\ket{p}$ in $\C[G]$ at each plaquette (blue dots) to project the system onto the simultaneous eigenstates of all $A(v)$ operators.}
\label{fig:surface-code-layout}
\end{figure}

\subsubsection{Bulk Hamiltonian surface code}

We first present an implementation of the bulk Hamiltonian $H_{(G,1)}$ of Eq. (\ref{eq:kitaev-hamiltonian}). Following Refs. \cite{Dennis02} and \cite{Fowler12}, we will present the generalized surface code on a planar lattice. For simplicity of presentation and illustration, we will assume that we have a square lattice; however, this is certainly not necessary and our work may be easily generalized. Suppose we are given an arbitrary finite input group $G$. As introduced in Chapter \ref{sec:hamiltonian}, we place a {\it data qudit} at the center of each edge. We then place a {\it vertex syndrome qudit} at each vertex, and a {\it plaquette syndrome qudit} at the center of each plaquette. In this model, all qudits takes values in the Hilbert space $\C[G]$ with orthonormal basis $\{ \ket{g}: g \in G \}$\footnote{Although it is highly improbable to find a qudit that naturally has this structure for general groups $G$, one can certainly implement this by placing multiple physical qubits at each edge, and entangling them appropriately for the desired group.}. The setup is shown in Fig. \ref{fig:surface-code-layout}. Notations required to define the stabilizer circuits are shown in Fig. \ref{fig:v-p-def}.

\begin{figure}
\centering
\includegraphics[width=0.62\textwidth]{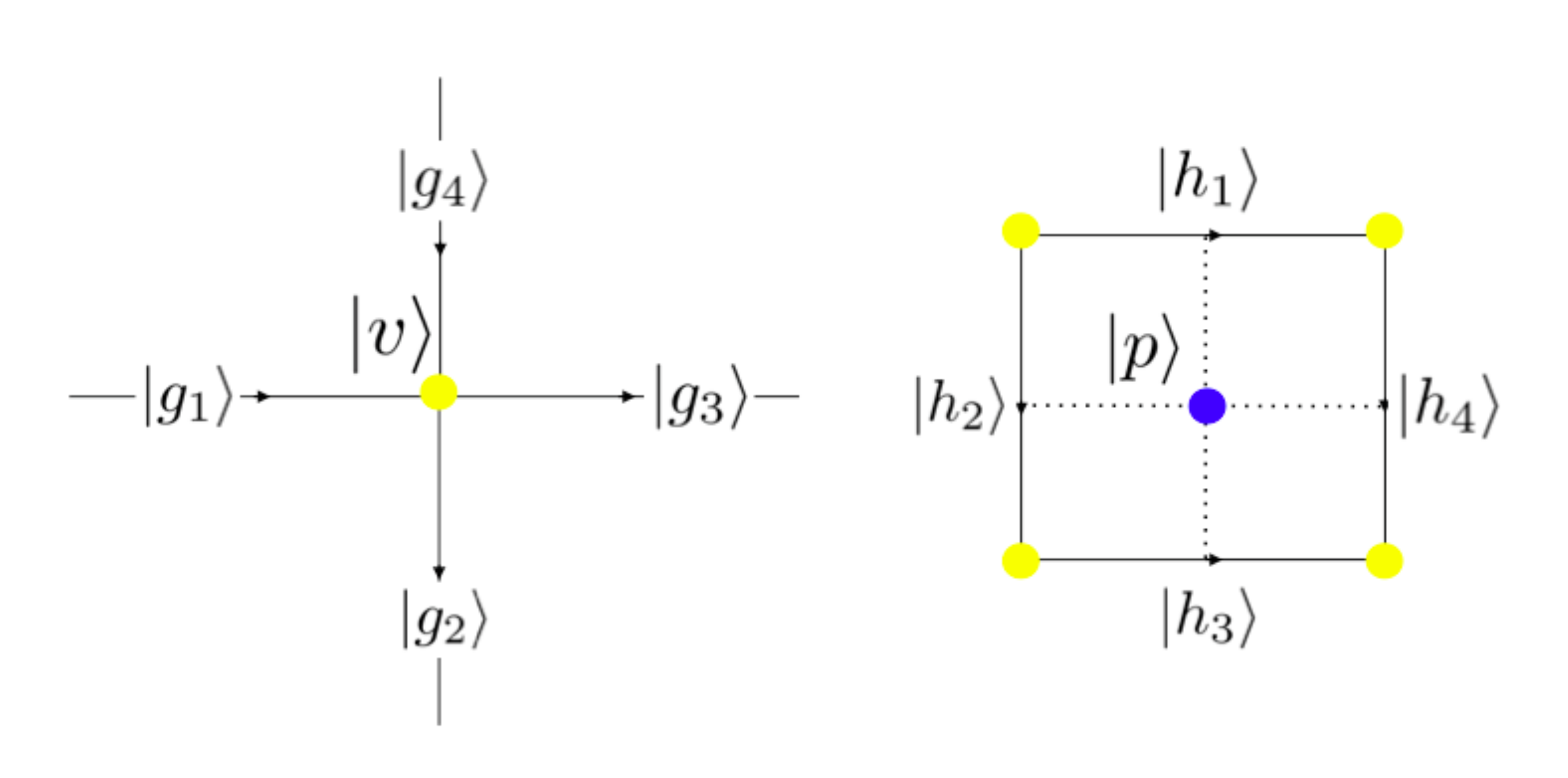}
\caption{Notations for defining the stabilizer circuits. On the square lattice, each vertex syndrome $\ket{v}$ is surrounded by four data qudits, $\ket{g_1}$ through $\ket{g_4}$, which are numbered counterclockwise as shown. Similarly, each plaquette syndrome $\ket{p}$ is surrounded by four data qudits, $\ket{h_1}$ through $\ket{h_4}$, which are numbered counterclockwise as shown.}
\label{fig:v-p-def}
\end{figure}

\begin{figure}
\centering
\includegraphics[width=0.7\textwidth]{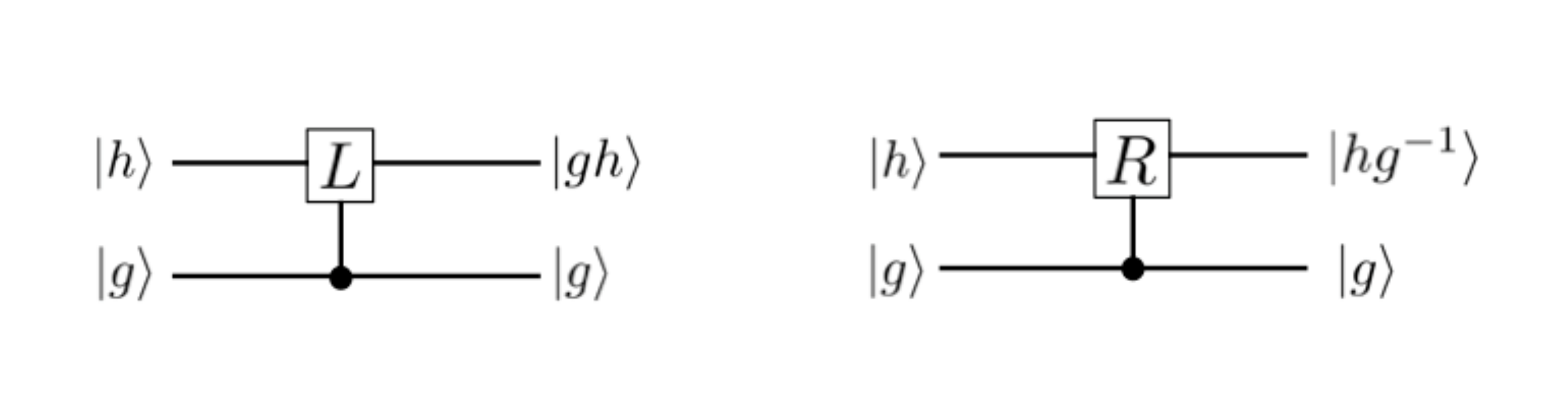}
\caption{Definition of the entangling left-multiplication and right-multiplication gates for two qudits in $\C[G]$, $G$ an arbitrary finite group. These are generalizations of the CNOT gate for qubits.}
\label{fig:gate-def}
\end{figure}

\floatsetup[figure]{style=plain,subcapbesideposition=bottom}
\begin{figure}[ht]
     \centering
        \sidesubfloat[]{%
            \includegraphics[width=0.62\textwidth]{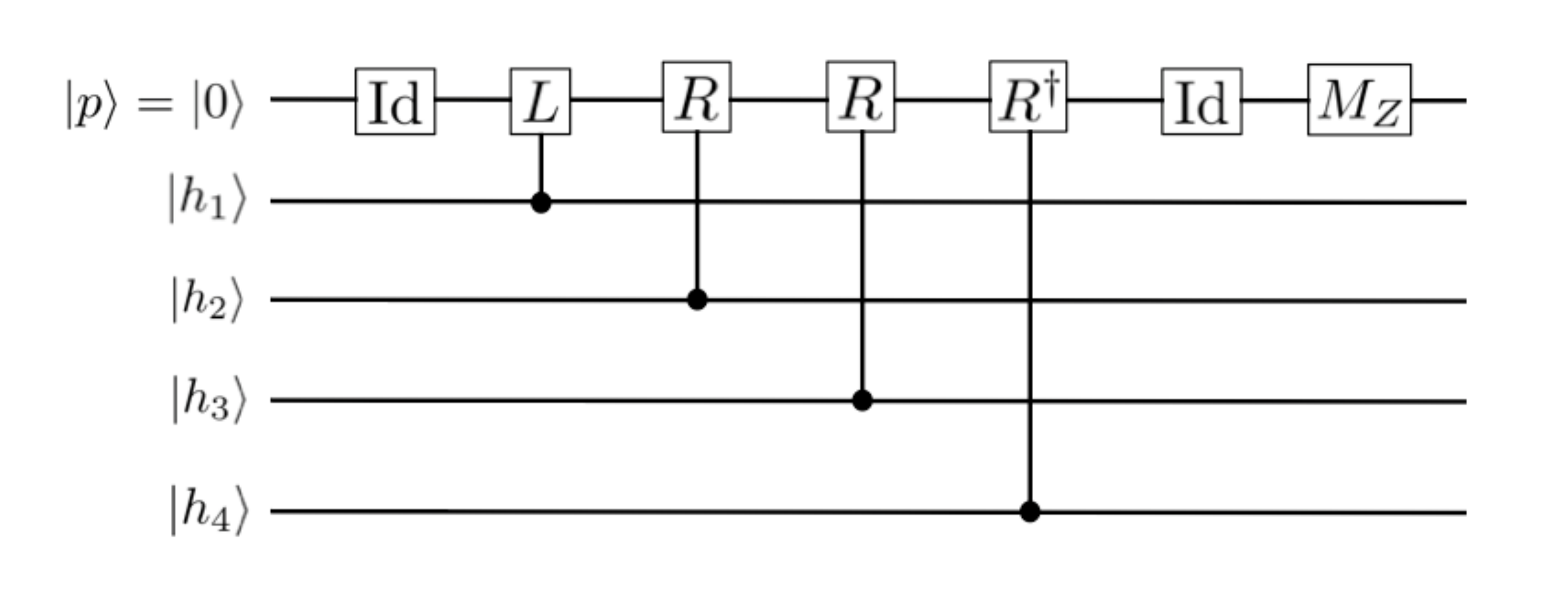}\label{fig:stabilizer-B}
        }\\
        \sidesubfloat[]{%
           \includegraphics[width=0.62\textwidth]{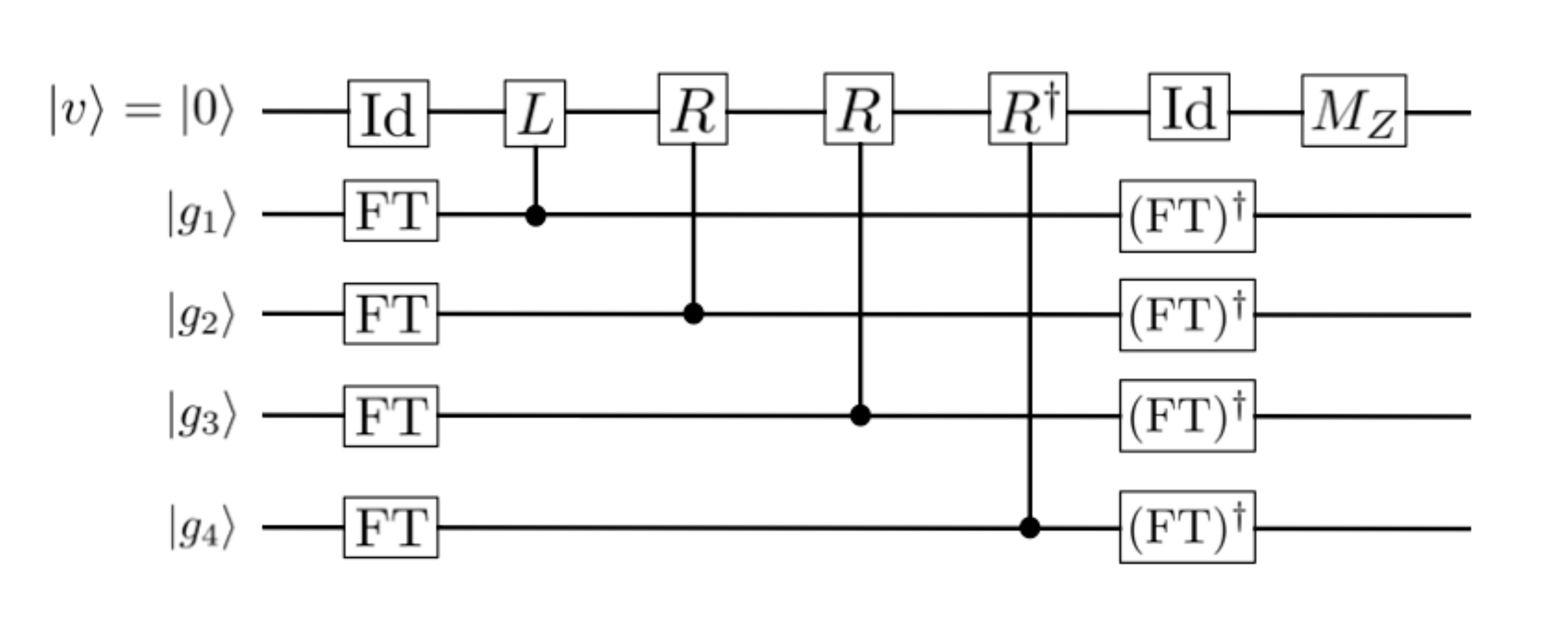}\label{fig:naive-stabilizer-A}
        }\\ 
		\sidesubfloat[]{%
           \includegraphics[width=0.62\textwidth]{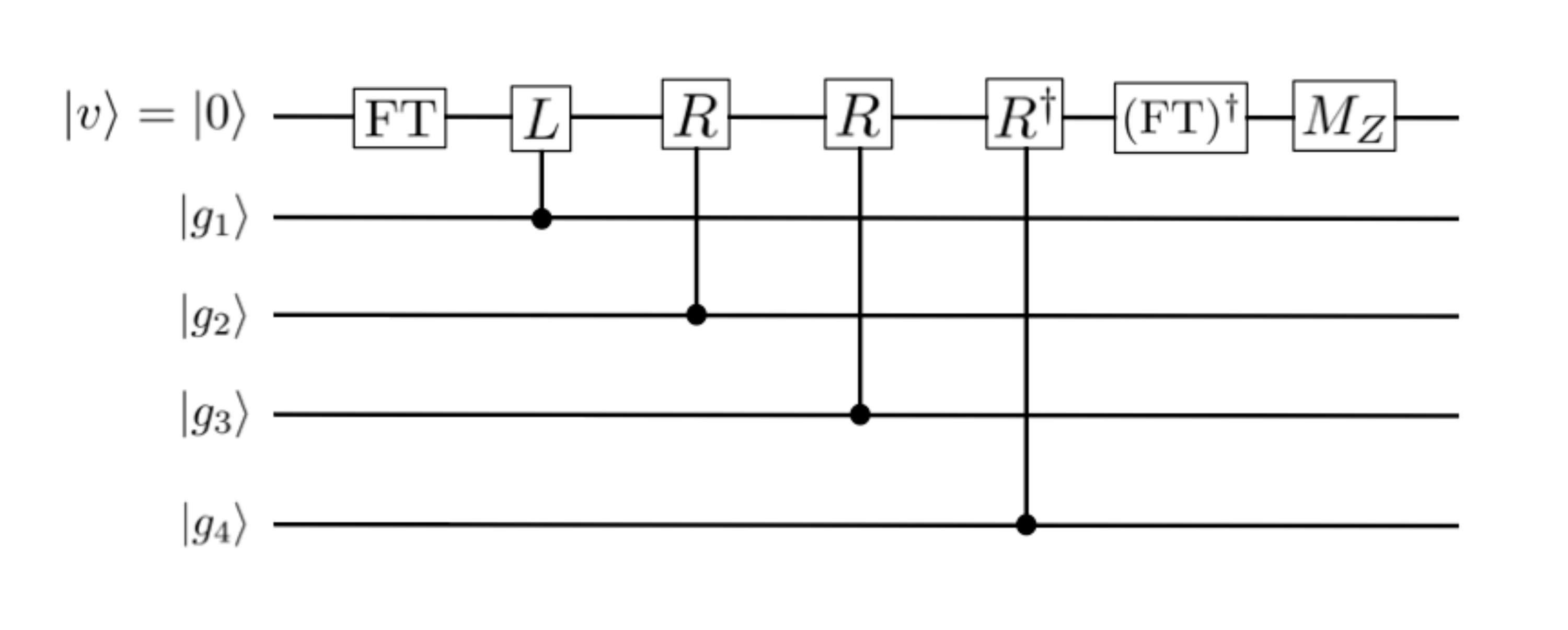}\label{fig:stabilizer-A}
        }
    \caption{(A) Stabilizer circuit for the $B(p)$ operator. (B) A naive implementation of a stabilizer circuit for the $A(v)$ operator, which simply takes conjugates all data qudits by Fourier transforms (denoted FT). It would not be possible to implement this in the same surface code cycle as the $B(p)$ operators, which do not conjugate the data qudits. (C) Stabilizer circuit for the $A(v)$ operator, which has an action equivalent to the one in (B). This one can be implemented in the same surface code cycle as the $B(p)$ circuit. The convention for the labels $\ket{g_i}$ and $\ket{h_i}$ is given in Fig. \ref{fig:v-p-def}, and the definitions of the controlled gates $L,R$ are given in Fig. \ref{fig:gate-def}.}%
\end{figure}

At each vertex and plaquette, we can now define short circuits (Figs. \ref{fig:stabilizer-B} and \ref{fig:stabilizer-A}) to project onto the eigenstates of the operators $B(p)$ and $A(v)$, respectively, as defined in Section \ref{sec:kitaev-hamiltonian}.

We begin with the operator $B(p)$ in Fig. \ref{fig:stabilizer-B}, as it is the simpler of the two. Suppose we are given a plaquette syndrome qudit $\ket{p}$ surrounded by data qudits $\ket{g_1}, \ket{g_2}, \ket{g_3}, \ket{g_4}$. We first initialize the syndrome qudit to $\ket{p} = \ket{1},$ corresponding to the identity element in $G$. We can then apply the generalized controlled-multiplication gates (defined in Fig. \ref{fig:gate-def}) to obtain the product $\ket{p} = \ket{g_1 g_2^{-1} g_3^{-1} g_4}$. Finally, we measure the $\ket{p}$ qudit in the $Z$ basis to project the data qudits onto an eigenstate of $B(p)$.

The operator $A(v)$ is implemented by a very similar circuit, as shown in Fig. \ref{fig:stabilizer-A}. As discussed in Section \ref{sec:kitaev-hamiltonian}, the $A(v)$ operator projects each vertex to a trivial representation sector, while $B(p)$ projects to trivial flux. Physically, this simply corresponds to conjugating all data qudits by the Fourier transform before the measurement cycle. The details on efficiently implementing generic quantum Fourier transforms can be found in Ref. \cite{Moore06}. Naively, this would result in the circuit shown in Fig. \ref{fig:naive-stabilizer-A}; however, a problem with this circuit is that it cannot be implemented in the same surface code cycle as the $B(p)$ circuit, which does not operate on conjugated data qudits. Fortunately, it is simple to show that by conjugating both the control and target qubits by a Fourier transform, one effectively switches the role of ``control'' and ``target'' in a controlled-multiplication gate. Hence, the circuit of Fig. \ref{fig:naive-stabilizer-A} is equivalent to the circuit in Fig. \ref{fig:stabilizer-A}, which can operate in the same surface code cycle as the $B(p)$ stabilizer circuit. 

Since any two of these stabilizer circuits commute, and since all Hamiltonian terms of $H_{(G,1)}$ commute, we can simultaneously apply all stabilizer circuits to project to a simultaneous eigenstate of all $A(v)$ and $B(p)$ terms.

\subsubsection{Boundary Hamiltonian surface code}

\begin{figure}
\centering
\includegraphics[width = 0.7\textwidth]{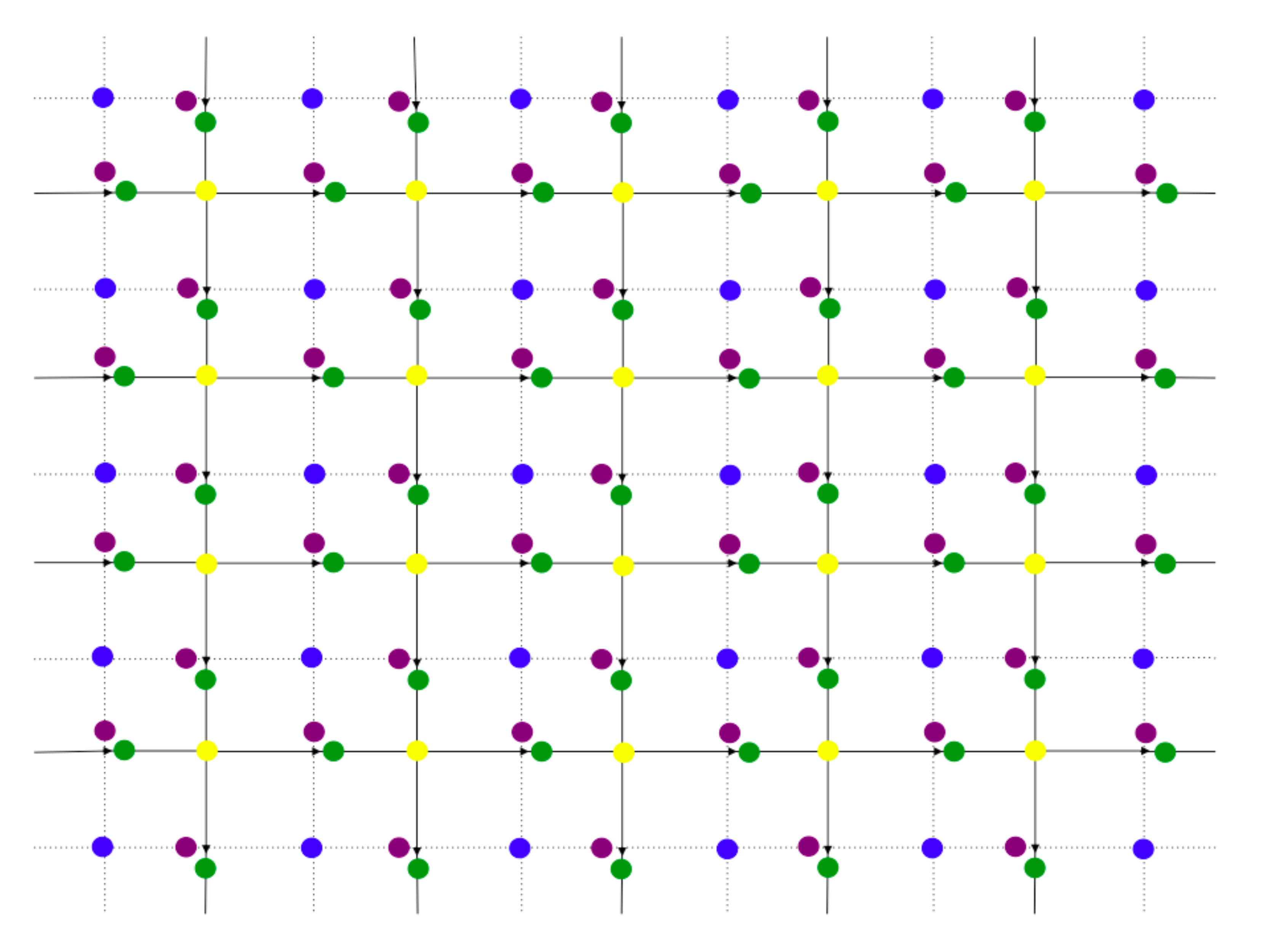}
\caption{Surface code layout that implements the boundary Hamiltonian $H_{(G,1)}^{(K,1)}$ of Eq. (\ref{eq:bd-hamiltonian-K}), for a given subgroup $K \subseteq G$. As before, data qudits and plaquette syndrome qudits take values in $\C[G]$. Vertex syndrome qudits are now modified to take values in $\C[K]$. We also add new $L$ and $T$ syndrome qudits to implement the string tension terms of $H_{(G,1)}^{(K,1)}$; the $L$ syndrome qudit takes values in $\C[K]$ while the $T$ syndrome qudit takes values in $\C[G]$.}
\label{fig:bd-surface-code-layout}
\end{figure}

We now present a surface code implementation of the boundary Hamiltonian (\ref{eq:bd-hamiltonian-K}). In this implementation, the data and plaquette syndrome qudits will take values in $\C[G]$ as before; however, the new vertex syndrome qudits take values in $\C[K]$. Furthermore, we must add two new syndrome qudits on each edge of the lattice: we add one {\it $L$ syndrome qudit}, which projects the system onto the eigenstate of the $L^K$ operator, and we add one {\it $T$ syndrome qudit} to project onto the eigenstate of the $T^K$ operator. The $L$ syndrome qudit takes values in $\C[K]$, while the $T$ syndrome qudit takes values in $\C[G]$. The new layout is illustrated in Fig. \ref{fig:bd-surface-code-layout}.

In this layout, the vertex and plaquette operators will project the system to eigenstates of the operators $A^K(v)$ and $B^K(p)$, respectively, instead of $A(v)$ and $B(p)$. The $T$ and $L$ stabilizer circuits are very similar, but are in fact much simpler than the vertex and plaquette stabilizers. Specifically, the $T^K$ operator enforces that the edge value is within the subspace spanned by $\{\ket{k}: k \in K\}$, and the $L^K$ operator projects each edge to a trivial representation when restricted to $K$. The stabilizer circuits for all of the boundary Hamiltonian terms are shown in Figs. \ref{fig:stabilizer-B-K}-\ref{fig:stabilizer-L}. All stabilizer circuits operate in lockstep.

\floatsetup[figure]{style=plain,subcapbesideposition=bottom}
\begin{figure}[ht]
     \centering
        \sidesubfloat[]{%
            \includegraphics[width=0.62\textwidth]{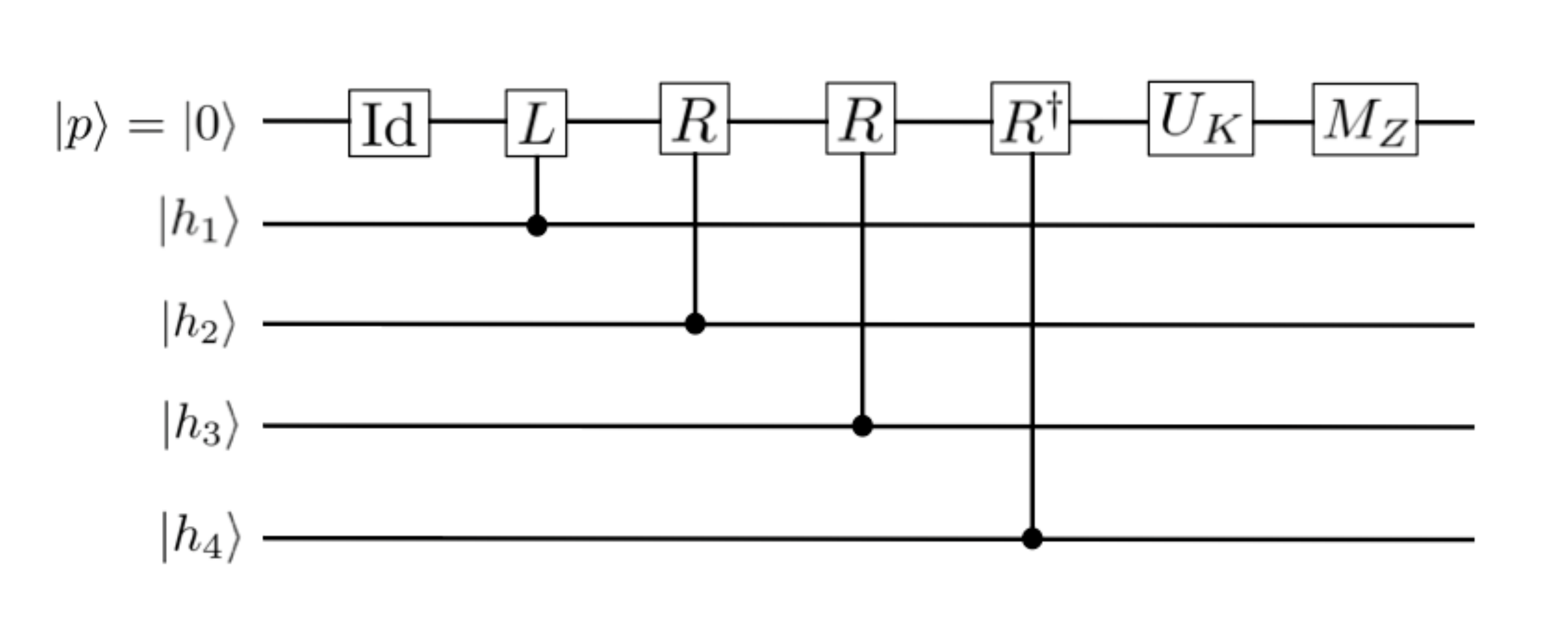}\label{fig:stabilizer-B-K}
        }\\%
        \sidesubfloat[]{%
            \includegraphics[width=0.62\textwidth]{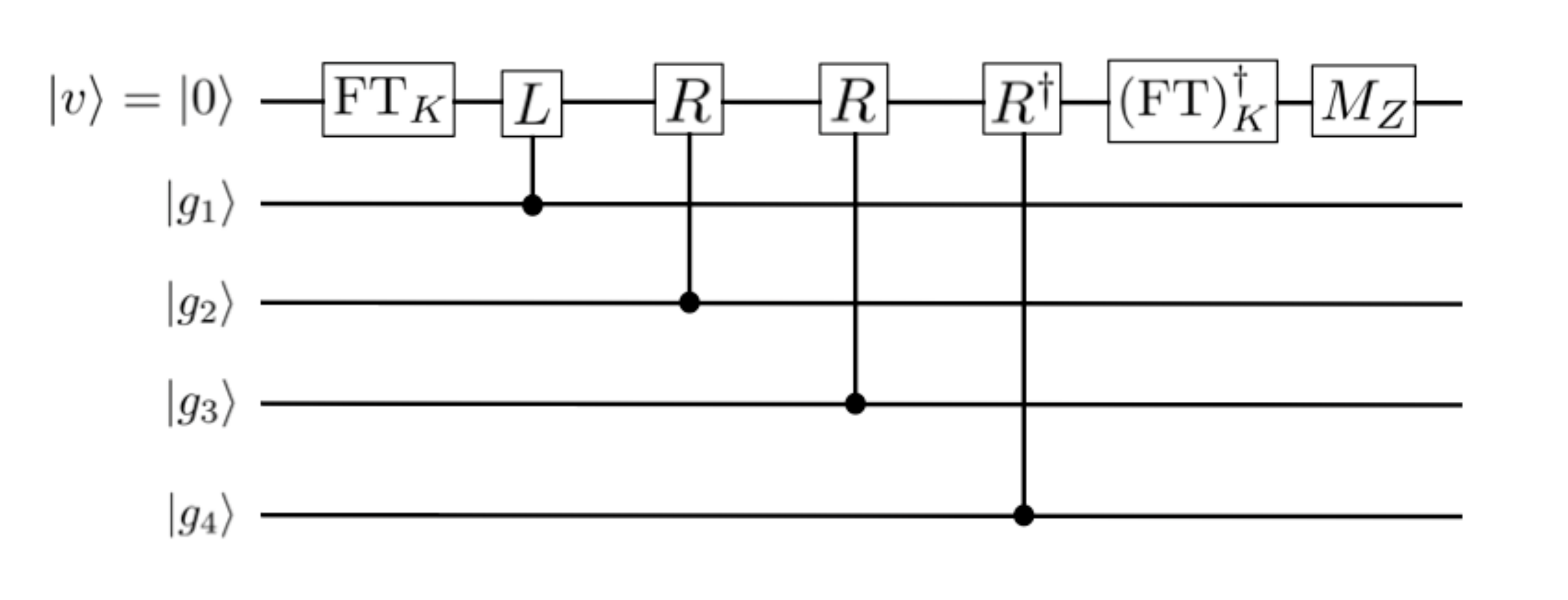}\label{fig:stabilizer-A-K}
        }\\%
        \sidesubfloat[]{%
            \includegraphics[width=0.4\textwidth]{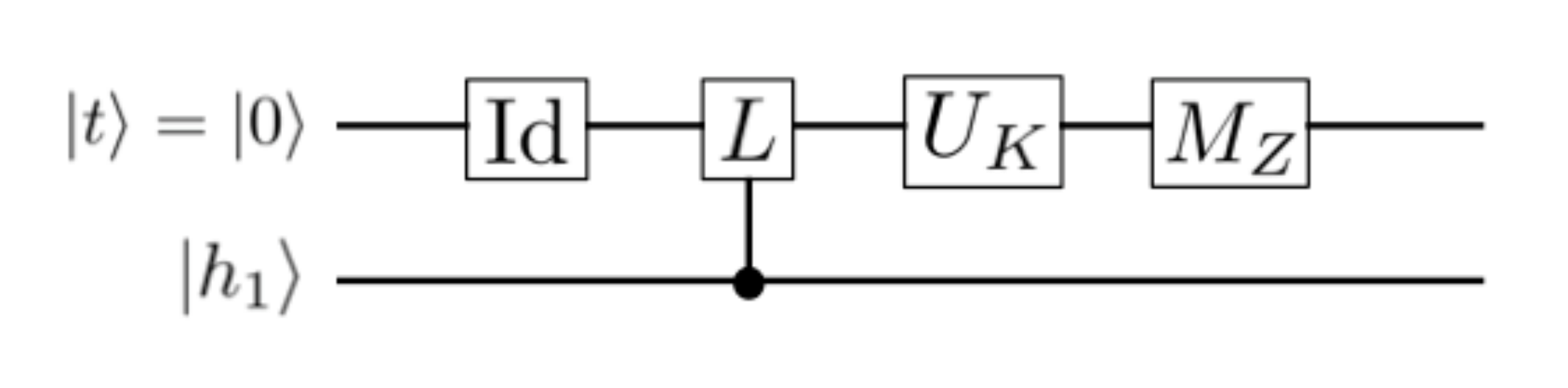}\label{fig:stabilizer-T}
        }\\%
        \sidesubfloat[]{%
           \includegraphics[width=0.4\textwidth]{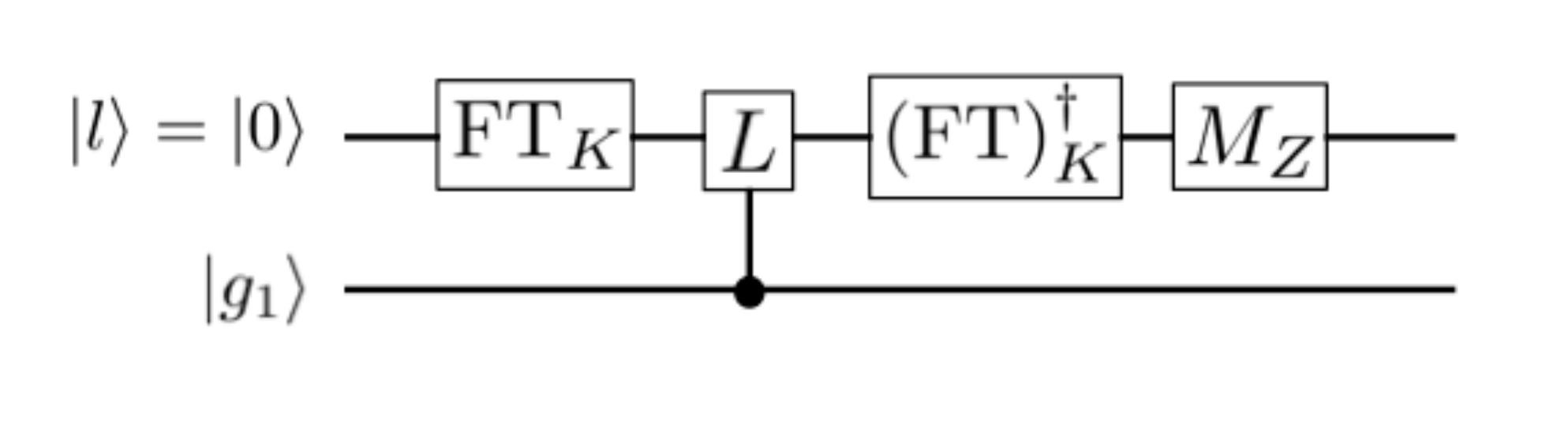}\label{fig:stabilizer-L}
        } 
    \caption{(A) Stabilizer circuit for the $B^K(p)$ operator. (B) Stabilizer circuit for the $A^K(v)$ operator. (C) Stabilizer circuit for the $T^K(e)$ operator. (D) Stabilizer circuit for the $L^K(e)$ operator. The operator $U_K$ is defined to have the following action on a qudit in $\C[G]$: $U_K \sum_{g \in G} a_g \ket{g} = \sum_{i=1}^{m} \sqrt{\sum_{j \in h_iK} a_j^2} \ket{h_i}$, where $\{h_i\}_{i=1}^m$ form a set of representatives of the left cosets $G/K$.}
\end{figure}

\subsection{Anyons in the surface code model}

\begin{figure}
\centering
\includegraphics[width = 0.7\textwidth]{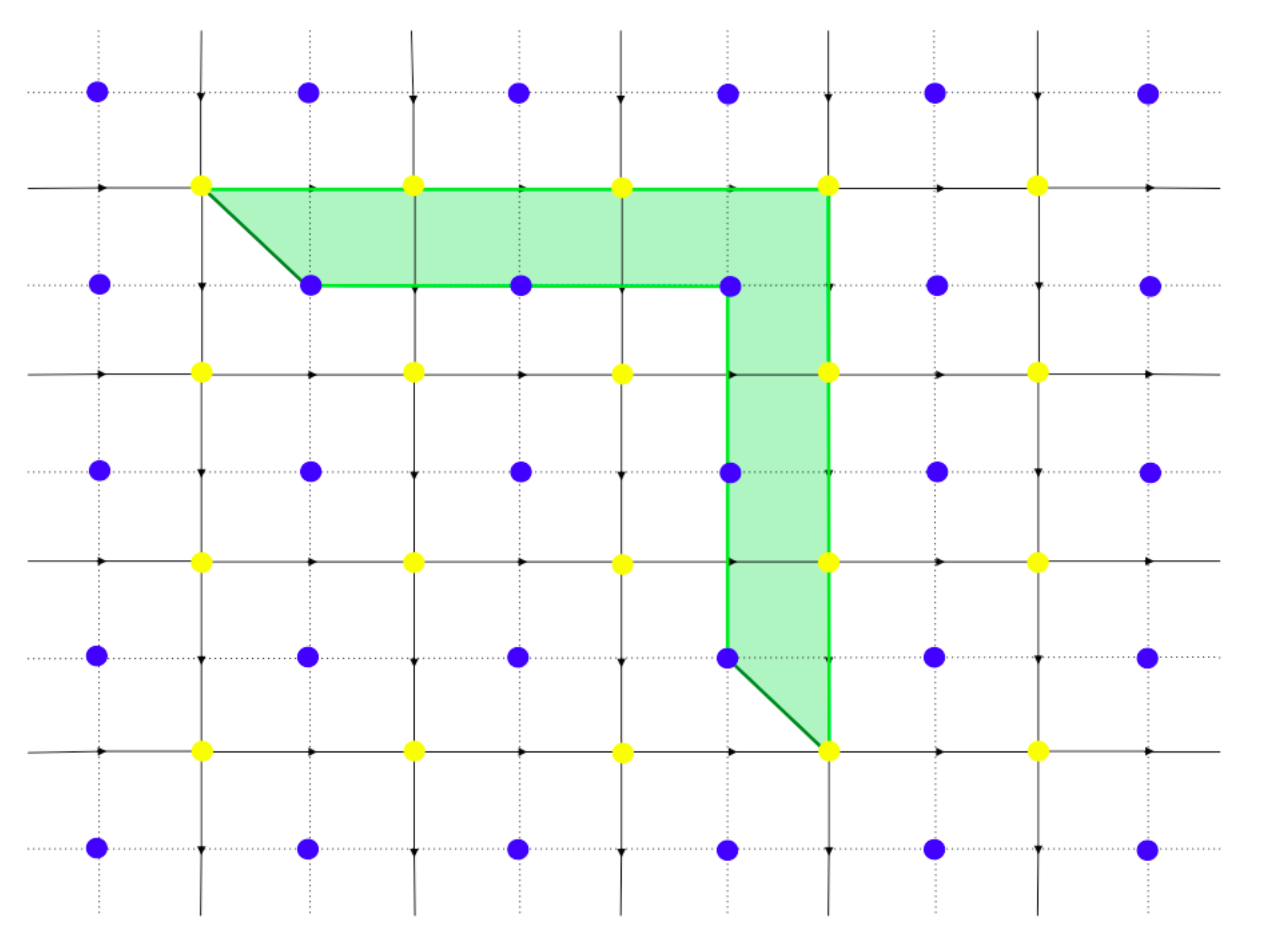}
\caption{Excitations in the surface code model for the bulk Hamiltonian $H_{(G,1)}$. Excited states can result from measuring a syndrome qudit in an excited eigenstate, or from applying a ribbon operator on data qudits. The latter case is illustrated, where we apply a ribbon operator on the light green ribbon to create a pair of excitations at the cilia marked by the dark green lines.}
\label{fig:surface-code-excitations}
\end{figure}

In both the bulk and boundary surface code models, the ground state of the Hamiltonians of Equations (\ref{eq:kitaev-hamiltonian}) and (\ref{eq:bd-hamiltonian-K}) correspond to measuring all syndrome qudits in the state $\ket{0}$. If one of the syndromes is consistently measured to be in a different state, an excitation is located at the corresponding vertex or plaquette (or more generally, the cilium). For instance, suppose one plaquette syndrome $p$ is consistently measured to have value $\ket{h}$, and one of its neighboring vertex syndromes $v$ is consistently measured to have value $\ket{g}$. From the value of $p$, by Eq. (\ref{eq:bulk-ribbon-FT}), we may deduce that there is an excitation at the cilium $(v,p)$ whose conjugacy class $C$ is the conjugacy class of $h$. Likewise, from the value of $v$, we may deduce the representation $\pi$ of the centralizer of $C$ corresponding to the excitation. 

Excitations may also be created in this model by applying one of the ribbon operators from Section \ref{sec:ribbon-operators} on all data qudits on a specific ribbon of the surface code lattice. This is illustrated in Fig. \ref{fig:surface-code-excitations}.

These similar properties also hold for the case of the boundary Hamiltonian $H^{(K,1)}_{(G,1)}$: excitations can result from syndromes in excited eigenstates, or from applying a boundary ribbon operator from Section \ref{sec:bd-excitations}.

\subsection{Creation and annihilation of gapped boundaries}
\label{sec:circuit-creation-annihilation}

We now discuss how to create and annihilate gapped boundaries on the surface code. This provides a physical implementation of the Hamiltonian $H_{\text{G.B.}}$ (\ref{eq:gapped-bds-hamiltonian}), and will allow us to perform quantum computation in the following chapters. The methods here are generalizations of those developed in Ref. \cite{Fowler12} to create the $1+e$ and $1+m$ boundaries of the $\Z_2$ toric code.

\begin{figure}
\centering
\includegraphics[width = 0.65\textwidth]{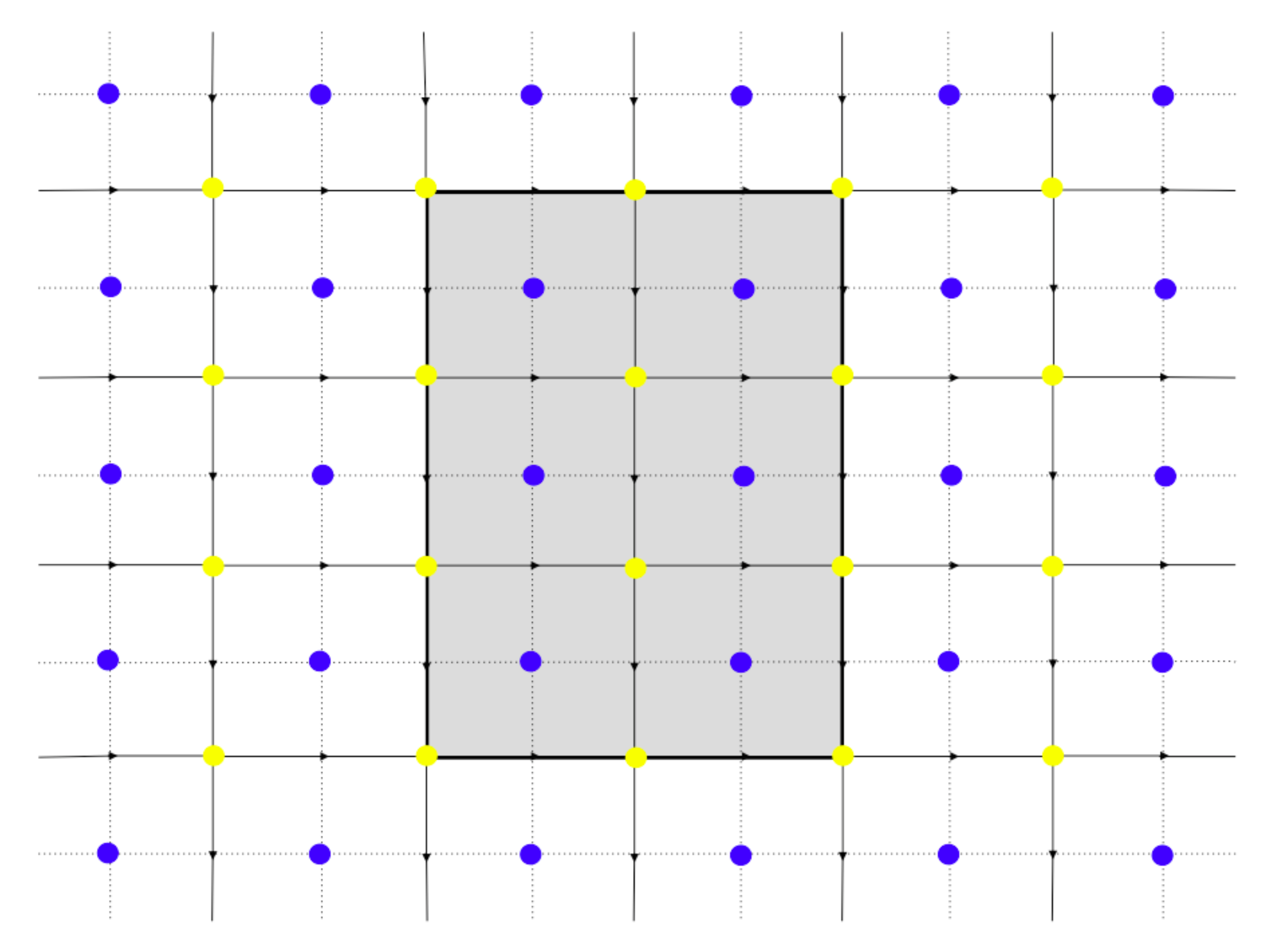}
\caption{Gapped boundary in the surface code context. We would like to be able to arbitrarily create or annihilate a subgroup $K$ gapped boundary corresponding to the pictured hole.}
\label{fig:bd-creation-1}
\end{figure}

\begin{figure}
\centering
\includegraphics[width = 0.65\textwidth]{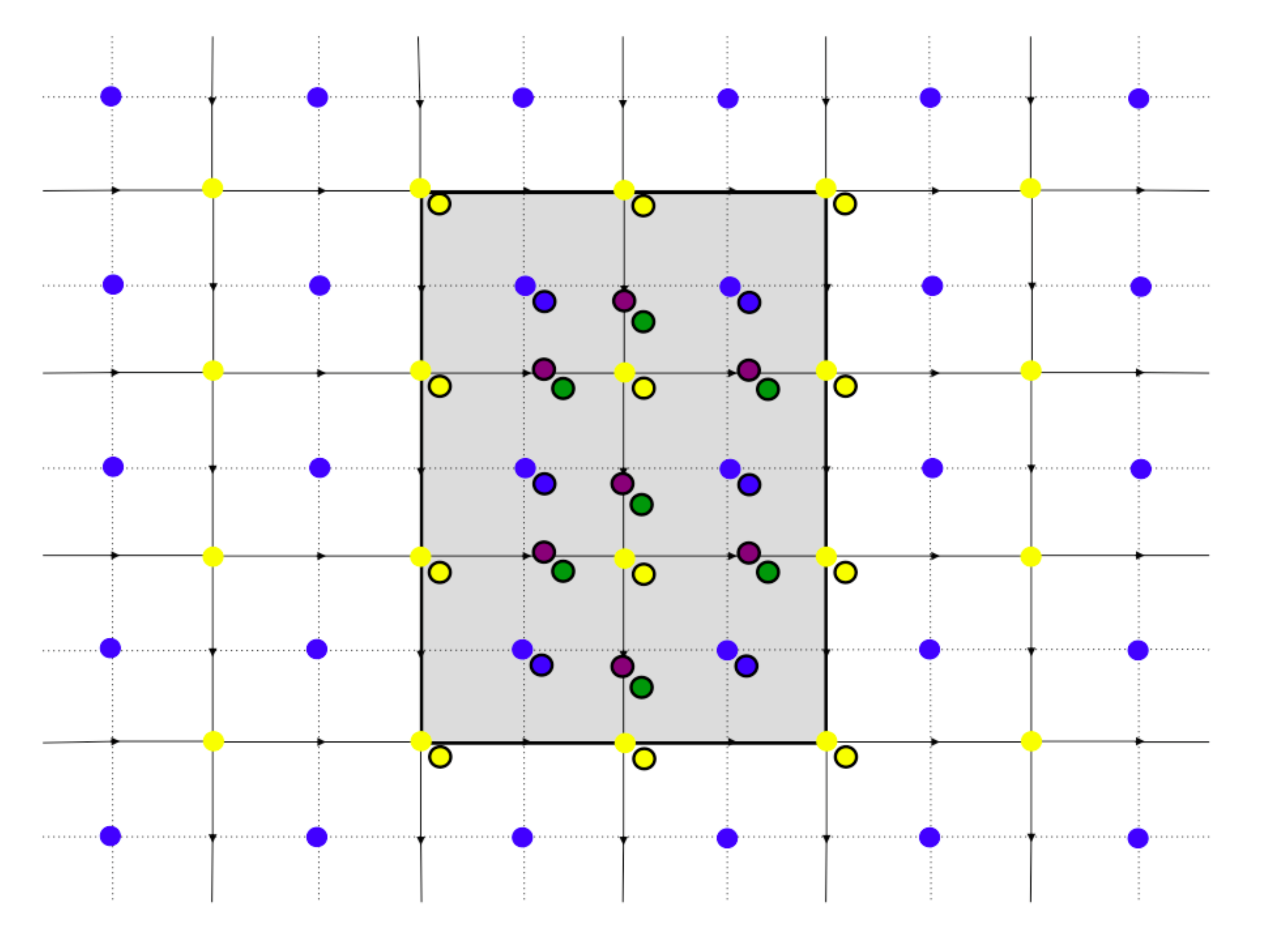}
\caption{Circuit layout that allows for arbitrary creation and annihilation of the subgroup $K$ gapped boundary of Fig. \ref{fig:bd-creation-1}. We place an extra set of syndrome qudits and stabilizer circuits on the shaded area, corresponding to the Hamiltonian $H^{(K,1)}_{(G,1)}$ (shown with black borders, slightly displaced from their usual position as in Fig. \ref{fig:bd-surface-code-layout}). These circuits are turned on to create the boundary, and turned off when we annihilate it.}
\label{fig:bd-creation-2}
\end{figure}

\begin{figure}
\centering
\includegraphics[width = 0.82\textwidth]{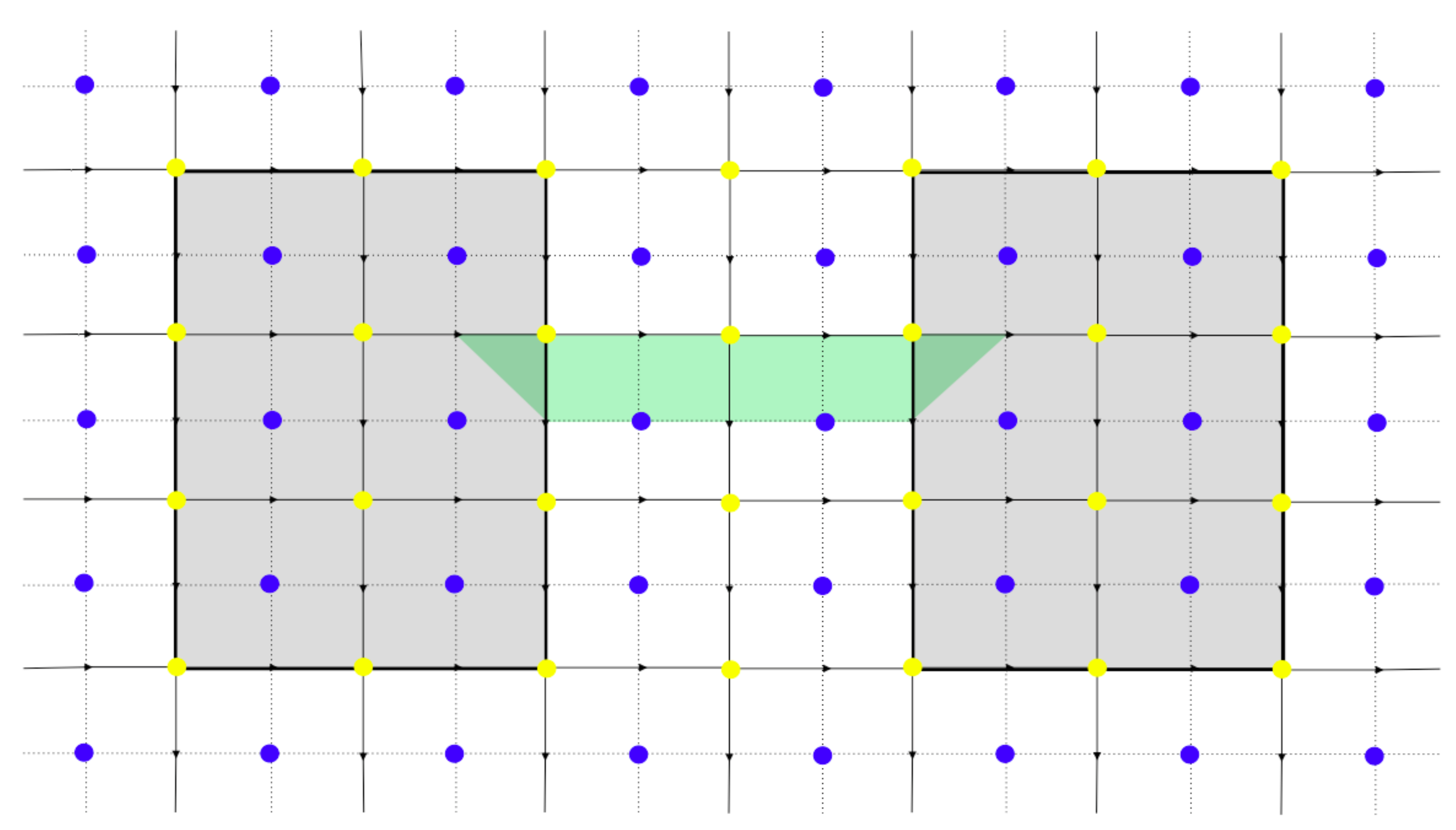}
\caption{A gapped boundary qudit in the context of surface code implementation. Ribbon operators on the green ribbon correspond to Wilson operators $W_a(\gamma)$, which were introduced in Chapters \ref{sec:hamiltonian} and \ref{sec:algebraic}.}
\label{fig:gapped-boundary-qudit}
\end{figure}

Let us first introduce the circuit layout required to arbitrarily create or annihilate a gapped boundary. Suppose we are in the surface code state shown in Fig. \ref{fig:bd-creation-1}, and would like to, at some point, create and annihilate a gapped boundary corresponding to the subgroup $K \subseteq G$ along the boldfaced edges (i.e. create a hole within the shaded square). Then, we must be able to implement the Hamiltonian $H_{(G,1)}^{(K,1)}$ on all data qudits within and on the boundary of this square, or at least on all data qudits along the boundary ribbon corresponding to this boundary (see Remark \ref{bd-ribbon-def}).

To do this while ensuring maximal flexibility, we will place two sets of stabilizer circuits at the qudits of every ribbon along which we may have a subgroup $K$ boundary: one for the bulk Hamiltonian $H_{(G,1)}$, and one for the boundary Hamiltonian $H_{(G,1)}^{(K,1)}$. This is shown in Fig. \ref{fig:bd-creation-2}. (If we potentially want to create multiple types of boundaries, we will place multiple sets of stabilizer circuits, one for each subgroup.) Initially, we start in the state where only the bulk Hamiltonian circuits are turned on. This creates a lattice in which we have no internal boundaries.

\subsubsection{Gapped boundary initialization}


We would like to initialize the gapped boundary shown in Fig. \ref{fig:bd-creation-2}. To do so, we can simply turn off all stabilizer circuits within and on the boldface square. We then turn on all (subgroup $K$) boundary Hamiltonian stabilizer circuits, on exactly these qudits. This effectively implements the Hamiltonian (\ref{eq:gapped-bds-hamiltonian}), in the case where we have precisely this single hole. By Remark \ref{bd-ribbon-def}, we may equivalently only turn on stabilizers that touch the boundary ribbon, if that is preferable.

Suppose we would now like to initialize a logical qudit, which is comprised of two gapped boundaries (see Fig. \ref{fig:gapped-boundary-basis}). This is shown in the surface code context in Fig. \ref{fig:gapped-boundary-qudit}. We would like to initialize the qudit in the ground state $\ket{0}$, where no particles have condensed into either boundary. To do so, we first use ribbon operators to move any excitations within the boundary outside, into the bulk. (Alternatively, one can track this data in surface code software.) Next, we apply the procedure to create both gapped boundaries. This would implement the Hamiltonian $H_{\text{G.B.}}$ of Eq. (\ref{eq:gapped-bds-hamiltonian}) in a case where we have exactly those two holes.

It is clear now that the Wilson line operators $W_a(\gamma)$ introduced in Chapters \ref{sec:hamiltonian} and \ref{sec:algebraic} simply correspond to applying ribbon operators on the green ribbon (or a topologically equivalent one) in Fig. \ref{fig:gapped-boundary-qudit}. Similarly, Wilson loop operators, which will be introduced in Section \ref{sec:loop}, correspond to applying a closed ribbon operator on a ribbon encircling one of the holes.

We note that this procedure initializes a hole whose sides are on the direct lattice, which corresponds to the $X$-cut hole in Ref. \cite{Fowler12}. The generalization to the case where sides can be on the dual lattice is obvious.


\subsubsection{Gapped boundary measurement}

We now discuss how to measure a gapped boundary to annihilate it. In general, if the gapped boundary is not being used for computation purposes, we may simply turn off the stabilizers corresponding to $H_{(G,1)}^{(K,1)}$ and turn on those corresponding to the bulk Hamiltonian.

If instead we have a qudit as in Fig. \ref{fig:gapped-boundary-qudit}, it is also useful to measure the value of the qudit before destroying the holes. This can be done as follows: We take two ancilla qudits, $\ket{a},\ket{b} \in \C[G]$, and initialize them to identity. We then turn on all stabilizer circuits for the bulk Hamiltonian, but omit the final measurement step in each circuit. We then take $\ket{a}$ to be the product of all vertex stabilizers for $H_{(G,1)}$, and we take $\ket{b}$ to be the product of all plaquette stabilizers for $H_{(G,1)}$. We now measure $\ket{a}$ and $\ket{b}$ in the $Z$ basis, which will tell us which particles have condensed to the boundary. This gives us a way to measure the $Z$ eigenstate of the logical qudit. (Specifically, $\ket{b}$ gives the magnetic flux $C$, and $\ket{a}$ gives the electric charge $\pi$). Finally, we completely turn on the bulk Hamiltonian stabilizer circuits, and then turn off the stabilizers for $H_{(G,1)}^{(K,1)}$.

\subsection{Moving gapped boundaries}
\label{sec:gapped-bd-moving}

In this section, we describe how to arbitrarily move a gapped boundary qudit hole. This is an essential task: it is required for gapped boundary braiding, which gives entangling gates. This procedure generalizes but closely follows the procedure described by Fowler et al. in Ref. \cite{Fowler12}. As in that case, we will consider the Heisenberg picture, and focus on the transformations of the Wilson operators and the measure $Z$ operator.

\begin{figure}
\centering
\includegraphics[width = 0.7\textwidth]{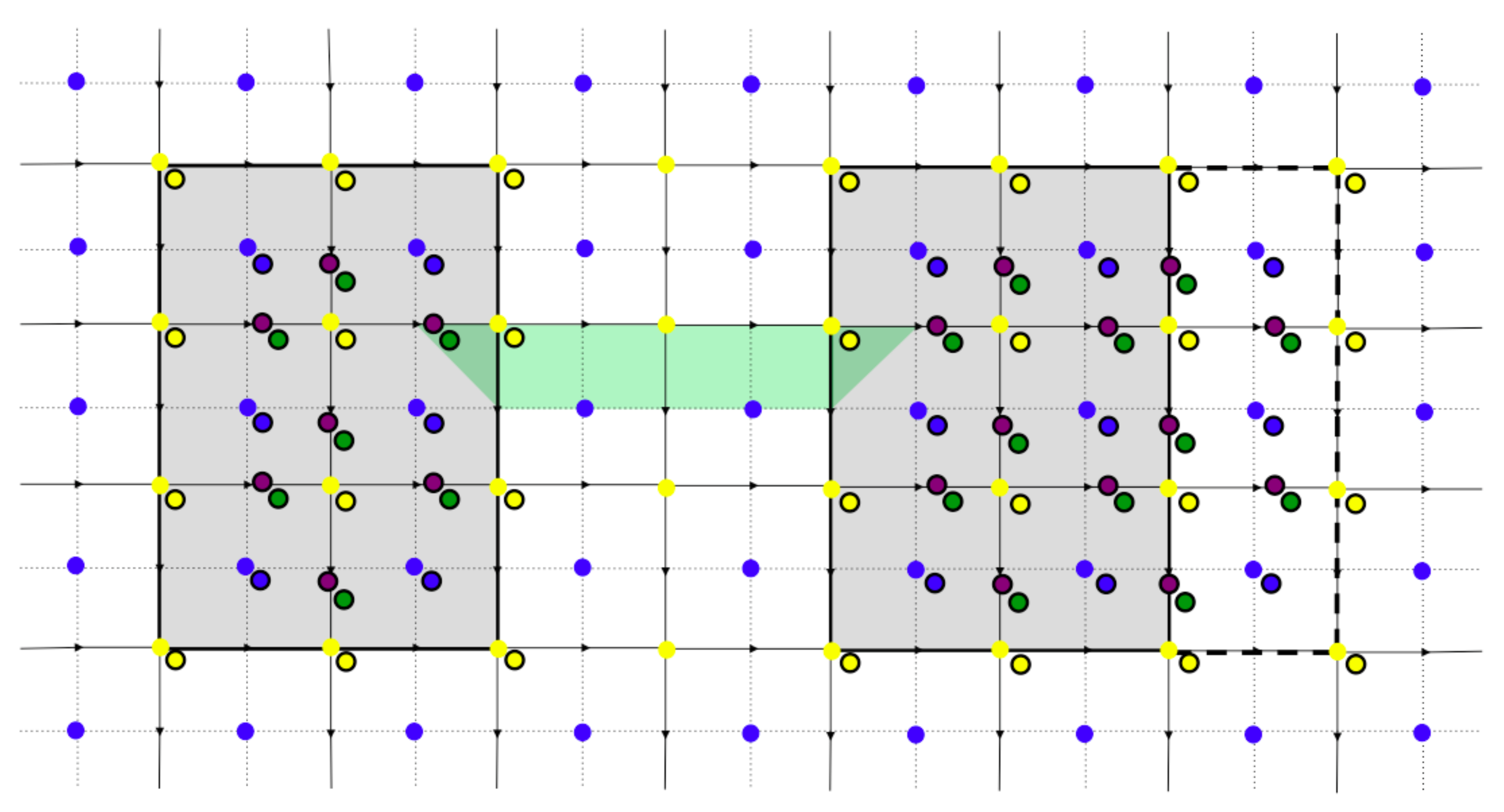}
\caption{Enlarging a gapped boundary in the surface code context. See text for implementation details.}
\label{fig:gapped-bd-enlarging}
\end{figure}

\begin{figure}
\centering
\includegraphics[width = 0.7\textwidth]{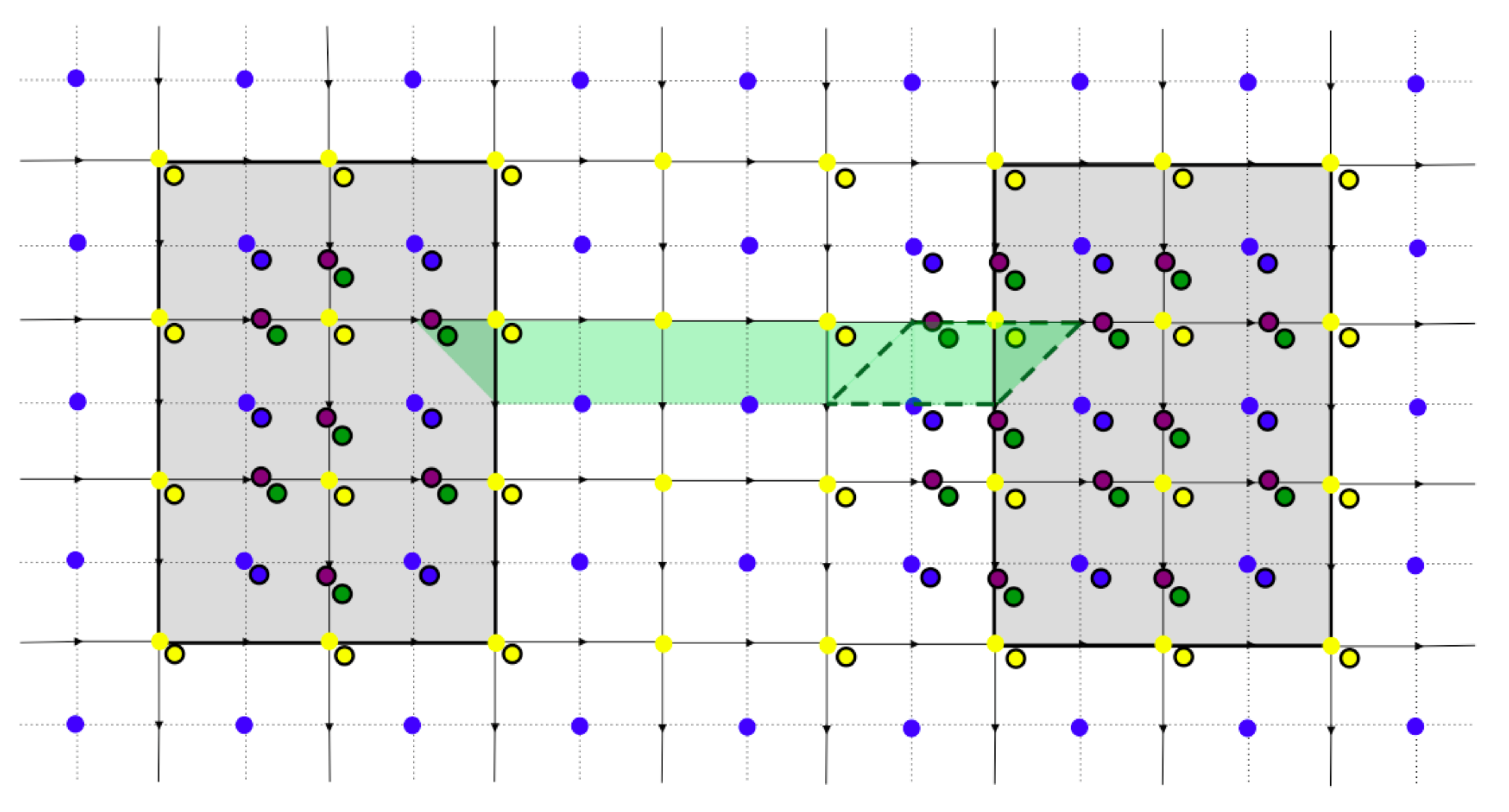}
\caption{Shrinking a gapped boundary in the surface code context. See text for implementation details.}
\label{fig:gapped-bd-shrinking}
\end{figure}

The procedure to move a logical qudit hole horizontally to the right by one cell is depicted in Figs. \ref{fig:gapped-bd-enlarging} and \ref{fig:gapped-bd-shrinking}; it consists of first enlarging the hole by one cell width, and then shrinking it by one cell width. We begin by extending all loop operators (including the measure-$Z$ operator) to enclose all cells that will be in the enlarged hole (dotted line in Fig. \ref{fig:gapped-bd-enlarging}). We then wait until the current surface code cycle completes. For the next cycle, we turn off the $H_{(G,1)}$ stabilizer circuits for the cells that will now be in the hole, and turn on the $H_{(G,1)}^{(K,1)}$ circuits for these cells. This is a general step that can always be taken to enlarge a gapped boundary in the surface code implementation.

The next step is to shrink the gapped boundary, as shown in Fig. \ref{fig:gapped-bd-shrinking}. To do this, we define the new tunneling ribbon operator by extending it by one cell (shown in dotted green lines in Fig. \ref{fig:gapped-bd-shrinking}). In the language of Chapter \ref{sec:hamiltonian}, this is exactly one dual and one direct triangle. In the next surface code cycle, we then turn off the $H_{(G,1)}^{(K,1)}$ stabilizer circuits and turn on the $H_{(G,1)}$ circuits for all of the qudits that will no longer be in the hole. To establish the stabilizer values in time, we have to wait another $d-1$ surface code cycles, where $d$ is the separation distance between the two gapped boundaries \cite{Fowler12}.

As shown in Refs. \cite{Fowler12} and \cite{Raussendorf03}, \lq\lq byproduct operators" may result from performing this movement, and should be accounted for in the surface code control software. These are relatively simple generalizations of Refs. \cite{Fowler12} and \cite{Raussendorf03} and will not be presented here.

\vspace{2mm}
\section{Topologically protected operations}
\label{sec:operations}

In this chapter, we present the topologically protected operations on the qudit encoded in the basis of Fig. \ref{fig:gapped-boundary-basis}. There are 5 such types of operations, which are presented in Sections \ref{sec:tunnel}-\ref{sec:physically-implementable}. In Sections \ref{sec:tc-operations} and \ref{sec:ds3-operations}, we demonstrate the gates we can obtain in this way for the toric code and $\mfD(S_3)$ examples.

Throughout most of the chapter, the operations will be presented in the categorical language of Chapter \ref{sec:algebraic} for simplicity, since we have already shown that it is equivalent to the Hamiltonians of Chapter \ref{sec:hamiltonian} or the quantum circuits of Chapter \ref{sec:circuits} in the cases of interest (group models).

For simplicity of presentation, we assume throughout the chapter that there are no condensation multiplicities. The generalizations are obvious.

\subsection{Tunnel-$a$ operations}
\label{sec:tunnel}

The first topological operation we consider is to tunnel an elementary excitation $a$ from one gapped boundary to another. Physically, this corresponds to applying the $a$ ribbon operator to a ribbon connecting the two gapped boundaries. We will denote this operation by $W_a(\gamma)$, where $\gamma$ is this ribbon. $W_a(\gamma)$ is often known as a {\it Wilson line operator}.

Suppose we have two gapped boundaries given by Lagrangian algebras $\A_1,\A_2$, which encode a qudit with orthonormal basis as in Fig. \ref{fig:gapped-boundary-basis} of Section \ref{sec:algebraic-gsd}. We would like to compute the result of applying each $W_a(\gamma)$ on each basis element of the ground state $\Hom(1, \A_1 \otimes \A_2)$, and express the result in terms of the original basis.

Let us consider an arbitrary basis element $W_b(\gamma)\ket{0}$ as described in Section \ref{sec:algebraic-gsd}. Diagrammatically, after applying the $W_a(\gamma)$ operator, we have arrived in the following state:

\begin{equation}
\label{eq:tunnel-1}
\vcenter{\hbox{\includegraphics[width = 0.55\textwidth]{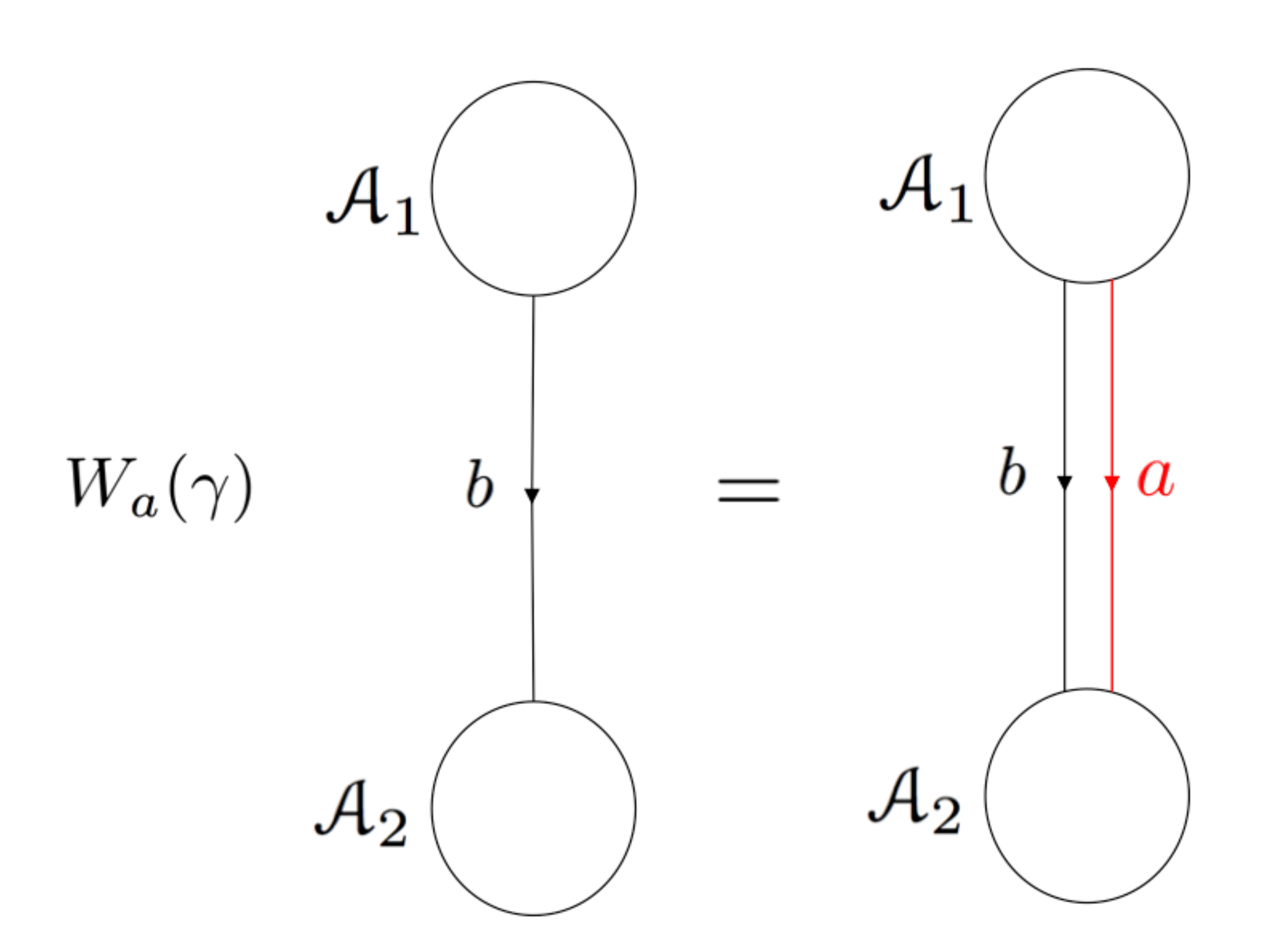}}}
\end{equation}

\noindent
Here, and for the rest of the chapter, solid black lines are used to indicate a basis element of the hom-space that describes the ground state, while solid red lines are used to denote Wilson operators.

To express this in terms of our original basis, we must convert the two anyon-tunneling ribbon operators into one. To do this, we can first apply the $M$-3$j$ operator and its Hermitian conjugate to the bottom and top boundaries of (\ref{eq:tunnel-1}), respectively, to get:\footnote{In this analysis, we will drop the multiplicity indices $\mu,\nu,\lambda$ from the $M$ symbols for concision; the generalization is obvious.}

\begin{equation}
\label{eq:tunnel-2}
\vcenter{\hbox{\includegraphics[width = 0.75\textwidth]{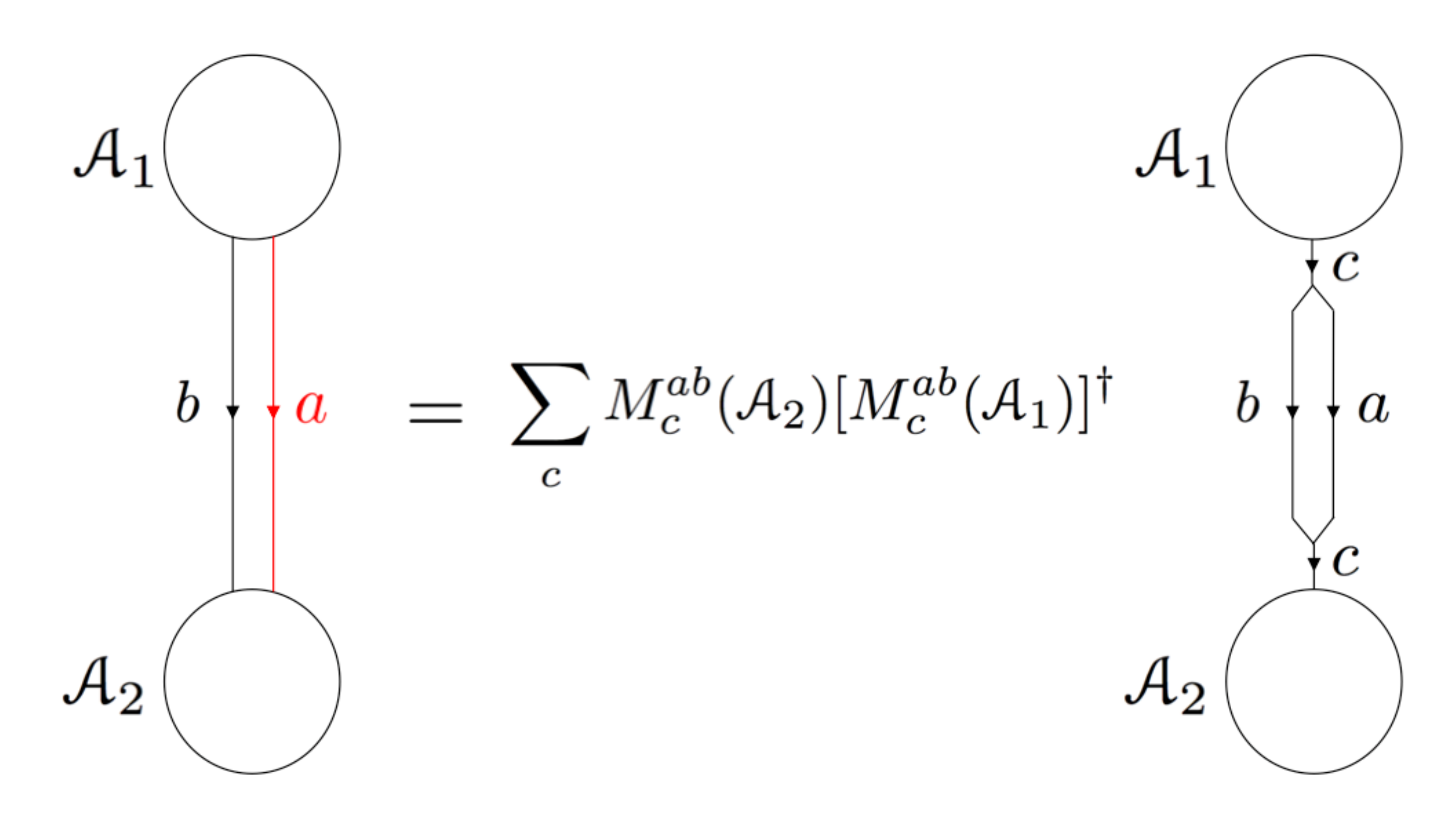}}}
\end{equation}

\noindent
Here, $M^{ab}_c(\A_i)$ indicates that the $M$-3$j$ symbol is for the gapped boundary given by the Lagrangian algebra $\A_i$.

We are now left with a bubble in the bulk. This can be eliminated using $\theta$ symbols of the bulk modular tensor category, by the following relation:

\begin{equation}
\label{eq:tunnel-3}
\includegraphics[width = 0.6\textwidth]{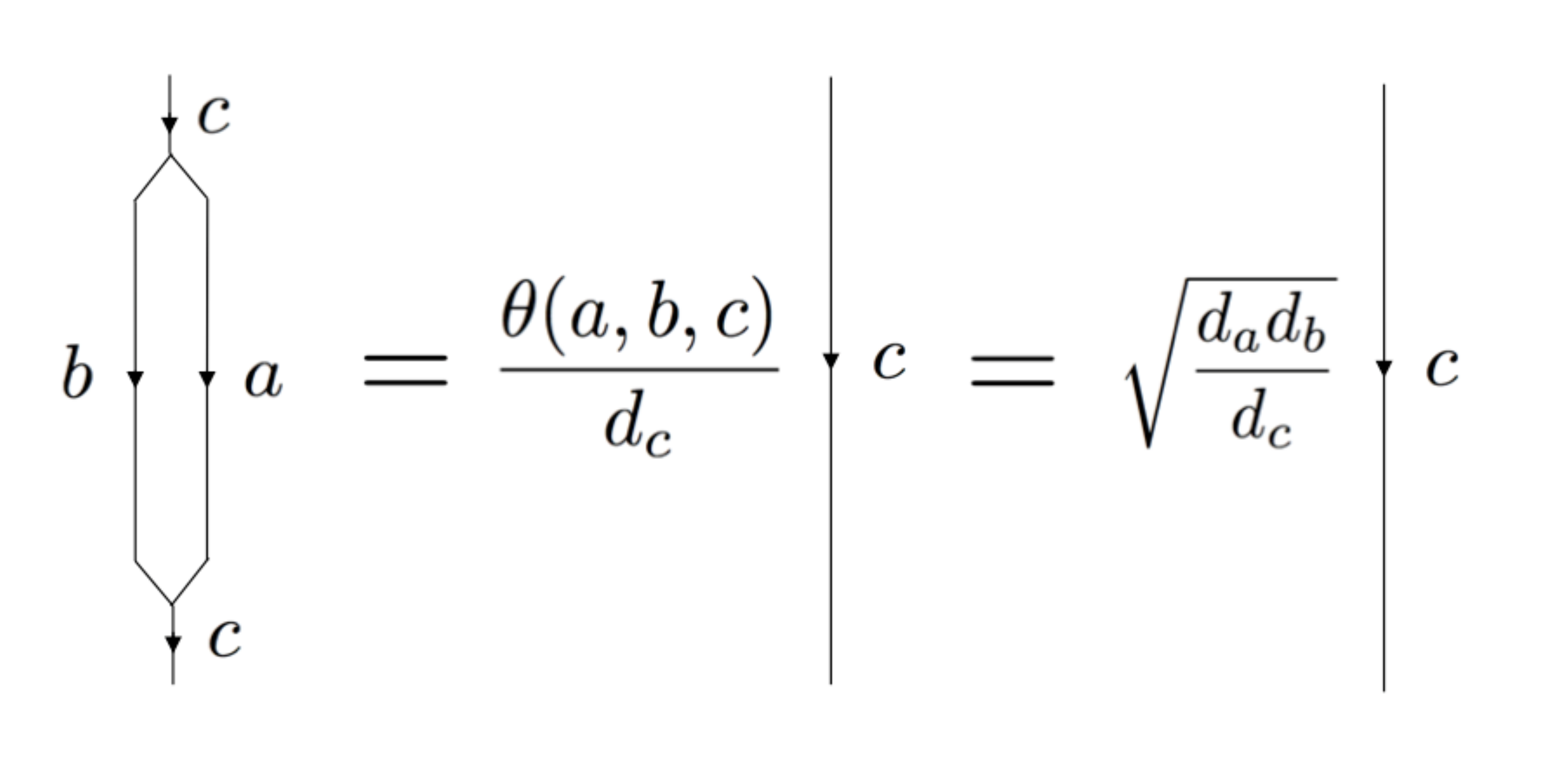}
\end{equation}

Hence, we have the following equation:

\begin{equation}
\label{eq:tunnel-formula}
W_a(\gamma) W_b(\gamma) \ket{0} =
W_a(\gamma) \vcenter{\hbox{\includegraphics[width = 0.12\textwidth]{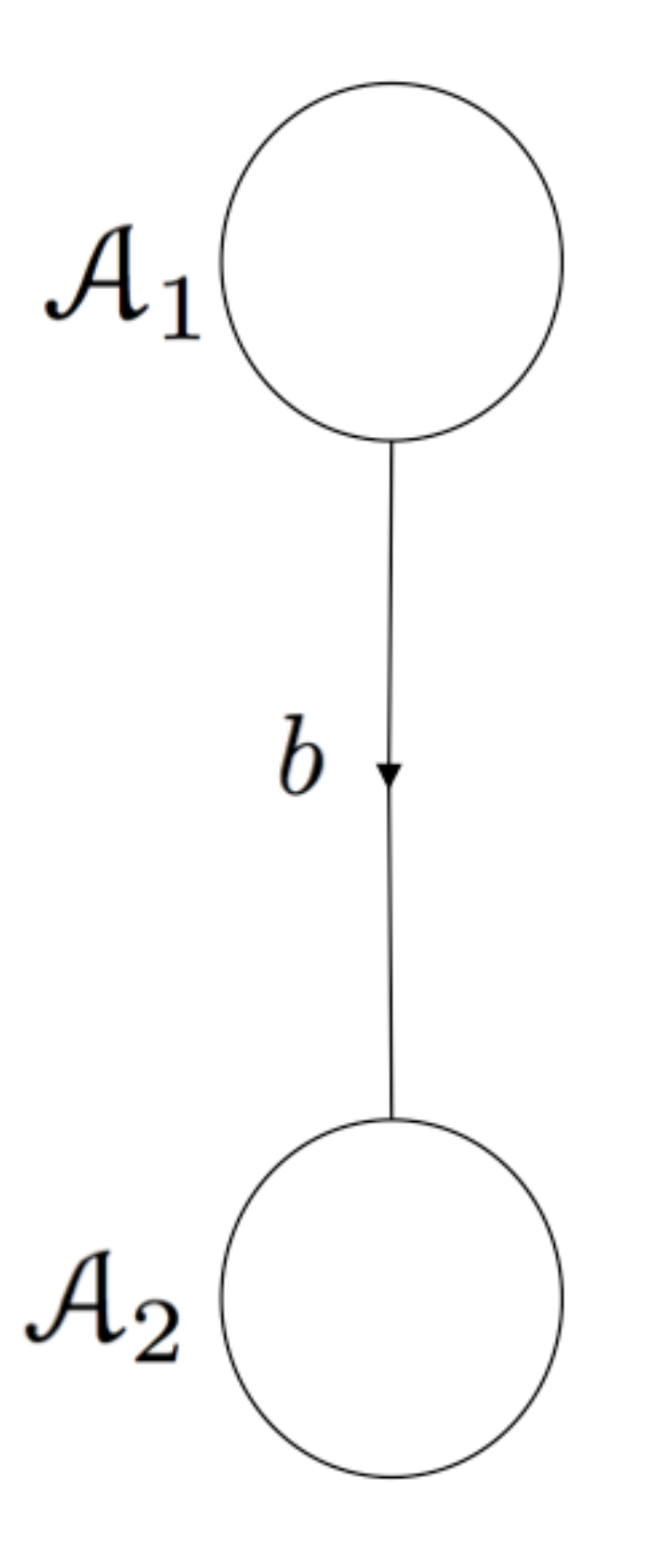}}} 
= \sum_c M^{ab}_c (\A_1) [M^{ab}_c]^\dagger (\A_2) \sqrt{\frac{d_a d_b}{d_c}}
\vcenter{\hbox{\includegraphics[width = 0.12\textwidth]{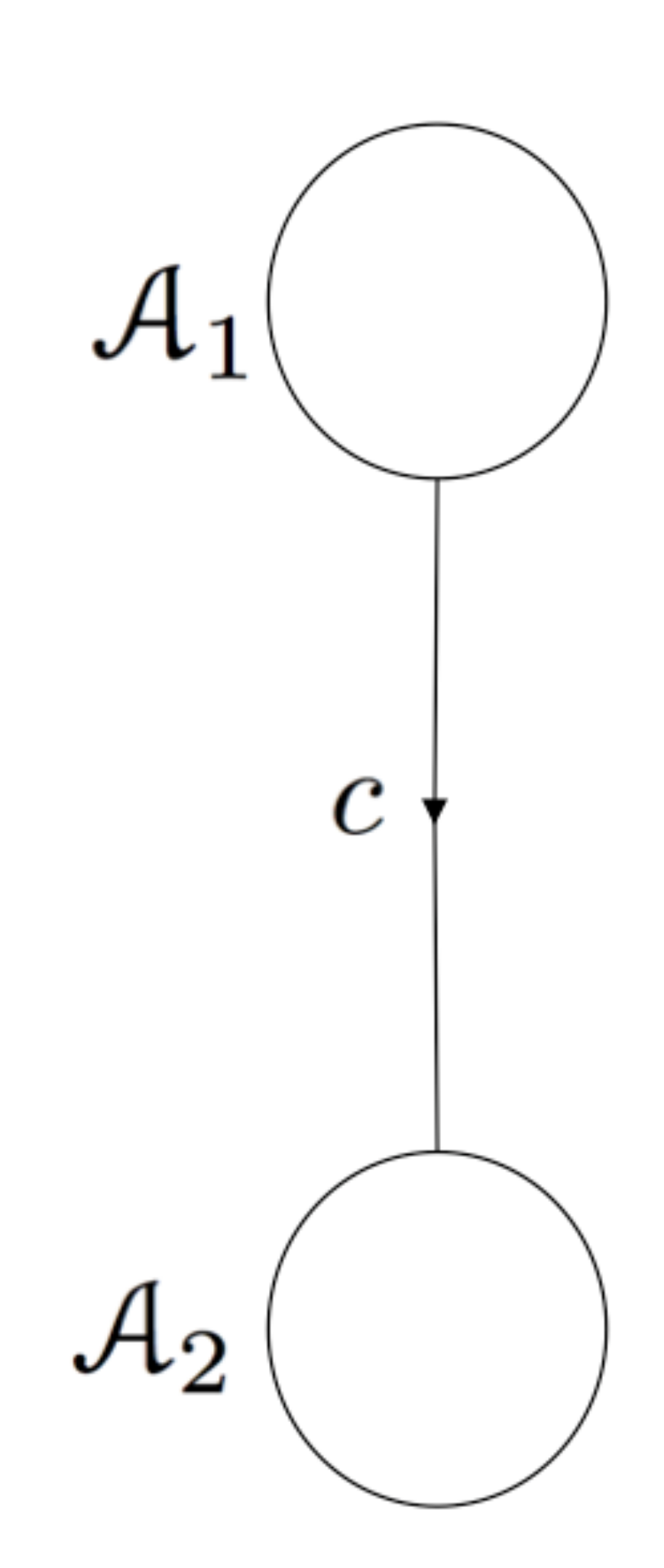}}} 
\end{equation}

In general, especially when we consider specific examples, we will  express $W_a(\gamma)$ as a $d\times d$ matrix that acts on the ground state $\Hom(1, \A_1 \otimes \A_2)$.

\begin{remark}
\label{tunnel-rmk}
We would like to note that in general, the tunneling operator $W_a(\gamma)$ need not be Hermitian or unitary. Specifically, $W_a(\gamma)$ is Hermitian if and only if the two boundaries are given by the same Lagrangian algebra and $a$ is self-dual. $W_a(\gamma)$ is unitary if and only if $n^a_\alpha(\A_i) = 0$ for all $\alpha$ not equal to vacuum. ($n^a_\alpha(\A_i) = 0$ is the coefficient of $\alpha$ in the decomposition of $a$ after condensation on boundary $\A_i$.)
\end{remark}

\subsection{Loop-$a$ operations}
\label{sec:loop}

The next topological operation we can consider is to create a pair of anyons $a,\overbar{a}$ in the bulk, move $a$ counter-clockwise around a gapped boundary, and come back and annihilate the pair to vacuum. In the language of Chapter \ref{sec:hamiltonian}, this corresponds to applying the $a$ ribbon operator to a counter-clockwise closed ribbon encircling the gapped boundary. We will denote this operation by $W_a(\alpha_i)$, where $\alpha_i$ is the closed ribbon encircling boundary $i$. This operator is often known as the {\it Wilson loop operator}.

Suppose we have two gapped boundaries given by Lagrangian algebras $\A_1,\A_2$, which encode a qudit with orthonormal basis as in Fig. \ref{fig:gapped-boundary-basis} of Section \ref{sec:algebraic-gsd}. As before, we would like to compute the result of applying each $W_a(\alpha_i)$ on each basis element of the ground state $\Hom(1, \A_1 \otimes \A_2)$, and express the result in terms of the original basis.

Suppose we start as an arbitrary basis element $W_b(\gamma)\ket{0}$. Diagrammatically, the operator $W_a(\alpha_2)$ transforms this basis element into the following state:

\begin{equation}
\label{eq:loop-1}
\vcenter{\hbox{\includegraphics[width = 0.47\textwidth]{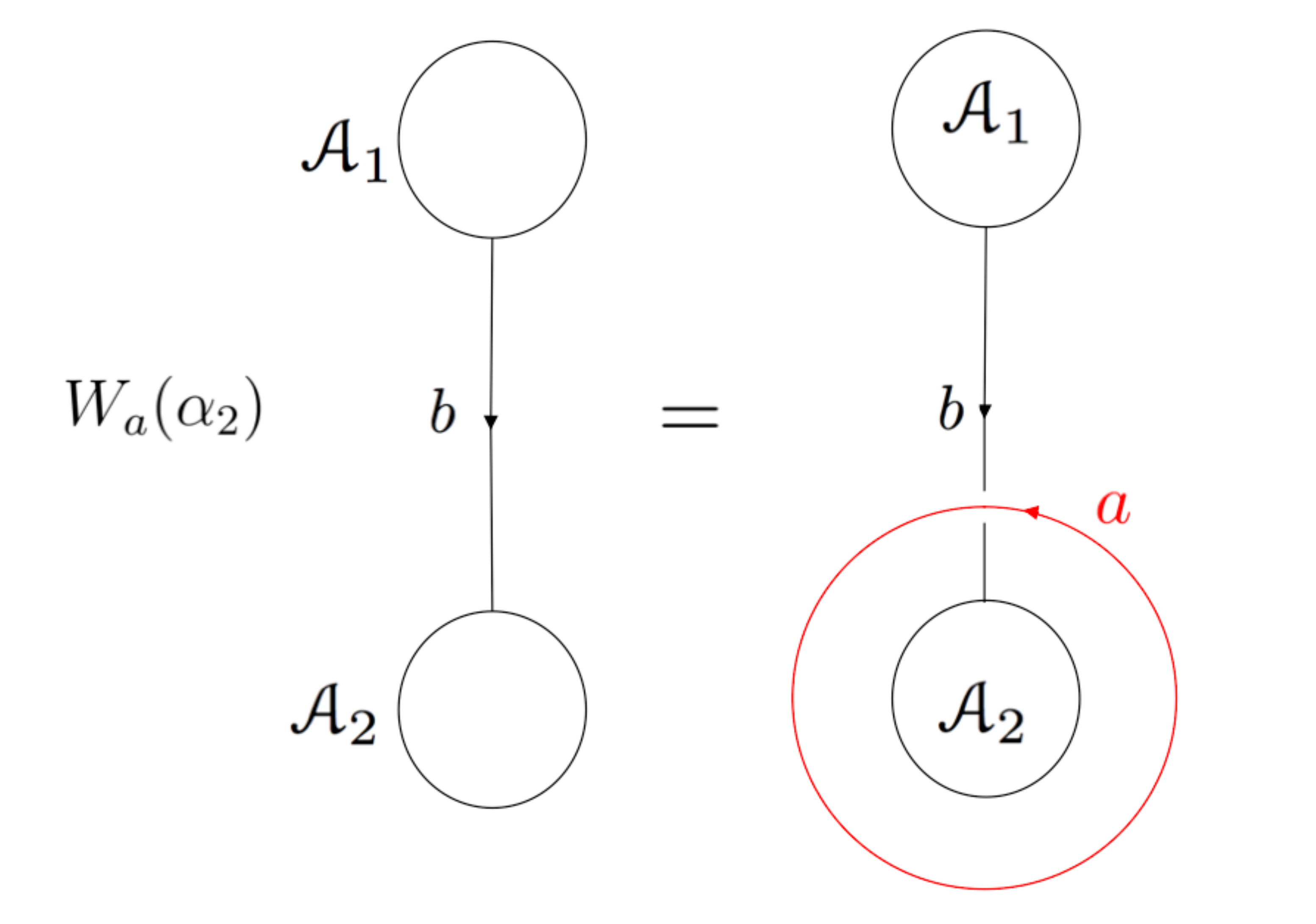}}}
\end{equation}

\noindent
Since we are working in a model where the total charge is vacuum, we may consider this picture as two holes on a sphere. We can hence push the anyon loop back through infinity, to get

\begin{equation}
\label{eq:loop-2}
\vcenter{\hbox{\includegraphics[width = 0.47\textwidth]{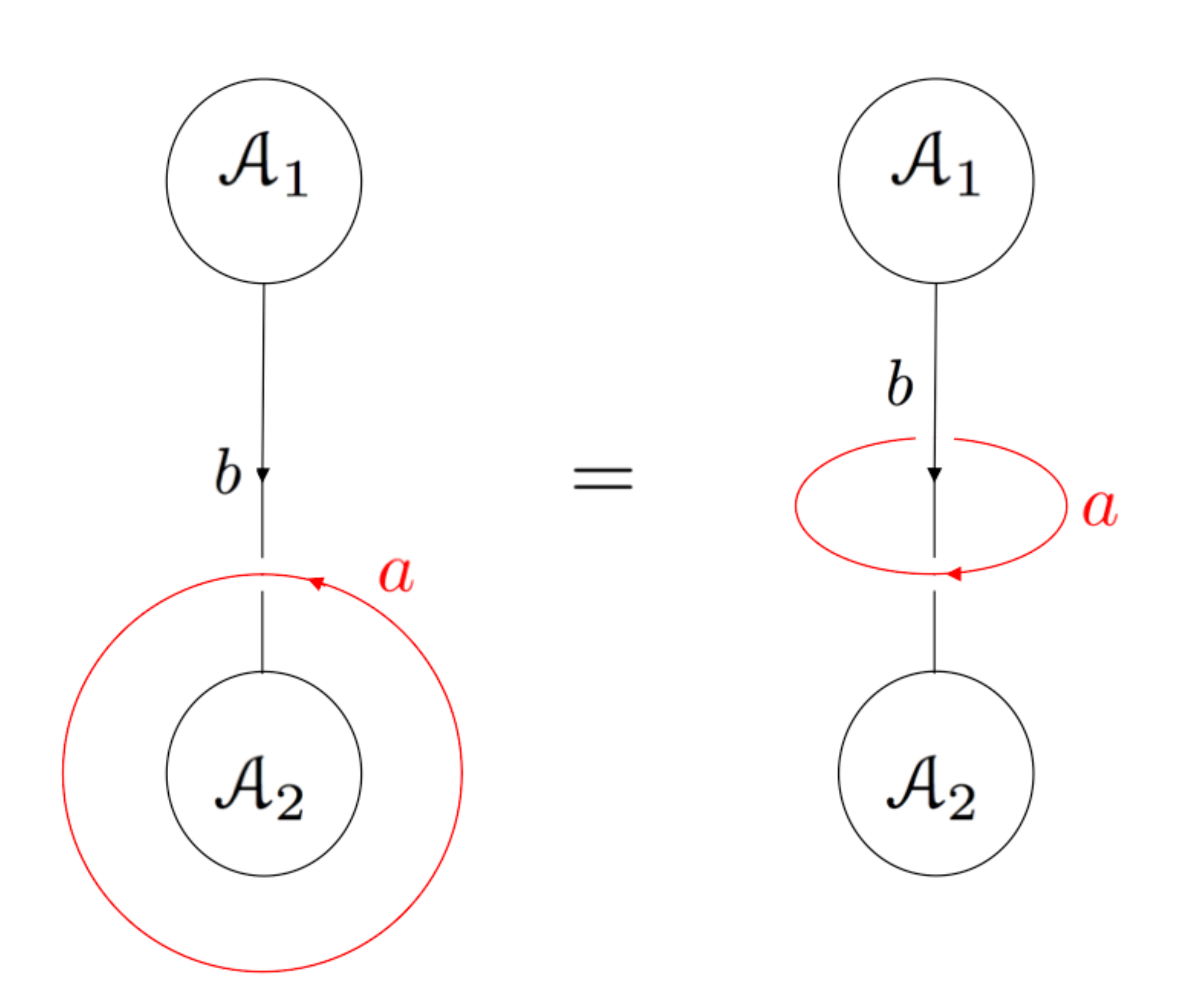}}}
\end{equation}

\noindent
The right hand side of Eq. (\ref{eq:loop-2}) may be simplified using the definition of the $\mathcal{S}$ matrix as follows: Suppose

\begin{equation}
\label{eq:loop-3}
\vcenter{\hbox{\includegraphics[width = 0.28\textwidth]{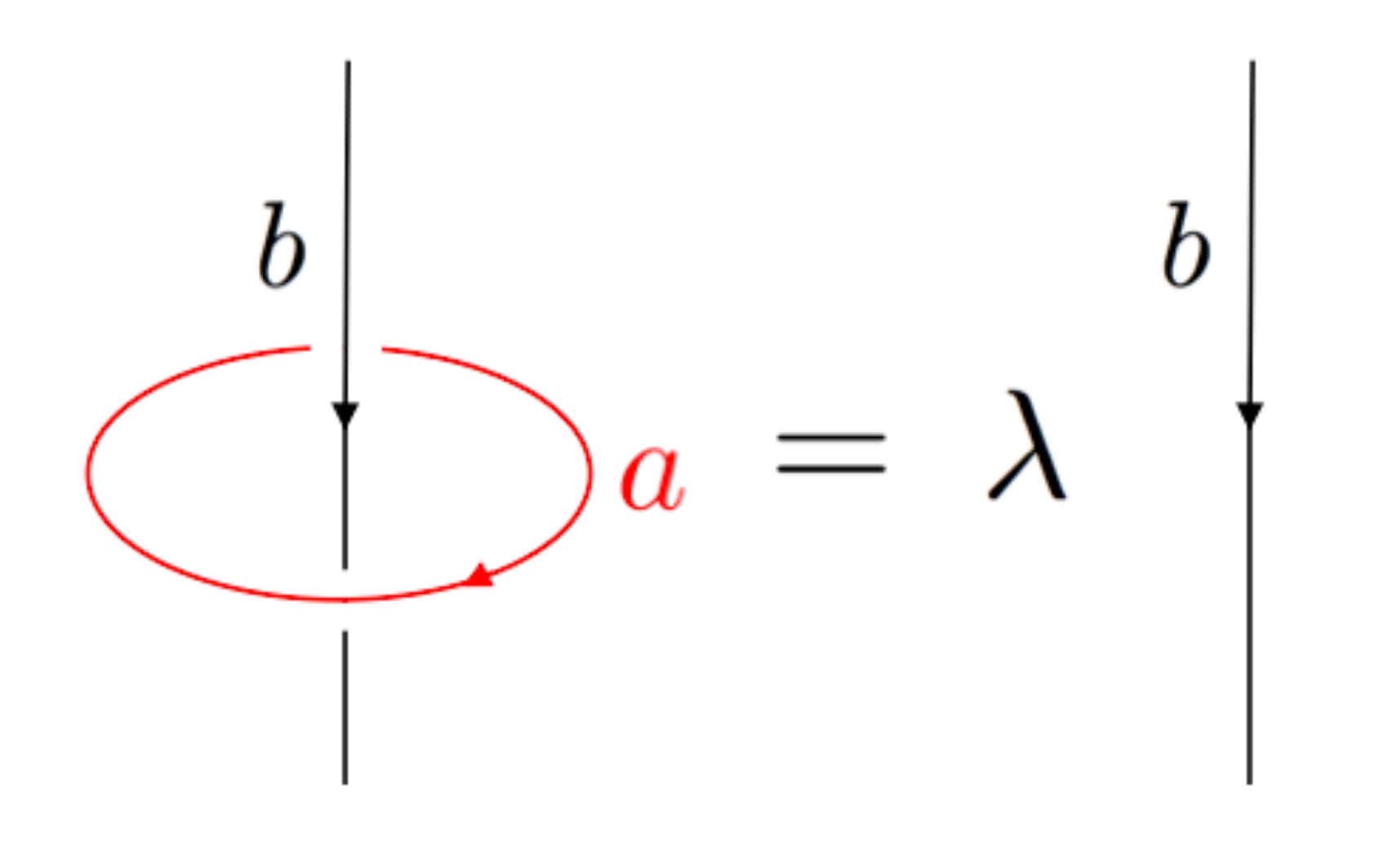}}}.
\end{equation}

\noindent
Then, taking traces on both sides, we get

\begin{equation}
\label{eq:loop-4}
\vcenter{\hbox{\includegraphics[width = 0.42\textwidth]{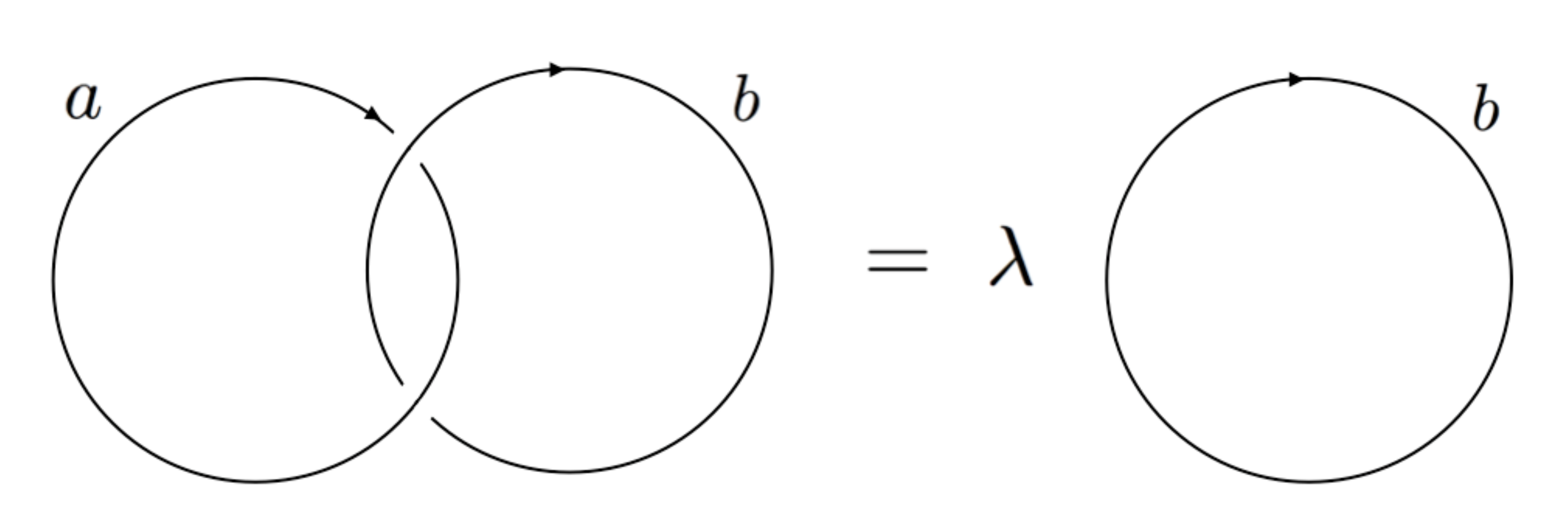}}}.
\end{equation}

By definition of the $\mathcal{S}$ matrix of the modular tensor category $\B$ \cite{BakalovKirillov}, we have $\lambda = \frac{S_{ab}}{d_b}$. Hence, we have the formula

\begin{equation}
\label{eq:loop-formula}
\vcenter{\hbox{\includegraphics[width = 0.45\textwidth]{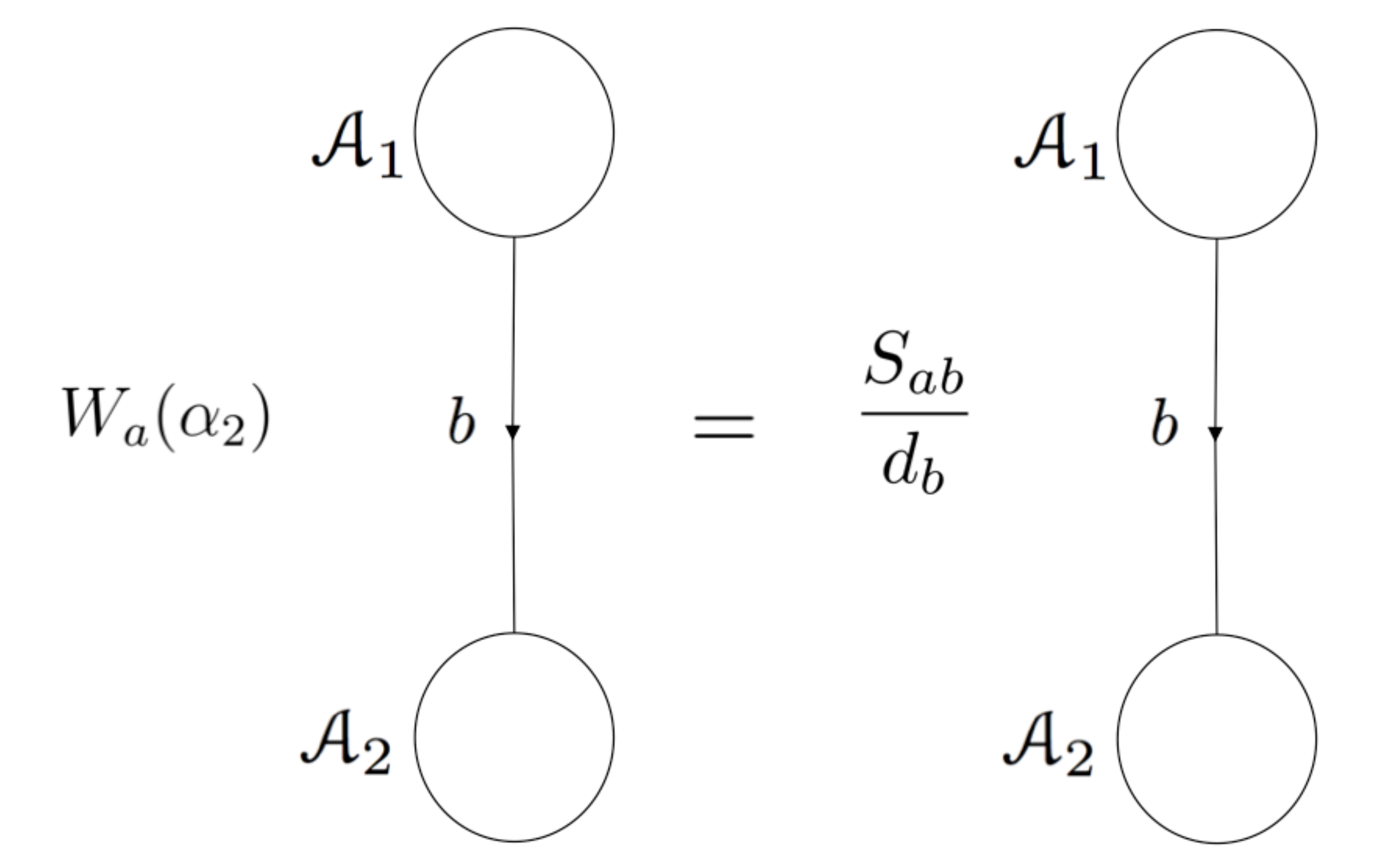}}}.
\end{equation}

As before, $W_a(\alpha_i)$ gives a $d\times d$ matrix that acts on the ground state $\Hom(1, \A_1 \otimes \A_2)$.

As in the case of the tunneling operator, the loop operator $W_{a}(\alpha_i)$ also need not be Hermitian or unitary. In general, it is just a Wilson loop operator, which is a holonomy.

\subsection{Braiding gapped boundaries}
\label{sec:braiding}

\begin{figure}
\centering
\includegraphics[width = 0.4\textwidth]{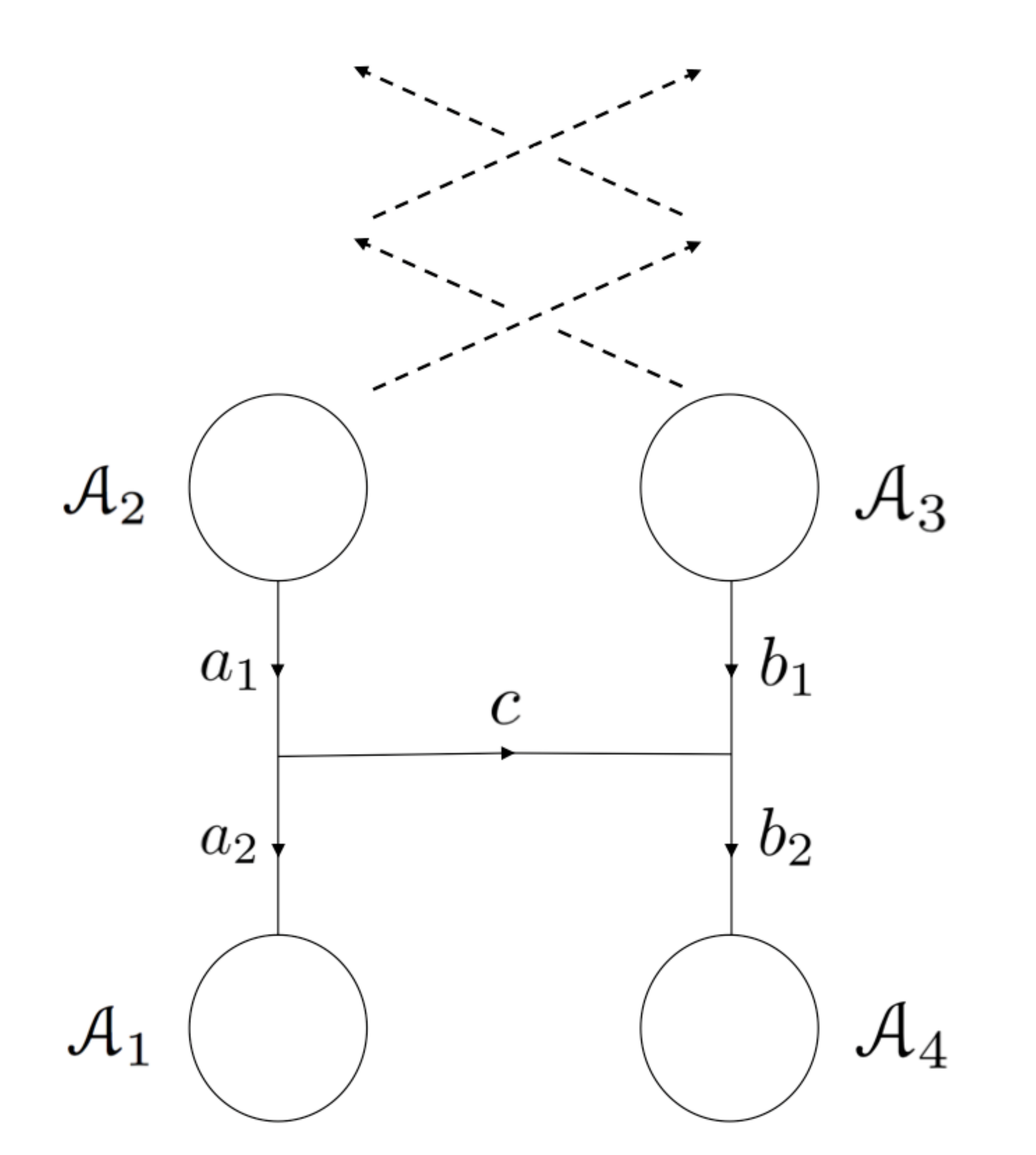}
\caption{Illustration of the braiding of two gapped boundaries, for the generator $\sigma_2^2$ of the four-strand pure braid group $P_4$. We note that the solid lines indicate the tunneling operators from the basis vectors (i.e. they do {\it not} signify motion of the holes), while the dotted lines indicate how the holes move in the braiding process.}
\label{fig:braiding}
\end{figure}

In this section, we discuss how to braid gapped boundaries around each other. This gives multiple-qudit operations that can be used to obtain entangling gates. Physically, braiding corresponds to moving gapped boundaries around each other, which can be done by tuning the Hamiltonian $H_{\text{G.B.}}$ adiabatically or by using the procedure described in Chapter \ref{sec:circuits}.

In general, suppose we have $n$ gapped boundaries, given by Lagrangian algebras $\A_1, ... \A_n$, in a model which has total charge vacuum. By Eq. (\ref{eq:ground-state-algebraic}), the ground state of this model is given by the vector space $\Hom(\one_\B, \A_1 \otimes \A_2 \otimes ... \otimes \A_n)$. By the procedure of Chapter \ref{sec:circuits}, we may arbitrarily braid the gapped boundaries around each other to obtain a unitary transformation on this hom-space, so long as we return each boundary to its original position. Mathematically, this means that the braiding operation should be a representation of the $n$-strand pure braid group $P_n$. 

For most purposes of universal quantum computation, it is sufficient for us to consider 2-qudit encodings, where we have $n = 4$, as shown in Fig. \ref{fig:braiding}. In general, it is necessary to compute all 6 generators of $P_4$. Here, as an example, we focus on the computation of the generator $\sigma_2^2$. It is simple for the interested reader to generalize this computation to all other cases, and even to higher-strand braid groups.

Consider an arbitrary basis element of $\Hom(\one_\B, \A_1 \otimes \A_2 \otimes \A_3 \otimes \A_4)$ as our start state. After applying $\sigma_2^2$, we have:\footnote{For the rest of this section, we assume for simplicity of illustration and computation that all anyons are self-dual. The generalization is obvious, but one just needs to be more careful in drawing orientations for each edge and using $F$ and $R$ symbols.}

\begin{equation}
\label{eq:braid-1}
\vcenter{\hbox{\includegraphics[width = 0.65\textwidth]{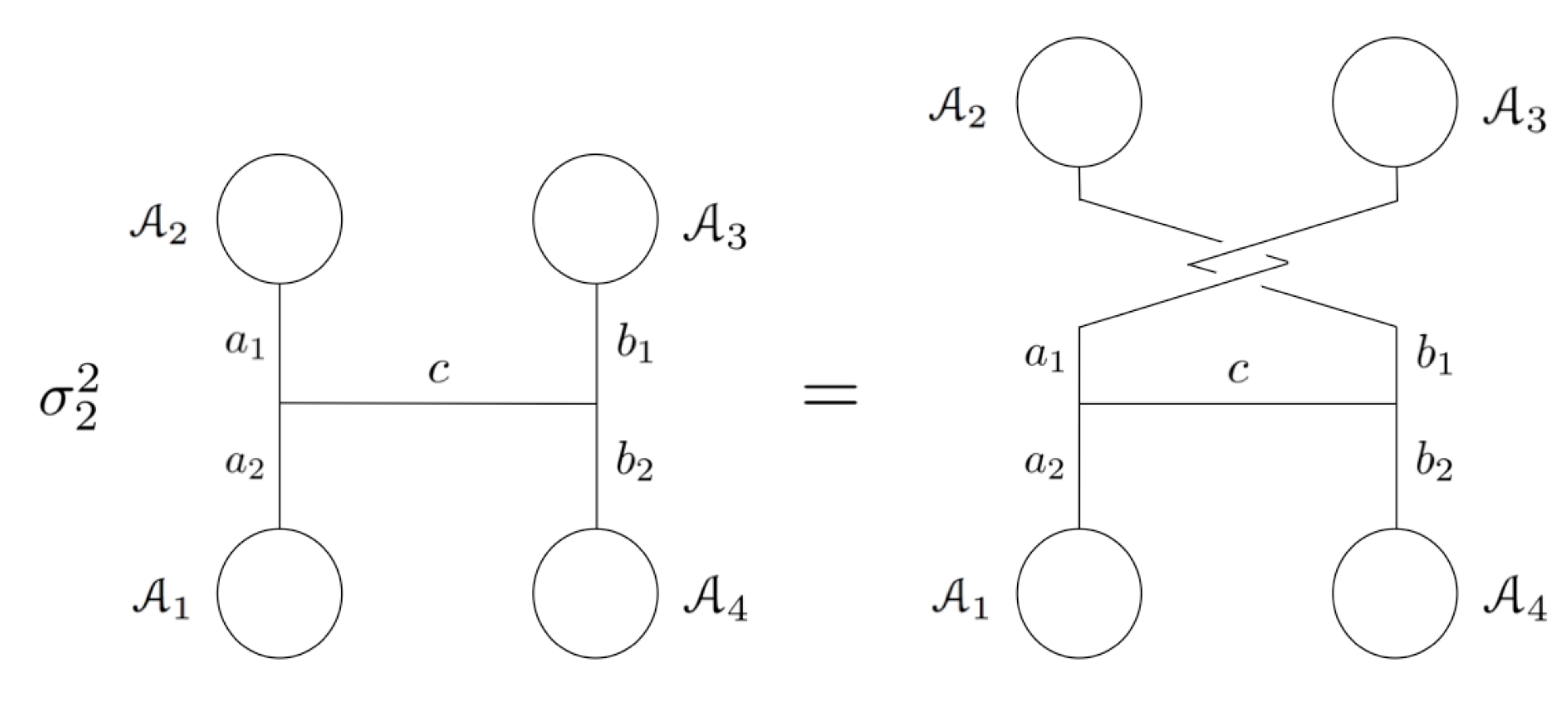}}}
\end{equation}

\noindent
As with the earlier cases, we would like to express the right hand side of Eq. (\ref{eq:braid-1}) in terms of the original basis. We can apply an $F$-move to get:

\begin{equation}
\label{eq:braid-2}
\vcenter{\hbox{\includegraphics[width = 0.7\textwidth]{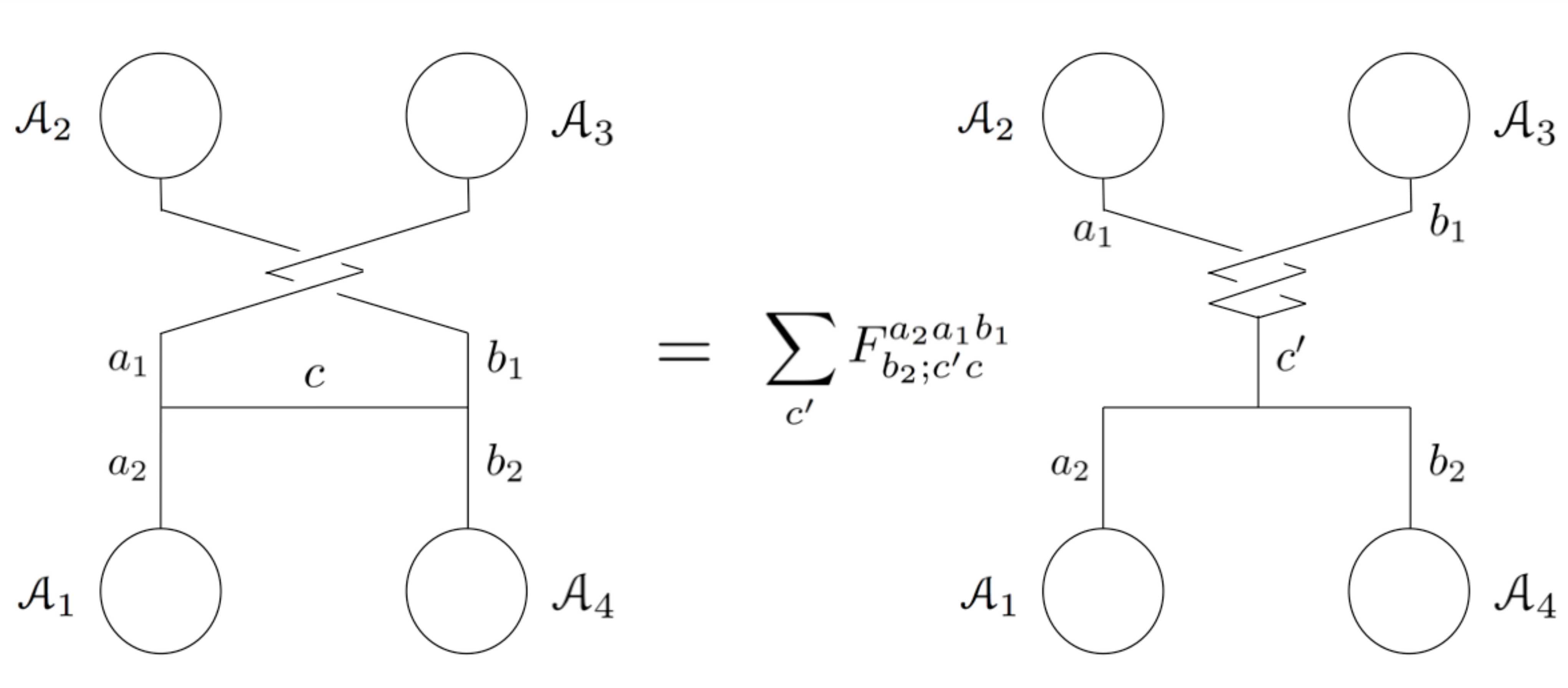}}}
\end{equation}

\noindent
Next, applying two $R$-moves gives:

\begin{equation}
\label{eq:braid-3}
\vcenter{\hbox{\includegraphics[width = 0.7\textwidth]{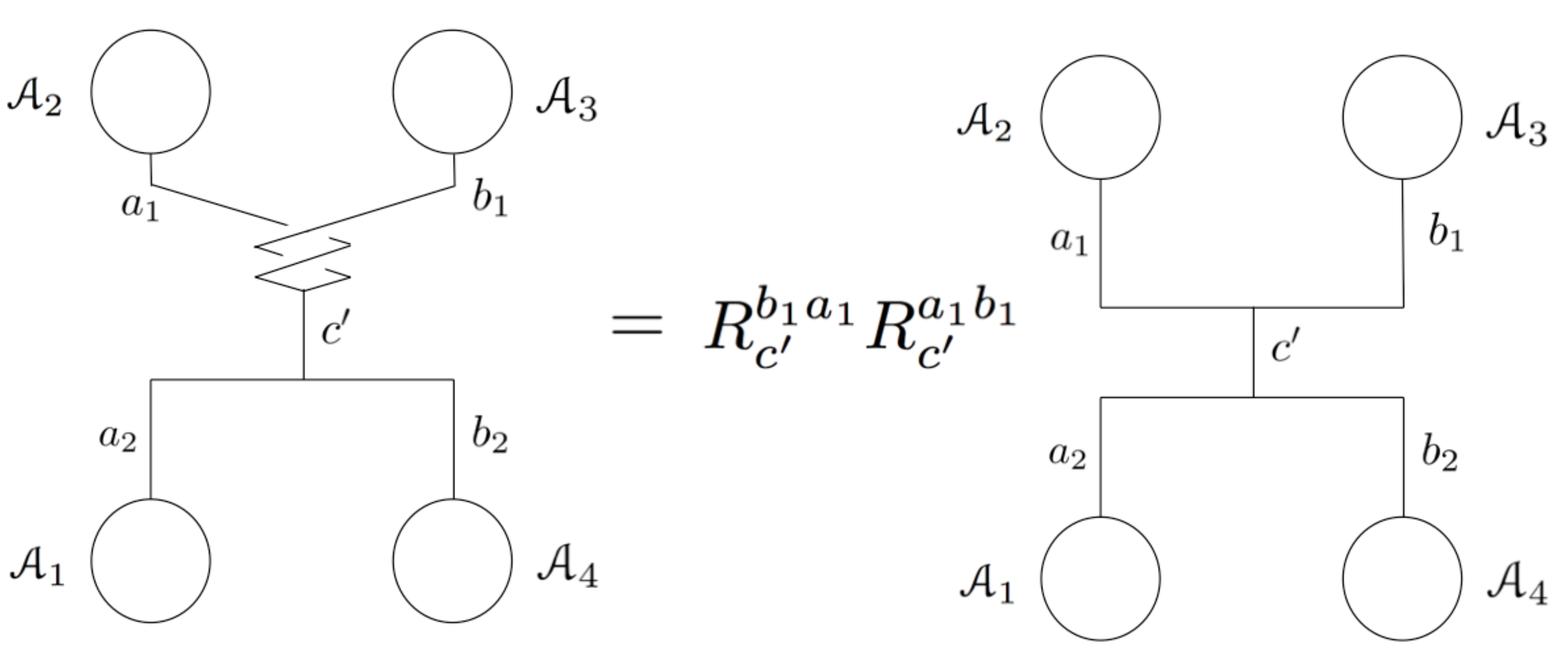}}}
\end{equation}

\noindent
Finally, we apply one more $F$-move, which gives:

\begin{equation}
\label{eq:braid-4}
\vcenter{\hbox{\includegraphics[width = 0.65\textwidth]{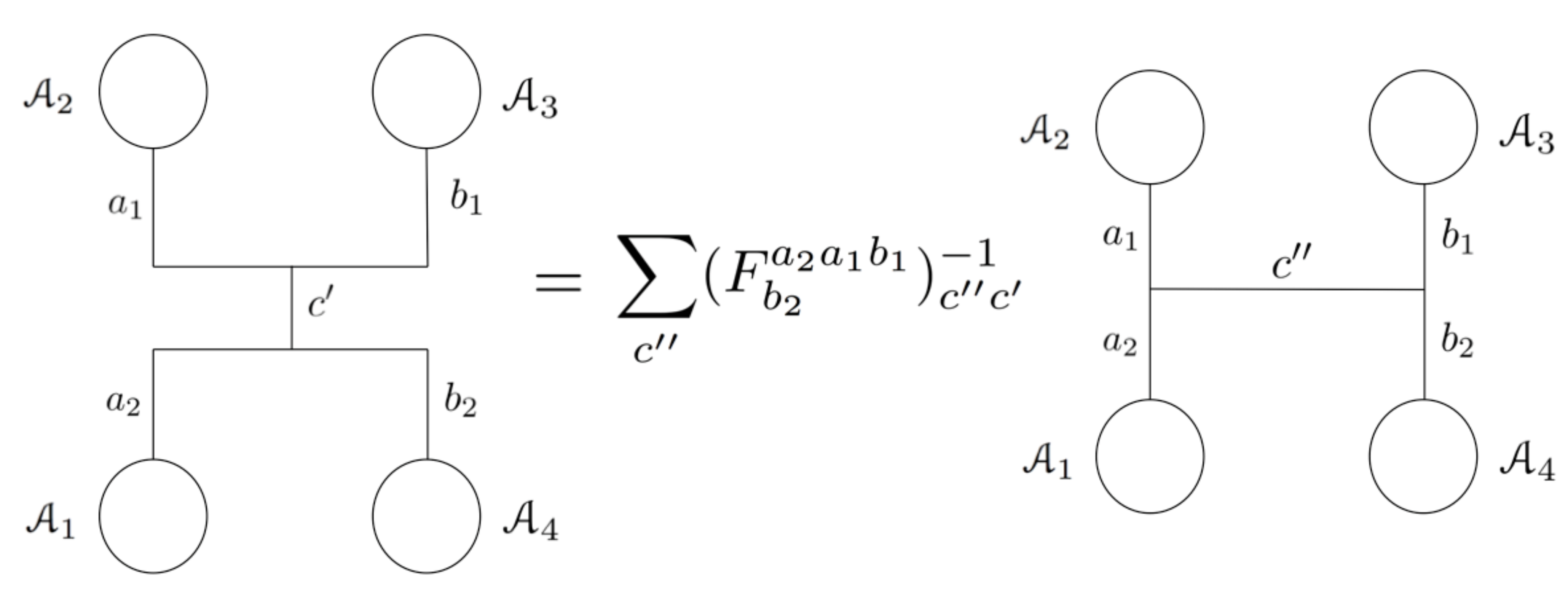}}}
\end{equation}

\noindent
Hence, we see that action of the pure braid group generator $\sigma_2^2$ on an arbitrary basis vector of $\Hom(\one_\B, \A_1 \otimes \A_2 \otimes \A_3 \otimes \A_4)$ is given by the following formula:

\begin{equation}
\label{eq:braid-formula}
\vcenter{\hbox{\includegraphics[width = 0.93\textwidth]{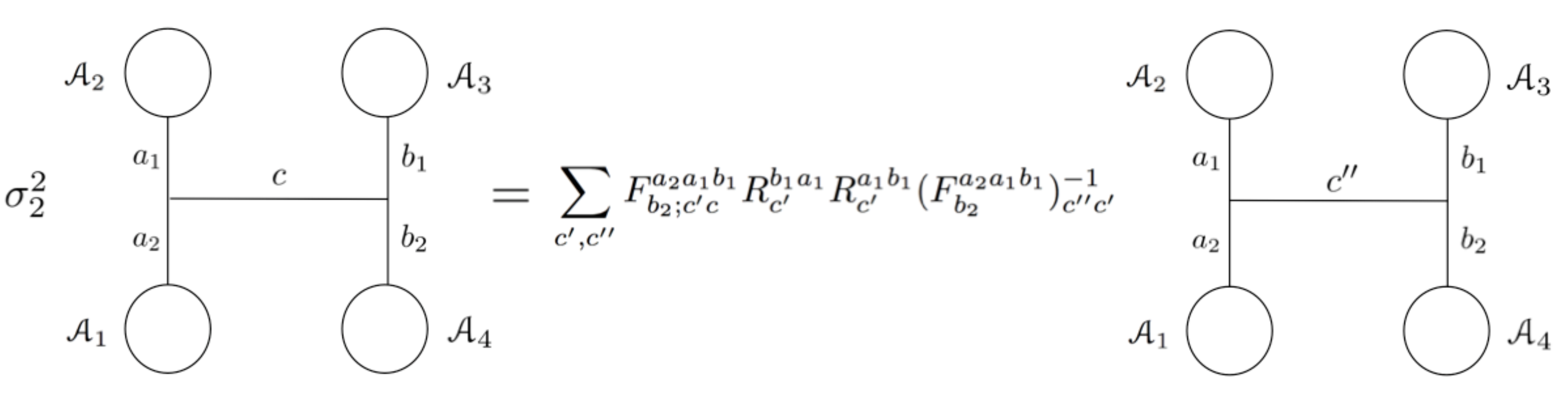}}}
\end{equation}

In general, $\sigma_2^2$ (or any other generator of $P_4$) gives a unitary transformation on the hom-space $\Hom(\one_\B, \A_1 \otimes \A_2 \otimes \A_3 \otimes \A_4)$. However, for the purposes of quantum computation, we would like to work in a computational subspace of this hom-space which corresponds to two qudits in the encoding of Fig. \ref{fig:gapped-boundary-basis}. As a result, to avoid leakage into non-computational subspace, one should find a braid such that the computational subspace is preserved.

We note that we have only considered the pure braid group $P_4$ in this context, simply to ensure that the boundary type of each hole does not change after braiding. If any two $\A_i$ and $\A_j$ are equal, it is certainly valid to interchange these two holes. In fact, this will be done as an example in Section \ref{sec:ds3-operations}.

\begin{remark}
In fact, the gapped boundary braiding presented in this section gives more than just a representation of the four-strand pure braid group. It is actually a representation of the spherical four-strand pure braid group, since we assume that the total charge around the four holes is vacuum (so the model is equivalent to four holes on a sphere). In general, it would be an interesting mathematical question to compute the image of this representation. This would have very useful applications to achieving universal quantum computation.
\end{remark}

\subsection{Topological charge measurement}
\label{sec:measurement}

\begin{figure}
\centering
\includegraphics[width = 0.65\textwidth]{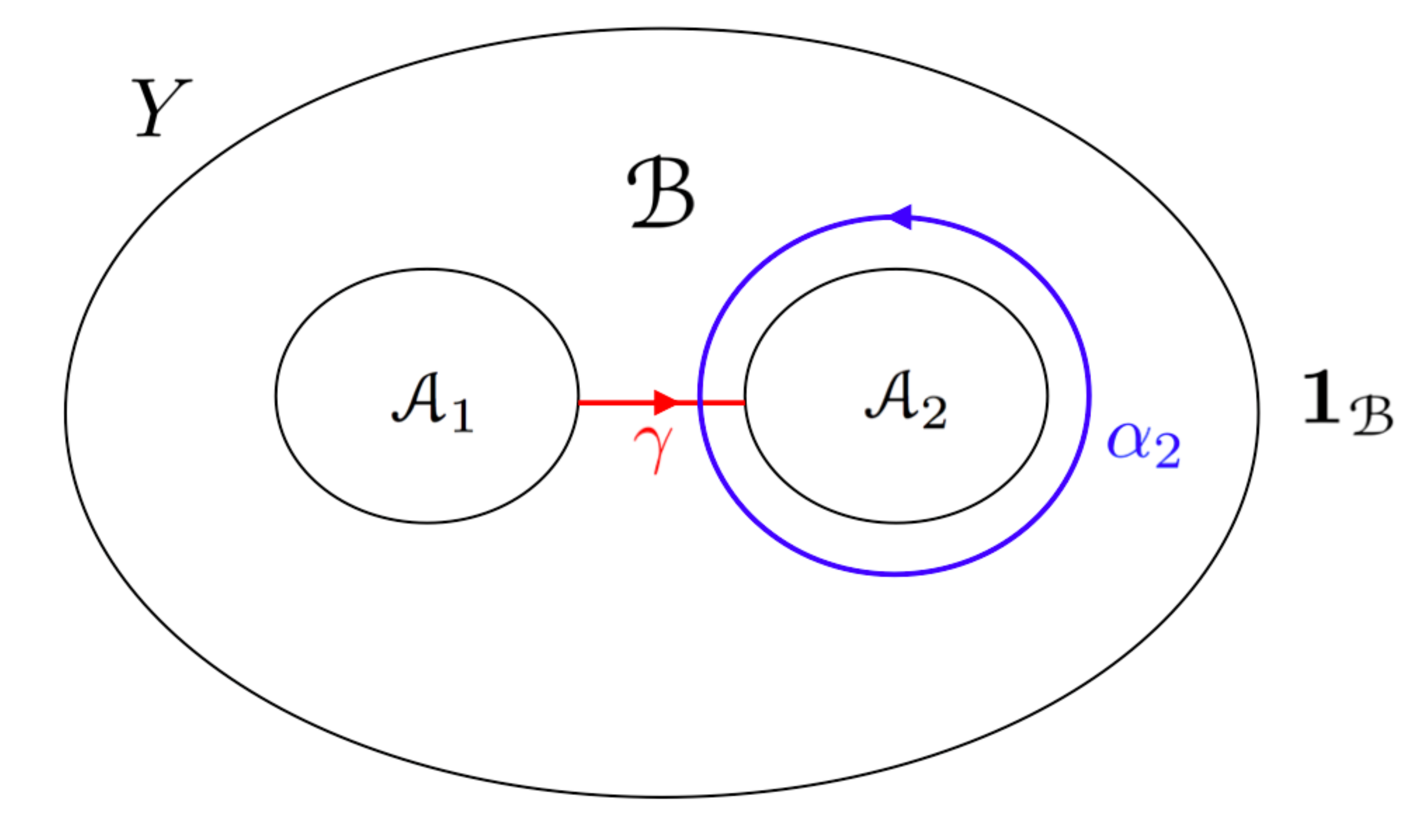}
\caption{Illustration of topological charge measurement protocol, in the case of two inside gapped boundaries. The total charge of the system is vacuum. $\alpha_2$ is a counter-clockwise loop that encircles the hole given by $\A_2$. There is only one simple arc $\gamma_1$ used to define the basis; it connects $\A_1$ to $\A_2$ and is labeled here as $\gamma$.}
\label{fig:tcm}
\end{figure}

\subsubsection{Motivations}
Suppose we are given a topological order $\B = \mZ(\mC)$ on a planar region $Y$ with $n$ holes, $\mfh_i,i=1,...,n$, each labeled by a Lagrangian algebra $\A_i$. Let us assume that the total charge of the system is given by vacuum, the tensor unit $\one_B$ of $\B$. As discussed in Section \ref{sec:algebraic-gsd}, the ground state manifold of this setup is given by the hom-space $\Hom(\one_B, \otimes_i \A_i)$. An example of this setup is shown in Fig. \ref{fig:tcm}, in the special case where $n = 2$.

If $\B$ is a Dijkgraaf-Witten theory (e.g. the TQFTs realized by the Hamiltonians of Chapter \ref{sec:hamiltonian}), the unitary gates from tunnel-$a$ operators, loop-$a$ operators, and gapped boundary braiding (introduced in the preceding sections of this chapter) generate only a finite group. Hence, it is not possible to use these operations alone to form universal gate sets. This leads us to consider other physically reasonable topological protocols to obtain more gates. In this section, inspired by the results of Ref. \cite{Barkeshli16}, we discuss one such protocol, which is topological charge measurement.

\subsubsection{General topological charge measurement}
\label{sec:general-tcm}

To introduce this protocol, let us first define an operator algebra $\mathcal{W}(\B,\{\A_i\})$ for the symmetries of the theory at low energy, which will be known as the {\it Wilson operator algebra}. We first construct a set $\Gamma(Y)$ of simple loops and arcs in $Y$ as follows:

\begin{enumerate}
\item
For each $i = 1,2, ... n$, let $\alpha_i$ be a simple loop that encircles hole $i$. We define all loops $\alpha_i$ to be oriented counter-clockwise. Then $\alpha_i \in \Gamma(Y)$.
\item
For each $i = 1,2, ... n-1$, let $\gamma_i$ be a simple arc that connects hole $i$ and hole $i+1$. We define $\gamma_i$ so that it is always oriented to point from $i$ to $i+1$. Then $\gamma_i \in \Gamma(Y)$.
\end{enumerate}

Examples of loops and arcs in $\Gamma(Y)$ are shown in Fig. \ref{fig:tcm}, for the case where $n=2$. By definition of $\Gamma(Y)$, each knot diagram in $Y$ can be resolved using the graphical calculus to a linear combination of loop and arc operators in $\Gamma(Y)$.

We can now construct a basis $L_\mW (\B,\{\A_i\})$ for the Wilson operator algebra $\mathcal{W}(\B,\{\A_i\})$:

\begin{enumerate}
\item
For each simple object $a \in \B$, and for each loop $\alpha_i \in \Gamma(Y)$, the Wilson loop operator $W_{a}(\alpha_i)$ (defined in Section \ref{sec:loop}) is a basis element in $L_\mW (\B,\{\A_i\})$.
\item
For each $i = 1,2,... n-1$, let $A_{i,i+1}$ be the set of all anyon types $a \in \B$ such that the anti-particle $\overbar{a}$ has a nonzero coefficient in the Lagrangian algebra $\A_i$, while $a$ has a nonzero coefficient in the Lagrangian algebra $\A_{i+1}$. Then, for each $a \in A_{i,i+1}$, the Wilson line operator $W_{\gamma_i}(a)$ is a basis element in $L_\mW (\B,\{\A_i\})$.
\end{enumerate}

We posit that any Hermitian operator $\mathcal{O}\in \mathcal{W}(\B,\{\A_i\}) = \Span (L_\mW (\B,\{\A_i\}))$ can be measured. Such operators $\mathcal{O}$ will be called {\it topological charge measurement operators}.  The corresponding projective measurements $P_\mathcal{O}$ will be called {\it topological charge measurements}. In general, any operator $h \in \mathcal{W}(\B,\{\A_i\})$ can be used to form a topological charge measurement operator $h+{h}^{\dagger}$, although it may be unphysical.

To be more physical, we will consider only operators which are monomials of basis operators. Let $H \subseteq L_\mW$ be the collection of all Hermitian basis operators of $\mathcal{W}(\B,\{\A_i\})$, and let $O = L_\mW \backslash H \subseteq L_\mW$ be the collection of all non-Hermitian basis vectors. Each basis vector $H_i \in H$ gives rise to a topological charge measurement that has a definite physical meaning.

In general, certain linear combinations of basis vectors $O_i \in O$ should also give rise to topological charge measurements. However, it is more complicated to determine which coefficients in the linear combination will give rise to physical measurements. One special case of this is the topological charge projections of Ref. \cite{Barkeshli16}. These are discussed below.

\subsubsection{Topological charge projection}

\begin{figure}
\centering
\includegraphics[width = 0.65\textwidth]{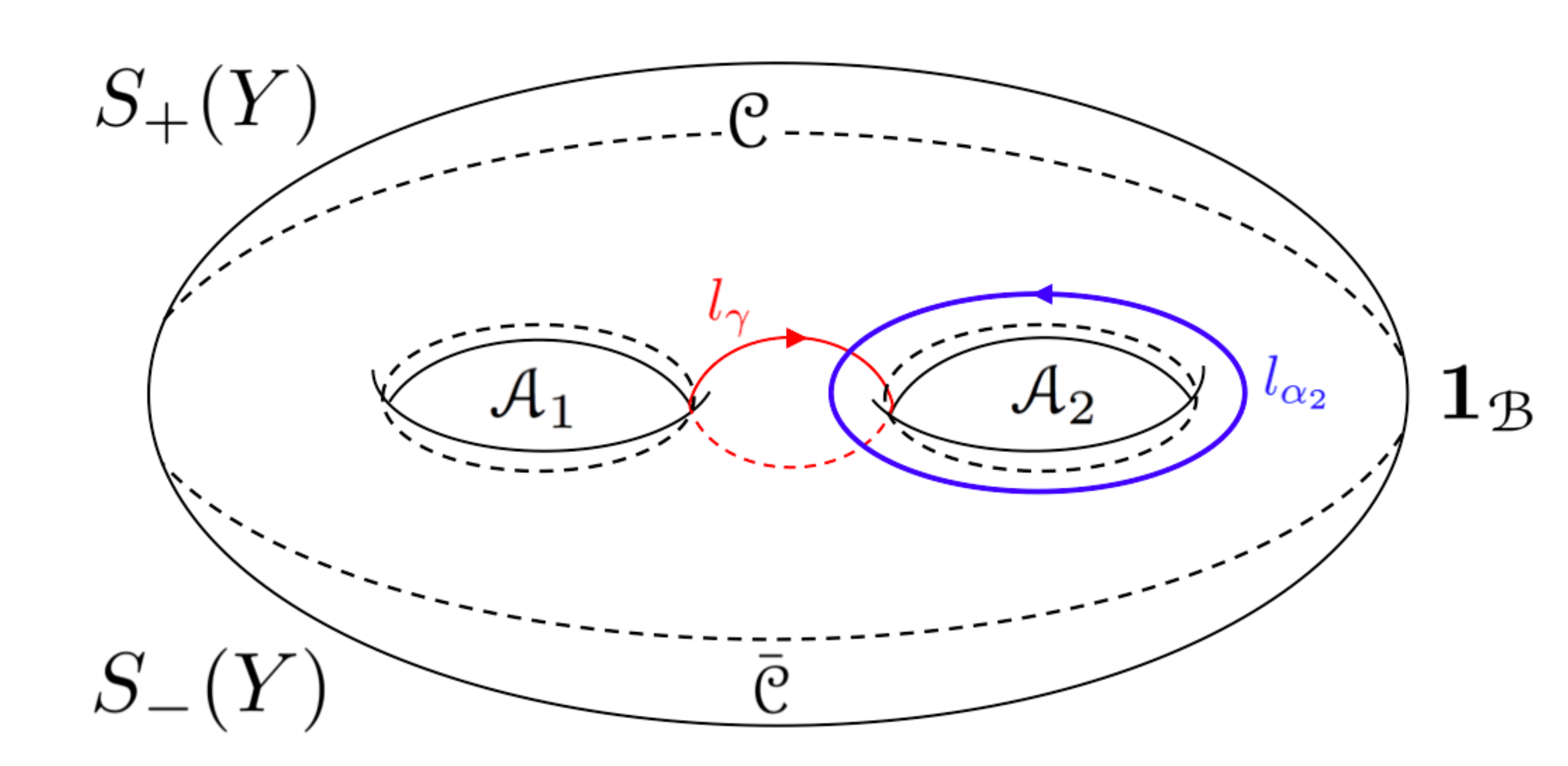}
\caption{Illustration of topological charge projection protocol. Because $\mC$ is modular, the original bulk TQFT $\B$ splits into two layers $\mC$ and $\overbar{\mC}$, which are completely separate in the bulk, but are glued together at the original boundaries (black dotted lines). Each loop $\alpha_i \in \Gamma(Y)$ now becomes a loop $l_{\alpha_i}$ in one of the layers $S_+ (Y)$ or $S_- (Y)$, and each arc $\gamma_i \in \Gamma(Y)$ now lifts to a loop $l_{\gamma_i}$ that goes around both layers.}
\label{fig:tcp}
\end{figure}

Suppose $\B=\ZZ(\CC)$ represents a topological order, where $\CC$ is also a modular tensor category. Because $\mC$ is modular, the category $\mC$ also gives a TQFT. As a result, $\B = \mZ(\mC) = \mC \boxtimes \overbar{\mC}$ is a doubled theory which splits into two topological orders $\mC$ and $\overbar{\mC}$. The two theories $\mC$ and $\overbar{\mC}$ do not interact with each other in the bulk, but are ``stuck together'' at the original boundaries $\A_1, ... \A_n$ of $\B$. In this case, the planar region $Y$ (such as the one portrayed in Fig. \ref{fig:tcm}) splits into two mirror layers, $S_{+}(Y)$ and $S_{-}(Y)$, such that $S_{+}(Y)$ now forms the bulk of the TQFT $\mC$, while $S_{-}(Y)$ now forms the bulk of the TQFT $\overbar{\mC}$. The two layers $S_{+}(Y)$, $S_{-}(Y)$ are completely disjoint, except they are also ``stuck together'' at the original boundaries of $Y$. This configuration is illustrated in Fig. \ref{fig:tcp} in the case where $n=2$.

Let us now consider the Wilson operator algebra $\mathcal{W}(\B,\{\A_i\})$ that we constructed in the previous subsection. First, in the new context, each loop $\alpha_i \in \Gamma(Y)$ now becomes a loop $l_{\alpha_i}$ in one of the layers $S_{+}(Y)$ or $S_{-}(Y)$, while each arc $\gamma_i \in \Gamma(Y)$ now lifts to a loop $l_{\gamma_i}$ that goes around both layers. Hence, physically, a Wilson loop operator along $\alpha_i$ now measures charges through the loop $l_{\alpha_i}$. The Wilson operator measuring charge $a$ through $l_{\alpha_i}$ leads to a topological charge measurement given by \cite{Barkeshli16}

\begin{equation}
P^{(a)}_{l_{\alpha_i}}=\sum_{x\in \CC}S_{0a}S_{xa}^{*}W_x(\alpha_i)
\end{equation}

\noindent
where the sum runs over the simple objects $x$ of the modular tensor category $\mC$, and $S_{ab}$ is the modular $\mathcal{S}$-matrix of $\CC$. We emphasize that the projector measures topological charges $a$ in the topological order $\mC$, not the original doubled order $\B$. Likewise, the Wilson loop operators $W_x(\alpha_i)$ are computed simply using the loop formula (\ref{eq:loop-formula}) with the $\mathcal{S}$-matrix data of $\CC$.

Similarly, a Wilson line operator along $\gamma_i$ now measures charges through the loop $l_{\gamma_i}$. A Wilson operator measuring charge $a$ through $l_{\gamma_i}$ leads to a topological charge measurement given by \cite{Barkeshli16}

\begin{equation}
P^{(a)}_{l_{\gamma_i}}=\sum_{x\in \CC}S_{0a}S_{xa}^{*}W_{x\overbar{x}}(\gamma_i).
\end{equation}

\noindent
As before, $P^{(a)}_{l_{\gamma_i}}$ measures topological charges $a$ in the topological order $\mC$. However, this time, the is computed using Wilson line operators $W_{x\overbar{x}}(\gamma_i)$ by evaluating the tunneling formula (\ref{eq:tunnel-formula}) with the data from the {\it doubled} topological order $\B$.

In Ref. \cite{Barkeshli16}, it is shown that topological charge projections such as $P^{(a)}_{l_{\alpha_i}}$ and $P^{(a)}_{l_{\gamma_i}}$ can be used to generate all of the mapping class group representations $V_{\CC}(Y)$ of a closed surface $Y$ from the modular category $\CC$.

This theory can be generalized partially to the case where $\CC$ is not modular but abelian \cite{Bark13b}. We note that our postulate is a generalization of all of these results.

\subsubsection{Topological charge measurement in the surface code context} 
\label{sec:surface-code-tcm}
Finally, we conclude the discussion on topological charge measurement by considering a type of measurement that may be implemented through the surface code methods of Chapter \ref{sec:circuits}. Given a product of some basis Hermitian operators $H_{i_j}$ of the Wilson operator algebra such that $H_{i_1}\cdots H_{i_k}=\alpha H_{i_k}\cdots H_{i_1}$ for some phase $\alpha$, the Hermitian operator

\begin{equation}
P_{(i_1, ... i_k)} = \sqrt{\alpha}H_{i_1}\cdots H_{i_k}
\end{equation}

\noindent
is a topological charge measurement operator.

In general, topological charge measurements may be implemented on the surface code by introducing ancilla qudits and suitably entangling them with the data qudits using a sequence of available Clifford gates. Then, by measuring the ancilla qudit (e.g. in the $\sigma^z$ basis), we can project the entire system into an eigenstate of some topological charge measurement operator, which is determined by the sequence of gates applied.

\subsection{Physically implementable gates}
\label{sec:physically-implementable}

In Ref. \cite{Fowler12}, Fowler et al. provide a detailed discussion of how to implement the Hadamard gate on the gapped boundary basis in the $\Z_2$ surface code. The same procedure may be used to obtain the generalized Hadamard gate (i.e. Fourier transform) with the $\Z_p$ surface code, for any $p$ prime. Specifically, one can simply replace all instances of the qubit Hadamard $H_2$ with the generalized Hadamard gate $H_p$ for qupits, all instances of the qubit CNOT with the generalized entangling gate $\SUM_p$, and all instances of the qubit Pauli-$X$ and Pauli-$Z$ operators with the generalized Pauli-$X$ and Pauli-$Z$ for qupits. (These operations are defined in Section \ref{sec:universal-gate-set}). Due to time and space constraints, the detailed procedure will not be presented here.

\subsection{Example: The Toric Code}
\label{sec:tc-operations}

In this section, we present a qubit encoding using gapped boundaries of the $\Z_2$ toric code, and use this to illustrate the topological operations discussed in the previous sections. Sections \ref{sec:tc-qubit-encoding}-\ref{sec:tc-braiding} and \ref{sec:tc-hadamard} are adapted from the work by Fowler et al. \cite{Fowler12} to our categorical presentation, and Section \ref{sec:tc-measurement} presents a new way to obtain the phase gate using topological charge measurement.

\subsubsection{Qubit encoding}
\label{sec:tc-qubit-encoding}

\begin{figure}
\centering
\includegraphics[width = 0.45\textwidth]{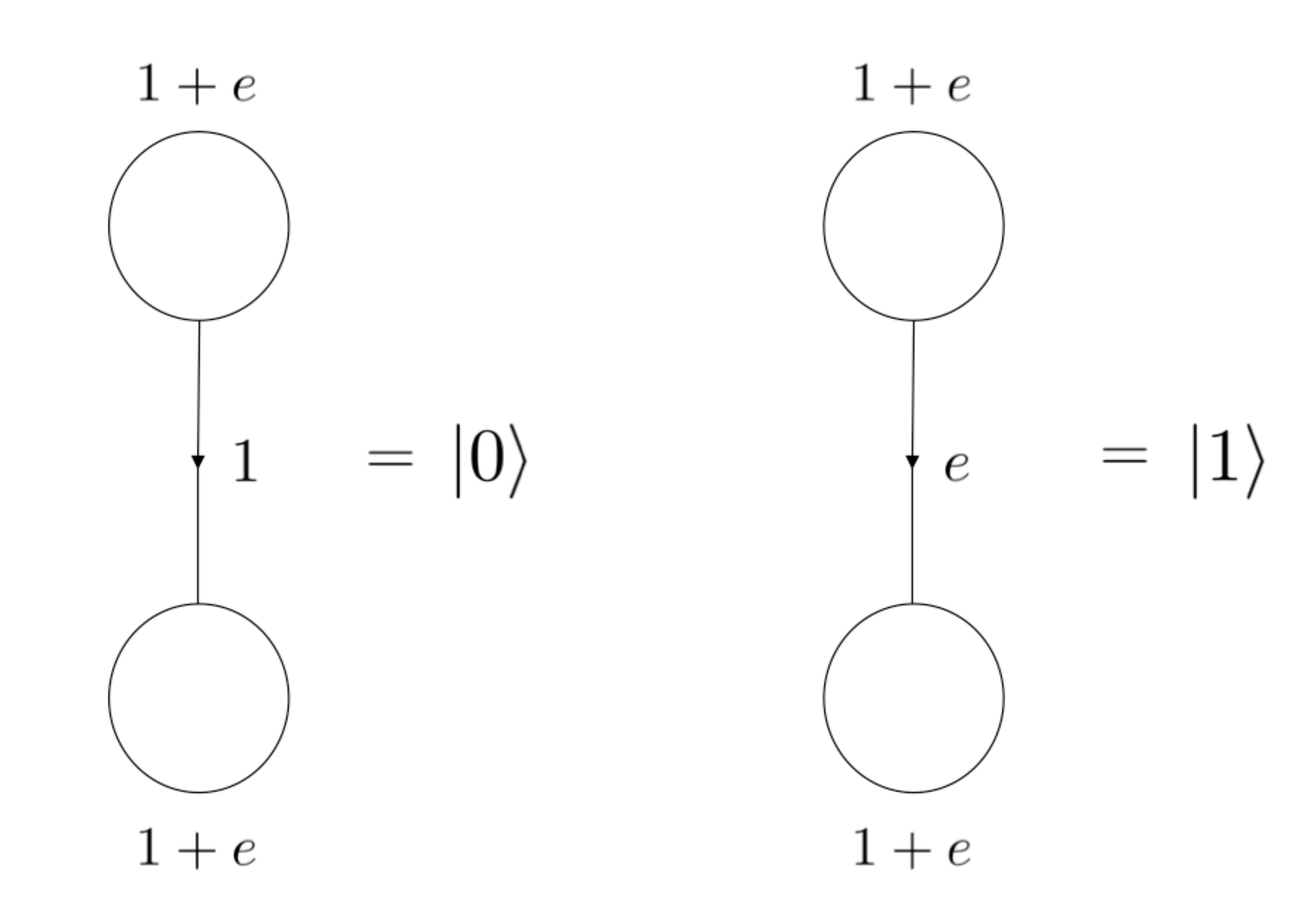}
\caption{Qubit encoding using $1+e$ boundaries of the $\Z_2$ toric code.}
\label{fig:tc-qubit-encoding}
\end{figure}

As shown in Fig. \ref{fig:tc-qubit-encoding}, we will mainly use two $1+e$ boundaries to encode a computational qubit, but will occasionally use two $1+m$ boundaries to encode an ancillary qubit. We associate the degeneracy of two $1+e$ boundaries with the qudit basis as follows: Let $\ket{0}$ be state that would be at zero energy even if the holes were not present (i.e. no particle tunneling), and let $\ket{1} = W_e(\gamma)\ket{0}$. (The same encoding will be used for $1+m$ boundaries when necessary, with $\ket{1} = W_m(\gamma)\ket{0}$). When we present a quantum gate, we will present in the standard way such that all rows and columns are ordered as $(\ket{0},\ket{1})$.

\subsubsection{Tunnel-$e$ operators}
\label{sec:tc-tunnel}

The first operation we can consider on our qubit is to tunnel an $e$ particle from one of the holes to the other. This is also presented in Ref. \cite{Fowler12} using the stabilizer code language of Chapter \ref{sec:circuits}. Using our methods from Chapter \ref{sec:algebraic} and this chapter, by Eq. (\ref{eq:tunnel-formula}), since all $M$ symbols of the toric code are trivial, we find that the matrix corresponding to this operation is

\begin{equation}
W_e(\gamma) =
\begin{bmatrix}
0 & 1 \\
1 & 0
\end{bmatrix}
= \sigma^x.
\end{equation}

\subsubsection{Loop-$m$ operators}

The next topological operation we consider is to loop an $m$ particle around one of the holes. Ref. \cite{Fowler12} also presents this using the stabilizer code language of Chapter \ref{sec:circuits}. By Eq. (\ref{eq:loop-formula}), since the modular $\mathcal{S}$ matrix of the $\mfD(\Z_2)$ theory has $S_{em} = -1$, we find that the matrix corresponding to this operation is

\begin{equation}
W_m(\alpha_2) =
\begin{bmatrix}
1 & 0 \\
0 & -1
\end{bmatrix}
= \sigma^z.
\end{equation}

\subsubsection{Braiding}
\label{sec:tc-braiding}

\begin{figure}
\centering
\includegraphics[width = 0.35\textwidth]{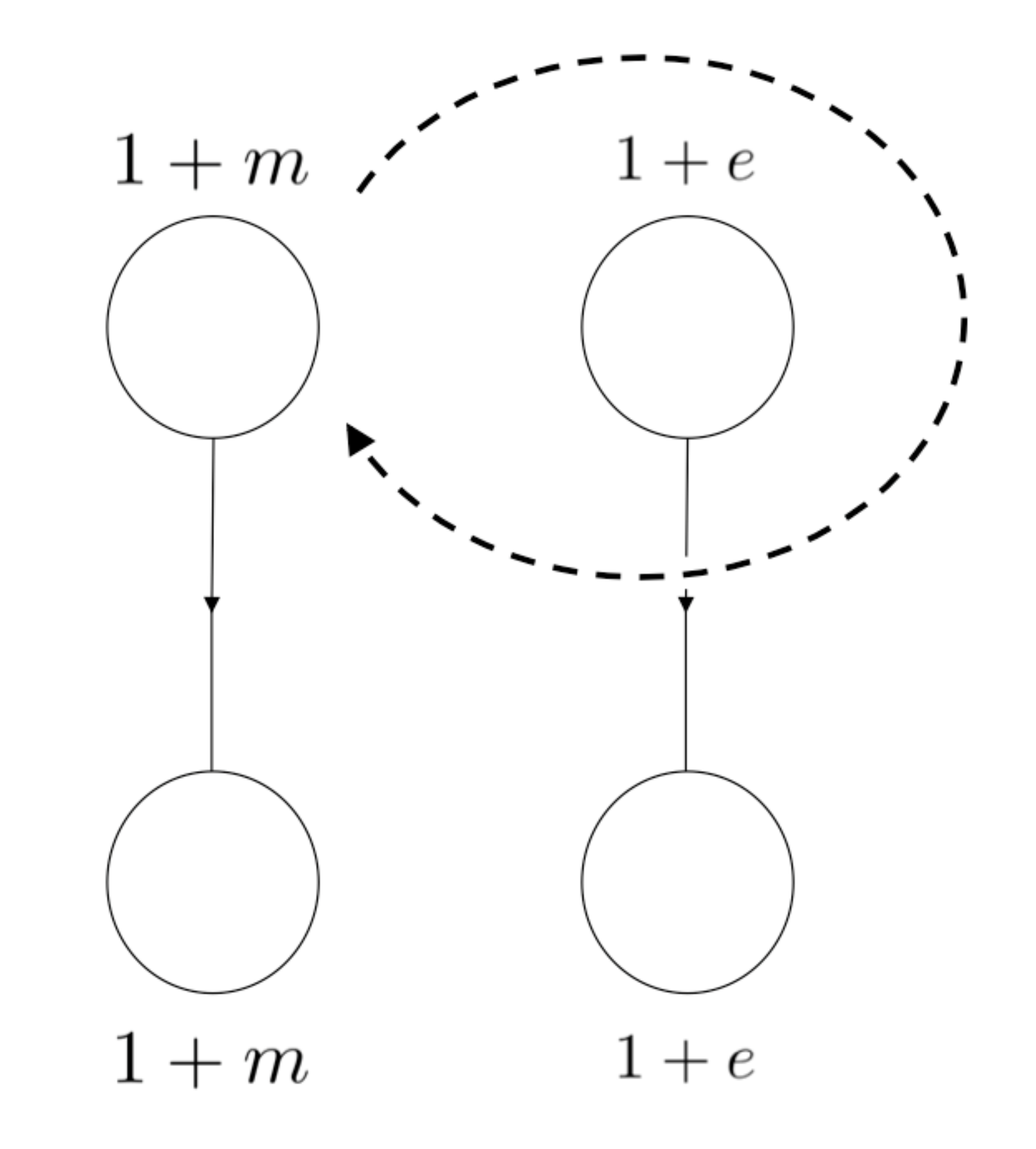}
\caption{Braid for the $\Z_2$ toric code to obtain a CNOT gate. As discussed in Section \ref{sec:braiding}, the dotted line indicates motion of a hole, while the solid lines specify the basis element of the hom-space.}
\label{fig:tc-braiding}
\end{figure}

We now discuss how to implement the controlled-$\sigma^z$ gate by braiding gapped boundaries. As in Ref. \cite{Fowler12}, we will first present how this is done in the case where the control qubit is encoded in two $1+m$ boundaries (i.e. $\A_1 = \A_2 = 1+m$) and the target qubit is encoded in two $1+e$ boundaries (i.e. $\A_3 = \A_4 = 1+e$). In this case, we may simply braid one of the control qubit boundaries around a target qubit boundary (i.e. apply $\sigma_2^2$), as shown in Fig. \ref{fig:tc-braiding}. By Eq. (\ref{eq:braid-formula}) and the $F$, $R$ symbols of $\mfD(\Z_2)$ (see Section \ref{sec:tc-algebraic}), we have

\begin{equation}
\sigma_2^2 =
\begin{bmatrix}
1 & 0 & 0 & 0 \\
0 & 1 & 0 & 0 \\
0 & 0 & 1 & 0 \\
0 & 0 & 0 & -1
\end{bmatrix}
= \wedge\sigma^z.
\end{equation}

\begin{figure}
\centering
\includegraphics[width=0.7\textwidth]{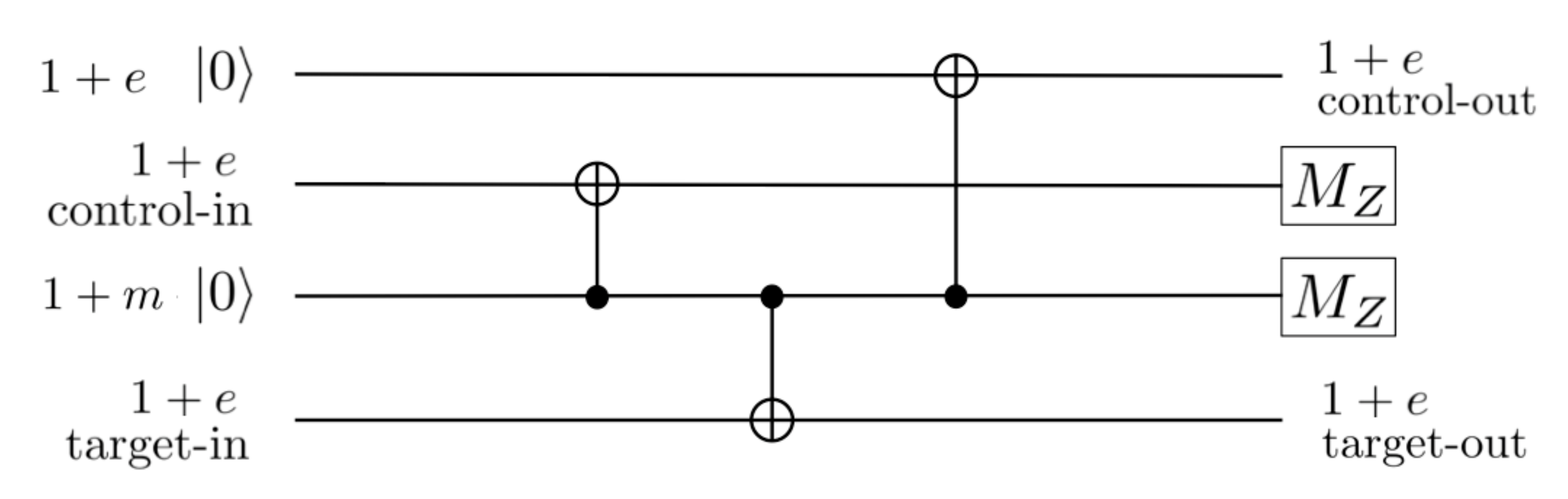}
\caption{Short circuit presented in Ref. \cite{Fowler12} to use one ancilla $1+e$ qubit, one ancilla $1+m$ qubit, and 3 topological CNOTs between $1+m$ and $1+e$ qubits to implement a topological CNOT between 2 $1+e$ qubits.}
\label{fig:tc-cnot}
\end{figure}

In Section \ref{sec:tc-hadamard}, we show how to implement the Hadamard on $1+e$ or $1+m$ encoded qubits. If we conjugate the target qubit by Hadamard gates, we obtain a topological CNOT gate.

Ref. \cite{Fowler12} presents a simple and short circuit for a CNOT between two $1+e$ encoded qubits. We reproduce this circuit in Fig. \ref{fig:tc-cnot}. As discussed in \cite{Fowler12}, the measurement of the ancilla $1+m$ qubit must be interpreted to obtain the CNOT. Specifically, there are two possible outcomes:

\begin{enumerate}
\item
If the measurement of the $1+m$ qubit yields the state $\ket{0}$, no further action is necessary, and we have effectively implemented a topological CNOT between the two $1+e$ qubits.
\item
If the measurement of the $1+m$ qubit yields the state $\ket{1}$, we must perform a topological $\sigma^x$ on the control-out qubit (described in Section \ref{sec:tc-tunnel}).
\end{enumerate}

\begin{remark}
In Fig. \ref{fig:tc-cnot}, we have presented method to implement a topological CNOT between two $1+e$ qubits, using a simple quantum circuit with topological CNOT gates between $1+e$ and $1+m$ encoded qubits. Alternatively, this circuit may be viewed algebraically as the following procedure:
\begin{enumerate}
\item
To begin, we embed the four-dimensional Hilbert space of two logical qubits into a 16-dimensional Hilbert space of four logical qubits:
\begin{equation}
\begin{gathered}
\C^4 = \Hom(\one, (1+e)^{\otimes 2})^{\otimes 2}\\
\hookrightarrow
\Hom(\one, (1+e)^{\otimes 2})^{\otimes 2}
\otimes \Hom(\one, (1+m)^{\otimes 2})
\otimes \Hom(\one, (1+e)^{\otimes 2}) = \C^{16}
\end{gathered}
\end{equation}
\item
We now perform an eight-strand pure braid $p \in P_8$ corresponding to the three CNOTs between $1+e$ and $1+m$ qubits.
\item
We project back into the four-dimensional Hilbert space of two logical qubits via measurement of logical qubits:
\begin{equation}
\begin{gathered}
\C^{16} = \Hom(\one, (1+e)^{\otimes 2})^{\otimes 2}
\otimes \Hom(\one, (1+m)^{\otimes 2})
\otimes \Hom(\one, (1+e)^{\otimes 2}) \\
\rightarrow\!\!\!\!\!\rightarrow
\Hom(\one, (1+e)^{\otimes 2}) \otimes  \Hom(\one, (1+e)^{\otimes 2}) = \C^4 
\end{gathered}
\end{equation}
\end{enumerate}
\end{remark}

\subsubsection{Topological charge measurement}
\label{sec:tc-measurement}

\begin{figure}
\centering
\includegraphics[width = 0.6\textwidth]{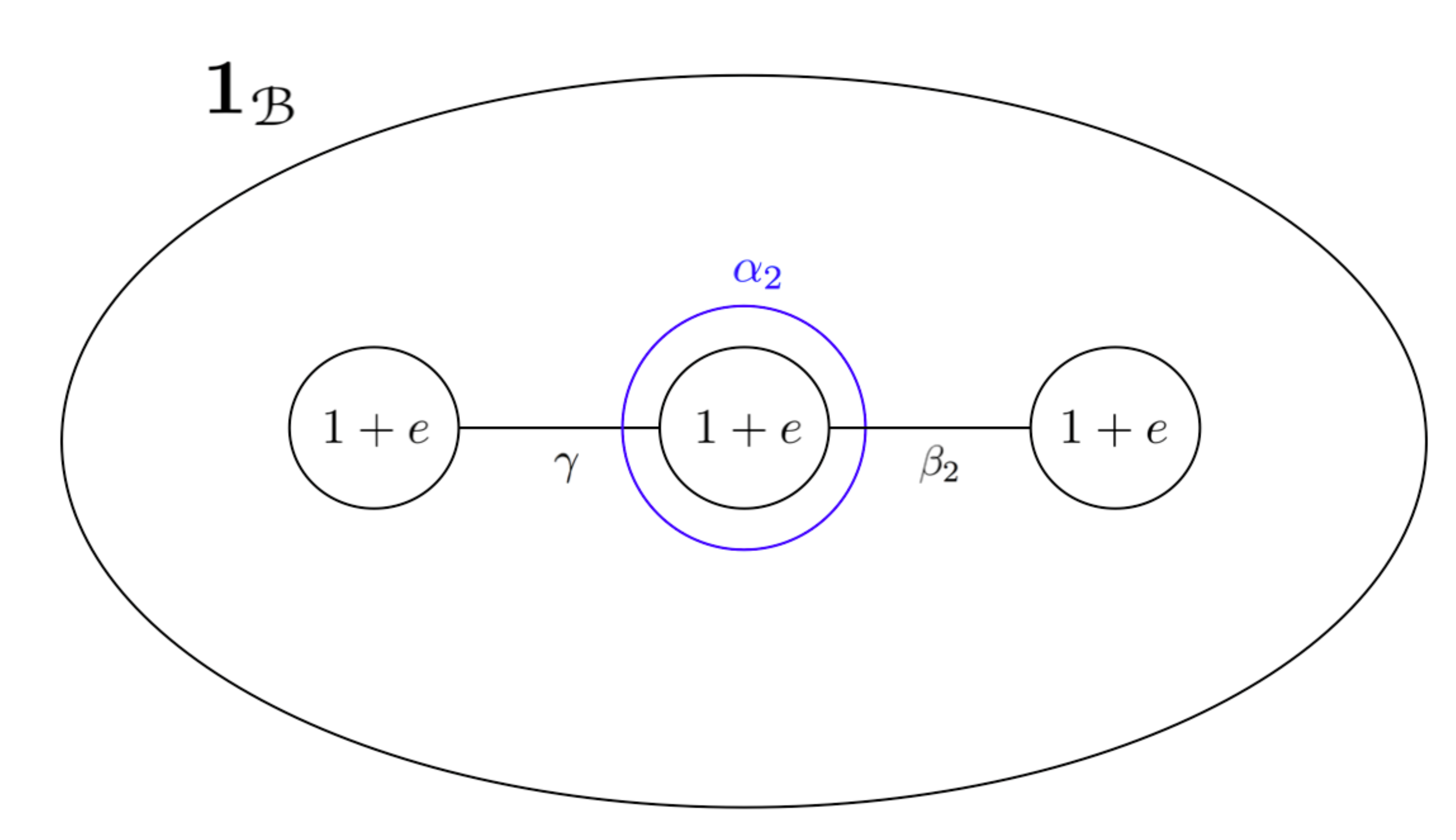}
\caption{Labeling of arcs for topological charge measurement for the toric code gapped boundaries. Our qubit is encoded in the two leftmost holes. We use an ancilla $1+e$ hole for the procedure.}
\label{fig:tc-phase-gate}
\end{figure}

In Ref. \cite{Fowler12}, Fowler et al. used magic state distillation to implement the phase gate and $\pi/8$ gates for the $\Z_2$ surface code. However, that procedure heavily relies on injecting external states, such as $\frac{1}{\sqrt{2}}(\ket{0} + i\ket{1})$ and $\frac{1}{\sqrt{2}}(\ket{0} + e^{i\pi/4}\ket{1})$. These states would not be topologically protected, and hence may arise as a source of error.

In this section, we will present an improved method in obtaining the phase gate, by using the topological charge measurement methods described in Section \ref{sec:measurement}.

In this protocol, we use a single auxiliary $1+e$ gapped boundary to implement the phase gate on a qubit encoded using two $1+e$ boundaries. The setup is shown in Fig. \ref{fig:tc-phase-gate}. To provide compatibility with all of the surface code gate implementations presented in Ref. \cite{Fowler12}, we will use the protocol for topological charge measurement in the surface code context, as discussed in Section \ref{sec:surface-code-tcm}. The overview of the procedure is as follows:

\begin{enumerate}
\item
Initialize the system by projecting to the eigenstate of $W_e(\beta_2)$.
\item
Project to the eigenstate of $W_m(\alpha_2)$.
\item
Project to the eigenstate of $iW_m(\alpha_2)W_e(\gamma)$.
\item
Finalize by projecting again to the eigenstate of $W_e(\beta_2)$.
\end{enumerate}

In Steps (1) and (2), we may simply perform measurements on each individual data qubit that comprises the ribbons $\beta_2$ and $\alpha_2$. To measure $W_e$, one can just measure the eigenvalue of the operator that performs $\sigma^z$ at each qubit along the ribbon, and to measure $W_m$, one measures the eigenvalue of the operator that performs $\sigma^x$ at each qubit along the ribbon.

Step (3) is slightly more complicated. In this step, we are trying to measure an operator on a graph, which consists of ribbons, $\alpha_2$ and $\gamma$, which intersect at a data qubit, which we call $q$. As discussed in Section \ref{sec:surface-code-tcm}, we would like to perform a projection

\begin{equation}
P_{(i_1, ... i_k)} = \sqrt{\alpha}H_{i_1}\cdots H_{i_k}
\end{equation}

\noindent
for some Hermitian Wilson operators $H_{i_1}, ... H_{i_k}$ such that

\begin{equation}
H_{i_1}\cdots H_{i_k}=\alpha H_{i_k}\cdots H_{i_1}.
\end{equation}

In our case, we simply have $k=2$, $H_{i_1} = W_m(\alpha_2)$ and $H_{i_2} = W_e(\gamma)$, which gives $\alpha = -1$ (hence the factor $i = \sqrt{\alpha}$).

Let us now provide some physical motivation for this projector using the surface code quantum circuits. To implement $P_{(i_1, i_2)}$ in this case, we use a single ancilla qubit $\ket{a}$, initialized to the state $\ket{0}$, that will act like a syndrome qubit. We begin by performing controlled-$\sigma^z$ operations on $\ket{a}$ for each data qubit along $\gamma$ except for $q$. We then perform CNOTs on $\ket{a}$ for each data qubit along $\alpha_2$ except for $q$. Finally, we perform a controlled-$U$ on $\ket{a}$, where $U$ is the Clifford operator that changes basis from $\sigma^z$ to $\sigma^y$. In all controlled operations, the data qubit of the surface code will act as control, while the ancilla $\ket{a}$ will act as target. Finally, measuring $\ket{a}$ in the standard $\sigma^z$ basis will give the desired projection.

We can now show that our procedure indeed implements the phase gate:

After Step (1), we may assume that the system is in a state 

\begin{equation}
\ket{\psi(s_1)} = \frac{\ket{0} + \ket{1}}{\sqrt{2}} \otimes \frac{\ket{0} + s_1 \ket{1}}{\sqrt{2}},
\end{equation}

\noindent
where $s_1 = \pm 1$ is the eigenvalue corresponding to the projective measurement. We then apply the rest of the measurements to obtain:

\begin{multline}
\begin{aligned}[t]
\frac{1+s_1 W_e(\beta_2)}{2}
\frac{1+s_3 i W_m(\alpha_2)W_e(\gamma)}{2}
\frac{1+s_2 W_m(\alpha_2)}{2}
\ket{\psi(s_1)}
\\ = \frac{1 + i s_1 s_2 s_3}{4} \ket{\psi(s_1)}.
\end{aligned}
\end{multline}

\noindent
In this equation, $s_2, s_3 = \pm 1$ are the eigenvalues of the measurements in Steps (2) and (3). It follows that the relative phase between $\ket{\psi(+)}$ and $\ket{\psi(-)}$ is now $e^{i \pi s_2 s_3/2}$.

Hence, this procedure is able to produce either the phase gate or its inverse, and using the measured values of $s_2$ and $s_3$, we can find out which one is produced. Repeating the procedure many times corresponds to an unbiased random walk on the integers, starting from $0$. It is well known that the probability of eventually reaching any integer in such a random walk is $1$. Therefore, after some amount of time, we will have obtained the phase gate itself.

\subsubsection{Physically implementable gates}
\label{sec:tc-hadamard}

Finally, the Hadamard may be implemented using stabilizer code techniques of Chapter \ref{sec:circuits}. The details of these have all been described in Ref. \cite{Fowler12}, and will not be repeated here.

\subsection{Example: $\mfD(S_3)$}
\label{sec:ds3-operations}

In this section, we present two possible ways to use gapped boundaries of $\mfD(S_3)$ to perform quantum computation, using the operations described in this chapter. 

\begin{figure}
\centering
\includegraphics[width = 0.45\textwidth]{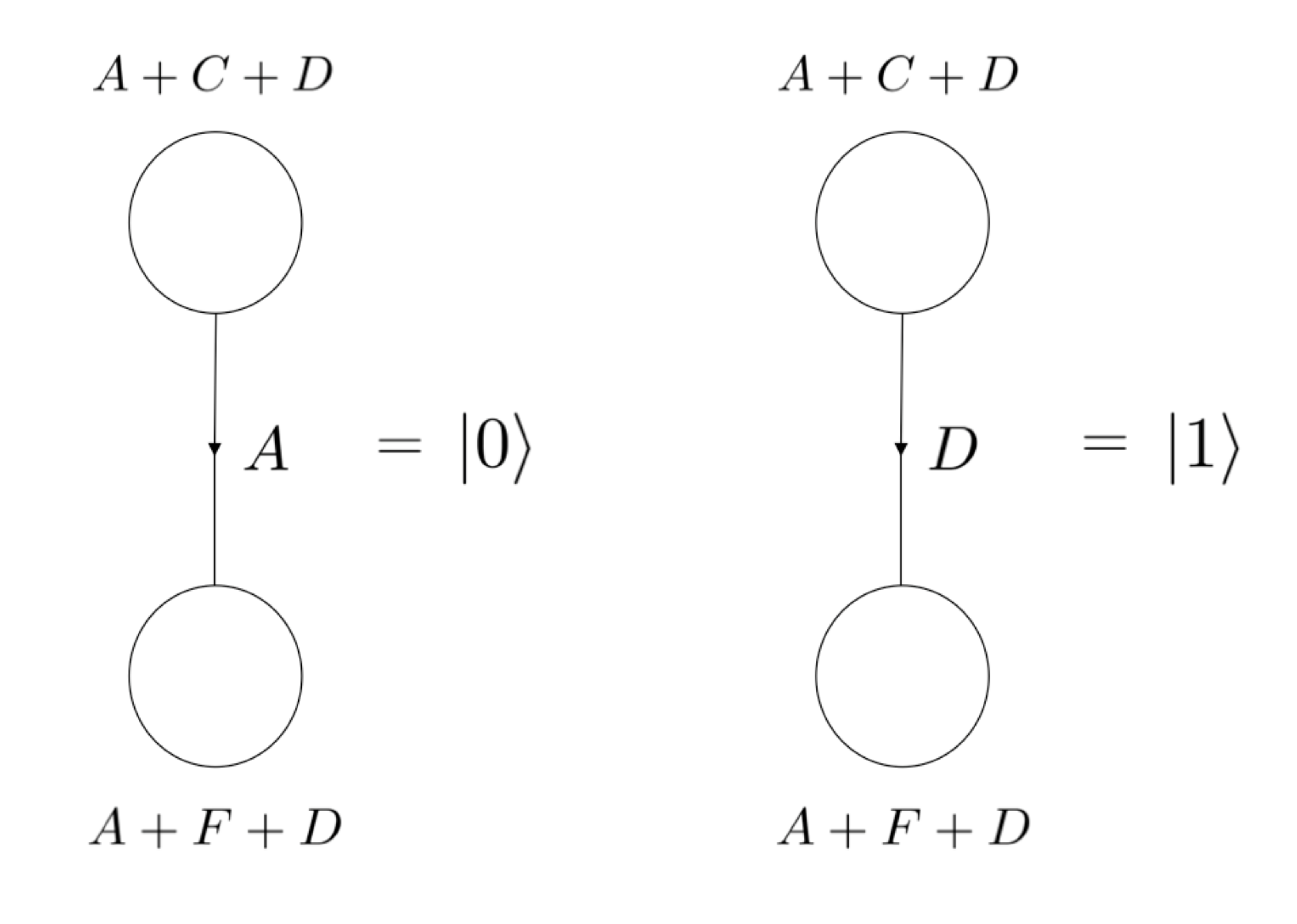}
\caption{Qubit encoding using gapped boundaries of $\mfD(S_3)$.} 
\label{fig:ds3-qubit}
\end{figure}

\begin{figure}
\centering
\includegraphics[width = 0.6\textwidth]{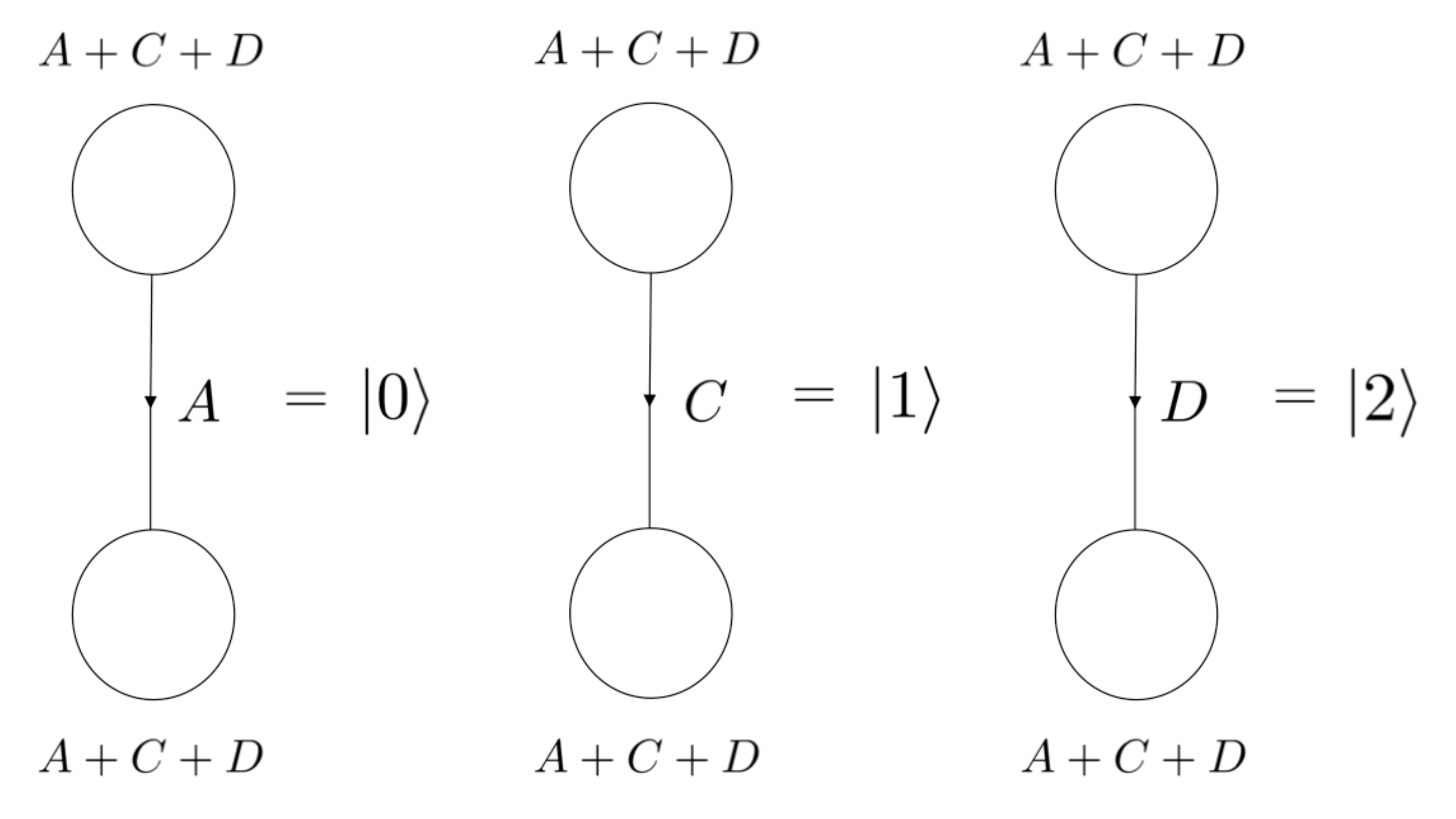}
\caption{Qutrit encoding using gapped boundaries of $\mfD(S_3)$.} 
\label{fig:ds3-qutrit}
\end{figure}

\subsubsection{Qubit and qutrit encodings}
\label{sec:ds3-qubit-encoding}

We consider two possible encoding schemes. The first is a qubit encoding, which uses one $A+C+D$ boundary, and one $A+F+D$ boundary. The encoding is illustrated in Fig. \ref{fig:ds3-qubit}. The second is a qutrit encoding, which uses two $A+C+D$ boundaries, and is shown in Fig. \ref{fig:ds3-qutrit}. When we present a quantum gate, we will present in the standard way such that all rows and columns are ordered as $(\ket{0},\ket{1})$, or $(\ket{0},\ket{1},\ket{2})$.

\subsubsection{Tunnel-$a$ operators}

In this section, we consider the tunneling operators $W_C (\gamma)$ and $W_D (\gamma)$ in the qutrit encoding, as an example for Section \ref{sec:tunnel}. By the general formula (\ref{eq:tunnel-formula}) and the $M$-3j symbols presented in Section \ref{sec:ds3-m3j}, we have the following matrices for these operators:

\begin{equation}
\label{eq:ds3-tunnel-C}
W_C(\gamma) =
\begin{bmatrix}
0 & 1 & 0 \\
1 & 1/\sqrt{2} & 0 \\
0 & 0 & \sqrt{2}
\end{bmatrix}
\end{equation}

\begin{equation}
\label{eq:ds3-tunnel-D}
W_D(\gamma) =
\begin{bmatrix}
0 & 0 & 1 \\
0 & 0 & \sqrt{2} \\
1 & \sqrt{2} & 0
\end{bmatrix}
\end{equation}

We note that by Remark \ref{tunnel-rmk}, $W_C(\gamma)$ and $W_D(\gamma)$ are not unitary here since $C \rightarrow A \oplus B$ and $D \rightarrow A \oplus C$ when condensing to the boundary (i.e. they may condense to excitations that are not vacuum). However, by the same Remark, they are Hermitian because $\A_1 = \A_2$ and all anyons in $\mfD(S_3)$ are self-dual. 

\subsubsection{Loop-$a$ operators}

We now compute the loop operator $W_B(\alpha_2)$ for both the qubit and the qutrit encoding, as an example for Section \ref{sec:loop}. For the qubit encoding, by Eq. (\ref{eq:loop-formula}) and the $\mathcal{S}$ matrix entries for $\mfD(S_3)$ (see Section \ref{sec:ds3-algebraic-example}), we have:

\begin{equation}
W_B(\alpha_2) =
\begin{bmatrix}
1 & 0 \\
0 & -1
\end{bmatrix}
= \sigma^z,
\end{equation}


For the qutrit encoding, we have:

\begin{equation}
W_B(\alpha_2) =
\begin{bmatrix}
1 & 0 & 0 \\
0 & 1 & 0 \\
0 & 0 & -1
\end{bmatrix}.
\end{equation}

\noindent
In Chapter \ref{sec:uqc}, we will see that this $\text{diag}(1,1,-1)$ matrix becomes very important for universal quantum computation.

\subsubsection{Braiding}
\label{sec:ds3-braiding}

We now discuss braiding in the context of the two encodings. In the qutrit encoding, we consider the case where $\A_1 = \A_2 = \A_3 = \A_4 = A+C+D$. By the fusion rules of $\mfD(S_3)$ (see Section \ref{sec:ds3-algebraic-example}), we see that 

\begin{equation}
\Hom(\one_\B, \A_1 \otimes \A_2 \otimes \A_3 \otimes \A_4) \cong \C^{49}.
\end{equation}

\noindent
Here, since all gapped boundaries are given by the same Lagrangian algebra, we may consider all generators of the four-strand braid group $B_4$, rather than just the pure braid group. In fact, we have numerically computed all generators of $B_4$ using the method described in Section \ref{sec:braiding}. Each is a $49 \times 49$ matrix, and we used a C program to verify that these matrices indeed satisfy all spherical braid group conditions. Unfortunately, due to large leakage issues, we were not able to get any interesting gates within the computational subspace. However, it is highly nontrivial for matrices of this size to satisfy the spherical braid group relations, and this gives very good evidence that the braiding we described in Section \ref{sec:braiding} will actually give a representation of the four-strand braid group. 

\begin{figure}
\centering
\includegraphics[width = 0.45\textwidth]{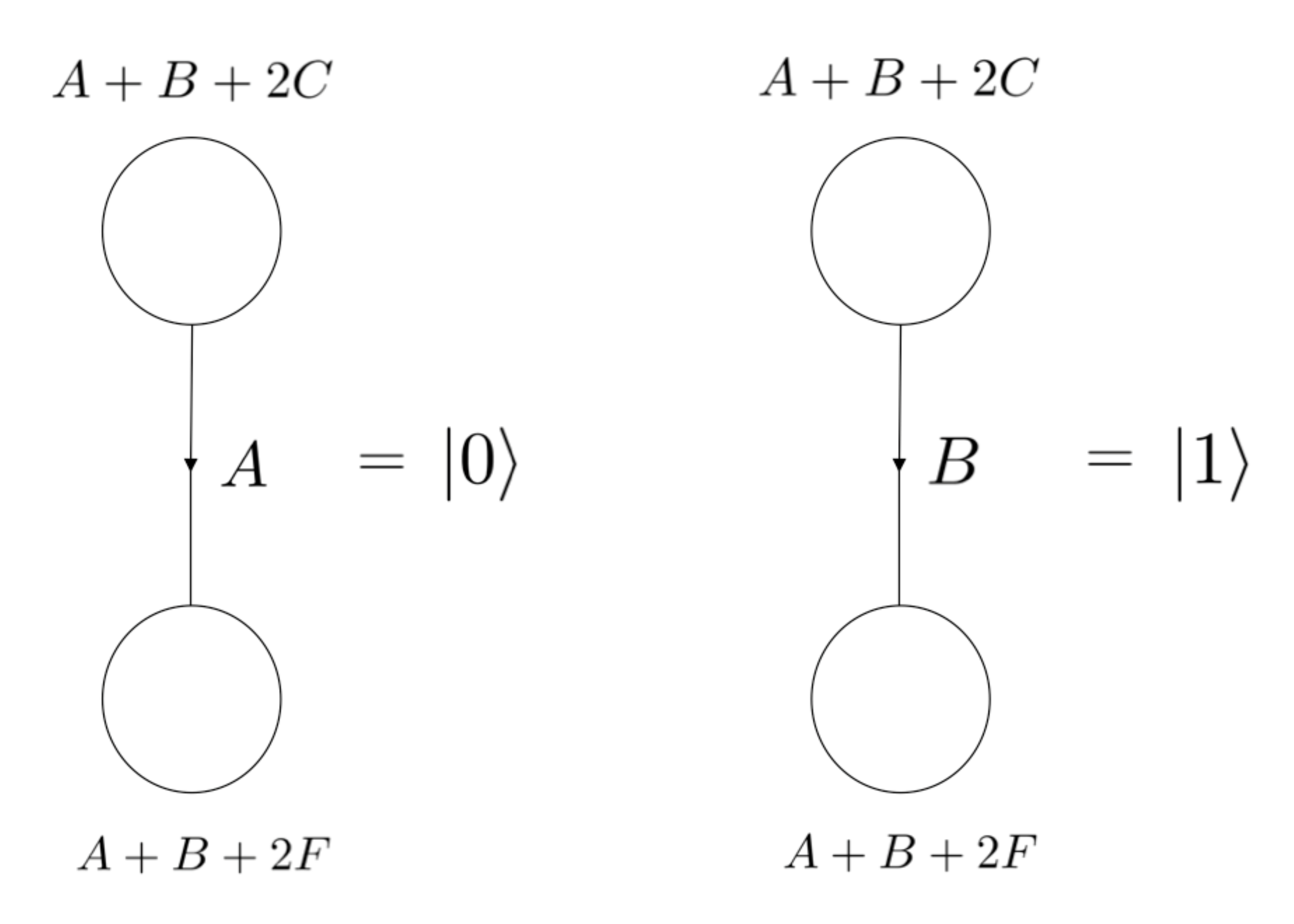}
\caption{Ancilla qubit encoding scheme for $\mfD(S_3)$.}
\label{fig:ds3-qubit-2}
\end{figure}

\begin{figure}
\centering
\includegraphics[width = 0.35\textwidth]{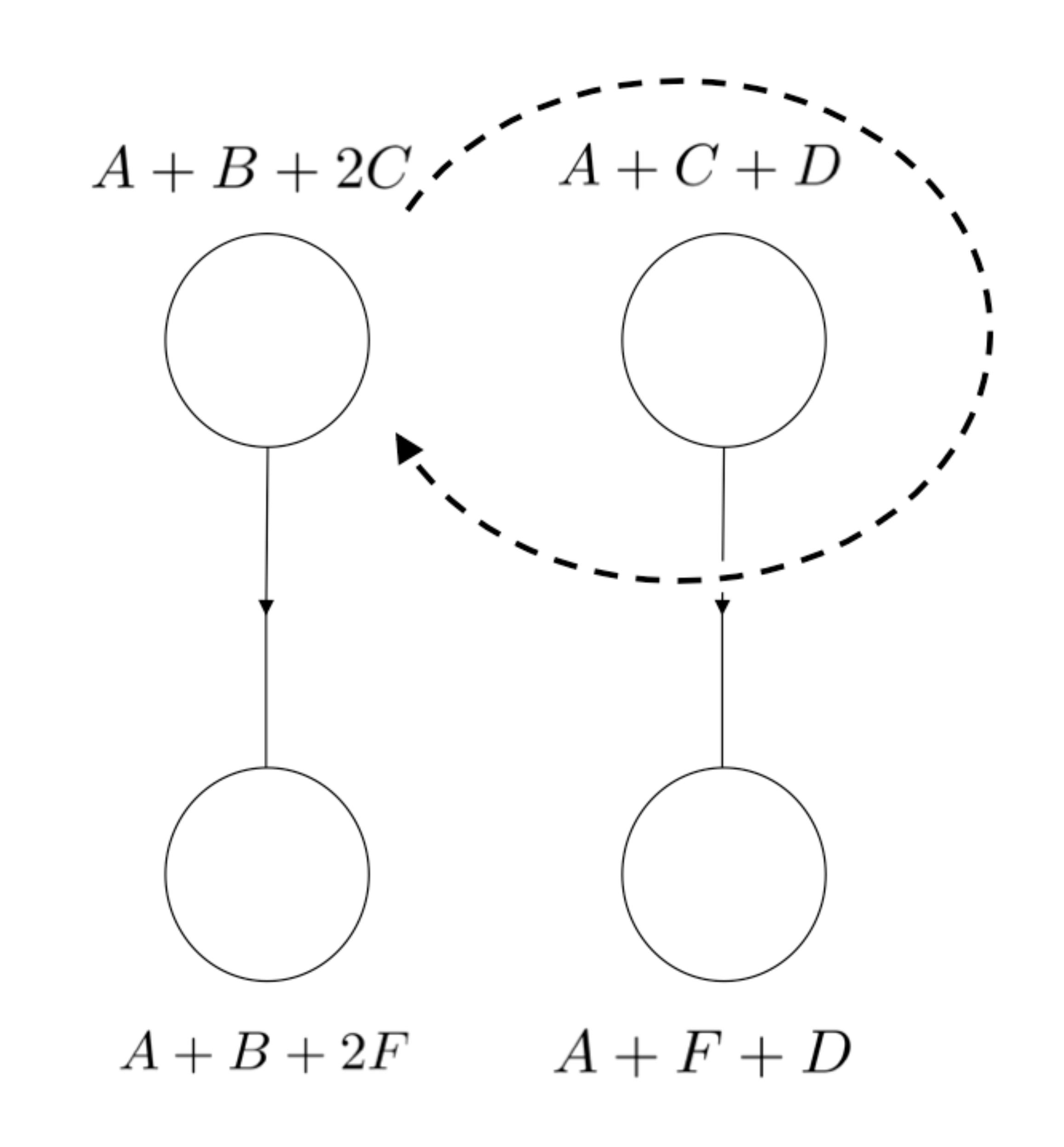}
\caption{Braid for the $\mfD(S_3)$ theory to obtain a controlled $\sigma_Z$ gate. As discussed in Section \ref{sec:braiding}, the dotted line indicates motion of a hole, while the solid lines specify the basis element of the hom-space.}
\label{fig:ds3-braiding}
\end{figure}

In the case of the qubit encoding, we can get an interesting entangling braid if we introduce an ancilla qubit using the encoding scheme shown in Fig. \ref{fig:ds3-qubit-2}. Specifically, we use the $A+C+D/A+F+D$ encoding for our target qubit, and the $A+B+2C/A+B+2F$ encoding for the control qubit. Then, we braid one of the control qubit holes around one of the target qubit holes (i.e. apply $\sigma_2^2$) as shown in Fig. \ref{fig:ds3-braiding}. By the general formula (\ref{eq:braid-formula}) and the $F$, $R$ symbols of $\mfD(S_3)$, we obtain:

\begin{equation}
\sigma_2^2 =
\begin{bmatrix}
1 & 0 & 0 & 0 \\
0 & 1 & 0 & 0 \\
0 & 0 & 1 & 0 \\
0 & 0 & 0 & -1
\end{bmatrix}
= \wedge\sigma^z.
\end{equation}

If we can implement the Hadamard in this theory, this will allow us to use a short circuit similar to that of Fig. \ref{fig:tc-cnot} to obtain a topological CNOT between two $A+C+D/A+F+D$ encoded qubits, as in the case of the $\Z_2$ toric code gapped boundaries.

\vspace{2mm}
\section{Universal quantum computation with gapped boundaries}
\label{sec:uqc}

In this chapter, we will illustrate how gapped boundaries and boundary defects may be combined with anyons to perform universal quantum computation. We begin in Section \ref{sec:universal-gate-set} by presenting target universal gate sets for qubit and qutrit quantum computation models. In Section \ref{sec:ds3-uqc}, we demonstrate how to use gapped boundaries of $\mfD(S_3)$ to potentially achieve the universal qubit gate set or the universal qutrit gate set.
Finally, in Section \ref{sec:dz3}, we illustrate how to use the braiding and topological charge measurement for gapped boundaries of $\mfD(\Z_3)$ to achieve universal quantum computation with qutrits.

\subsection{Universal gate sets for qubits and qutrits}
\label{sec:universal-gate-set}

In this section, we will present three universal gate sets for qubit and qutrit quantum circuit models. The first will be for the qubit computation model, and the rest will be for the qutrit computation model. In what follows, we adopt the following conventional notations for the standard gates:

\begin{enumerate}
\item
The generalized Hadamard gate for qudits will be denoted as \cite{Cui15-m}
\begin{equation*}
H_d \ket{j} = \frac{1}{\sqrt{d}} \sum_{i=0}^{d-1} \omega_d^{ij}\ket{i}, \text{ } j = 0,1, ... d-1.
\end{equation*}
\noindent
where $\omega_d = e^{2\pi i/d}$ is the $d^{\text{th}}$ root of unity.
\item
The generalized CNOT gate for two qudits is the SUM gate, which will be denoted as \cite{Cui15-m}
\begin{equation*}
\SUM_d \ket{i}\ket{j} = \ket{i} \ket{(i+j) \text{ mod } d}, \text{ } i,j = 0,1,...d-1.
\end{equation*}
\item
The generalized Pauli-X gate for qudits is
\begin{equation*}
\sigma^x_d \ket{j} = \ket{(j+1) \text{ mod } d}, \text{ } j = 0,1,...d-1.
\end{equation*}
\item
The generalized Pauli-Z gate for qudits is
\begin{equation*}
\sigma^z_d \ket{j} =  \omega_d^j \ket{j}, \text{ } j = 0,1,...d-1.
\end{equation*}
\item
The generalized two-qudit controlled $\sigma^z$ gate is
\begin{equation*}
\wedge \sigma^z_d \ket{i} \ket{j} = (I_d \otimes H_d) \SUM_d (I_d \otimes H_d) \ket{i} \ket{j}  = \omega_d^{ij} \ket{i}\ket{j}
\end{equation*}
\item
The well-known single-qubit phase gate is
\begin{equation*}
P =
\begin{bmatrix}
1 & 0 \\
0 & i
\end{bmatrix}.
\end{equation*}
\item
The well-known single-qubit $\pi/8$ gate is
\begin{equation*}
T =
\begin{bmatrix}
1 & 0 \\
0 & e^{i\pi/4}
\end{bmatrix}.
\end{equation*}
\end{enumerate}



The first gate set we present is a new gate set for the qubit computation model. This will be the target gate set for $\mfD(S_3)$ qubit theories:

\begin{theorem}
\label{order6-qubit-set}
The following set of qubit gates are sufficient for universal quantum computation:
\begin{enumerate}
\item
The single-qubit Hadamard gate $H_2$.
\item
The two-qubit entangling gate $\text{CNOT} = \SUM_2$.
\item
The single-qubit $\pi/6$ phase gate
\begin{equation}
P_{\pi/6} =
\begin{bmatrix}
1 & 0 \\
0 & e^{i\pi/3}
\end{bmatrix}.
\end{equation}
\end{enumerate}
\end{theorem}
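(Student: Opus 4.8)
The plan is to reduce the statement to the standard fact that any set of single-qubit gates generating a dense subgroup of $\PSU(2)$, together with a fixed entangling gate such as $\mathrm{CNOT}$, is approximately universal on every $n$-qubit register; so the real content is to show that $H_2$ and $P_{\pi/6}$ \emph{alone} generate a dense subgroup of $\PSU(2)\cong SO(3)$. I would first record the Bloch-sphere description: $H_2$ is, up to a global phase, a rotation by $\pi$ about the axis $(\hat x+\hat z)/\sqrt2$, while $P_{\pi/6}$ is, up to a global phase, a rotation by $\pi/3$ about the $\hat z$-axis. I would also note that $P_{\pi/6}^3=\sigma^z$, so the generated group already contains the ``real Clifford'' subgroup $\langle H_2,\sigma^z,\mathrm{CNOT}\rangle$; since that subgroup is itself not universal, the order-$6$ structure of $P_{\pi/6}$ must be used in an essential, genuinely non-Clifford way, and the argument below does so through an infinite-order element.

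Next I would exhibit that infinite-order element. Consider $g=H_2P_{\pi/6}$. A direct computation gives $\mathrm{tr}(H_2P_{\pi/6})=\tfrac1{\sqrt2}(1-e^{i\pi/3})$ and $\det(H_2P_{\pi/6})=-e^{i\pi/3}$; after rescaling $g$ by the phase $e^{i\pi/3}$ so that it lies in $\SU(2)$, its trace becomes real and equal to $1/\sqrt2$, so if $\theta$ denotes its rotation angle then $2\cos(\theta/2)=1/\sqrt2$, hence $\cos\theta=2\cos^2(\theta/2)-1=-3/4$. By Niven's theorem the only rational values of $\cos\theta$ for $\theta$ a rational multiple of $\pi$ are $0,\pm\tfrac12,\pm1$; since $-3/4$ is not among them, $\theta/\pi$ is irrational and $g$ has infinite order in $\PSU(2)$. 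In particular $\langle H_2,P_{\pi/6}\rangle$ is infinite.

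I would then close the density argument with the classification of closed subgroups of $SO(3)$: each is finite, conjugate to $SO(2)$ or to its normalizer $O(2)$, or all of $SO(3)$. The closure of $\langle H_2,P_{\pi/6}\rangle$ contains $g$, so it is not finite. It also cannot be conjugate to a subgroup of some $O(2)$: every element of an $O(2)$ preserves a common axis $\hat v$ (rotating about it, or by $\pi$ about a perpendicular axis), so $P_{\pi/6}$, not being an involution, would force $\hat v=\hat z$, and then $H_2$, a rotation by $\pi$ about $(\hat x+\hat z)/\sqrt2$ which is neither parallel nor perpendicular to $\hat z$, could not preserve $\hat v$ --- a contradiction. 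Hence the closure is all of $SO(3)$, i.e.\ $\langle H_2,P_{\pi/6}\rangle$ is dense in $\PSU(2)$. (If one prefers not to invoke the closed-subgroup classification, this step can instead be done by producing rotations by irrational angles about two independent axes, e.g.\ from $g$ and $H_2gH_2$, and running a ping-pong/free-subgroup density argument.)

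Finally I would assemble the conclusion in the usual way: up to global phase, every unitary in $\SU(2^n)$ is a finite product of two-level unitaries, each of which is a product of $\mathrm{CNOT}$s and single-qubit gates, and the single-qubit factors are approximated to arbitrary precision by words in $\{H_2,P_{\pi/6}\}$ via the Solovay--Kitaev theorem using the density just established; therefore $\{H_2,\mathrm{CNOT},P_{\pi/6}\}$ is (approximately) universal. I expect the main obstacle to be precisely the density step --- concretely, ruling out that the closure of $\langle H_2,P_{\pi/6}\rangle$ is a maximal torus $SO(2)$ or its normalizer $O(2)$: the Niven computation only rules out ``finite,'' and the ``two genuinely distinct rotation axes'' argument must be stated with care. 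Everything else is bookkeeping with standard facts.
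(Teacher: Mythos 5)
Your proposal is correct, and it reaches the conclusion by a noticeably different execution of the key density step than the paper. The paper also reduces the theorem to generating a dense single-qubit subgroup, but it does so by forming the two words $M=H_2PH_2P^{\dagger}$ and $N=H_2P^{\dagger}H_2P$ (with $P$ the $e^{i\pi/3}$ phase gate), checking $MN\neq NM$, showing each has infinite order from the eigenvalues $\tfrac{3\pm\sqrt{7}i}{4}$ together with the irrationality of $\arccos(3/4)/\pi$ quoted from \cite{Varona06}, and then invoking Lemma~1 of \cite{Cui15-m}, which packages the statement that a pair of non-commuting infinite-order single-qubit gates plus an entangling gate is universal. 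You instead analyze the single word $g=H_2P_{\pi/6}$, obtain $\cos\theta=-3/4$ and infinite order via Niven's theorem (the same arithmetic input as the paper's citation, just sourced differently), and then prove the density criterion yourself: closure is a closed subgroup of $SO(3)$, it is infinite, and the axis argument excludes conjugates of $O(2)$; you then finish with two-level decompositions and Solovay--Kitaev, which the paper leaves inside the cited lemma. Your route buys self-containedness (the density criterion is proved rather than imported, and only one element needs to be analyzed), at the cost of invoking the classification of closed subgroups of $SO(3)$ and some Bloch-sphere geometry; the paper's route buys brevity and a purely matrix-computational verification once the external lemma is accepted. The two are close cousins at bottom: the hypothesis of two non-commuting infinite-order elements in the cited lemma is precisely what rules out the $O(2)$ case, since the infinite-order elements of an $O(2)$ lie in its $SO(2)$ and commute, which is the same obstruction your axis argument exploits.
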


\begin{proof}
Let $M = H_2 P_{\pi/3} H_2 P_{\pi/3}^\dagger$, $N = H_2 P_{\pi/3}^\dagger H_2 P_{\pi/3}$. We have:

\begin{equation}
M = \frac{1}{2}
\begin{bmatrix}
1-\omega & -1-\omega^2 \\
1+\omega & 1-\omega^2
\end{bmatrix}, 
\qquad
N = \frac{1}{2}
\begin{bmatrix}
1-\omega^2 & -1-\omega \\
1+\omega^2 & 1-\omega
\end{bmatrix}.
\end{equation}

\noindent
where $\omega = e^{2 \pi i /3}$ is the third root of unity. A simple calculation shows that

\begin{equation}
\label{eq:MN-not-commuting}
MN \neq NM.
\end{equation}

Let us now prove the following Lemma:

\begin{lemma}
\label{M-N-infinite-order}
$M,N$ are both of infinite order.
\end{lemma}

\begin{proof}
Since $M$ and $N$ are both Hermitian, let us diagonalize them. Simple calculation shows that the eigenvalues of $M$ and $N$ are given by $\lambda = \frac{3 \pm \sqrt{7} i}{4}$. We hence have:

\begin{equation}
M^n = A 
\begin{bmatrix}
\left(\frac{3 + \sqrt{7} i}{4} \right)^n & 0 \\
0 & \left(\frac{3 - \sqrt{7} i}{4}\right)^n
\end{bmatrix}
A^\dagger
\end{equation}

\begin{equation}
N^n = B 
\begin{bmatrix}
\left(\frac{3 + \sqrt{7} i}{4} \right)^n & 0 \\
0 & \left(\frac{3 - \sqrt{7} i}{4}\right)^n
\end{bmatrix}
B^\dagger
\end{equation}

\noindent
for some (unitary) matrices $A,B$. It follows that $M^n = I$ (or $N^n = I$) if and only if $\left(\frac{3 + \sqrt{7} i}{4} \right)^n = 1$ and $\left(\frac{3 - \sqrt{7} i}{4} \right)^n = 1$. De Moivre's theorem says that

\begin{equation}
\left(\frac{3 + \sqrt{7} i}{4} \right)^n = \cos (n \theta) + i \sin (n \theta)
\end{equation}

\noindent
where $\theta = \arccos(3/4)$. Hence, $M$, $N$ are of finite order only if $n \arccos(3/4) = 2 \pi$ for some positive integer $n$. However, by Ref. \cite{Varona06}, the number $\arccos(3/4) / \pi$ is irrational. Hence, $M$ and $N$ are both of infinite order.
\end{proof}

By Lemma 1 of Ref. \cite{Cui15-m}, we see that Eq. (\ref{eq:MN-not-commuting}) and Lemma \ref{M-N-infinite-order} imply that this set of gates is indeed universal.
\end{proof}

\begin{corollary}
\label{order6-qubit-set-cor}
The following set of qubit gates are sufficient for universal quantum computation:
\begin{enumerate}
\item
The single-qubit Pauli-Z gate $\sigma^z_2$.
\item
The single-qubit Hadamard gate $H_2$.
\item
The two-qubit entangling gate $\text{CNOT} = \SUM_2$.
\item
The single-qubit $\pi/3$ phase gate
\begin{equation}
P_{\omega} =
\begin{bmatrix}
1 & 0 \\
0 & \omega
\end{bmatrix}.
\end{equation}
\noindent
where $\omega = e^{2 \pi i /3}$ is a primitive third root of unity.
\end{enumerate}
\end{corollary}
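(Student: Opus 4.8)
The plan is to reduce Corollary \ref{order6-qubit-set-cor} to Theorem \ref{order6-qubit-set} by showing that the two gate sets generate (projectively) the same group of unitaries, so that universality of one immediately gives universality of the other. Concretely, I would argue in two directions: first that the gate set of the Corollary can be obtained from that of the Theorem, and second (the genuinely needed direction) that the gate set of the Theorem can be obtained from that of the Corollary. Since Theorem \ref{order6-qubit-set} is already proved, only the second direction is logically required, but establishing both makes the equivalence transparent.

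For the direction that matters, I would express the $\pi/6$ phase gate $P_{\pi/6} = \mathrm{diag}(1, e^{i\pi/3})$ in terms of the Corollary's gates. The key observation is that $P_\omega = \mathrm{diag}(1,\omega)$ with $\omega = e^{2\pi i/3}$ and $\sigma^z_2 = \mathrm{diag}(1,-1)$ are both diagonal phase gates, and the subgroup of $U(1)$ generated by $e^{2\pi i/3}$ and $-1 = e^{i\pi}$ contains $e^{2\pi i/3}\cdot(-1) = e^{-i\pi/3} = e^{i 5\pi/3}$, whose inverse is $e^{i\pi/3}$. Hence $P_{\pi/6} = \mathrm{diag}(1, e^{i\pi/3}) = (P_\omega \sigma^z_2)^{-1} = \sigma^z_2 P_\omega^\dagger$ — and $P_\omega^\dagger = P_\omega^2$ since $P_\omega$ has order $3$, so $P_{\pi/6} = \sigma^z_2 P_\omega^2$ is a finite product of the Corollary's single-qubit gates. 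Since $H_2$ and $\mathrm{CNOT}$ appear verbatim in both sets, every generator of the Theorem's set is realized, so the Corollary's set is at least as powerful, hence universal. For completeness, in the reverse direction I would note $\sigma^z_2 = P_{\pi/6}^3$ and $P_\omega = P_{\pi/6}^2$, so the two sets generate exactly the same group.

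I do not anticipate a serious obstacle here — the whole content is the elementary arithmetic in the group $\langle e^{2\pi i/3}, e^{i\pi}\rangle \subseteq U(1)$, namely that $\gcd$-type reasoning on the exponents $2/3$ and $1$ (mod $2$) produces $1/3$. The one point requiring a word of care is the precise notion of universality being invoked: Theorem \ref{order6-qubit-set} and Lemma 1 of Ref. \cite{Cui15-m} work up to global phase, so I should phrase the equivalence as an equality of the generated groups of unitaries modulo scalars, which is unaffected by the manipulations above since they only involve genuine (scalar-free up to the shared phase conventions) products of the listed gates. Thus the proof is essentially a one-line computation, $P_{\pi/6} = \sigma^z_2 \, P_\omega^2$, together with the citation of Theorem \ref{order6-qubit-set}.

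\begin{proof}
By Theorem \ref{order6-qubit-set}, the gate set $\{H_2, \mathrm{CNOT}, P_{\pi/6}\}$ is universal. It therefore suffices to show that $P_{\pi/6}$ can be obtained from the gates listed in the Corollary, as $H_2$ and $\mathrm{CNOT} = \SUM_2$ already appear in both sets. Writing $\omega = e^{2\pi i/3}$, we have $P_\omega = \mathrm{diag}(1,\omega)$, $\sigma^z_2 = \mathrm{diag}(1,-1)$, and since $P_\omega^3 = I$ we get $P_\omega^2 = P_\omega^{-1} = \mathrm{diag}(1, e^{-2\pi i/3})$. Hence
\begin{equation}
\sigma^z_2 \, P_\omega^2 = \mathrm{diag}\!\left(1,\; -e^{-2\pi i/3}\right) = \mathrm{diag}\!\left(1,\; e^{i\pi} e^{-2\pi i/3}\right) = \mathrm{diag}\!\left(1,\; e^{i\pi/3}\right) = P_{\pi/6}.
\end{equation}
Thus $P_{\pi/6}$ is a product of gates from the Corollary's set, so that set generates (projectively) everything generated by $\{H_2, \mathrm{CNOT}, P_{\pi/6}\}$, and is therefore universal as well. (Conversely, $\sigma^z_2 = P_{\pi/6}^3$ and $P_\omega = P_{\pi/6}^2$, so the two gate sets generate the same group of single-qubit phase gates, and the two universality statements are equivalent.)
\end{proof}
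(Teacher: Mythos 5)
Your proof is correct and is essentially the argument the paper intends (the Corollary is stated without explicit proof): the one-line identity $P_{\pi/6} = \sigma^z_2 P_\omega^2$ reduces the Corollary to Theorem \ref{order6-qubit-set}, exactly as the paper's placement of the statement suggests. The phase arithmetic and the appeal to universality of $\{H_2, \mathrm{CNOT}, P_{\pi/6}\}$ check out.
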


We now present universal gate sets for the qutrit computation model. The universality of these gate sets was proven in Ref. \cite{Cui15-m}. These are often known as the {\it metaplectic gate sets}.

\begin{theorem}
\label{qutrit-set-2}
The following set of qutrit gates are sufficient for universal quantum computation:
\begin{enumerate}
\item
The single-qutrit Hadamard gate $H_3$
\item
The two-qutrit entangling gate $\SUM_3$.
\item
The single-qutrit generalized phase gate
\begin{equation}
Q_3 = 
\begin{bmatrix}
1 & 0 & 0 \\
0 & 1 & 0 \\
0 & 0 & \omega
\end{bmatrix},
\end{equation}
\noindent
where $\omega = e^{2 \pi i /3}$ is a primitive third root of unity.
\item
One of the following two options:
\begin{enumerate}
\item
The single-qutrit sign-flip gate
\begin{equation}
\text{Flip}_3 = 
\begin{bmatrix}
1 & 0 & 0 \\
0 & 1 & 0 \\
0 & 0 & -1
\end{bmatrix}
\end{equation}
\item
Any nontrivial single-qutrit classical (i.e. Clifford) gate not equal to $H_3^2$, AND a projection $M$ of a state in the qutrit space $\C^3$ to $\Span\{ \ket{0} \}$ and its orthogonal complement $\Span\{ \ket{1}, \ket{2} \}$, so that the resulting state is coherent if projected into $\Span\{ \ket{1}, \ket{2} \}$.
\end{enumerate}
\end{enumerate}
\end{theorem}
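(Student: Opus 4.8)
The statement is exactly the metaplectic universality result of Cui and Wang \cite{Cui15-m}, so the plan is to import and outline their argument rather than reprove it from scratch. First I would pin down the group generated by the three Clifford-type ingredients: $H_3$ and $Q_3$ generate the single-qutrit Clifford group modulo global phase, and adjoining the entangling gate $\SUM_3$ produces the full $n$-qutrit Clifford group $\mathcal{C}_n$ for every $n$ (the standard Gottesman generating set in prime dimension $3$). Since $\mathcal{C}_n$ is finite modulo global phase, universality must come entirely from the fourth ingredient, so the task reduces to showing that the fourth ingredient escapes $\mathcal{C}_1$ and does so densely.

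For Option (a), I would first check that $\text{Flip}_3 = \mathrm{diag}(1,1,-1)$ does not normalise the qutrit Pauli group, hence is non-Clifford (in fact it sits in the third level of the Clifford hierarchy). Then, following the same template as the proof of Theorem \ref{order6-qubit-set}, I would form Clifford conjugates and products such as $H_3\,\text{Flip}_3\,H_3^{-1}$, show that a suitable element has eigenvalues whose ratio is not a root of unity --- an irrational-angle argument of the type used in Lemma \ref{M-N-infinite-order} --- hence has infinite order, and exhibit two such elements that do not commute. At that point Lemma 1 of Ref. \cite{Cui15-m} applies: a single-qutrit gate set that contains the Clifford group and generates an infinite group, together with $\SUM_3$, is universal. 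Equivalently, one argues that the closure of $\langle \mathcal{C}_1, \text{Flip}_3 \rangle$ is dense in $\U(3)$ up to a global phase and then invokes Solovay--Kitaev together with $\SUM_3$.

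For Option (b), the idea is that the coherent measurement $M$ (distinguishing $\ket{0}$ from $\Span\{\ket{1},\ket{2}\}$) can be turned into a repeat-until-success gadget producing a non-Clifford unitary, thereby reducing to Option (a). I would prepare an ancilla with the allowed classical gate, entangle it with the data qutrit via $\SUM_3$, perform $M$, and apply an outcome-dependent Clifford correction; a short computation shows one outcome effects a $\text{Flip}_3$-type gate up to Clifford, while the complementary outcome leaves a Clifford-corrected state on which the gadget may be retried, so the process converges with probability $1$. The hypothesis that the classical gate not be $H_3^2$ is exactly the condition guaranteeing the gadget is not trivialised: $H_3^2$ is charge conjugation $\ket{j}\mapsto\ket{-j}$, which preserves the decomposition $\C\ket{0}\oplus\Span\{\ket{1},\ket{2}\}$ and so contributes only Clifford operations to the gadget.

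The main obstacle is the density step inside Option (a): showing that $\langle\mathcal{C}_1,\text{Flip}_3\rangle$ is genuinely dense in $\U(3)$ modulo phase, rather than stabilising inside some proper closed subgroup. The infinite-order and non-commutation checks are routine, but excluding intermediate closed subgroups requires real input --- either an explicit Lie-algebra computation showing the generated subalgebra is all of $\mathfrak{su}(3)$, or a case analysis against the classification of finite and of Zariski-closed subgroups of $\SU(3)$ containing the Clifford group. This is the step where the detailed arguments of \cite{Cui15-m} are needed, and where I would expect to spend the bulk of the effort.
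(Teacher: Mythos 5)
Your proposal takes essentially the same route as the paper: the paper offers no independent proof of Theorem \ref{qutrit-set-2} but simply attributes the universality of the metaplectic gate set, including the fact that option (4b) lets one construct $\text{Flip}_3$ probabilistically from the projection $M$ and the extra classical gate, to Ref. \cite{Cui15-m}, which is exactly the argument you outline and defer to. Two inessential caveats: your parenthetical that $\text{Flip}_3$ sits in the third level of the Clifford hierarchy is false under the standard qutrit Pauli convention (conjugating $\sigma^x_3$ by $\text{Flip}_3$ yields a $\pm 1$-signed permutation that is not Clifford, since $-1$ is not a power of $\omega$), and the $\mathrm{SU}(2)$-style ``two non-commuting infinite-order elements'' lemma used for Theorem \ref{order6-qubit-set} does not transfer to $\mathrm{PU}(3)$ --- but neither remark is load-bearing, since, as you yourself note, the density/maximality step is precisely what is delegated to \cite{Cui15-m}.
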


By Ref. \cite{Cui15-m}, the projection $M$ and the nontrivial Clifford gate can be used to probabilistically construct the sign-flip gate.

\subsection{Example: $\mathfrak{D}(S_3)$}
\label{sec:ds3-uqc}

Using purely the topological operations presented in Chapter \ref{sec:operations}, we have not been able to implement a full universal gate set for either the qubit or the qutrit encoding scheme for $\mathfrak{D}(S_3)$ (see Section \ref{sec:ds3-operations} for the definitions of these encodings). However, we would like to make the following notes regarding each of the two computation models:

\subsubsection{Qubit model for $\mathfrak{D}(S_3)$}

For the qubit model, we have implemented an entangling gate $\wedge\sigma^z$ between an $A+C+D/A+F+D$ qubit and an ancilla $A+B+2C/A+B+2F$ qubit. If we are able to implement a Hadamard on the $A+C+D/A+F+D$ qubit, we will have the first three gates from Corollary \ref{order6-qubit-set-cor}. Although we do not currently have an implementation of the single-qubit $2\pi/3$ phase gate gate, we believe it is very likely that some other topologically protected operation (e.g. some topological charge measurement we have not tried, or perhaps another operation not covered by Chapter \ref{sec:operations}) may be able to implement this gate. This is because the number $\omega = e^{\pi i/3}$ appears very often in the $F$ and $R$ symbols and in the $\mathcal{T}$ matrix of the $\mfD(S_3)$ theory. Hence, we believe it is promising to pursue this qubit model to obtain universal quantum computation.

\subsubsection{Qutrit model for $\mathfrak{D}(S_3)$}

While we are missing many Clifford gates for the qutrit model of $\mathfrak{D}(S_3)$, we would like to point out that we have found a very simple implementation of the only non-Clifford in the gate set of Theorem \ref{qutrit-set-2}. Specifically, by simply braiding a $B$ particle around one of the $A+C+D$ qutrit holes, we can easily obtain the gate 

\[
Flip_3 = 
\begin{bmatrix}
1 & 0 & 0 \\
0 & 1 & 0 \\
0 & 0 & -1
\end{bmatrix}.
\]

\noindent
We believe that the Clifford gates may be easier to implement at high fidelity locally, and there may be some method to transport the high fidelity local gates into our topological encoding, which would give universality.  

Alternatively, Cui et al. present a way in Ref. \cite{Cui15-m} to achieve universal computation using pure anyonic braiding in $\mfD(S_3)$. In that work, all of the Clifford gates were implemented using simple anyon braids; however, it was necessary to use a probabilistic procedure (i.e. (4b) in Theorem \ref{qutrit-set-2}) to construct the non-Clifford gate $Flip_3$. It is interesting to see whether the computation power of anyonic braiding may be combined with that of gapped boundaries. If this is the case, we would have a better implementation of the universal qutrit gate set, as there would be no probabilistic procedure involved.

\subsection{Example: $\mfD(\Z_3)$}
\label{sec:dz3}

\begin{figure}
\centering
\includegraphics[width = 0.6\textwidth]{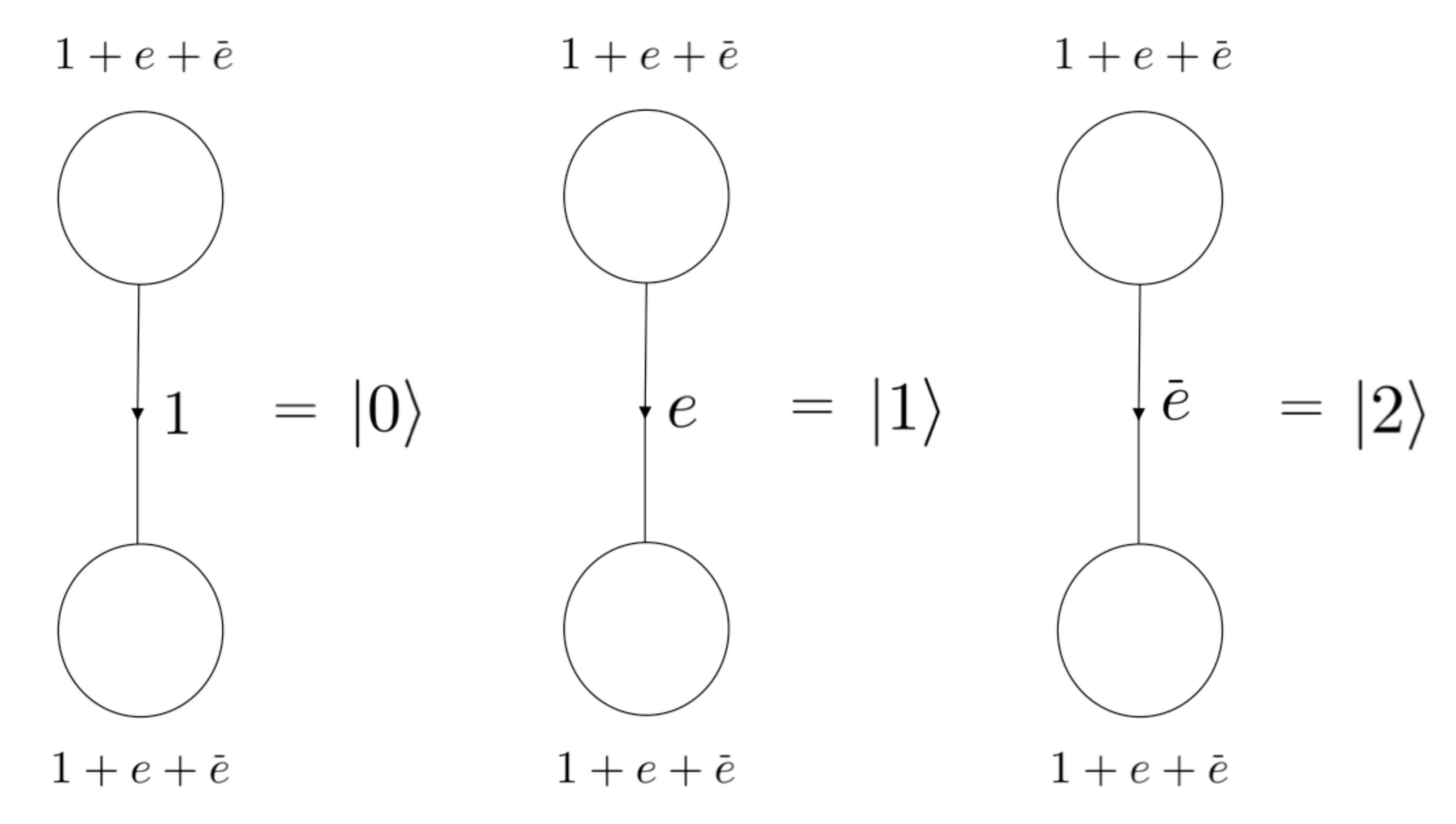}
\caption{Qutrit encoding using gapped boundaries of $\mfD(\Z_3)$. We would specifically like to note that since $\mfD(\Z_3)$ is {\it not} a self-dual anyon theory, it is very important to draw the arrow on the tunneling operator that describes the hom-space between the two boundaries and vacuum.} 
\label{fig:dz3-qutrit}
\end{figure}

We now demonstrate how gapped boundaries of $\mfD(\Z_3)$ can be used to obtain universal quantum computation. Our target gate set is the metaplectic gate set presented in Theorem \ref{qutrit-set-2}, using the option (4b). The $\mfD(\Z_3)$ theory has two gapped boundaries: the pure charge condensate $1+e+\overbar{e}$ corresponding to the trivial subgroup, and the pure flux condensate $1+m+\overbar{m}$ corresponding to the full subgroup. First, we encode a qutrit using two $1+e+\overbar{e}$ boundaries of $\mfD(\Z_3)$, as shown in Fig. \ref{fig:dz3-qutrit}. For most of our computation, we will use this qutrit, but occasionally, we will use an ancilla qutrit encoded by two $1+m+\overbar{m}$ boundaries. (That qutrit encoding is exactly the same as Fig. \ref{fig:dz3-qutrit}, with all instances of $e$,$\overbar{e}$ replaced by $m$,$\overbar{m}$, respectively).

\subsubsection{Topological order}

Before we begin finding gates, let us first review the topological order given by the modular tensor category $\mfD(\Z_3)$. By Ref. \cite{Barkeshli14}, the fusion rules are given by multiplication in $\Z_3 \times \Z_3$, the $F$ symbols in this category are all trivial, and the $R$ symbols are given by $R^{a,b} = e^{\pi i a_2 b_1}$, where we write $a = e^{a_1} m^{a_2}$, $b = e^{b_1} m^{b_2}$ for $a_1,a_2,b_1,b_2 = 0,1,2$. The modular $\mathcal{S} = [S_{ab}]$ and $\mathcal{T} = [T_{ab}]$ matrices for this category are given by \cite{BakalovKirillov}:

\begin{equation}
S_{ab} = \omega^{-a_2 b_1 - a_1 b_2}
\end{equation}

\begin{equation}
T_{ab} = \omega^{a_1a_2}\delta_{ab}.
\end{equation}

A simple calculation shows that all $M$ symbols of $\mfD(\Z_3)$ are trivial.

We would also like to mention that the $\mfD(\Z_3)$ topological order can be realized in bilayer fractional quantum Hall systems: Ref. \cite{Bark16} considers an electron-hole bilayer FQH system, with a $1/3$ Laughlin state in each layer. The topological order in this system can be described as $\mathcal{Z}(\mathrm{SU}(3)_1)=\mathrm{SU}(3)_1\times\overline{\mathrm{SU}(3)_1}$ (together with physical fermions). It is easy to see that $\mathcal{Z}(\mathrm{SU}(3)_1)$ is equivalent to $\mfD(\Z_3)$, so the discussion can be directly applied to the setting of Ref. \cite{Bark16} as well. It would also be convenient to view $\mfD(\Z_3)$ with a fixed type of boundaries (i.e. $1+e+\bar{e}$) as a higher genus surface of a single layer of $\mathrm{SU}(3)_1$, so that we can apply the results of Ref. \cite{Barkeshli16} directly.

\subsubsection{The Hadamard gate $H_3$}
First, by Section \ref{sec:physically-implementable}, the generalized Hadamard gate $H_3$ may be implemented by generalizing the method of Fowler et al. in Ref. \cite{Fowler12} to the $\Z_3$ surface code. 
Alternatively, $H_3$ is equal to the modular $\mathcal{S}$ matrix of the modular category $\mathrm{SU}(3)_1$. This matrix is in the representation of mapping class group of the torus. Hence, by Ref. \cite{Barkeshli16}, it may be implemented via a sequence of topological charge projections (see also Section \ref{sec:measurement}).

\begin{figure}
\centering
\includegraphics[width = 0.35\textwidth]{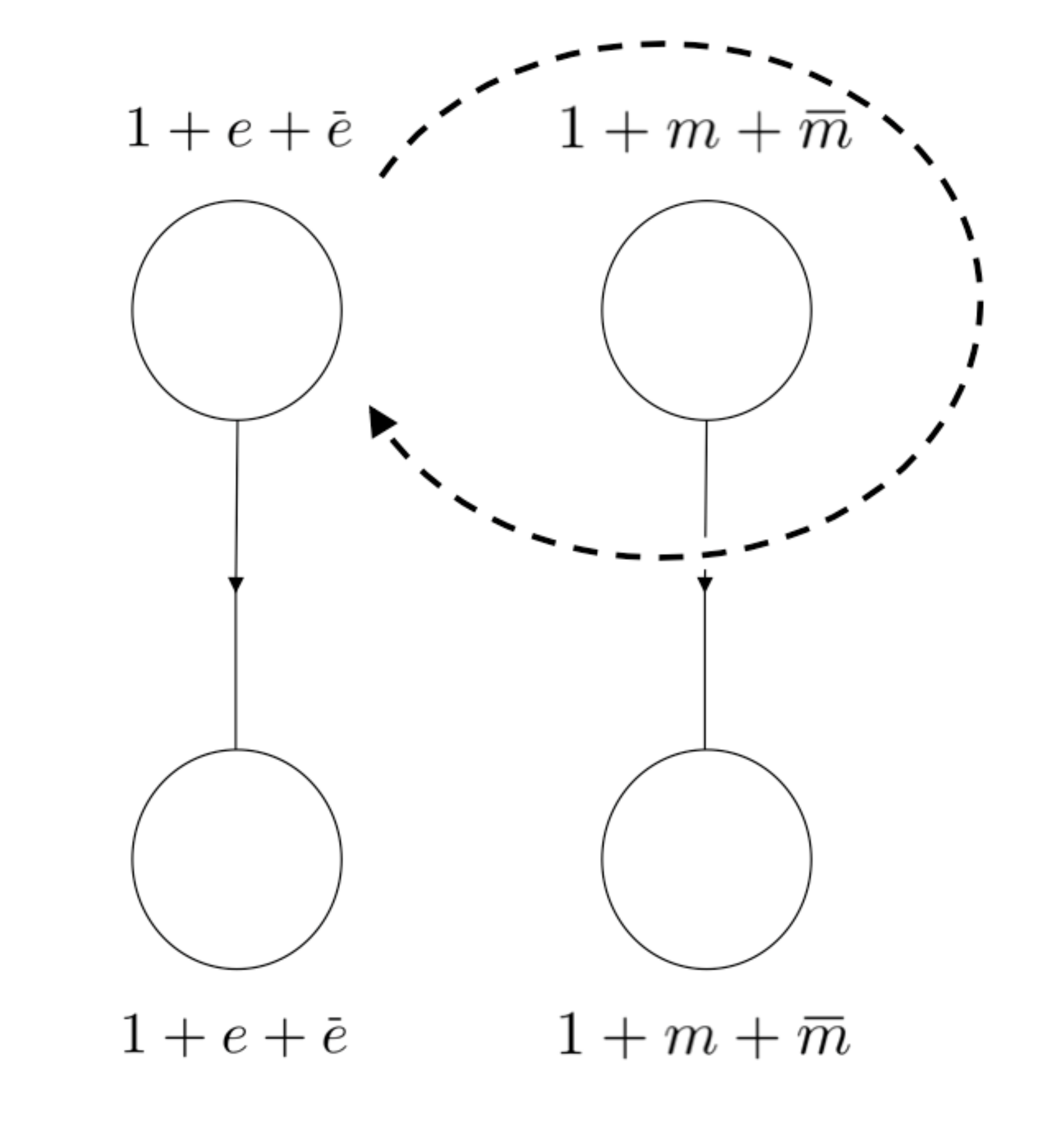}
\caption{Braid for the $\mfD(\Z_3)$ theory to obtain a generalized controlled $\sigma^z_3$ gate. As discussed in Section \ref{sec:braiding}, the dotted line indicates motion of a hole, while the solid lines specify the basis element of the hom-space.}
\label{fig:dz3-braiding}
\end{figure}

\subsubsection{The generalized Pauli-X gate $\sigma^x_3$}
By the general formula (\ref{eq:tunnel-formula}) of Section \ref{sec:tunnel}, we see that the tunneling operator $W_e (\gamma)$ precisely implements the single-qutrit Pauli-X gate $\sigma^x_3$. In fact, for the general theory $\mfD(\Z_p)$, $p$ any prime, the tunneling operator $W_e (\gamma)$ implements the single-qupit Pauli-X gate $\sigma^x_p$.

\subsubsection{The entangling gate $\SUM_3$}

We now consider the braiding of a $1+e+\overbar{e}$ qutrit with a $1+m+\overbar{m}$ qutrit. Specifically, we use the $1+e+\overbar{e}$ encoding for our target qutrit, and the $1+m+\overbar{m}$ encoding for the control qutrit. Then, if we braid one of the control qutrit holes around one of the target qutrit holes (i.e. apply $\sigma_2^2$) as shown in Fig. \ref{fig:dz3-braiding}, by the general formula (\ref{eq:braid-formula}) and the $F$, $R$ symbols of $\mfD(\Z_3)$, we obtain:

\begin{equation}
\sigma_2^2 = \text{diag} (1,1,1,1,\omega,\omega^2,1,\omega^2,\omega) = \wedge \sigma^z_3.
\end{equation}

\begin{figure}
\centering
\includegraphics[width=0.71\textwidth]{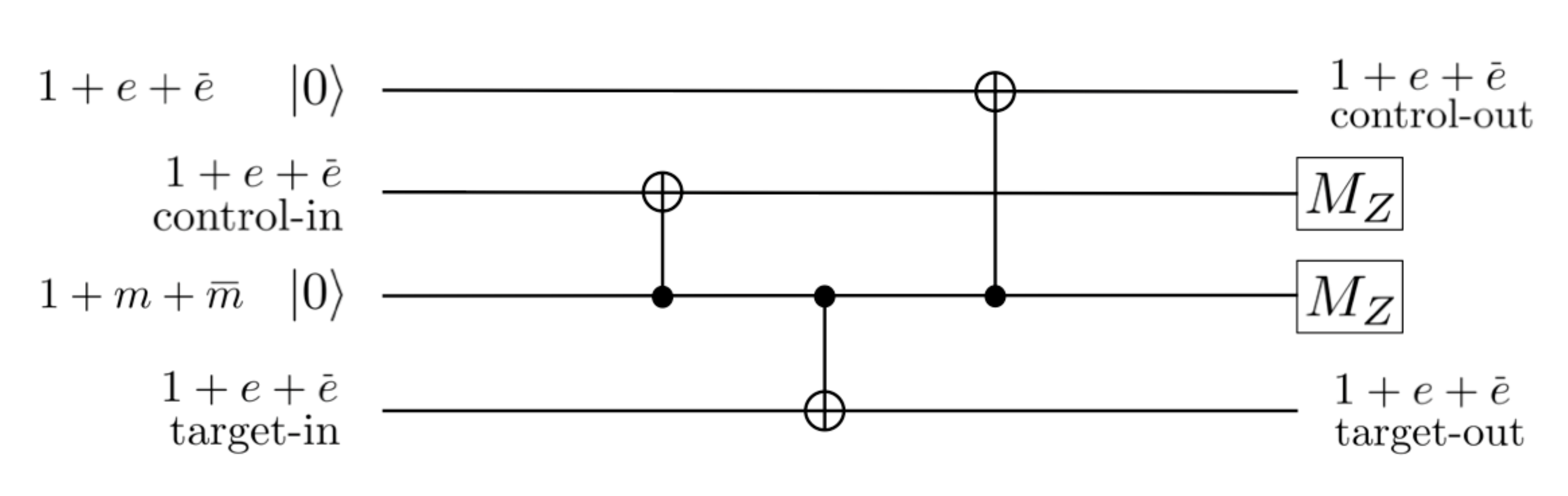}
\caption{Short circuit that generalizes the one presented in Ref. \cite{Fowler12} to use one ancilla $1+e+\overbar{e}$ qutrit, one ancilla $1+m+\overbar{m}$ qutrit, and 3 topological SUM gates between $1+m+\overbar{m}$ and $1+e+\overbar{e}$ qutrits to implement a topological SUM gate between 2 $1+e+\overbar{e}$ qutrits. All entangling gates drawn in this circuit are the SUM gate. The circuit generalizes completely to the case of gapped boundary qupits in $\mfD(\Z_p)$, $p$ any prime.}
\label{fig:dz3-sum}
\end{figure}

Because we have an implementation of the Hadamard, and because we have the relation $\wedge \sigma^z_d  = (I_d \otimes H_d) \SUM_d (I_d \otimes H_d)$ for any positive integer $d$, by conjugating the target qutrit by Hadamards, we may obtain the two-qutrit SUM gate between a $1+e+\overbar{e}$ qutrit and a $1+m+\overbar{m}$ qutrit. As in the case of the $\Z_2$ toric code, we have a short circuit that uses these SUM gates to implement a SUM gate between two $1+e$ boundaries. This circuit is shown in Fig. \ref{fig:dz3-sum}. In fact, this circuit can be generalized to $\Z_p$, $p$ any prime, where the gapped boundaries are replaced by $1+e+...+e^{p-1}$ and $1+m+...+m^{p-1}$, and we use the generalized gate $\SUM_p$.

As before, one must interpret the measurement outcome of the ancilla $1+m+\overbar{m}$ qutrit to obtain the correct result for the SUM gate. In general, in the case of $\Z_p$ and $\SUM_p$ gates, if we measure the ancilla $1+m+...+m^{p-1}$ qupit in the state $\ket{j}$, we must apply the operator $(\sigma^x_p)^j$ to the $1+e+...+e^{p-1}$ control-out qupit.In general, this can be done by applying the Wilson tunneling operator $W_{e^j}(\gamma)$. 

\subsubsection{The generalized phase gate $Q_3 = \mathrm{diag}(1,1,\omega)$}

By Ref. \cite{Barkeshli16}, topological charge measurements can be used to implement the phase gate $\mathrm{diag}(\mathrm{1,\omega, \omega})$ since it is the Dehn twist of the $\mathrm{SU}(3)_1$ theory. We follow this by a generalized Pauli-Z gate to obtain the single-qutrit $Q_3$ gate.

\subsubsection{Coherent projection}

We now need to implement the coherent projection that can allow us to go beyond Clifford gates and achieve universal quantum computation. As we have mentioned, a planar $\mfD(\Z_3)$ with two $1+e+\bar{e}$ holes can be effectively viewed as double layers of $\mathrm{SU}(3)_1$ connected via two handles, so the curve $\gamma$ connecting the two $1+e+\overbar{e}$ qutrit holes lifts to a loop in this perspective. We can hence project to a specific topological charge $a$ within this loop. By Ref. \cite{Barkeshli16}, this projector is given by:

\begin{equation}
P_\gamma^{(a)} = \sum_{x \in \mC} S_{0a} S_{xa}^* W_{x \overbar{x}} (\gamma).
\end{equation}

\noindent
In this case where $a = e$, by the $\mathcal{S}$ matrix of $\mfD(\Z_3)$, we have

\begin{equation}
P_\gamma^{(e)} = \frac{1}{3}
\begin{bmatrix}
1 & \omega & \overbar{\omega} \\
\overbar{\omega} & 1 & \omega \\
\omega & \overbar{\omega} & 1
\end{bmatrix}.
\end{equation}

Simple linear algebra shows that the eigenvalues and corresponding eigenspaces of $P_\gamma^{(e)}$ are given by:

\begin{equation}
\lambda = 0: V_\lambda = \Span \left\{ 
\begin{bmatrix}
1 \\
1 \\
1
\end{bmatrix},
\begin{bmatrix}
1 \\
\omega \\
\overbar{\omega}
\end{bmatrix}
\right\}
\end{equation}

\begin{equation}
\lambda = 1: V_\lambda = \Span \left\{ 
\begin{bmatrix}
1 \\
\overbar{\omega} \\
\omega
\end{bmatrix}
\right\}
\end{equation}

It follows that one can obtain the coherent projection $M$ of Theorem \ref{qutrit-set-2} by conjugating the orthogonal projector $1-P_\gamma^{(e)}$ with the Hadamard and Pauli-X, i.e. $(\sigma^x_3)^\dagger H^{\dagger}_3 (1-P_\gamma^{(e)})H_3 \sigma^x_3$.  While $P_\gamma^{(e)}$ is a topological charge projection as in \cite{Barkeshli16}, $1-P_\gamma^{(e)}$ is a general topological charge measurement (as introduced in Section \ref{sec:general-tcm}).  It would be interesting to know whether or not $1-P_\gamma^{(e)}$ has more computational power than $P_\gamma^{(e)}$, and whether or not there is a physically reasonable implementation of $1-P_\gamma^{(e)}$.

The coherent projection may also be obtained by conjugating the topological charge measurement $1-P_\gamma^{(1)}$ by Hadamard gates $H_3$, where $P_\gamma^{(1)}$ is a topolgoical charge projection onto trivial topological charge

\begin{equation}
P_\gamma^{(1)} = \frac{1}{3}
\begin{bmatrix}
1 & 1 & 1 \\
1 & 1 & 1 \\
1 & 1 & 1
\end{bmatrix}.
\end{equation}

By Theorem \ref{qutrit-set-2}, we now have universal quantum computation using gapped boundaries of $\mfD(\Z_3)$. This is a very significant result, as we demonstrate that purely topological methods (i.e. it does not use external high-fidelity state injection, as in Ref. \cite{Fowler12}) can achieve universal quantum computation model using only an abelian TQFT (all anyon braidings in $\mfD(\Z_3)$ are projectively trivial). 

It still remains an important open problem to design a surface code implementation or an experimentally practical way to realize this particular topological charge measurement. We will leave these questions for future works.

\vspace{2mm}
\section{Conclusions}
\label{sec:conclusions}

Based on Kitaev's quantum double models and Bombin and Martin-Delgado's two parameter generalization, we find exactly solvable Hamiltonian realizations of both gapped boundaries and the defects between them in Dijkgraaf-Witten theories. By combining with an algebraic model, we develop a microscopic theory for gapped boundaries and defects between boundaries that allows us to compute topological operations such as braiding on these new topological degeneracy.  We design qubit and qutrit topological quantum computing models using $\mathfrak{D}(S_3)$. We then develop a universal qutrit topological quantum computing model using only gapped boundaries of $\mathfrak{D}(\Z_3)$.

We would like to conclude by considering several potential areas to generalize our work. First, many physics papers have studied gapped domain walls between different topological phases. While gapped boundaries are often considered as a special case of gapped domain walls, by the folding trick \cite{KitaevKong}, they also completely cover the domain wall theory mathematically. Physically, however, it is still interesting to analyze the general gapped domain walls following our work.

Another direction is to generalize our theory to the Levin-Wen model, using quantum groupoids.

The most interesting question that we have not touched on is the stability of the topological degeneracy in our model.  Once our Hamiltonian moves off the fixed-point, finite-size splitting of the degeneracy would occur.  It would be interesting to study the energy spitting of the ground state degeneracies of our Hamiltonians $H_{\text{G.B.}}$ and $H_{\text{dft}}$ numerically under small perturbations.

Gapped boundaries and symmetry defects significantly enrich the physics of topological phases of matter in two spacial dimensions.  Their higher dimensional generalizations would be also very interesting.  As in the case of non-abelian anyons, experimental confirmation of non-abelian defects such as parafermion zero modes would be a landmark in condensed matter physics.

\vspace{2mm}
\begin{appendix}

\vspace{2mm}

\section{Notations}
\label{sec:notations}

In this Appendix, we list all of the notations that are used throughout the paper.

In general, if $G$ is any finite group, we denote the set of irreducible representations of $G$ by $(G)_{\text{ir}}$.

We will adopt the following conventions for labeling qudits, anyons, gapped boundaries, and their excitations:

\begin{enumerate}
\item
The data qudits in the bulk will be labeled as $g_1, g_2, ...$, $h_1, h_2, ...$ for the Kitaev model, where they are members of a finite group $G$.
\item
The data qudits on the boundary will be labeled as $k_1, k_2, ...$ for the Kitaev model, where they are members of a subgroup $K \subseteq G$.
\item
In the more general case where data qudits are simple objects in a unitary fusion category $\CC$, we will label the bulk data qudits as $x_1, x_2,...$, $y_1, y_2,...$.
\item
In this same general case, the boundary data qudits will be labeled as $r_1, r_2, ...$, $s_1, s_2, ...$.
\item
Bulk excitations (a.k.a. anyons or topological charges), which are the simple objects within the modular tensor category $\B = \ZZ(\text{Rep}(G))$ or $\B = \ZZ(\CC)$ will be labeled by $a,b,c...$. Their dual excitations are labeled by $\overbar{a}, \overbar{b}, \overbar{c}, ...$, respectively.
\item
The gapped boundary will be given as a Lagrangian algebra $\A$ which is an object in $\B$.
\item
Excitations on the boundary will be labeled as $\alpha, \beta, \gamma, ...$. When necessary, the local degrees of freedom during condensation will be labeled as $\mu, \nu, \lambda, ...$.
\item
Defects between different boundary types will be labeled as $X_1, X_2, ...$, $Y_1, Y_2, ...$

\end{enumerate}

Furthermore, when using any $F$ symbols and $R$ symbols for a fusion category or a modular tensor category, we will adopt the following conventions for indices:

\begin{equation}
\vcenter{\hbox{\includegraphics[width = 0.66\textwidth]{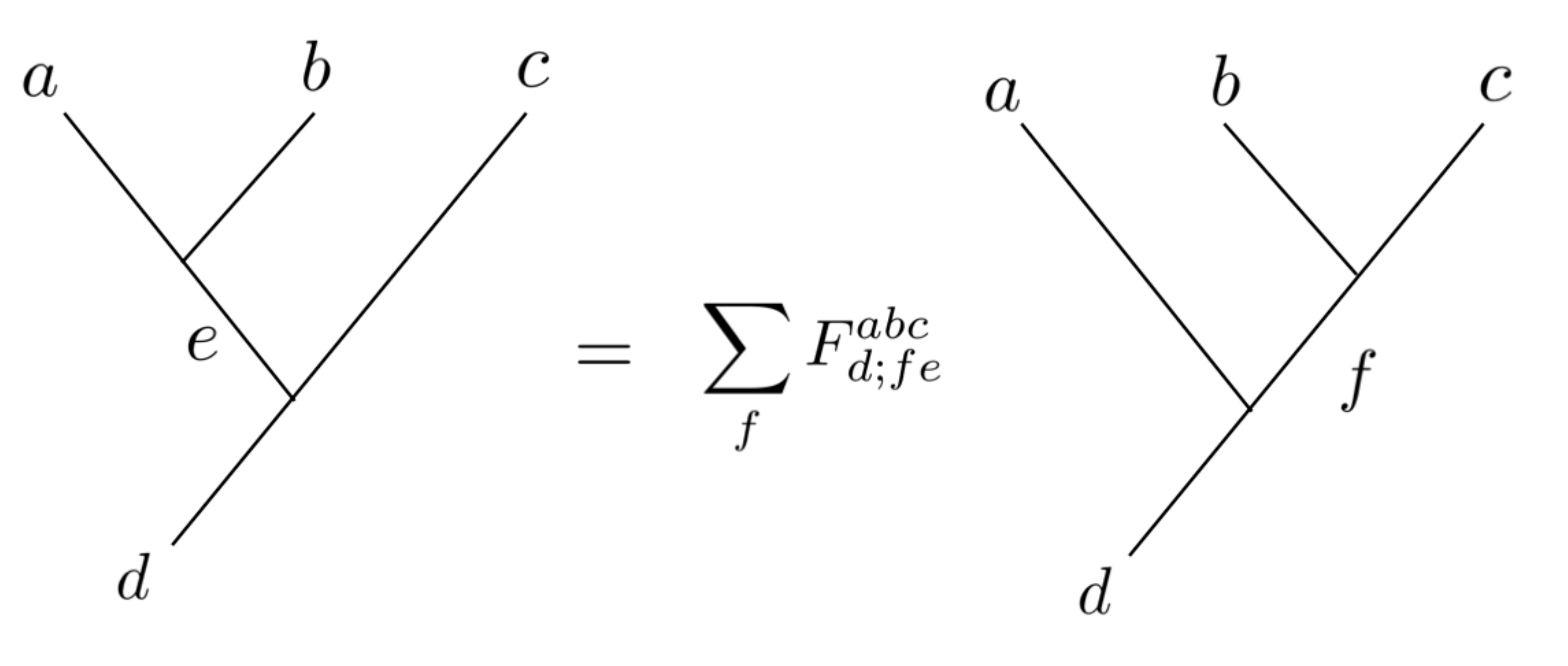}}}
\end{equation}

\begin{equation}
\vcenter{\hbox{\includegraphics[width = 0.7\textwidth]{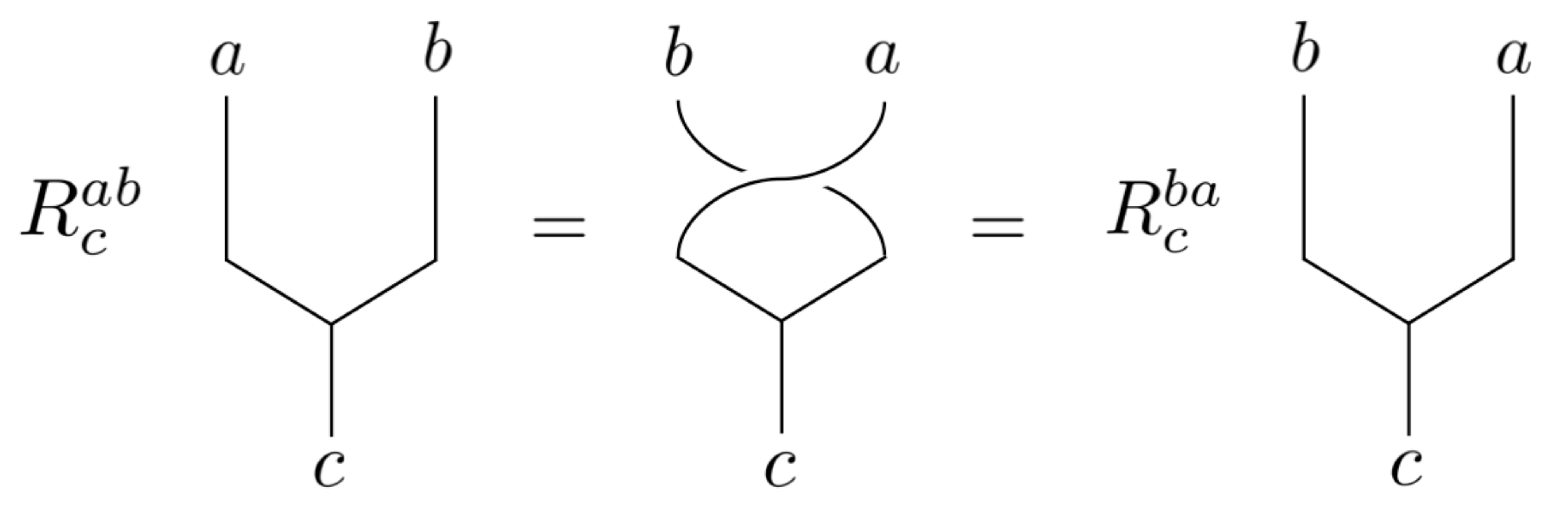}}}.
\end{equation}

\vspace{2mm}
\section{Hopf structures of $\mathcal{Z}$ and $\mathcal{Y}$}
\label{sec:quasi-hopf-algebra}

In this Appendix, we present the quasi-Hopf structures of the local operator algebras $\mathcal{Z} = Z(G,1,K,1)$ and the coquasi-Hopf structures of the ribbon algebras $\mY = Y(G,1,K,1)$ for the boundary Hamiltonian $H^{(K,1)}_{(G,1)}$. These structures have appeared in Refs. \cite{Zhu01} and \cite{Schauenburg02}, but we will discuss them in the context of these local and ribbon operators of the quantum double model with boundary. We only need to consider the case where $G$ is a finite group. This construction is similar to the construction of the quasi-triangular Hopf structures of the quantum double $\mathcal{D} = D(G)$ in Ref. \cite{Kitaev97}. 

Let us first begin by defining a quasi-Hopf algebra and a coquasi-Hopf algebra:

\begin{definition}
\label{quasi-bialgebra-def}
A {\it quasi-bialgebra} is a unital associative algebra $(A,m,\eta)$, together with a not necessarily coassociative coalgebra structure $(A,\Delta,\epsilon)$, and an invertible element $\Phi \in A \otimes A \otimes A$ such that the following axioms are satisfied \cite{Drinfeld89}:

\begin{equation}
\begin{gathered}
(1 \otimes \Delta)(\Delta(a)) = \Phi \cdot (\Delta \otimes 1)(\Delta(a)) \cdot \Phi^{-1} \\
(1 \otimes 1 \otimes \Delta)(\Phi)(\Delta \otimes 1 \otimes 1)(\Phi) = (1 \otimes \Phi)(1 \otimes \Delta \otimes 1)(\Phi)(\Phi \otimes 1) \\
(\epsilon \otimes 1) (\Delta a) = (1 \otimes \epsilon) \circ \Delta = 1 \\
(1 \otimes \epsilon \otimes 1)(\Phi) = 1 \otimes 1 
\end{gathered}
\end{equation}
\noindent
for all $a \in A$. $\Phi$ is often known as the {\it Drinfeld associator}.
\end{definition}

\begin{definition}
\label{coquasi-bialgebra-def}
A {\it coquasi-bialgebra} is a counital coassociative coalgebra $(A,\Delta, \epsilon)$, together with a not necessarily associative algebra structure $(A,m,\eta)$ and a convolution invertible element $\Phi \in (A \otimes A \otimes A)^*$ such that the following axioms are satisfied \cite{Drinfeld89}:

\begin{equation}
\begin{gathered}
h_1(g_1 k_1)\Phi(h_2,g_2,k_2) = \Phi(h_1,g_1,k_1)(h_2 g_2)k_2 \\
1_A h = h 1_A = h \\
\Phi(h_1, g_1, k_1 l_1) \Phi(h_2 g_2, k_2, l_2) = \Phi(g_1, k_1, l_1) \Phi(h_1, g_2 k_2, l_2) \Phi(h_2, g_3, k_3) \\
\Phi(h, 1_A, g) = \epsilon(h) \epsilon(g)
\end{gathered}
\end{equation}
\noindent
for all $h,g,k,l \in A$.
\end{definition}

\begin{definition}
A {\it quasi-Hopf algebra} is a quasi-bialgebra $A$ with an antipode map $S: A \rightarrow A$, and elements $\alpha, \beta \in A$, such that $S$ is an algebra anti-morphism and for all $a \in A$ with $\Delta a = a_{(1)} \otimes a_{(2)}$ (using Sweedler notation), the following axioms are satisfied \cite{Drinfeld89}:

\begin{equation}
\sum S(a_{(1)}) \alpha a_{(2)} = \epsilon(a) \alpha
\qquad
\sum a_{(1)} \beta (a_{(2)}) = \epsilon(a) \beta
\end{equation}

\noindent
for all $a \in A$, and 

\begin{equation}
\sum x_i \beta S(y_i) \alpha z_i = 1
\qquad
\sum S(x_i) \alpha y_i \beta S(z_i) = 1
\end{equation}

\noindent
where $\Phi = \sum x_i \otimes y_i \otimes z_i$ and $\Phi^{-1} = \sum x_i \otimes y_i \otimes z_i$.
\end{definition}

\begin{definition}
A {\it coquasi-Hopf algebra} is a coquasi-bialgebra $A$ with an antipode map $S: A \rightarrow A$ with elements $\alpha, \beta \in A^*$ such that the following axioms are satisfied \cite{Drinfeld89}:

\begin{equation}
\begin{gathered}
S(h_1) \alpha(h_2) h_3 = \alpha(h) 1_A \\
h_1 \beta(h_2) S(h_3) = \beta(h) 1_A \\
\Phi(h_1 \beta(h_2), S(h_3), \alpha(h_4) h_5) = \Phi^{-1}(S(h_1), \alpha(h_2) h_3 \beta(h_4), S(h_5)) = \epsilon(h)
\end{gathered}
\end{equation}
\noindent
for all $h \in A$.
\end{definition}

We now present all of these structures for the group-theoretical quasi-Hopf algebra $\mZ = Z(G,1,K,1)$ of local operators of the Hamiltonian $H^{(K,1)}_{(G,1)}$. As discussed in Section \ref{sec:bd-hamiltonian}, the basis vectors of $\mZ$ are of form

\begin{equation}
Z^{(hK,k)} = B^{hK}A^k
\end{equation}

\noindent
for some $k \in K$, $hK \in G / K$. For convenience, let us first define the following notations. Let $R$ be a set of representatives of the left cosets $G / K$. Then, every $g \in G$ can be written uniquely as $g = r(g) \{g \}^{-1}$ for some $r(g) \in R$, $\{g\} \in K$.

By definition of the operators $B^{hK}$, $A^k$, we have the following rule for multiplication in $\mZ$ (where we assume without loss of generality $h_1, h_2 \in R$):

\begin{align}
\begin{split}
Z^{(h_1 K, k_1)} Z^{(h_2 K, k_2)} &=
\sum_{\substack{h \in R \\
				j,k \in K}}
\delta_{h_1 K, hK} \delta_{k_1, k} \delta_{h_2 K, r(k^{-1}h) K} \delta_{k_2, k^{-1} j} Z^{(hK,j)}\\
&= \delta_{h_1 K, k_1 h_2 K} Z^{(h_1 K, k_1 k_2)}
\end{split}
\end{align}

\noindent
(Here, we note that certain Kronecker deltas may be taken between left cosets, e.g. $\delta_{h_1 K, hK}$.) 
Similarly, the following rule is used to define comultiplication in $\mZ$:

\begin{equation}
\Delta(Z^{(hK, k)}) =
\sum_{h_1  \in R}
(Z^{(r({h_1^{-1} h)}K, \{k^{-1} h_1\})} Z^{(h_1 K, k)})
\end{equation}

\noindent
The antipode in $\mZ$ is given by:

\begin{equation}
S(Z^{(hK,k)}) = Z^{( r({r({k^{-1}h})^{-1}}) K, \{ k^{-1}h \}^{-1} )} 
\end{equation}

\noindent
In both equations above, we assume $h \in R$. The specific elements $\alpha, \beta \in A$ corresponding to the quasi-Hopf structure of the antipode are:

\begin{equation}
\alpha = Z^{(K,1)}, \qquad \beta = \sum_{h \in R} Z^{(hK, \{h\}^{-1})}
\end{equation}

\noindent
Finally, the Drinfeld associator is given by:

\begin{equation}
\Phi = \sum_{h_1, h_2, h_3 \in R} Z^{(h_1 K, 1)} Z^{(h_2 K, 1)} Z^{(h_3 K, \{ h_1 h_2 \})}
\end{equation}

The construction of these structures for the coquasi-Hopf algebra $\mathcal{Y}$ is completely analogous and dual to the above construction. Details may be found in Refs. \cite{Zhu01} and \cite{Schauenburg02}, and will not be presented here.

We will note in particular that the comultiplication of $\mathcal{Y}$ exactly corresponds to the multiplication of $\mathcal{Z}$. By the formulas presented in Refs. \cite{Zhu01,Schauenburg02}, we see that it is indeed given by the gluing formula (\ref{eq:Y-gluing-formula}), as claimed in Section \ref{sec:bd-hamiltonian}.

\end{appendix}

\vspace{4mm}

\end{document}